\newtheorem{theorem}{Theorem}
\newtheorem{remark}[theorem]{Remark}
\newenvironment{proof}[1][Proof]{\noindent \textbf{#1.} }{\  \rule{0.5em}{0.5em}}
\newtheorem{thm}{Theorem}
\newtheorem{assu}{Assumption}
\newtheorem{lem}{Lemma}
\newtheorem{prop}{Proposition}
\begin{document}

\title{Estimation and Inference for High Dimensional Factor Model with
Regime Switching}
\author{Giovanni Urga\thanks{%
Faculty of Finance, Bayes Business School (formerly Cass), 106 Bunhill Row,
London, EC1Y 8TZ, U.K.} \and Fa Wang\thanks{%
School of Economics, Peking University, 5 Yiheyuan Road, Beijing, 100871,
China.}}
\date{%
\today%
}
\maketitle

\begin{abstract}
\baselineskip=16.0pt

This paper proposes maximum (quasi)likelihood estimation for high
dimensional factor models with regime switching in the loadings. The model
parameters are estimated jointly by the EM (expectation maximization)
algorithm, which in the current context only requires iteratively
calculating regime probabilities and principal components of the weighted
sample covariance matrix. When regime dynamics are taken into account,
smoothed regime probabilities are calculated using a recursive algorithm.
Consistency, convergence rates and limit distributions of the estimated
loadings and the estimated factors are established under weak
cross-sectional and temporal dependence as well as heteroscedasticity. It is
worth noting that due to high dimension, regime switching can be identified
consistently after the switching point with only one observation. Simulation
results show good performance of the proposed method. An application to the
FRED-MD dataset illustrates the potential of the proposed method for
detection of business cycle turning points.

\textbf{Keywords:} Factor model, Regime switching, Maximum likelihood, High
dimension, EM algorithm, Turning points

\textbf{JEL Classification: }C13, C38, C55

\thispagestyle{empty}%
\strut \bigskip \bigskip
\end{abstract}

\baselineskip=20.0pt\ \ \ \pagebreak 
\setcounter{page}{0}%

\section{Introduction\label{intro}}

A great deal of attention has focused on the loading instability issue in
high dimensional factor models. For empirical evidences of parameter
instability in macroeconomic and financial time series, see for example,
Banerjee, Marcellino and Masten (2008), Stock and Watson (2009) and
Korobilis (2013). Several procedures are proposed to detect and/or estimate
common abrupt breaks in the loadings, including Cheng, Liao and Shorfheide
(2016), Baltagi, Kao and Wang (2017, 2021), Bai, Han and Shi (2020), and Ma
and Su (2018), to mention a few. Other models of time varying loadings, such
as i.i.d./random walk, smooth change, vector autoregression and threshold
type, are studied in Bates, Plagborg-Moller, Stock and Watson (2013), Su and
Wang (2017), Mikkelsen, Hillebrand and Urga (2019) and Massacci (2017),
respectively.

An alternative approach of modeling loading instability is common regime
switching. In business cycle analysis, several unobservable factors
summarize the comovements of many economic variables and the loadings
measure the importance of factors for each economic variable. The importance
of each factor may be different depending on fiscal policy (expansionary,
contractionary, neutral), or monetary policy (expansionary, contractionary),
or the stage of the business cycle (peak, trough, expansion, contraction),
hence the loadings may switch synchronously between several states under
different scenarios. In stock return analysis, the loadings measure the
impact of the factor return on the expected return of each individual stock,
hence the loadings may switch synchronously depending on the stock market
scenarios (bull versus bear markets, high versus low volatility), see for
example Gu (2005) and Guidolin and Timmermann (2008) for related
discussions. In bond return analysis, the yields of bonds with different
maturities are well captured by the level factor, the slope factor and the
curvature factor, see for example Cochrane and Piazzesi (2005) and Diebold
and Li (2006). The importance of each factor could be different depending on
the stock market volatility, or the stage of the business cycle, or the
unemployment rate, hence the loadings may also switch synchronously
according to these state variables. In general, large factor models with
regime switching in the loadings could also be useful for other topics, such
as tracking labor productivity.

There are only a few related results on large factor models with regime
switching in the loadings. Liu and Chen (2016) proposes an iterative
algorithm for estimating the model parameters and the hidden states based on
eigen-decomposition and the Viterbi algorithm, however, the asymptotic
properties of the estimated parameters are established only when the true
states are known. Considering loadings as general functions of some
recurrent states, Pelger and Xiong (2021) develops nonparametric kernel
estimator for the loadings and the factors, and establishes the relevant
asymptotic theory. However, Pelger and Xiong (2021) requires observable
state variables. In general, state variables may be misspecified or
unobservable.

This paper proposes maximum (quasi)likelihood estimation for high
dimensional factor models with regime switching in the loadings when the
state variables are unobservable. This paper also proposes new criteria to
consistently determine the number of regimes and the number of factors in
each regime. The model parameters are estimated jointly by the EM algorithm,
which in the current context only requires calculating principal components
iteratively.

More specifically, in the E-step, the probabilities of each regime at each
time $t$ are calculated based on the observed data and the parameter values
at the current iteration using a recursive algorithm modified from Hamilton
(1990), and then the joint (log)likelihood of the observed data and the
unobserved states are averaged with respect to the calculated regime
probabilities. In the M-step, the estimated loadings for each regime are the
principal components of the weighted sample covariance matrix of the
observed time series, where the weight on $x_{t}$ (the observed time series
at time $t$) equals the probability of that regime at time $t$. Since
principal components can be easily calculated even when $N$ (the dimension
of time series) is large, our method is very easy to implement.

For the proposed algorithm, this paper establishes the convergence rates of
the estimated loading spaces and the estimated factor spaces, the limit
distributions of the estimated loadings and the estimated factors, the
consistency of the estimated regime probabilities, and the consistency of
the estimated transition probability matrix when the true state process is
Markovian. Note that asymptotic analysis under the regime switching setup is
more difficult than under the structural break setup, because the pattern of
regimes for the latter is much simpler.

These asymptotic results are essential in many empirical contexts. First,
the limit distributions of the estimated factors allow us to construct
confidence intervals for the true factors, which represent economic indices
in many applications. The result on the estimated factor spaces implies that
if the estimated factors are used in factor-augmented forecasting (or
factor-augmented VAR), the forecasting equation (or the VAR equation) would
have induced regime switching in the model parameters. Second, for asset
management, the estimated loadings of each regime allow us to construct
portfolios according to each specific market scenario. For structural
dynamic factor analysis, consistently estimated loadings are also crucial
for recovering the impulse responses. Third, the consistency of the
estimated regime probabilities implies that for each $x_{t}$, we can
consistently identify which regime $x_{t}$ belongs to as $N\rightarrow
\infty $. For asset management, this allows us to consistently identify the
current market scenario. For business cycle analysis, this allows us to
consistently date turning points of the business cycle and detect new
recessions or expansions, especially when high frequency (weekly, daily)
data is utilized.

For cases with small $N$, various methods have been proposed for estimating
factor models with regime switching. Kim (1994) proposes approximate Kalman
filter for likelihood evaluation and uses nonlinear optimization for
likelihood maximization. Kim and Yoo (1995) and Chauvet (1998) apply Kim
(1994)'s method to a small number of economic series and obtain recession
probabilities and turning points very close to the official NBER dates. Kim
(1994) allows for regime switching in both the factor mean and the factor
loadings, but when $N$ is large, Kim (1994)'s method would be very time
consuming and may have convergence problems\footnote{%
This is because the number of parameters grows proportionally to $N$ and the
likelihood function is calculated numerically and maximized by nonlinear
optimization algorithm.}. Other methods, such as Diebold and Rudebusch
(1996) and Kim and Nelson (1998), assume stable loadings and only focus on
regime switching in the factor mean. If the loadings are unstable, these
methods are not applicable. More importantly, if there is only regime
switching in the factor mean, we can not consistently identify each regime
even when $N$ is large.

In contrast with Kim (1994), our method is fast and easy to implement even
when $N$ is very large. The crucial point behind our EM algorithm is to
ignore factor dynamics\footnote{%
Factor dynamics are still allowed for the data generating process.
\par
\bigskip} and integrate out the factors in the likelihood function. If
factors dynamics are taken into account or factors are treated as parameters
in the likelihood function, the estimated loadings would not be the
principal components of the weighted sample covariance matrix, and
consequently both the algorithm and the asymptotic analysis would become
infeasible. On the other hand, the efficiency loss of ignoring factor
dynamics is small when $N$ is large.

This paper may also contribute to the literature on dating turning points of
the business cycle. Currently there are two main approaches for dating
business cycle using multiple time series. The first approach, aggregating
then dating, is to date business cycle by focusing on a few highly
aggregated time series such as GDP, industrial production and nonfarm
employment. The second approach, dating then aggregating, is to date turning
points in each disaggregated series and then aggregate these turning points
in some appropriate way, see Burns and Mitchell (1946), Harding and Pagan
(2006) and Chauvet and Piger (2008). These papers only use a small number of
time series. Stock and Watson (2010, 2014) studies this issue using many
time series. This paper shows that it is possible to consistently identify
turning points if regime switching is synchronous and $N$ is large enough.
If $N$ is small, consistency is not possible no matter how large $T$ is.
This paper also shows that if $N$ is large, it is possible to consistently
detect regime switching right after the turning point with only one
observation, thus the speed of detection could be improved significantly. If 
$N$ is small, we have to wait for enough observations from the new regime.

The rest of the paper is organized as follows. Section \ref{id&es}
introduces the model setup and the estimation procedures. Section \ref{asym}
presents the assumptions and the asymptotic results. Section \ref{rJ}
proposes criteria for determining the number of regimes and the number of
factors in each regime. Section \ref{Simu} presents simulation results.\
Section \ref{App} presents an empirical application of the proposed method
to the FRED-MD dataset. Section \ref{Con} concludes. All proofs are
relegated to the appendix.

Through out the paper, $(N,T)\rightarrow \infty $\ denotes $N$\ and $T$\
going to infinity jointly, $\delta _{NT}=\min \{\sqrt{N},\sqrt{T}\}$. $%
\overset{p}{\rightarrow }$ and $\overset{d}{\rightarrow }$ denotes
convergence in probability and convergence in distribution, respectively.
For matrix $A$, let $\left\Vert A\right\Vert $, $\left\Vert A\right\Vert
_{F} $, $\rho _{\max }(A)$ and $\rho _{\min }(A)$ denote its spectral norm,
Frobenius norm, largest eigenvalue and smallest eigenvalue, respectively.
Let $P_{A}=A(A^{\prime }A)^{-1}A^{\prime }$ denote the projection matrix and 
$M_{A}=I-P_{A}$. "w.p.a.1" denotes with probability approaching one.

\section{Identification and Estimation\label{id&es}}

Consider the following factor model with regime switching: for $i=1,...,N$
and $t=1,...,T,$ 
\begin{equation}
x_{it}=f_{t}^{0\prime }\lambda _{ji}^{0}+e_{it}\text{ if }z_{t}=j,\text{ }
\end{equation}%
where $\lambda _{ji}^{0}$ is an $r_{j}^{0}$ dimensional vector of loadings
for regime $j$, $f_{t}^{0}$ is an $r_{z_{t}}^{0}$ dimensional vector of
factors, $z_{t}$ is the state variable indicating which regime $x_{it}$
belongs to, and $e_{it}$ is the error term allowed to have cross-sectional
and temporal dependence as well as heteroscedasticity. $x_{it}$ is
observable and all of the right hand side variables are unobservable. The
number of regimes $J^{0}$ and the number of factors in each regime $%
r_{j}^{0} $ (could be different across $j$) are fixed as $(N,T)\rightarrow
\infty $ and assumed to be known in this section and Section \ref{asym}. How
to consistently determine $r_{j}^{0}$\ and $J^{0}$\ will be studied in
Section \ref{rJ}.

The factor process $\{f_{t}^{0},t=1,...,T\}$ is allowed to be dynamic, and
similar to the principal component estimator (PCE) in Bai (2003) and the
maximum likelihood estimator (MLE) in Bai and Li (2012, 2016), factor
dynamics are ignored when estimating the model parameters, thus there is no
need to model factor dynamics.

For the state process $\{z_{t},t=1,...,T\}$, the asymptotic results in
Section \ref{AsyDyn} and Section \ref{rJ} are valid as long as $\frac{1}{T}%
\sum\nolimits_{t=1}^{T}1_{z_{t}=j}\overset{p}{\rightarrow }q_{j}^{0}>0$\ for 
$j=1,...,J^{0}$\ ($q_{j}^{0}=\Pr (z_{t}=j)$\ is the unconditional
probability of regime $j$\ and $1_{z_{t}=j}=1$\ if $z_{t}=j$\ and $0$\
otherwise), and Assumptions \ref{factors}-\ref{error} and \ref{moments}-\ref%
{dist} in Section \ref{assu} hold conditioning on $\{z_{t},t=1,...,T\}$.
Thus $\{z_{t},t=1,...,T\}$\ is allowed to be correlated with $f_{s}^{0}$\
and $e_{is}$\ for all $i$\ and $s$, and we do not need to know the true
model of $\{z_{t},t=1,...,T\}$.

In vector form, the model can be written as:%
\begin{equation}
x_{t}=\Lambda _{j}^{0}f_{t}^{0}+e_{t}\text{ if }z_{t}=j\text{, for }%
t=1,...,T,
\end{equation}%
where $\Lambda _{j}^{0}=(\lambda _{j1}^{0},...,\lambda _{jN}^{0})^{\prime }$%
, $x_{t}=(x_{1t},...,x_{Nt})^{\prime }$ and $e_{t}=(e_{1t},...,e_{Nt})^{%
\prime }$. Let $\Lambda ^{0}=(\Lambda _{1}^{0},...,\Lambda _{J^{0}}^{0})$
and let $E=(e_{1},...,e_{T})^{\prime }$ be the $T\times N$ matrix of errors.
When there are no superscripts, $\Lambda _{j}$ and $\Lambda $ denote
parameters as variables.

\subsection{Identification\label{id}}

Since the factors are unobservable, regimes are defined in terms of the
linear spaces spanned by the loadings. Two regimes are different if their
loading spaces are different, and vice versa. More specifically, the
identification condition is: for any $j$\ and $k$,%
\begin{equation}
\min_{t}\frac{1}{N}\left\Vert M_{\Lambda _{k}^{0}}\Lambda
_{j}^{0}f_{t}^{0}\right\Vert ^{2}=\min_{t}\frac{1}{N}f_{t}^{0\prime }\Lambda
_{j}^{0\prime }M_{\Lambda _{k}^{0}}\Lambda _{j}^{0}f_{t}^{0}\geq C\text{ for
some }C>0.  \label{ide}
\end{equation}%
A sufficient condition for (\ref{ide}) is:%
\begin{eqnarray}
&&\text{ }\lim_{N\rightarrow \infty }\frac{1}{N}\Lambda _{k}^{0\prime
}M_{\Lambda _{j}^{0}}\Lambda _{k}^{0}\ \text{is positive definite for any }j%
\text{\ and }k\text{,}  \label{ar} \\
&&\text{and }\min_{t}\left\Vert f_{t}\right\Vert \text{ is nonzero.}  \notag
\end{eqnarray}%
Condition (\ref{ar}) requires $\lim\limits_{N\rightarrow \infty }\frac{1}{N}%
(\Lambda _{j}^{0},\Lambda _{k}^{0})^{\prime }(\Lambda _{j}^{0},\Lambda
_{k}^{0})\ $to be full rank for any $j$\ and $k$. Thus $\Lambda _{j}^{0}$\
and $\Lambda _{k}^{0}$\ are not allowed to share some columns, and columns
of $\Lambda _{j}^{0}$\ could not be linear combination of $\Lambda _{k}^{0}$%
\ and vice versa. An alternative sufficient condition for (\ref{ide}) is:%
\begin{eqnarray}
&&\lim_{N\rightarrow \infty }\frac{1}{N}\Lambda _{k}^{0\prime }M_{\Lambda
_{j}^{0}}\Lambda _{k}^{0}\neq 0\text{ for any }j\ \text{and }k\text{,}
\label{ar'} \\
&&\text{and }\min_{t}\left\vert g_{jk}^{\prime }f_{t}\right\vert \text{ is
nonzero,}  \notag
\end{eqnarray}%
where $g_{jk}$ is the eigenvector of $\lim\limits_{N\rightarrow \infty }%
\frac{1}{N}\Lambda _{k}^{0\prime }M_{\Lambda _{j}^{0}}\Lambda _{k}^{0}$
corresponding to nonzero eigenvalue. Condition (\ref{ar'}) only requires
that the linear spaces spanned by $\Lambda _{j}^{0}$ and $\Lambda _{k}^{0}$
are different. Thus $\Lambda _{k}^{0}$\ and $\Lambda _{j}^{0}$\ are allowed
to share some columns, and some columns of $\Lambda _{k}^{0}$ are allowed to
be linear combinations of the columns of $\Lambda _{j}^{0}$ and vice versa,
but $\Lambda _{k}^{0}$\ is not allowed to be a subset of $\Lambda _{j}^{0}$.
For example, if there are two regimes with two factors in each regime and
only the loadings of $f_{2t}$ (the second factor) switch across the regimes,
then condition (\ref{ar'}) requires that $\min_{t}\left\vert
f_{2t}\right\vert $\ is nonzero.

Note that condition (\ref{ar}) does not rule out the possibility that any
regime $j$ can be further decomposed into multiple regimes. Suppose the true
model is $x_{t}=\Lambda _{j}^{0}f_{t}^{0}+e_{t}$\ if $z_{t}=j$, $j=1,2,3$,
and $(\Lambda _{1}^{0},\Lambda _{2}^{0},\Lambda _{3}^{0})$\ satisfies
condition (\ref{ar}). If we consider $\Lambda _{1}^{0}$\ as the first regime
and $(\Lambda _{2}^{0},\Lambda _{3}^{0})$\ as the second regime, the true
model can be equivalently written as $x_{t}=\Lambda _{1}^{0}f_{t}^{0}+e_{t}$%
\ if $z_{t}=1$, and $x_{t}=(\Lambda _{2}^{0},\Lambda _{3}^{0})f_{t}^{\ast
}+e_{t}$\ if $z_{t}=2$\ $or$\ $3$, where $f_{t}^{\ast }=\left(
f_{t}^{0\prime },0^{\prime }\right) ^{\prime }$\ if $z_{t}=2$\ and $%
f_{t}^{\ast }=\left( 0^{\prime },f_{t}^{0\prime }\right) ^{\prime }$\ if $%
z_{t}=3$. The equivalent model also satisfies condition (\ref{ar}). However,
while $plim\frac{1}{T}\sum\nolimits_{z_{t}=2\text{ }or\text{ }3}f_{t}^{\ast
}f_{t}^{\ast \prime }$\ is positive definite, $plim\frac{1}{T}%
\sum\nolimits_{z_{t}=2}f_{t}^{\ast }f_{t}^{\ast \prime }$\ and $plim\frac{1}{%
T}\sum\nolimits_{z_{t}=3}f_{t}^{\ast }f_{t}^{\ast \prime }$\ are not
positive definite. To rule out the possibility that any regime $j$ can be
further decomposed, we assume that%
\begin{equation}
plim\frac{1}{\left\vert A_{j}\right\vert }\sum\nolimits_{t\in
A_{j}}f_{t}^{0}f_{t}^{0\prime }\text{ is positive definite,}
\end{equation}%
where $A_{j}$ denotes any subset of $\{t:z_{t}=j\}$ with cardinality $%
\left\vert A_{j}\right\vert $ and $\lim \frac{\left\vert A_{j}\right\vert }{T%
}>0$. If $\frac{1}{\left\vert A_{j}\right\vert }\sum\nolimits_{t\in
A_{j}}f_{t}^{0}f_{t}^{0\prime }$\ is not positive definite as $T\rightarrow
\infty $\ for some $A_{j}$, then $A_{j}$\ and $\{t:z_{t}=j,t\notin A_{j}\}$
are considered as two separate regimes.

\subsection{First Order Conditions\label{focsDyn}}

Consider the following log-likelihood function for Gaussian mixture in
covariance:%
\begin{equation}
l(\Lambda ,\sigma ^{2})=\log
[\sum\nolimits_{z_{T}=1}^{J^{0}}...\sum\nolimits_{z_{1}=1}^{J^{0}}\prod%
\nolimits_{t=1}^{T}L(x_{t}\left\vert z_{t};\Lambda ,\sigma ^{2}\right. )\Pr
(z_{1},...,z_{T})],  \label{ae}
\end{equation}%
where $\prod\nolimits_{t=1}^{T}L(x_{t}\left\vert z_{t};\Lambda ,\sigma
^{2}\right. )$ is the density of $(x_{1},...,x_{T})$ conditioning on $%
(z_{1},...,z_{T})$, $\Pr (z_{1},...,z_{T})$ is the joint probability of $%
(z_{1},...,z_{T})$,%
\begin{equation}
L(x_{t}\left\vert z_{t}=j;\Lambda _{j},\sigma ^{2}\right. )=(2\pi )^{-\frac{N%
}{2}}\left\vert \Sigma _{j}\right\vert ^{-\frac{1}{2}}e^{-\frac{1}{2}%
x_{t}^{\prime }\Sigma _{j}{}^{-1}x_{t}}\text{,}  \label{ab}
\end{equation}%
$\Sigma _{j}$ is the covariance matrix of $x_{t}$ for regime $j$, and%
\begin{equation}
\Sigma _{j}=\Lambda _{j}\Lambda _{j}^{\prime }+\sigma ^{2}I_{N}.  \label{aq}
\end{equation}

The above log-likelihood function avoids estimating the factors. If the
factors are estimated jointly with the loadings, we would not have the
analytical first order conditions presented below, and consequently the EM
algorithm would become infeasible.

Equation (\ref{ae}) is a misspecified log-likelihood function. First, the
state process $\{z_{t},t=1,...,T\}$\ is not specified yet, and the
probability $\Pr (z_{1},...,z_{T})$\ depends on how we model the state
process. Second, similar to the principal component estimator in Stock and
Watson (2002) and Bai (2003), equation (\ref{aq}) ignores the
cross-sectional and serial dependence and heteroscedasticity of the error
term. We may also take into account the heteroscedasticity as Doz, Giannone
and Reichlin (2012) and Bai and Li (2012, 2016). With regime switching, the
algorithm and the asymptotic analysis would be much more complicated, but
the results should be conceptually similar.

Third, the factor dynamics are ignored. As shown in Bai (2003) for PCE and
in Bai and Li (2012, 2016) for MLE, when there is no regime switching, the
asymptotic properties of the estimated factors and the estimated loadings
are robust to the presence of the factor dynamics if both $N$ and $T$ are
large. We shall show in Section \ref{asym} that when there is regime
switching, the asymptotic results are also robust to the presence of the
factor dynamics. More importantly, ignoring the factor dynamics greatly
simplifies the computation algorithm for regime switching factor models. As
shown below, with factor dynamics ignored, we just need to calculate
principal components iteratively. If the factor dynamics are not ignored,
Kim (1994)'s method would be very time consuming and may have convergence
problems if $N$\ is large\footnote{%
When there is no regime switching, as suggested by Doz et al. (2012), large $%
N$ factor model with factor dynamics can be calculated by the EM algorithm.
However, when there are both regime switching and factor dynamics, the EM
algorithm also fails. This is because in the E-step we need to calculate the
likelihood for each possible state chain $z_{1},....,z_{T}$ and there are $%
(J^{0})^{T}$ possibilities, and in the M-step numerical optimization is
still needed.{}}.

Fourth, equation (\ref{ae}) implicitly assumes that $\mathbb{E}(f_{t}^{0})=0$
and $\mathbb{E}(f_{t}^{0}f_{t}^{0\prime })$ is stable within each regime,
and $\mathbb{E}(f_{t}^{0}f_{t}^{0\prime })$ is absorbed into $\Lambda
_{j}\Lambda _{j}^{\prime }$ in equation (\ref{aq}). This does not matter,
since all results of this paper still hold when $\mathbb{E}(f_{t}^{0})\neq 0$
and $\mathbb{E}(f_{t}^{0}f_{t}^{0\prime })$ is unstable within regime, as
long as Assumption \ref{factors} is satisfied.

\begin{description}
\item[First order conditions for $\Lambda $ and $\protect\sigma ^{2}$] 
\end{description}

The parameters $\Lambda $ and $\sigma ^{2}$ are estimated by maximizing $%
l(\Lambda ,\sigma ^{2})$. Define $x_{1:t}\equiv (x_{1},...,x_{t})$ and $%
z_{1:t}\equiv (z_{1},...,z_{t})$, and let $p_{tj\left\vert T\right. }\equiv
\Pr (z_{t}=j\left\vert x_{1:T};\Lambda ,\sigma ^{2}\right. )$ denote the
probability of $z_{t}=j$ conditional on $x_{1:T}$. Based on equation (\ref%
{ae}), it can be easily verified that%
\begin{eqnarray}
\frac{\partial l(\Lambda ,\sigma ^{2})}{\partial \Lambda _{j}}
&=&\sum\nolimits_{t=1}^{T}\frac{\partial \log L(x_{t}\left\vert
z_{t}=j;\Lambda _{j},\sigma ^{2}\right. )}{\partial \Lambda _{j}}%
p_{tj\left\vert T\right. }  \notag \\
&=&\sum\nolimits_{t=1}^{T}p_{tj\left\vert T\right. }(-\Sigma
_{j}^{-1}\Lambda _{j}+\Sigma _{j}^{-1}x_{t}x_{t}^{\prime }\Sigma
_{j}^{-1}\Lambda _{j}),  \label{bw} \\
\frac{\partial l(\Lambda ,\sigma ^{2})}{\partial \sigma ^{2}}
&=&\sum\nolimits_{t=1}^{T}\sum\nolimits_{j=1}^{J^{0}}\frac{\partial \log
L(x_{t}\left\vert z_{t}=j;\Lambda _{j},\sigma ^{2}\right. )}{\partial \sigma
^{2}}p_{tj\left\vert T\right. },
\end{eqnarray}%
where equation (\ref{bw}) follows from%
\begin{eqnarray}
\frac{\partial \log \left\vert \Sigma _{j}\right\vert }{\partial \Lambda _{j}%
} &=&2\Sigma _{j}^{-1}\Lambda _{j},  \label{af} \\
\frac{\partial x_{t}^{\prime }\Sigma _{j}^{-1}x_{t}}{\partial \Lambda _{j}}
&=&-2\Sigma _{j}^{-1}x_{t}x_{t}^{\prime }\Sigma _{j}^{-1}\Lambda _{j},
\label{ag}
\end{eqnarray}%
see Chapter 14.3 in Andersen (2003) for the details on calculating these
derivatives. Set $\frac{\partial l(\Lambda ,\sigma ^{2})}{\partial \Lambda
_{j}}$ to $0$, we have%
\begin{eqnarray}
\Sigma _{j}^{-1}\Lambda _{j} &=&\Sigma _{j}^{-1}S_{j}\Sigma _{j}^{-1}\Lambda
_{j},  \label{a} \\
\text{and }S_{j} &=&\sum\nolimits_{t=1}^{T}p_{tj\left\vert T\right.
}x_{t}x_{t}^{\prime }/\sum\nolimits_{t=1}^{T}p_{tj\left\vert T\right. }. 
\notag
\end{eqnarray}%
$S_{j}$ can be considered as sample covariance matrix for $\Sigma _{j}$
based on importance sampling. The weights $p_{tj\left\vert T\right.
}/\sum\nolimits_{t=1}^{T}p_{tj\left\vert T\right. }$ depend on the
importance of the sample $x_{t}$ for regime $j$, the larger $p_{tj\left\vert
T\right. }$ is, the more important $x_{t}$ is for regime $j$.

From equation (\ref{aq}), we have $\Sigma _{j}\Lambda _{j}=\Lambda
_{j}(\Lambda _{j}^{\prime }\Lambda _{j}+\sigma ^{2}I_{r_{j}^{0}})$. Left
multiply $S_{j}\Sigma _{j}^{-1}$ on both sides, we have $S_{j}\Lambda
_{j}=S_{j}\Sigma _{j}^{-1}\Lambda _{j}(\Lambda _{j}^{\prime }\Lambda
_{j}+\sigma ^{2}I_{r_{j}^{0}})$. From equation (\ref{a}), we have $\Lambda
_{j}=S_{j}\Sigma _{j}^{-1}\Lambda _{j}$, thus%
\begin{equation}
S_{j}\Lambda _{j}=\Lambda _{j}(\Lambda _{j}^{\prime }\Lambda _{j}+\sigma
^{2}I_{r_{j}^{0}}).  \label{foc1}
\end{equation}%
If $\Lambda _{j}$ is a solution for equation (\ref{foc1}) and $\Lambda
_{j}^{\ast }$ equals post-multiplying $\Lambda _{j}$ by the eigenvector
matrix of $\Lambda _{j}^{\prime }\Lambda _{j}$, then $\Lambda _{j}^{\ast }$
is also a solution for equation (\ref{foc1}) and $\Lambda _{j}^{\ast \prime
}\Lambda _{j}^{\ast }$ is diagonal. Thus we can directly choose the solution 
$\Lambda _{j}$ with $\Lambda _{j}^{\prime }\Lambda _{j}$ being diagonal. It
follows that the solution $\Lambda _{j}$ is the eigenvectors of $S_{j}$ and $%
\Lambda _{j}^{\prime }\Lambda _{j}+\sigma ^{2}I_{r_{j}^{0}}$ is the
corresponding eigenvalues. We show in Appendix \ref{secfoc} that $\sigma
^{2} $ satisfies the following condition:%
\begin{equation}
\sigma ^{2}=\frac{1}{N}tr(\frac{1}{T}\sum\nolimits_{t=1}^{T}x_{t}x_{t}^{%
\prime }-\sum\nolimits_{j=1}^{J^{0}}\frac{1}{T}\sum\nolimits_{t=1}^{T}p_{tj%
\left\vert T\right. }\Lambda _{j}\Lambda _{j}^{\prime }).  \label{ca}
\end{equation}%
Note that we do not need to specify the state process $\{z_{1},...,z_{T}\}$\
when deriving the first order conditions (\ref{foc1}) and (\ref{ca}), and
different models of $\{z_{1},...,z_{T}\}$\ correspond to different ways of
calculating $p_{tj\left\vert T\right. }$. In the EM algorithm presented
below, we consider $\{z_{1},...,z_{T}\}$\ as a Markov process regardless of
what the true process of $\{z_{1},...,z_{T}\}$\ is.

\subsection{EM Algorithm\label{EM}}

Let $q^{0}=(q_{1}^{0},...,q_{J^{0}}^{0})^{\prime }$ denote the unconditional
regime probabilities, $\phi ^{0}=(\phi _{1}^{0},...,\phi
_{J^{0}}^{0})^{\prime }$ denote the initial probabilities of $z_{1}$, $Q^{0}$
denote the $(J^{0}\times J^{0})$ matrix of transition probabilities and $%
Q_{jk}^{0}$ denote the probability of switching from state $k$ to state $j$.
If there are no superscripts, $q$, $Q$ and $\phi $ denote parameters as
variables.

For any given $Q$ and $\phi $, at the $h$-th iteration, let $\tilde{\Lambda}%
^{(h)}$ denote the estimated loadings, $\tilde{\sigma}^{2(h)}$ denote the
estimated variance, and $\Pr (z_{1},...,z_{T}\left\vert x_{1:T};\tilde{\theta%
}^{(h)}\right. )$ denote the probability of $z_{1:T}$ conditioning on $%
x_{1:T}$ and evaluated at $\tilde{\theta}^{(h)}=(\tilde{\Lambda}^{(h)},%
\tilde{\sigma}^{2(h)},Q,\phi )$. The EM algorithm maximizes the expectation
of the log-likelihood of $(x_{1:T},z_{1:T})$ with respect to $\Pr
(z_{1},...,z_{T}\left\vert x_{1:T};\tilde{\theta}^{(h)}\right. )$, i.e.,%
\begin{eqnarray*}
l^{(h)}(\Lambda ,\sigma ^{2},Q,\phi ) &\equiv
&\sum\nolimits_{z_{T}=1}^{J^{0}}...\sum\nolimits_{z_{1}=1}^{J^{0}}\log
[\prod\nolimits_{t=1}^{T}L(x_{t}\left\vert z_{t};\Lambda ,\sigma ^{2}\right.
)\Pr (z_{1},...,z_{T}\left\vert Q,\phi \right. )] \\
&&\times \Pr (z_{1},...,z_{T}\left\vert x_{1:T};\tilde{\theta}^{(h)}\right.
).
\end{eqnarray*}%
Considering $z_{t}$ as a Markov process,\textbf{\ }$\Pr
(z_{1},...,z_{T}\left\vert Q,\phi \right. )=\Pr (z_{1}\left\vert \phi
\right. )\prod\nolimits_{t=2}^{T}\Pr (z_{t}\left\vert z_{t-1};Q\right. )$.
Thus 
\begin{eqnarray}
l^{(h)}(\Lambda ,\sigma ^{2},Q,\phi )
&=&\sum\nolimits_{z_{T}=1}^{J^{0}}...\sum\nolimits_{z_{1}=1}^{J^{0}}[\sum%
\nolimits_{t=1}^{T}\log L(x_{t}\left\vert z_{t};\Lambda ,\sigma ^{2}\right. )
\notag \\
&&+\sum\nolimits_{t=2}^{T}\log \Pr (z_{t}\left\vert z_{t-1};Q\right. )+\log
\Pr (z_{1}\left\vert \phi \right. )]\Pr (z_{1},...,z_{T}\left\vert x_{1:T};%
\tilde{\theta}^{(h)}\right. )  \notag \\
&=&\sum\nolimits_{t=1}^{T}\sum\nolimits_{j=1}^{J^{0}}\log L(x_{t}\left\vert
z_{t}=j;\Lambda _{j},\sigma ^{2}\right. )\tilde{p}_{tj\left\vert T\right.
}^{(h)}  \notag \\
&&+\sum\nolimits_{t=2}^{T}\sum\nolimits_{j=1}^{J^{0}}\sum%
\nolimits_{k=1}^{J^{0}}\log Q_{jk}\tilde{p}_{tjk\left\vert T\right.
}^{(h)}+\sum\nolimits_{k=1}^{J^{0}}\log \phi _{k}\tilde{p}_{1k\left\vert
T\right. }^{(h)},  \label{ap}
\end{eqnarray}%
where $\tilde{p}_{tjk\left\vert T\right. }^{(h)}=\Pr
(z_{t}=j,z_{t-1}=k\left\vert x_{1:T};\tilde{\theta}^{(h)}\right. )$ and $%
\tilde{p}_{tj\left\vert T\right. }^{(h)}=\Pr (z_{t}=j\left\vert x_{1:T};%
\tilde{\theta}^{(h)}\right. )=\sum\nolimits_{k=1}^{J^{0}}\tilde{p}%
_{tjk\left\vert T\right. }^{(h)}$ are the smoothed probabilities based on $%
x_{1:T}$ and $\tilde{\theta}^{(h)}$. Appendix \ref{ptjkT} presents a
recursive algorithm for calculating $\tilde{p}_{tjk\left\vert T\right.
}^{(h)}$. From equations (\ref{af}) and (\ref{ag}), we have $\frac{\partial
\log L(x_{t}\left\vert z_{t}=j;\Lambda _{j},\sigma ^{2}\right. )}{\partial
\Lambda _{j}}=-\Sigma _{j}^{-1}\Lambda _{j}+\Sigma
_{j}^{-1}x_{t}x_{t}^{\prime }\Sigma _{j}^{-1}\Lambda _{j}$. Thus 
\begin{equation*}
\frac{\partial \sum\nolimits_{t=1}^{T}\log L(x_{t}\left\vert z_{t}=j;\Lambda
_{j},\sigma ^{2}\right. )\tilde{p}_{tj\left\vert T\right. }^{(h)}}{\partial
\Lambda _{j}}=\sum\nolimits_{t=1}^{T}(-\Sigma _{j}^{-1}\Lambda _{j}+\Sigma
_{j}^{-1}x_{t}x_{t}^{\prime }\Sigma _{j}^{-1}\Lambda _{j})\tilde{p}%
_{tj\left\vert T\right. }^{(h)}=0,
\end{equation*}%
and it follows that%
\begin{eqnarray}
\Sigma _{j}^{-1}\Lambda _{j} &=&\Sigma _{j}^{-1}\tilde{S}_{j}^{(h)}\Sigma
_{j}^{-1}\Lambda _{j},  \label{ah} \\
\text{where }\tilde{S}_{j}^{(h)} &=&\sum\nolimits_{t=1}^{T}\tilde{p}%
_{tj\left\vert T\right. }^{(h)}x_{t}x_{t}^{\prime }/\sum\nolimits_{t=1}^{T}%
\tilde{p}_{tj\left\vert T\right. }^{(h)}.  \notag
\end{eqnarray}%
Similar to equation (\ref{foc1}), equation (\ref{ah}) implies that%
\begin{equation}
\tilde{S}_{j}^{(h)}\tilde{\Lambda}_{j}^{(h+1)}=\tilde{\Lambda}_{j}^{(h+1)}(%
\tilde{\Lambda}_{j}^{(h+1)\prime }\tilde{\Lambda}_{j}^{(h+1)}+\tilde{\sigma}%
^{2(h+1)}I_{r_{j}^{0}}),  \label{an}
\end{equation}%
thus the columns of $\tilde{\Lambda}_{j}^{(h+1)}$ are the eigenvectors of $%
\tilde{S}_{j}^{(h)}$ and the diagonal elements of $\tilde{\Lambda}%
_{j}^{(h+1)\prime }\tilde{\Lambda}_{j}^{(h+1)}+\tilde{\sigma}%
^{2(h+1)}I_{r_{j}^{0}}$ are the corresponding eigenvalues. To save space, we
show in Appendix \ref{secfoc} that 
\begin{equation}
\tilde{\sigma}^{2(h+1)}=\frac{1}{N}tr(\frac{1}{T}\sum%
\nolimits_{t=1}^{T}x_{t}x_{t}^{\prime }-\sum\nolimits_{j=1}^{J^{0}}\frac{1}{T%
}\sum\nolimits_{t=1}^{T}\tilde{p}_{tj\left\vert T\right. }^{(h)}\tilde{%
\Lambda}_{j}^{(h+1)}\tilde{\Lambda}_{j}^{(h+1)\prime }).  \label{ao}
\end{equation}

\begin{remark}
The second equality of equation (\ref{ap}) is crucial. Since factor dynamics
are ignored, $L(x_{1:T}\left\vert z_{1:T};\Lambda ,\sigma ^{2}\right.
)=\prod\nolimits_{t=1}^{T}L(x_{t}\left\vert z_{t};\Lambda ,\sigma
^{2}\right. )$, thus we only need to calculate $\tilde{p}_{tj\left\vert
T\right. }^{(h)}$ rather than the probability of the whole chain $\Pr
(z_{1},...,z_{T}\left\vert x_{1:T};\tilde{\theta}^{(h)}\right. )$. The
latter requires $(J^{0})^{T}$ calculations, which is hopeless when $T$ is
large. If factor dynamics are not ignored, then $L(x_{1:T}\left\vert
z_{1:T};\Lambda ,\sigma ^{2}\right. )=L(x_{1}\left\vert z_{1:T};\Lambda
,\sigma ^{2}\right. )\prod\nolimits_{t=2}^{T}L(x_{t}\left\vert
x_{1:t-1},z_{1:T};\Lambda ,\sigma ^{2}\right. )$. $L(x_{t}\left\vert
x_{1:t-1},z_{1:T};\Lambda ,\sigma ^{2}\right. )$ depends on the chain $%
(z_{1},...,z_{T})$ through $z_{1:t}$, thus we need to calculate $\Pr
(z_{1:t}\left\vert x_{1:T};\tilde{\theta}^{(h)}\right. )$. This requires $%
(J^{0})^{t}$ calculations, which is hopeless when $t$ is large.
\end{remark}

\begin{description}
\item[EM algorithm for $\Lambda $ and $\protect\sigma ^{2}$] 
\end{description}

\textit{Choose any }$Q$\textit{\ and }$\phi $\textit{\ such that }$Q_{jk}>0$%
\textit{\ for any }$j$\textit{\ and }$k$\textit{\ and }$\phi _{k}>0$\textit{%
\ for all }$k$. \textit{Start from randomly generated initial values of }$%
\tilde{\Lambda}^{(0)}$ \textit{and }$\tilde{\sigma}^{2(0)}=1$\textit{. For }$%
h=0,1,...,$

\textit{(E-step): calculate }$\tilde{p}_{tjk\left\vert T\right. }^{(h)}$%
\textit{\ using the algorithm in Appendix \ref{ptjkT}, and calculate }$%
\tilde{S}_{j}^{(h)}=\sum\nolimits_{t=1}^{T}\tilde{p}_{tj\left\vert T\right.
}^{(h)}x_{t}x_{t}^{\prime }/\sum\nolimits_{t=1}^{T}\tilde{p}_{tj\left\vert
T\right. }^{(h)}$ with $\tilde{p}_{tj\left\vert T\right.
}^{(h)}=\sum\nolimits_{k=1}^{J^{0}}\tilde{p}_{tjk\left\vert T\right. }^{(h)}$%
\textit{;}

\textit{(M-step): given }$\tilde{p}_{tjk\left\vert T\right. }^{(h)}$\textit{%
\ and }$\tilde{S}_{j}^{(h)}$\textit{, calculate }$\tilde{\Lambda}%
_{j}^{(h+1)} $\textit{\ as the eigenvectors of }$\tilde{S}_{j}^{(h)}$\textit{%
\ corresponding to the }$r_{j}^{0}$\textit{\ largest eigenvalues, and then
normalize }$\tilde{\Lambda}_{j}^{(h+1)}$\textit{\ such that }$\left\Vert 
\tilde{\Lambda}_{jl}^{(h+1)}\right\Vert ^{2}+\tilde{\sigma}^{2(h+1)}$\textit{%
\ equals the }$l$\textit{-th largest eigenvalue of }$\tilde{S}_{j}^{(h)}$%
\textit{\ for }$l=1,...,r_{j}^{0}$\textit{\ and equation (\ref{ao}) is also
satisfied, where }$\tilde{\Lambda}_{jl}^{(h+1)}$\textit{\ is the }$l$\textit{%
-th column of }$\tilde{\Lambda}_{j}^{(h+1)}$\textit{. Note that the
computation of }$\left\Vert \tilde{\Lambda}_{jl}^{(h+1)}\right\Vert ^{2}$%
\textit{\ and }$\tilde{\sigma}^{2(h+1)}$\textit{\ requires iteration between
equations (\ref{an}) and (\ref{ao}).}

\textit{Iterate the E-step and the M-step until converge. Let }$\tilde{%
\Lambda}_{j}=(\tilde{\lambda}_{j1},...,\tilde{\lambda}_{jN})^{\prime }$%
\textit{, }$\tilde{\Lambda}=(\tilde{\Lambda}_{1},...,\tilde{\Lambda}%
_{J^{0}}) $\textit{\ and }$\tilde{\sigma}^{2}$\textit{\ denote the estimated
parameters, and let }$\tilde{p}_{tj\left\vert T\right. }$\textit{\ and }$%
\tilde{p}_{tjk\left\vert T\right. }$\textit{\ denote the smoothed
probabilities based on }$x_{1:T}$\textit{\ and }$(\tilde{\Lambda},\tilde{%
\sigma}^{2},Q,\phi )$\textit{. }

\bigskip

A special case of the above EM algorithm is when we choose $\phi =q$\ and $%
Q=q1_{J^{0}}^{\prime }$\ ($1_{J^{0}}$\ denotes the $J^{0}\times 1$\ vector
of ones), i.e., we consider $\{z_{1},...,z_{T}\}$\ as an independent
process. For this case, the computation of $\tilde{p}_{tj\left\vert T\right.
}^{(h)}$\ is simplified because the unsmoothed regime probabilities can be
calculated directly by%
\begin{equation*}
\tilde{p}_{tj\left\vert T\right. }^{(h)}=q_{j}L(x_{t}\left\vert z_{t}=j;%
\tilde{\Lambda}_{j}^{(h)},\tilde{\sigma}^{2(h)}\right.
)/\sum\nolimits_{k=1}^{J^{0}}q_{k}L(x_{t}\left\vert z_{t}=k;\tilde{\Lambda}%
_{k}^{(h)},\tilde{\sigma}^{2(h)}\right. ).
\end{equation*}%
This case is preferable if we knew the true process of $\{z_{1},...,z_{T}\}$%
\ is independent. In practice, since the state process of the business
cycle/stock market is highly persistent, smoothed regime probabilities that
capture the persistence should perform significantly better, especially when
mixed frequency data or ragged edge data (data released at non-synchronized
dates) are used. The asymptotic results in Section \ref{AsyDyn} and Section %
\ref{rJ} hold for any $Q$\ and $\phi $\ as long as $\phi _{j}>0$\ for any $j$%
\ and $Q_{jk}>0$\ for any $j$\ and $k$, i.e., they hold for both the
smoothed algorithm and the unsmoothed algorithm.

If the true process of $\{z_{1},...,z_{T}\}$ is Markovian, $Q_{jk}^{0}$ and $%
\phi _{k}^{0}$ can be estimated by%
\begin{eqnarray}
\tilde{Q}_{jk} &=&\sum\nolimits_{t=2}^{T}\tilde{p}_{tjk\left\vert T\right.
}/\sum\nolimits_{j=1}^{J^{0}}\sum\nolimits_{t=2}^{T}\tilde{p}_{tjk\left\vert
T\right. },  \label{Qtilde-jk} \\
\tilde{\phi}_{k} &=&\tilde{p}_{1k\left\vert T\right.
}=\sum\nolimits_{j=1}^{J^{0}}\tilde{p}_{2jk\left\vert T\right. }.
\end{eqnarray}%
We can also plug $\tilde{Q}_{jk}$ and $\tilde{\phi}_{k}$ back in the above
EM algorithm and iterate between $(\tilde{\Lambda},\tilde{\sigma}^{2})$ and $%
(\tilde{Q},\tilde{\phi})$ until convergence. This is the maximum likelihood
estimator when $(Q,\phi )$ is estimated jointly with $(\Lambda ,\sigma ^{2})$%
, see Appendix \ref{secfoc} for details.

The asymptotic results in Section \ref{AsyDyn} and Section \ref{rJ} also
hold as long as $\tilde{\sigma}^{2}$ is bounded and bounded away from zero
in probability. Consistency of $\tilde{\sigma}^{2}$ is not needed. We could
restrict $\tilde{\sigma}^{2}$ in $[\frac{1}{C^{2}},C^{2}]$ for some large $C$
or simply fix down $\tilde{\sigma}^{2}=1$ to avoid the iteration between $%
\tilde{\Lambda}_{j}^{(h+1)}$ and $\tilde{\sigma}^{2(h+1)}$. This only
affects the Euclidean norm of the columns of $\tilde{\Lambda}_{j}^{(h+1)}$.

\begin{remark}
Pelger and Xiong (2021) also considers the model\footnote{%
We changed Pelger and Xiong (2021)'s notation to our notation for better
comparison.} $x_{t}=\Lambda (z_{t})f_{t}^{0}+e_{t}$. The state variable $%
z_{t}$\ is discrete and unobservable in this paper, while in Pelger and
Xiong (2021) $z_{t}$\ is continuous and observable. Also, in this paper $%
\tilde{\Lambda}_{j}$\ are eigenvectors of $\tilde{S}_{j}=\frac{1}{%
\sum\nolimits_{t=1}^{T}\tilde{p}_{tj\left\vert T\right. }}%
\sum\nolimits_{t=1}^{T}\tilde{p}_{tj\left\vert T\right. }x_{t}x_{t}^{\prime
} $, while in Pelger and Xiong (2021) $\hat{\Lambda}(s)$\ are eigenvectors
of $\frac{1}{\sum\nolimits_{t=1}^{T}K_{s}(z_{t})}\sum%
\nolimits_{t=1}^{T}K_{s}(z_{t})x_{t}x_{t}^{\prime }$, where $K_{s}(z_{t})=%
\frac{1}{h}K(\frac{z_{t}-s}{h})$\ is the kernel function. The key difference
is that the weight $K_{s}(z_{t})$\ is observable because $z_{t}$\ is
observable in Pelger and Xiong (2021), but in this paper the weight $\tilde{p%
}_{tj\left\vert T\right. }$\ is unobservable and need to be estimated
jointly with $\Lambda _{j}$.
\end{remark}

\begin{remark}
We can take into account cross-sectional heteroscedasticity as Bai and Li
(2012, 2016) by replacing equation (\ref{aq}) by $\Sigma _{j}=\Lambda
_{j}\Lambda _{j}^{\prime }+\Sigma _{e}$, where $\Sigma _{e}$\ is a $N\times
N $\ diagonal matrix. We show in Appendix \ref{secfoc} that the first order
conditions are%
\begin{eqnarray}
\Sigma _{e}^{-\frac{1}{2}}S_{j}\Sigma _{e}^{-1}\Lambda _{j} &=&\Sigma _{e}^{-%
\frac{1}{2}}\Lambda _{j}(\Lambda _{j}^{\prime }\Sigma _{e}^{-1}\Lambda
_{j}+I_{r_{j}^{0}}),  \label{ac} \\
\Sigma _{e} &=&diag(\frac{1}{T}\sum\nolimits_{t=1}^{T}x_{t}x_{t}^{\prime
}-\sum\nolimits_{j=1}^{J^{0}}\frac{1}{T}\sum\nolimits_{t=1}^{T}p_{tj\left%
\vert T\right. }\Lambda _{j}\Lambda _{j}^{\prime }),  \label{ad}
\end{eqnarray}%
i.e., columns of $\Sigma _{e}^{-\frac{1}{2}}\Lambda _{j}$\ are the
eigenvectors of $\Sigma _{e}^{-\frac{1}{2}}S_{j}\Sigma _{e}^{-\frac{1}{2}}$\
and diagonal elements of $\Lambda _{j}^{\prime }\Sigma _{e}^{-1}\Lambda
_{j}+I_{r_{j}^{0}}$\ are the corresponding eigenvalues. Accordingly, in the
M-step of the EM algorithm, we iterate%
\begin{eqnarray*}
\tilde{\Sigma}_{e}^{-\frac{1}{2}(h)}\tilde{S}_{j}^{(h)}\tilde{\Sigma}%
_{e}^{-1(h)}\tilde{\Lambda}_{j}^{(h+1)} &=&\tilde{\Sigma}_{e}^{-\frac{1}{2}%
(h)}\tilde{\Lambda}_{j}^{(h+1)}(\tilde{\Lambda}_{j}^{(h+1)\prime }\tilde{%
\Sigma}_{e}^{-1(h)}\tilde{\Lambda}_{j}^{(h+1)}+I_{r_{j}^{0}}), \\
\text{and }\tilde{\Sigma}_{e}^{(h+1)} &=&diag(\frac{1}{T}\sum%
\nolimits_{t=1}^{T}x_{t}x_{t}^{\prime }-\sum\nolimits_{j=1}^{J^{0}}\frac{1}{T%
}\sum\nolimits_{t=1}^{T}\tilde{p}_{tj\left\vert T\right. }^{(h)}\tilde{%
\Lambda}_{j}^{(h+1)}\tilde{\Lambda}_{j}^{(h+1)\prime }).
\end{eqnarray*}%
The other steps of the EM algorithm remain unchanged. If we further take
into account cross-sectional dependence, then $\Sigma _{j}=\Lambda
_{j}\Lambda _{j}^{\prime }+\Sigma _{e}$\ and $\Sigma _{e}$\ is non-diagonal.
It can be verified that for this case equation (\ref{ac}) is still valid,
but equation (\ref{ad}) is not. Since $\Sigma _{e}$\ is of dimension $%
N\times N$\ and $N\rightarrow \infty $\ jointly with $T$, certain sparsity
condition has to be imposed on $\Sigma _{e}$\ to consistently estimate $%
\Sigma _{e}$. Results on this topic are very rare (if any) even for factor
model with single regime.
\end{remark}

\subsection{Estimate the Factors}

If the factor dynamics are taken into account, the expectation of $f_{t}$
conditioning on $x_{1:t}$ is%
\begin{equation*}
\sum\nolimits_{z_{1}=1}^{J^{0}}...\sum\nolimits_{z_{t}=1}^{J^{0}}\mathbb{E}%
(f_{t}\left\vert x_{1:t},z_{1:t};\tilde{\Lambda},\tilde{\sigma}^{2}\right.
)\Pr (z_{1:t}\left\vert x_{1:t};\tilde{\Lambda},\tilde{\sigma}^{2},Q,\phi
\right. ),
\end{equation*}%
which is formidable since we need to calculate $\Pr (z_{1:t}\left\vert
x_{1:t};\tilde{\Lambda},\tilde{\sigma}^{2},Q,\phi \right. )$ for each
possible $z_{1:t}$, i.e., we need to calculate $(J^{0})^{t}$ probabilities.
For large $N$, the benefit of considering factor dynamics is marginal and
outweighed by the computational simplicity of ignoring factor dynamics. If
the factor dynamics are ignored, the expectation of $f_{t}$ conditioning on $%
x_{1:T}$ is%
\begin{eqnarray}
\tilde{f}_{t} &=&\sum\nolimits_{j=1}^{J^{0}}\mathbb{E}(f_{t}\left\vert
x_{1:T},z_{t}=j;\tilde{\Lambda}_{j},\tilde{\sigma}^{2}\right. )\tilde{p}%
_{tj\left\vert T\right. }=\sum\nolimits_{j=1}^{J^{0}}\mathbb{E}%
(f_{t}\left\vert x_{t},z_{t}=j;\tilde{\Lambda}_{j},\tilde{\sigma}^{2}\right.
)\tilde{p}_{tj\left\vert T\right. }  \notag \\
&=&\sum\nolimits_{j=1}^{J^{0}}\tilde{\Lambda}_{j}^{\prime }(\tilde{\Lambda}%
_{j}\tilde{\Lambda}_{j}^{\prime }+\tilde{\sigma}^{2}I_{N})^{-1}x_{t}\tilde{p}%
_{tj\left\vert T\right. }.  \label{ftilde-t}
\end{eqnarray}%
Note that the dimension of $\tilde{\Lambda}_{j}^{\prime }(\tilde{\Lambda}_{j}%
\tilde{\Lambda}_{j}^{\prime }+\tilde{\sigma}^{2}I_{N})^{-1}x_{t}$\ is
different across $j$\ if $r_{j}^{0}$\ is different across $j$. Here and also
in the proof of Theorem \ref{factor}, when we add two vectors of different
dimensions, we implicitly augment the vector of smaller dimension with zeros
to make the dimensions of these two vectors equal. Thus $\tilde{f}_{t}$\ is
a $\max r_{j}^{0}$\ dimensional vector.

\section{Asymptotic Results\label{asym}}

\subsection{Assumptions\label{assu}}

We assume the following conditions hold as $(N,T)\rightarrow \infty $. These
conditions are mainly Assumptions A-G in Bai (2003) adapted to the current
regime switching setup.

\begin{assu}
\label{factors}(1) For $j=1,...,J^{0}$, $\frac{1}{Tq_{j}^{0}}%
\sum\nolimits_{t=1}^{T}f_{t}^{0}f_{t}^{0\prime }1_{z_{t}=j}\overset{p}{%
\rightarrow }\Sigma _{F_{j}}$\ for some positive definite $\Sigma _{F_{j}}$,
and $plim\frac{1}{\left\vert A_{j}\right\vert }\sum\nolimits_{t\in
A_{j}}f_{t}^{0}f_{t}^{0\prime }$ is also positive definite, where $A_{j}$ is
defined in section \ref{id}.

(2) For some $\alpha >16$, there exists $M>0$\ such that $\mathbb{E}%
(\left\Vert f_{t}^{0}\right\Vert ^{\alpha })\leq M$\ for all $t$.
\end{assu}

Assumption \ref{factors} corresponds to Assumption A in Bai (2003).
Assumption \ref{factors}(1) rules out the possibility that for regime $j$,
the subsample $\{t:z_{t}=j\}$ can be further decomposed into multiple
regimes, see the discussion in Section \ref{id}. The factor process is
allowed to be dynamic such that $C(L)f_{t}=\epsilon _{t}$. Assumption \ref%
{factors}(2) assumes that the factors have bounded moments.

\begin{assu}
\label{loadings}(1) For $j=1,...,J^{0}$, $\frac{1}{N}\Lambda _{j}^{0\prime
}\Lambda _{j}^{0}\rightarrow \Sigma _{\Lambda _{j}}$\ for some positive
definite $\Sigma _{\Lambda _{j}}$ and $\left\Vert \lambda
_{ji}^{0}\right\Vert \leq M$\ for any $i=1,...,N$.\textbf{\ }

(2) For any $j=1,...,J^{0}$\ and $k=1,...,J^{0}$, $\min_{t}\frac{1}{N}%
f_{t}^{0\prime }\Lambda _{j}^{0\prime }M_{\Lambda _{k}^{0}}\Lambda
_{j}^{0}f_{t}^{0}\geq C$ for some $C>0$.
\end{assu}

Assumption \ref{loadings}(1) corresponds to Assumption B in Bai (2003).
Assumption \ref{loadings}(1) ensures that each factor has a nontrivial
contribution within each regime, and $\left\Vert \lambda
_{ji}^{0}\right\Vert $ is assumed to be uniformly bounded over $i$.
Assumption \ref{loadings}(2) is the identification condition for determining
which regime each $x_{t}$ belongs to, see Section \ref{id} for details on
the implication of this condition.

\begin{assu}
\label{error}(1) $\mathbb{E}(e_{it})=0$, $\mathbb{E}(e_{it}^{\alpha })\leq M$
for some $\alpha >16$.

(2) $\sum\nolimits_{k=1}^{N}\tau _{ik}\leq M$ for any $i$, where $\mathbb{E}%
(e_{it}e_{kt})=\tau _{ik,t}$ with $\left\vert \tau _{ik,t}\right\vert \leq
\tau _{ik}$ for some $\tau _{ik}>0$ and for all $t$.

(3) $\sum\nolimits_{s=1}^{T}\gamma _{ts}\leq M$ for all $t$, where $%
E(e_{it}e_{is})=\gamma _{i,ts}$ with $\left\vert \gamma _{i,ts}\right\vert
\leq \gamma _{ts}$ for some $\gamma _{ts}>0$ and for all $i$.

(4) $\mathbb{E}(\left\Vert \frac{1}{\sqrt{T}}\sum%
\nolimits_{t=1}^{T}(e_{it}e_{kt}-\mathbb{E(}e_{it}e_{kt}))1_{z_{t}=j}\right%
\Vert ^{2})\leq M$\ for all $i=1,...,N$, $k=1,...,N$ and $j=1,...,J^{0}$.
\end{assu}

Assumption \ref{error} is modified slightly from Assumption C in Bai (2003).
The error term is allowed to have limited cross-sectional and serial
dependence as well as heteroscedasticity.

\begin{assu}
\label{state}For $j=1,...,J^{0}$, $\frac{1}{T}\sum%
\nolimits_{t=1}^{T}1_{z_{t}=j}\overset{p}{\rightarrow }q_{j}^{0}$ and $%
0<q_{j}^{0}<1$.
\end{assu}

The asymptotic results in Section \ref{AsyDyn} and Section \ref{rJ} are
valid as long as Assumption \ref{state} holds and the other assumptions in
this section hold conditioning on $\{z_{t},t=1,...,T\}$. Thus the state
process $\{z_{t},t=1,...,T\}$\ is allowed to be non-Markovian and correlated
with $f_{s}^{0}$\ and $e_{is}$\ for all $i$\ and $s$. Knowledge of the true
state process is not needed.

\begin{assu}
\label{moments}(1)For some $\beta \geq 2$, $\mathbb{E}(\left\Vert \frac{1}{%
\sqrt{N}}\sum\nolimits_{i=1}^{N}\lambda _{ji}^{0}e_{it}\right\Vert ^{\beta
})\leq M$ for all $j=1,...,J^{0}$ and all $t$.

(2) $\mathbb{E}(\left\Vert \frac{1}{\sqrt{T}}\sum%
\nolimits_{t=1}^{T}f_{t}^{0}e_{it}1_{z_{t}=j}\right\Vert ^{2})\leq M$ for
all $j=1,...,J^{0}$ and all $i$.
\end{assu}

Assumption \ref{moments} is modified slightly from Assumption D in Bai
(2003). Assumption \ref{moments}(1) assumes that the errors are weakly
correlated across $i$ for each $t$. When $\beta =2$, Assumption \ref{moments}%
(1) is implied by Assumptions \ref{loadings}(1), \ref{error}(1) and \ref%
{error}(2). Assumption \ref{moments}(2) assumes that the errors are weakly
correlated across $t$ for each $i$. Assumption \ref{moments}(2) is implied
by Assumptions \ref{factors}(2), \ref{error}(1) and \ref{error}(4) if we
further assume the factors are nonrandom or independent with the errors.

\begin{assu}
\label{diff-eig}For each $j=1,...,J^{0}$, the eigenvalues of $\Sigma
_{\Lambda _{j}}^{\frac{1}{2}}\Sigma _{F_{j}}\Sigma _{\Lambda _{j}}^{\frac{1}{%
2}}$\ are different.
\end{assu}

Assumption \ref{diff-eig} corresponds to Assumption G in Bai (2003). With
Assumption \ref{diff-eig}, the loadings and the factors are identifiable up
to a rotation. For identification of the loading space and the factor space,
Assumption \ref{diff-eig} is not needed.

\begin{assu}
\label{dist}(1) $\mathbb{E}(\left\Vert \frac{1}{\sqrt{NT}}%
\sum\nolimits_{k=1}^{N}\sum\nolimits_{t=1}^{T}\lambda _{i}^{0}(e_{it}e_{kt}-%
\mathbb{E(}e_{it}e_{kt}))1_{z_{t}=j}\right\Vert ^{2})\leq M$\ for all $%
i=1,...,N$ and $j=1,...,J^{0}$; and $\mathbb{E}(\left\Vert \frac{1}{\sqrt{NT}%
}\sum\nolimits_{i=1}^{N}\sum\nolimits_{t=1}^{T}(e_{it}e_{is}-\mathbb{E(}%
e_{it}e_{is}))f_{t}^{0}1_{z_{t}=j}\right\Vert ^{2})\leq M$\ for all $%
s=1,...,T$ and $j=1,...,J^{0}$.

(2) $\mathbb{E}(\left\Vert \frac{1}{\sqrt{NT}}\sum\nolimits_{k=1}^{N}\sum%
\nolimits_{t=1}^{T}\lambda _{k}^{0}f_{t}^{0\prime
}e_{kt}1_{z_{t}=j}\right\Vert ^{2})\leq M$\ for $j=1,...,J^{0}$.

(3) Define $\Phi _{ji}=plim\frac{1}{T}\sum\nolimits_{s=1}^{T}\sum%
\nolimits_{t=1}^{T}\mathbb{E}(f_{t}^{0}f_{s}^{0\prime
}e_{is}e_{it}1_{z_{s}=j}1_{z_{t}=j})$. For $j=1,...,J^{0}$, $\frac{1}{\sqrt{%
Tq_{j}^{0}}}\sum\nolimits_{t=1}^{T}f_{t}^{0}e_{it}1_{z_{t}=j}\overset{d}{%
\rightarrow }\mathcal{N}(0,\Phi _{ji})$.

(4) Define $\Gamma _{jt}=lim\frac{1}{N}\sum\nolimits_{i=1}^{N}\sum%
\nolimits_{k=1}^{N}\lambda _{ji}^{0}\lambda _{jk}^{0}\mathbb{E}%
(e_{it}e_{kt}) $. For $j=1,...,J^{0}$, $\frac{1}{\sqrt{N}}%
\sum\nolimits_{i=1}^{N}\lambda _{ji}^{0}e_{it}\overset{d}{\rightarrow }%
\mathcal{N}(0,\Gamma _{jt})$.
\end{assu}

Assumption \ref{dist} corresponds to Assumption F in Bai (2003). Part (3)
and part (4) are just central limit theorems and will be used for deriving
the limit distributions of the estimated factors and loadings.

\subsection{Asymptotic Results\label{AsyDyn}}

\begin{description}
\item[Consistency of the estimated loading space] 
\end{description}

\begin{thm}
\label{consis}Under Assumptions \ref{factors}, \ref{loadings}(1), \ref{error}
and \ref{state}, $\frac{1}{N}\left\Vert M_{\tilde{\Lambda}_{j}}\Lambda
_{j}^{0}\right\Vert _{F}^{2}=O_{p}(\frac{1}{\sqrt{\delta _{NT}}})$ for each $%
j$ as $(N,T)\rightarrow \infty $.
\end{thm}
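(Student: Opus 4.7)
The plan is to exploit the M-step first-order condition $\hat{S}_j \hat{\Lambda}_j = \hat{\Lambda}_j D_j$ (with $D_j = \hat{\Lambda}_j'\hat{\Lambda}_j + \hat{\sigma}^2 I_{r_j^0}$ diagonal, containing the top $r_j^0$ eigenvalues of $\hat{S}_j$, which are of order $N$), coupled with the fact that $\hat{p}_{tj}\in[0,1]$ and $\sum_j \hat{p}_{tj}=1$. The bounded-weight property is the key device that lets every subsequent bound be made uniform in the random EM weights, thereby side-stepping the circular dependence of $\hat{p}_{tj}$ on the estimated loadings.

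I would first substitute $x_t = \Lambda_{z_t}^0 f_t^0 + e_t$ and split by the true regime to obtain
\[
\hat{S}_j = \sum_{k=1}^{J^0} \Lambda_k^0 A_{jk} \Lambda_k^{0\prime} + \hat{R}_j, \qquad A_{jk} = \frac{1}{\sum_t \hat{p}_{tj}}\sum_{t:z_t=k}\hat{p}_{tj}\, f_t^0 f_t^{0\prime},
\]
where $\hat{R}_j$ collects all terms containing $e_t$. Assumptions \ref{factors} and \ref{state} bound each $\|A_{jk}\|$ by $O_p(1)$, and Assumption \ref{error}, together with a Cauchy--Schwarz argument that uses $\hat{p}_{tj}\le 1$, bounds $\|\hat{R}_j\|$ by $O_p(\max\{1,\sqrt{N/T}\})$, both uniformly in the weights. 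Substituting this into the eigen-identity $\hat{\Lambda}_j = \hat{S}_j\hat{\Lambda}_j D_j^{-1}$ (with $D_j^{-1}=O_p(1/N)$) shows that the columns of $\hat{\Lambda}_j$ lie (approximately) in the span of $\Lambda_1^0,\ldots,\Lambda_{J^0}^0$, up to a residual whose squared Frobenius norm, normalized by $N$, is $O_p(\delta_{NT}^{-2})$.

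To upgrade this ``columns of $\hat{\Lambda}_j$ are in the span of the $\Lambda_k^0$'s'' into the required ``$\Lambda_j^0$ is in the span of $\hat{\Lambda}_j$'', I would invoke the likelihood-maximizing property $l(\hat{\Lambda},\hat{\sigma}^2,q)\ge l(\Lambda^0,\tilde{\sigma}^2,q)$ at $q_j = 1/J^0$ and any $\tilde{\sigma}^2$ in the feasible range. Using the Woodbury identity for $\Sigma_j^{-1}$ and the matrix determinant lemma for $|\Sigma_j|$, this inequality reduces to a comparison of log-sum-exp expressions of the quadratic forms $x_t'\Lambda_j(\Lambda_j'\Lambda_j+\sigma^2 I)^{-1}\Lambda_j' x_t$. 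Averaging over $\{t:z_t=j\}$ and applying Assumption \ref{factors}, the inequality forces the regime-$j$ average of $f_t^{0\prime}\Lambda_j^{0\prime} M_{\hat\Lambda_j}\Lambda_j^0 f_t^0/N$ to be small, and the positive definiteness of $\Sigma_{F_j}$ then yields the claimed bound on $(1/N)\|M_{\hat\Lambda_j}\Lambda_j^0\|_F^2$.

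The main obstacle will be extracting the correct rate $O_p(\delta_{NT}^{-1/2})$ from the mixture log-sum structure: the naive bound $\log\sum_k q_k L_k \ge \log(q_{z_t} L_{z_t})$ is loose by an $O(\log J^0)$ term per observation, and the concentrated quasi-likelihood has only quadratic curvature in directions orthogonal to $\Lambda_j^0$, so combining these with the $O_p(\delta_{NT}^{-1})$ fluctuation level of the weighted sample covariance produces the slower rate $O_p(\delta_{NT}^{-1/2})$ rather than the usual PCA rate $O_p(\delta_{NT}^{-2})$. This preliminary consistency rate is nonetheless enough to bootstrap the sharper asymptotic arguments invoked in the subsequent theorems, where Assumptions \ref{loadings}(2) and \ref{moments} can be brought in to tighten the analysis.
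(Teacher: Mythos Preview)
Your overall architecture---compare the quasi-likelihood at $\hat{\Lambda}$ to that at $\Lambda^{0}$ and extract control of $\|M_{\hat{\Lambda}_j}\Lambda_j^0\|_F^2$ via Woodbury and the positive definiteness of $\Sigma_{F_j}$---matches the paper's approach. But there are two genuine gaps that, as written, make the argument circular or incomplete.

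First, you assume from the outset that $D_j=\hat{\Lambda}_j'\hat{\Lambda}_j+\hat{\sigma}^2I$ contains the \emph{top} $r_j^0$ eigenvalues of $\hat{S}_j$ and that these are of order $N$, so that $D_j^{-1}=O_p(1/N)$. Neither fact is available at this stage. The first-order condition $\hat{S}_j\hat{\Lambda}_j=\hat{\Lambda}_jD_j$ only says $\hat{\Lambda}_j$ consists of \emph{some} eigenvectors of $\hat{S}_j$; nothing in the EM iteration forces them to be the leading ones (the paper remarks on exactly this difficulty after Proposition~\ref{VjHj}). The order-$N$ lower bound on the relevant eigenvalues is the content of Lemma~\ref{lamhat norm}, which is proved \emph{using} Theorem~\ref{consis}, so invoking it here is circular. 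Your entire first paragraph (the eigen-identity route) therefore cannot stand on its own and is in fact not used in the paper's proof of this theorem at all.

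Second, in your likelihood-comparison step you write ``averaging over $\{t:z_t=j\}$ \ldots forces the regime-$j$ average of $f_t^{0\prime}\Lambda_j^{0\prime}M_{\hat{\Lambda}_j}\Lambda_j^0 f_t^0/N$ to be small.'' This presupposes a labeling that matches $\hat{\Lambda}_j$ to $\Lambda_j^0$, but no such labeling has been established. The inequality $l(\hat{\Lambda},\hat{\sigma}^2,q)\ge l(\Lambda^0,\hat{\sigma}^2,q)$, after the upper bound $\log\sum_k q_kL_k\le\log\max_kL_k$ on the left side, produces control of $\sum_t\|M_{\hat{\Lambda}_{m_t}}\Lambda_{z_t}^0 f_t^0\|^2$ where $m_t=\arg\max_k L(x_t\mid z_t=k;\hat{\Lambda}_k,\hat{\sigma}^2)$---not of $\sum_t\|M_{\hat{\Lambda}_{z_t}}\Lambda_{z_t}^0 f_t^0\|^2$. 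To pass from the former to the desired conclusion, the paper uses a pigeonhole argument: among the (roughly) $q_j^0T$ observations with $z_t=j$, at least $q_j^0T/J^0$ share a common value of $m_t$; \emph{that} value is declared to be the label $j$, and then Assumption~\ref{factors}(1) on the sub-subsample delivers $\tfrac{1}{N}\|M_{\hat{\Lambda}_j}\Lambda_j^0\|_F^2=O_p(\delta_{NT}^{-1/2})$. Without this step your argument has no mechanism to tie a specific estimated $\hat{\Lambda}_j$ to the specific true $\Lambda_j^0$.
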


Theorem \ref{consis} shows that the estimated loading space is consistent
without observing the state variable $z_{t}$. Note that the estimated
loadings $\tilde{\Lambda}_{j}$\ and the estimated regime probabilities $%
\tilde{p}_{tj\left\vert T\right. }$\ depend on each other, but the standard
technique in Bai (2003) for analyzing $\tilde{\Lambda}_{j}$\ is applicable
only when $\tilde{p}_{tj\left\vert T\right. }=1_{z_{t}=j}$. This is the
first technical difficulty we encounter in going from one regime to multiple
regimes. The crucial point for Theorem \ref{consis} is that if the linear
spaces spanned by $\tilde{\Lambda}_{j}$\ and $\Lambda _{j}^{0}$\ differ too
much, as long as $\min \phi _{j}>0$\ and $\min Q_{jk}>0$, the likelihood of $%
\tilde{\Lambda}$\ would be smaller than the likelihood of $\Lambda ^{0}$\
uniformly over all possible $\{z_{1},...,z_{T}\}$, i.e.,%
\begin{eqnarray*}
e^{l(\tilde{\Lambda},\tilde{\sigma}^{2})} &\leq
&\sup_{\{z_{1},...,z_{T}\}}\prod\nolimits_{t=1}^{T}L(x_{t}\left\vert z_{t};%
\tilde{\Lambda},\tilde{\sigma}^{2}\right. ) \\
&<&\sum\nolimits_{z_{T}=1}^{J^{0}}...\sum\nolimits_{z_{1}=1}^{J^{0}}\prod%
\nolimits_{t=1}^{T}L(x_{t}\left\vert z_{t};\Lambda ^{0},\tilde{\sigma}%
^{2}\right. )\Pr (z_{1},...,z_{T})=e^{l(\Lambda _{j}^{0},\tilde{\sigma}%
^{2})}.
\end{eqnarray*}%
This crucial point is due to large $N$, see the Appendix for the formal
proof. Based on Theorem \ref{consis}, we show that the estimated regime
probabilities are consistent.

\begin{description}
\item[Consistency of the estimated regime probabilities] 
\end{description}

\begin{thm}
\label{ptjT}Under Assumptions \ref{factors}-\ref{state} and \ref{moments}%
(1), as $(N,T)\rightarrow \infty $, for each $j$ and for any fixed $\eta >0$,

(1) $\sup_{t}\left\vert \tilde{p}_{tj\left\vert T\right.
}-1_{z_{t}=j}\right\vert =o_{p}(\frac{1}{N^{\eta }})$ if $T^{^{\frac{16}{%
\alpha }}}/N\rightarrow 0$ and $T^{\frac{2}{\alpha }+\frac{2}{\beta }%
}/N\rightarrow 0$,

(2) $\left\vert \tilde{p}_{tj\left\vert T\right. }-1_{z_{t}=j}\right\vert
=o_{p}(\frac{1}{N^{\eta }})$.
\end{thm}

Note that $\eta $ could be large but it is fixed as $(N,T)\rightarrow \infty 
$. $\alpha $ and $\beta $ could also be large as long as Assumptions \ref%
{factors}(2), \ref{error}(1) and \ref{moments}(1) are satisfied. Theorem \ref%
{ptjT} shows that $\tilde{p}_{tj\left\vert T\right. }$ is consistent as $%
N\rightarrow \infty $ and is uniformly consistent if $T$ is relatively small
compared to $N$. The proof utilizes the exponential likelihood ratio.

Theorem \ref{ptjT} implies that we can consistently identify which regime $%
x_{t}$ belongs to for all $t$, if there is common regime switching in the
loadings and the dimension of $x_{t}$ tends to infinity. Theorem \ref{ptjT}
also implies that we can consistently detect regime switching right after
the turning point with only one observation, so that we do not need to wait
for many observations of the time series from the new regime. This could
improve the speed of detection of new turning points, especially when high
frequency data is used.

An interesting special case is when the proposed algorithm is applied to
factor models with common breaks in the loadings. Various methods are
proposed recently for estimating the break points, Theorem \ref{ptjT}
implies that we can also consistently estimate the break points using the
proposed EM algorithm.

\begin{description}
\item[Convergence rate of the estimated loading space] 
\end{description}

If the true states $z_{t}$ were known, asymptotic properties of the
estimated loadings and factors are straightforward. Based on Theorem \ref%
{ptjT}, we shall show that using estimated regime probabilities does not
affect the asymptotic results. Define $W_{jNT}=\frac{1}{N}(\tilde{\Lambda}%
_{j}^{\prime }\tilde{\Lambda}_{j}+\tilde{\sigma}^{2}I_{r_{j}^{0}})(\frac{1}{T%
}\sum\nolimits_{t=1}^{T}\tilde{p}_{tj\left\vert T\right. })$ and $H_{j}=%
\frac{\sum\nolimits_{t=1}^{T}f_{t}^{0}f_{t}^{0\prime }1_{z_{t}=j}}{T}\frac{%
\Lambda _{j}^{0\prime }\tilde{\Lambda}_{j}}{N}W_{jNT}^{-1}$, then we have:

\begin{prop}
\label{VjHj}Let $V_{j}$ be a $r_{j}^{0}\times r_{j}^{0}$ diagonal matrix
consisting of eigenvalues of $\Sigma _{\Lambda _{j}}^{\frac{1}{2}}\Sigma
_{F_{j}}\Sigma _{\Lambda _{j}}^{\frac{1}{2}}$ in descending order and $%
\Upsilon _{j}$ be the corresponding eigenvectors. Under Assumptions \ref%
{factors}-\ref{diff-eig}, and assume $T^{^{\frac{16}{\alpha }}}/N\rightarrow
0$ and $T^{\frac{2}{\alpha }+\frac{2}{\beta }}/N\rightarrow 0$, as $%
(N,T)\rightarrow \infty $,

(1) $W_{jNT}\overset{p}{\rightarrow }q_{j}^{0}V_{j}$ for each $j$,

(2) $H_{j}\overset{p}{\rightarrow }\Sigma _{\Lambda _{j}}^{-\frac{1}{2}%
}\Upsilon _{j}V_{j}^{\frac{1}{2}}$ for each $j$.\footnote{$H_{j}$
corresponds to $(H^{-1})^{\prime }$ for the rotation matrix $H$ in Bai
(2003).}
\end{prop}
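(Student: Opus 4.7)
The plan is to linearize the M-step first-order condition, replace the posterior weights $\hat{p}_{tj}$ by the true indicators $1_{z_t=j}$ using Theorem \ref{ptj}, and then read off both statements from the resulting approximate eigen-equation. Multiplying the FOC (\ref{foc1}) with $\hat{S}_j$ in place of $S_j$ by $\tfrac{1}{N}(\sum_t\hat{p}_{tj}/T)$ yields the identity
\[
\frac{1}{NT}\sum_{t=1}^{T}\hat{p}_{tj}\,x_t x_t'\hat{\Lambda}_j \;=\; \hat{\Lambda}_j W_{jNT}.
\]
Theorem \ref{ptj}(1) gives $\sup_t|\hat{p}_{tj}-1_{z_t=j}|=o_p(N^{-\eta})$ under the stated rate conditions $T^{16/\alpha}/N\to 0$ and $T^{2/\alpha+2/\beta}/N\to 0$, so I can replace $\hat{p}_{tj}$ by $1_{z_t=j}$ at negligible cost. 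Substituting $x_t=\Lambda_j^0 f_t^0+e_t$ on $\{z_t=j\}$ and expanding yields
\[
\Lambda_j^0 \Big(\tfrac{1}{T}\sum_t f_t^0 f_t^{0\prime}1_{z_t=j}\Big)\tfrac{\Lambda_j^{0\prime}\hat{\Lambda}_j}{N}+R_{jNT}=\hat{\Lambda}_j W_{jNT},
\]
where $R_{jNT}$ collects three cross/error terms bounded via Assumptions \ref{error} and \ref{moments}, the bound $\|\hat{\Lambda}_j\|/\sqrt{N}=O_p(1)$ (from the FOC and the top-eigenvalue behaviour of $\hat{S}_j$), and the decomposition $\hat{\Lambda}_j=P_{\Lambda_j^0}\hat{\Lambda}_j+M_{\Lambda_j^0}\hat{\Lambda}_j$ combined with Theorem \ref{consis}. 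Reading off this display gives $\hat{\Lambda}_j=\Lambda_j^0 H_j+o_p(1)$.

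For part (1), left-multiplying the display above by $\Lambda_j^{0\prime}/N$ and setting $A_j:=\Lambda_j^{0\prime}\hat{\Lambda}_j/N$ reduces it, via Assumptions \ref{loadings}(1) and \ref{factors}(1), to
\[
q_j^0\,\Sigma_{\Lambda_j}\Sigma_{F_j}\,A_j=A_j W_{jNT}+o_p(1).
\]
Since $\hat{\Lambda}_j'\hat{\Lambda}_j$ is diagonal with entries in descending order by the EM normalization, $W_{jNT}$ is likewise diagonal with entries in descending order; its diagonal entries must therefore equal, up to $o_p(1)$, the eigenvalues of $q_j^0\Sigma_{\Lambda_j}\Sigma_{F_j}$ in descending order. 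Because $\Sigma_{\Lambda_j}\Sigma_{F_j}$ is similar to $\Sigma_{\Lambda_j}^{1/2}\Sigma_{F_j}\Sigma_{\Lambda_j}^{1/2}$, those eigenvalues are the diagonal entries of $V_j$, giving $W_{jNT}\overset{p}{\rightarrow}q_j^0 V_j$.

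For part (2), write $A_j=\Sigma_{\Lambda_j}^{1/2}B_j+o_p(1)$; then $B_j$ satisfies $\Sigma_{\Lambda_j}^{1/2}\Sigma_{F_j}\Sigma_{\Lambda_j}^{1/2}B_j=B_j V_j+o_p(1)$, so by Assumption \ref{diff-eig} (distinct eigenvalues) each column of $B_j$ equals the matching column of $\Upsilon_j$ up to scale, i.e.\ $A_j=\Sigma_{\Lambda_j}^{1/2}\Upsilon_j D+o_p(1)$ for some diagonal $D$. From $\hat{\Lambda}_j=\Lambda_j^0 H_j+o_p(1)$ and $A_j=\Sigma_{\Lambda_j} H_j+o_p(1)$ it follows that $H_j=\Sigma_{\Lambda_j}^{-1/2}\Upsilon_j D+o_p(1)$. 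Combining this with the limit $\hat{\Lambda}_j'\hat{\Lambda}_j/N=H_j'\Sigma_{\Lambda_j}H_j+o_p(1)\overset{p}{\rightarrow} V_j$ (which is implied by part (1) together with Assumption \ref{factors}(1)) forces $D^2=V_j$, so $D=V_j^{1/2}$ with the sign fixed by the orientation chosen in the EM normalization, giving (2).

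The main obstacle is the careful control of $R_{jNT}$. The most delicate piece is the term $\frac{1}{NT}\sum_t 1_{z_t=j}\,e_t e_t'\hat{\Lambda}_j$, whose naive bound is too loose; one must split $\hat{\Lambda}_j=P_{\Lambda_j^0}\hat{\Lambda}_j+M_{\Lambda_j^0}\hat{\Lambda}_j$ so that Theorem \ref{consis} can dispose of the orthogonal component, while Assumption \ref{error}(2)(4) handles the projected part. The moment-ratio conditions $T^{16/\alpha}/N\to 0$ and $T^{2/\alpha+2/\beta}/N\to 0$ are precisely what is needed so that the uniform replacement $\hat{p}_{tj}\to 1_{z_t=j}$ costs only $o_p(1)$ even after summing $T$ terms; once this uniform reduction is in place the rest of the argument reduces to matrix perturbation of an eigen-equation with a distinct-spectrum target.
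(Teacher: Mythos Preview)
Your overall route---linearize the FOC, replace $\hat p_{tj}$ by $1_{z_t=j}$ via Theorem \ref{ptj}, and arrive at an approximate eigen-equation $q_j^0\Sigma_{\Lambda_j}\Sigma_{F_j}A_j=A_jW_{jNT}+o_p(1)$ with $A_j=\Lambda_j^{0\prime}\hat\Lambda_j/N$---is the same as the paper's. The gap is in the passage from this equation to the conclusion. You assert that ``its diagonal entries must therefore equal, up to $o_p(1)$, the eigenvalues of $q_j^0\Sigma_{\Lambda_j}\Sigma_{F_j}$ in descending order,'' but this requires the columns of $A_j$ to be bounded away from zero and linearly independent, which you have not established and which is essentially equivalent to $W_{jNT}^{-1}=O_p(1)$, i.e.\ to part~(1) itself. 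The earlier sentence ``reading off this display gives $\hat\Lambda_j=\Lambda_j^0H_j+o_p(1)$'' already presupposes $W_{jNT}^{-1}=O_p(1)$, so the argument is circular. Neither the FOC nor any auxiliary result gives you $\hat\Lambda_{jl}'\hat\Lambda_{jl}/N$ bounded below a priori (Lemma \ref{lamhat norm} only yields $\hat\Lambda_{jl}'\hat\Lambda_{jl}\gtrsim\sqrt{\delta_{NT}}$). Relatedly, the claim that $\hat\Lambda_j'\hat\Lambda_j$ has entries in descending order ``by the EM normalization'' is not granted: the FOC (\ref{foc1}) says only that $\hat\Lambda_j$ consists of \emph{some} eigenvectors of $\hat S_j$, not the top $r_j^0$ in order---this is exactly the difficulty flagged in the footnote to Proposition \ref{VjHj}.

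The paper avoids this circularity by normalizing first: it works with $\check\Lambda_j=\hat\Lambda_j(\hat\Lambda_j'\hat\Lambda_j)^{-1/2}$ and $\bar\Lambda_j^0=\Lambda_j^0(\Lambda_j^{0\prime}\Lambda_j^0)^{-1/2}\Upsilon_{jNT}$, so that the remainder terms involve only unit-norm matrices and are $O_p(1/\delta_{NT})$ without any prior control on $\hat\Lambda_j'\hat\Lambda_j/N$. This delivers $\|\bar\Lambda_j^{0\prime}\check\Lambda_jW_{jNT}-q_j^0V_{jNT}\bar\Lambda_j^{0\prime}\check\Lambda_j\|_F=o_p(1)$. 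Then, rather than inverting anything, the paper uses Theorem \ref{consis} to show $\check\Lambda_j'\bar\Lambda_j^0\bar\Lambda_j^{0\prime}\check\Lambda_j\to I_{r_j^0}$ (so the columns of $\bar\Lambda_j^{0\prime}\check\Lambda_j$ are asymptotically orthonormal) and runs a column-by-column argument: for column $l$, at least one entry $(\bar\Lambda_j^{0\prime}\check\Lambda_j)_{il}$ is bounded away from zero, forcing $W_{jNT,l}-q_j^0V_{jNT,i}=o_p(1)$ for that $i$; distinctness of the limiting eigenvalues (Assumption \ref{diff-eig}) then forces the remaining entries of that column to vanish, and asymptotic orthogonality of columns pins down the full matching. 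This element-wise argument is precisely what replaces your implicit invertibility/ordering assumption; once it yields $W_{jNT}\overset{p}{\to}q_j^0V_j$, the limit of $H_j$ follows by direct substitution as in your part~(2).
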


Proposition \ref{VjHj} is an important auxiliary result, and part (1) and
part (2) corresponds to Lemma A.3 and Proposition 1 in Bai (2003),
respectively. Lemma A.3 in Bai (2003) is based on the fact that the
estimated factors are $\sqrt{T}$\ times the eigenvectors corresponding to
the $r$\ largest eigenvalues\footnote{$r$ denotes the number of factors in
Bai (2003).} of $XX^{\prime }$\ and consequently $\tilde{\Lambda}^{\prime }%
\tilde{\Lambda}$\ is a diagonal matrix consisting of the $r$\ largest
eigenvalues of $\frac{1}{T}\sum\nolimits_{t=1}^{T}x_{t}x_{t}^{\prime }$.
However, here the first order condition (\ref{foc1})\ only tells us the
columns of $\tilde{\Lambda}_{j}$\ are the eigenvectors of $S_{j}$\ and $%
\tilde{\Lambda}_{j}^{\prime }\tilde{\Lambda}_{j}+\tilde{\sigma}%
^{2}I_{r_{j}^{0}}$\ are the corresponding eigenvalues. Condition (\ref{foc1}%
) does not tells us whether these eigenvalues are the $r_{j}^{0}$\ largest
eigenvalues of $S_{j}$\ or not.\ This is the second technical difficulty we
encounter in going from one regime to multiple regimes. Our proof strategy
of Proposition \ref{VjHj} utilizes Theorem \ref{consis} and is totally
different from Bai (2003)'s proof for his Proposition 1.

\begin{thm}
\label{rate}Under Assumptions \ref{factors}-\ref{diff-eig}, and assume $T^{^{%
\frac{16}{\alpha }}}/N\rightarrow 0$ and $T^{\frac{2}{\alpha }+\frac{2}{%
\beta }}/N\rightarrow 0$, as $(N,T)\rightarrow \infty $, $\frac{1}{N}%
\left\Vert \tilde{\Lambda}_{j}-\Lambda _{j}^{0}H_{j}\right\Vert
_{F}^{2}=O_{p}(\frac{1}{\delta _{NT}^{2}})$ for each $j$.
\end{thm}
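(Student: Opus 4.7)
The plan is to adapt the Bai (2003) decomposition to the regime switching setting, using Proposition \ref{VjHj} to control the scaling matrix $W_{jNT}$ and Theorem \ref{ptj} to absorb the error introduced by using estimated rather than true regime indicators.

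First, I would derive a representation of $\hat{\Lambda}_j - \Lambda_j^0 H_j$ from the first order condition (\ref{foc1}). Post-multiplying both sides of (\ref{foc1}) by $(\sum_{t}\hat{p}_{tj})/(NT)$ and invoking the definition of $W_{jNT}$ yields
$$\hat{\Lambda}_j W_{jNT} = \frac{1}{NT}\sum_{t=1}^{T}\hat{p}_{tj}\,x_t x_t'\hat{\Lambda}_j.$$
Splitting $\hat{p}_{tj} = 1_{z_t=j} + (\hat{p}_{tj}-1_{z_t=j})$, substituting $x_t = \Lambda_j^0 f_t^0 + e_t$ inside the indicator sum, and identifying the quadratic-in-$f_t^0$ piece with the definition of $H_j$ gives
$$\hat{\Lambda}_j - \Lambda_j^0 H_j = \bigl(I_1 + I_2 + I_3 + I_4\bigr)W_{jNT}^{-1},$$
where $I_1 = \tfrac{1}{NT}\sum_{t:z_t=j}\Lambda_j^0 f_t^0 e_t'\hat{\Lambda}_j$, $I_2 = \tfrac{1}{NT}\sum_{t:z_t=j} e_t f_t^{0\prime}\Lambda_j^{0\prime}\hat{\Lambda}_j$, $I_3 = \tfrac{1}{NT}\sum_{t:z_t=j} e_t e_t'\hat{\Lambda}_j$, and $I_4 = \tfrac{1}{NT}\sum_{t}(\hat{p}_{tj}-1_{z_t=j})\,x_t x_t'\hat{\Lambda}_j$ captures the cost of using estimated regime probabilities.

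Since Proposition \ref{VjHj}(1) gives $W_{jNT}\overset{p}{\rightarrow} q_j^0 V_j$ with $V_j$ positive definite under Assumptions \ref{factors}, \ref{loadings} and \ref{diff-eig}, we have $W_{jNT}^{-1} = O_p(1)$ and the target rate reduces to showing $N^{-1}\|I_k\|_F^2 = O_p(\delta_{NT}^{-2})$ term by term. For $I_1,I_2,I_3$ I would follow the iterative substitution strategy of Bai (2003): write $\hat{\Lambda}_j = \Lambda_j^0 H_j + (\hat{\Lambda}_j - \Lambda_j^0 H_j)$ inside each sum, bound the leading pieces using Assumptions \ref{error}(2)--(3), \ref{moments} and \ref{dist}(1)--(2) to obtain $N^{-1}\|I_1\|_F^2, N^{-1}\|I_2\|_F^2 = O_p(T^{-1})$ and $N^{-1}\|I_3\|_F^2 = O_p(N^{-1})$, and absorb the residual pieces—each of which has the form $o_p(1)\cdot N^{-1}\|\hat{\Lambda}_j - \Lambda_j^0 H_j\|_F^2$ thanks to the preliminary bound in Theorem \ref{consis}—into the left hand side. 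For $I_4$ the uniform control $\sup_t|\hat{p}_{tj}-1_{z_t=j}|=o_p(N^{-\eta})$ from Theorem \ref{ptj}(1), valid precisely under the rate conditions $T^{16/\alpha}/N\to 0$ and $T^{2/\alpha+2/\beta}/N\to 0$ assumed here, combined with $\|\hat{\Lambda}_j\|^2 = O_p(N)$ (from Proposition \ref{VjHj}(1)) and $T^{-1}\sum_t\|x_t\|^2 = O_p(N)$, gives $N^{-1}\|I_4\|_F^2 = o_p(N^{-2\eta})$ for arbitrary $\eta>0$, which is negligible.

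The main obstacle is the circular dependence: $\hat{\Lambda}_j$ enters the sums defining $I_1,I_2,I_3$ directly and also enters $I_4$ through $\hat{p}_{tj}$. This is handled by a two-stage argument in which the preliminary rate from Theorem \ref{consis} makes the iterative substitution inside $I_1,I_2,I_3$ produce absorbable residuals, while Theorem \ref{ptj} makes the feedback through $\hat{p}_{tj}$ faster than any polynomial in $N$ so that it never enters the final rate. Combining the four bounds with $W_{jNT}^{-1} = O_p(1)$ delivers $N^{-1}\|\hat{\Lambda}_j - \Lambda_j^0 H_j\|_F^2 = O_p(\delta_{NT}^{-2})$ as claimed.
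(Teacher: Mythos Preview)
Your decomposition and overall strategy are correct and match the paper's in spirit: both start from the first order condition, isolate the $\Lambda_j^0 f_t^0 f_t^{0\prime}\Lambda_j^{0\prime}$ piece to produce $\Lambda_j^0 H_j$, and treat the remaining cross terms plus a ``regime-misclassification'' term controlled by Theorem~\ref{ptj}. The paper, however, avoids the iterative substitution you propose. It works with the orthonormalized loading $\check{\Lambda}_j=\hat{\Lambda}_j(\hat{\Lambda}_j'\hat{\Lambda}_j)^{-1/2}$, so that the analogues of your $I_1,I_2,I_3,I_4$ (the paper's $I,II,III,IV,D$ in the proof of Proposition~\ref{VjHj}) all carry a factor with $\|\check{\Lambda}_j\|_F^2=r_j^0$ bounded. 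This lets each term be bounded \emph{directly} from Assumptions~\ref{error}--\ref{moments} as $O_p(1/\delta_{NT}^2)$, with no need to split $\hat{\Lambda}_j=\Lambda_j^0 H_j+(\hat{\Lambda}_j-\Lambda_j^0 H_j)$ and absorb residuals. Theorem~\ref{rate} then follows in one line from the identity $\hat{\Lambda}_j-\Lambda_j^0 H_j=(I+II+III+IV+D)(\hat{\Lambda}_j'\hat{\Lambda}_j)^{1/2}W_{jNT}^{-1}$ together with $\hat{\Lambda}_j'\hat{\Lambda}_j/N\overset{p}{\to}V_j$ and $W_{jNT}\overset{p}{\to}q_j^0 V_j$ from Proposition~\ref{VjHj}. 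Your route works but is longer; the normalization trick is what buys the paper its brevity.

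One small correction: the $o_p(1)$ coefficients that make your residual pieces absorbable do not come from Theorem~\ref{consis}. They come directly from the model assumptions (for instance $\|E'E\|/(NT)=O_p(\delta_{NT}^{-1})=o_p(1)$ via Lemma~\ref{E norm}, and analogous bounds for the factor--error cross sums via Assumption~\ref{moments}). Theorem~\ref{consis} is used upstream to establish Proposition~\ref{VjHj}, but it is not what makes the iteration close.
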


Theorem \ref{rate} establishes the convergence rate of the estimated loading
space for each regime. This could help us study the effect of using
estimated loadings on subsequent applications. For example, if the estimated
loadings are used to construct portfolios, Theorem \ref{rate} could help us
calculate how the estimation error contained in $\tilde{\Lambda}_{j}$ would
affect the performance of these portfolios.

\begin{description}
\item[Limit distributions of the estimated loadings] 
\end{description}

\begin{thm}
\label{ld}Under Assumptions \ref{factors}-\ref{dist}, and assume $\sqrt{T}%
/N\rightarrow 0$, $T^{^{\frac{16}{\alpha }}}/N\rightarrow 0$ and $T^{\frac{2%
}{\alpha }+\frac{2}{\beta }}/N\rightarrow 0$, as $(N,T)\rightarrow \infty $, 
$\sqrt{Tq_{j}^{0}}(\tilde{\lambda}_{ji}-H_{j}^{\prime }\lambda _{ji}^{0})%
\overset{d}{\rightarrow }\mathcal{N}(0,V_{j}^{-\frac{1}{2}}\Upsilon
_{j}^{\prime }\Sigma _{\Lambda _{j}}^{\frac{1}{2}}\Phi _{ji}\Sigma _{\Lambda
_{j}}^{\frac{1}{2}}\Upsilon _{j}V_{j}^{-\frac{1}{2}})$ for each $j$.
\end{thm}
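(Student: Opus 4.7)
The plan is to start from the first-order condition (\ref{foc1}) specialized to the estimator, $\hat{S}_j\hat{\Lambda}_j=\hat{\Lambda}_j W_j$ with $W_j:=\hat{\Lambda}_j'\hat{\Lambda}_j+\hat{\sigma}^2 I_{r_j^0}$, which in row form reads $\hat{\lambda}_{ji}=W_j^{-1}\hat{\Lambda}_j'(\hat{S}_j)_{\cdot i}$. Theorem \ref{ptj} gives $\sup_t|\hat{p}_{tj}-1_{z_t=j}|=o_p(N^{-\eta})$ for any $\eta>0$ under $T^{16/\alpha}/N\to 0$, so I would first replace $\hat{p}_{tj}$ by $1_{z_t=j}$ throughout $\hat{S}_j$, producing $(\hat{S}_j)_{\cdot i}\approx(Tq_j^0)^{-1}\sum_{t:z_t=j}x_t x_{it}$ with error absorbed at the $\sqrt{T}$ scale. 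Substituting $x_t=\Lambda_j^0 f_t^0+e_t$ on $\{z_t=j\}$ and expanding $x_t x_{it}$ into four cross terms, the identity $W_j^{-1}=(T^{-1}\sum_t\hat{p}_{tj})N^{-1}W_{jNT}^{-1}$ together with the defining formula for $H_j$ causes the leading term $\Lambda_j^0 f_t^0 f_t^{0\prime}\lambda_{ji}^0$ to collapse exactly to $H_j'\lambda_{ji}^0$, leaving three stochastic residuals.

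The first residual, driven by $\Lambda_j^0 f_t^0 e_{it}$, is the source of the limit distribution: after scaling by $\sqrt{Tq_j^0}$ it equals $W_{jNT}^{-1}(N^{-1}\hat{\Lambda}_j'\Lambda_j^0)\,q_j^0\cdot (Tq_j^0)^{-1/2}\sum_{t:z_t=j}f_t^0 e_{it}$. I plan to compute the limit of the deterministic prefactor using Proposition \ref{VjHj} together with the spectral identity $\Sigma_{\Lambda_j}^{1/2}\Sigma_{F_j}\Sigma_{\Lambda_j}^{1/2}=\Upsilon_j V_j\Upsilon_j'$, which gives $\Sigma_{F_j}^{-1}=\Sigma_{\Lambda_j}^{1/2}\Upsilon_j V_j^{-1}\Upsilon_j'\Sigma_{\Lambda_j}^{1/2}$, and after cancellation the prefactor collapses to $V_j^{-1/2}\Upsilon_j'\Sigma_{\Lambda_j}^{1/2}$. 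Combining with Assumption \ref{dist}(3) via Slutsky's theorem then produces the claimed Gaussian limit with variance $V_j^{-1/2}\Upsilon_j'\Sigma_{\Lambda_j}^{1/2}\Phi_{ji}\Sigma_{\Lambda_j}^{1/2}\Upsilon_j V_j^{-1/2}$.

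For the remaining two residuals I would decompose $\hat{\Lambda}_j=\Lambda_j^0 H_j+(\hat{\Lambda}_j-\Lambda_j^0 H_j)$ and bound each piece separately. In the $e_t f_t^{0\prime}\lambda_{ji}^0$ residual, the $\Lambda_j^0 H_j$ slot is $O_p(N^{-1/2})$ by Assumption \ref{moments}(1), and the orthogonal slot is $O_p(\delta_{NT}^{-1})$ by Theorem \ref{rate}. In the $e_t e_{it}$ residual, the $\Lambda_j^0 H_j$ slot splits into a centered part of order $N^{-1/2}$ (controlled by Assumption \ref{dist}(1)) and a bias of order $\sqrt{T}/N$ produced by $\sum_k\lambda_{jk}^0\mathbb{E}(e_{kt}e_{it})=O(1)$ via Assumption \ref{error}(2); this is precisely where the rate condition $\sqrt{T}/N\to 0$ enters. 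The hard part will be the two-way dependence between $\hat{\Lambda}_j$ and the weights $\hat{p}_{tj}$, which makes $\hat{S}_j$ a data-dependent weighted moment rather than a genuine sample covariance. The arbitrarily fast polynomial-in-$N$ convergence of $\hat{p}_{tj}$ in Theorem \ref{ptj} is exactly what permits the $1_{z_t=j}$ substitution to be neutralized without losing the $\sqrt{T}$-scale accuracy needed for the CLT.
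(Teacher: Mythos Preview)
Your approach is correct and essentially coincides with the paper's. The paper organizes the argument by reusing the decomposition $\hat{\Lambda}_j-\Lambda_j^0H_j=(I+II+III+IV+D)(\hat{\Lambda}_j'\hat{\Lambda}_j)^{1/2}W_{jNT}^{-1}$ already established in equation~(\ref{aa}) for Theorem~\ref{rate}, then extracts the $i$-th row and bounds each of $I_i,\ldots,IV_i,D_i$ separately; but this is the same four-term cross expansion of $x_tx_{it}$ that you write down directly from the row form of the first-order condition, with $D$ corresponding to your $\hat p_{tj}\to 1_{z_t=j}$ replacement and $III_i$ furnishing the CLT term exactly as your ``first residual'' does. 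Two small remarks: (i) your detour through $\Sigma_{F_j}^{-1}$ is unnecessary, since Proposition~\ref{VjHj} gives $W_{jNT}\overset{p}{\to}q_j^0V_j$ and $H_j\overset{p}{\to}\Sigma_{\Lambda_j}^{-1/2}\Upsilon_jV_j^{1/2}$ directly, from which the prefactor $q_j^0W_{jNT}^{-1}(N^{-1}\hat{\Lambda}_j'\Lambda_j^0)\overset{p}{\to}V_j^{-1/2}\Upsilon_j'\Sigma_{\Lambda_j}^{1/2}$ follows by one substitution; (ii) be careful with the shorthand $(\hat S_j)_{\cdot i}\approx (Tq_j^0)^{-1}\sum_{t:z_t=j}x_tx_{it}$, since replacing $\sum_t 1_{z_t=j}$ by $Tq_j^0$ in the denominator would cost an $O_p(T^{-1/2})$ relative error that does not survive the $\sqrt{T}$ scaling---the point, which you do note via $W_j^{-1}=(T^{-1}\sum_t\hat p_{tj})N^{-1}W_{jNT}^{-1}$, is that this denominator cancels exactly against the same factor inside $W_j^{-1}$, leaving $\hat{\lambda}_{ji}=W_{jNT}^{-1}N^{-1}\hat{\Lambda}_j'T^{-1}\sum_t\hat p_{tj}x_tx_{it}$ with no surviving random denominator.
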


Theorem \ref{ld} shows that for each $j$ and $i$, $\tilde{\lambda}_{ji}$ has
a limiting normal distribution. This allows us to construct confidence
interval for the estimated loadings. Also note that the rotation matrix $%
H_{j}$ is different for different regime.

\begin{remark}
We can also prove the consistency and limit distribution of $\tilde{\sigma}%
^{2}$ (the probability limit of $\tilde{\sigma}^{2}$ is $\lim\limits_{N%
\rightarrow \infty }\frac{1}{N}\sum\nolimits_{i=1}^{N}\sigma _{i}^{2}$), we
omit it since this is not our focus.
\end{remark}

\begin{description}
\item[Asymptotic properties of the estimated factors] 
\end{description}

\begin{thm}
\label{factor}Under Assumptions \ref{factors}-\ref{dist}, and assume $\sqrt{N%
}/T\rightarrow 0$, $T^{^{\frac{16}{\alpha }}}/N\rightarrow 0$\ and $T^{\frac{%
2}{\alpha }+\frac{2}{\beta }}/N\rightarrow 0$, as $(N,T)\rightarrow \infty $,

(1) $\frac{1}{T}\sum\nolimits_{t=1}^{T}\left\Vert \tilde{f}%
_{t}-[(H_{z_{t}}^{-1}f_{t}^{0})^{\prime },0_{\max
r_{j}^{0}-r_{z_{t}}^{0}}^{\prime }]^{\prime }\right\Vert ^{2}=O_{p}(\frac{1}{%
\delta _{NT}^{2}})$,

(2) $\sqrt{N}(\tilde{f}_{t}-[(H_{z_{t}}^{-1}f_{t}^{0})^{\prime },0_{\max
r_{j}^{0}-r_{z_{t}}^{0}}^{\prime }]^{\prime })$

$\overset{d}{\rightarrow }\mathcal{N}(0,\left[ 
\begin{array}{cc}
V_{z_{t}}^{-\frac{1}{2}}\Upsilon _{z_{t}}^{\prime }\Sigma _{\Lambda
_{z_{t}}}^{-\frac{1}{2}}\Gamma _{z_{t}t}\Sigma _{\Lambda _{z_{t}}}^{-\frac{1%
}{2}}\Upsilon _{z_{t}}V_{z_{t}}^{-\frac{1}{2}} & 0_{r_{z_{t}}^{0}\times
(\max r_{j}^{0}-r_{z_{t}}^{0})} \\ 
0_{(\max r_{j}^{0}-r_{z_{t}}^{0})\times r_{z_{t}}^{0}} & 0_{(\max
r_{j}^{0}-r_{z_{t}}^{0})\times (\max r_{j}^{0}-r_{z_{t}}^{0})}%
\end{array}%
\right] )$.
\end{thm}

Theorem \ref{factor}(2) shows that the limit distribution of $\tilde{f}_{t}$
is mixed normal, since the rotation matrix $H_{z_{t}}^{-1}$ and the
asymptotic variance depend on the state variable $z_{t}$. Theorem \ref%
{factor}(1) establishes the convergence rate of the estimated factor space.
Note that if $\{\tilde{f}_{t},t=1,...T\}$ is used as proxies for the true
factors in factor-augmented forecasting (or factor-augmented VAR), the
forecasting equation (or the VAR equation) would have induced regime
switching in the model parameters, because $H_{z_{t}}^{-1}$ depends on $%
z_{t} $. For illustration, consider the following $h$-period ahead
forecasting model using factors and some other observable variables $W_{t}$: 
$y_{t+h}=a^{\prime }f_{t}^{0}+b^{\prime }W_{t}+u_{t+h}$. If $\tilde{f}_{t}$
is used as proxies for $f_{t}^{0}$, the model can be written as%
\begin{equation*}
y_{t+h}=-a^{\prime }H_{z_{t}}(\tilde{f}_{t}-H_{z_{t}}^{-1}f_{t}^{0})+a^{%
\prime }H_{z_{t}}\tilde{f}_{t}+b^{\prime }W_{t}+u_{t+h}.
\end{equation*}%
The first term on the right hand side is asymptotically negligible. It is
easy to see that the coefficient $a^{\prime }H_{z_{t}}$ depends on $z_{t}$
and this need to be taken into account when we estimate the forecasting
equation. Finally, we show that the estimated transition probability matrix
is also consistent when $\{z_{1},...,z_{T}\}$\ is a Markov process.

\begin{thm}
\label{Q}Assume that $\{z_{1},...,z_{T}\}$ is a Markov process, under
Assumptions \ref{factors}-\ref{state} and \ref{moments}(1), $\tilde{Q}_{jk}%
\overset{p}{\rightarrow }Q_{jk}^{0}$ for each $j$ and $k$ as $%
(N,T)\rightarrow \infty $ if $T^{^{\frac{16}{\alpha }}}/N\rightarrow 0$\ and 
$T^{\frac{2}{\alpha }+\frac{2}{\beta }}/N\rightarrow 0$.
\end{thm}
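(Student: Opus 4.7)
The plan is to reduce the claim to the ergodic theorem for $\{z_t\}$ by showing that the joint smoothed probabilities $\tilde p_{tjk|T}$ concentrate uniformly on the pair-state indicator $\mathbf{1}_{z_t=j,z_{t-1}=k}$. Once this uniform approximation is established, the numerator and denominator of $\tilde Q_{jk}$ can be replaced by sample averages of indicators whose limits are $Q_{jk}^0 q_k^0$ and $q_k^0$ respectively, and $\tilde Q_{jk}\overset{p}{\rightarrow} Q_{jk}^0$ follows by the continuous mapping theorem.

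The first step lifts Theorem \ref{AsyProp}(ii) from the marginal smoothed probabilities $\tilde p_{tj|T}$ to the joint ones. The lever is the pair of marginalization identities $\sum_{k'=1}^{J^0}\tilde p_{tjk'|T}=\tilde p_{tj|T}$ and $\sum_{j'=1}^{J^0}\tilde p_{tj'k|T}=\tilde p_{t-1,k|T}$. If $z_t=j$ and $z_{t-1}=k$, then
\[
1-\tilde p_{tjk|T}=(1-\tilde p_{tj|T})+\sum_{k'\neq k}\tilde p_{tjk'|T}\le (1-\tilde p_{tj|T})+(1-\tilde p_{t-1,k|T}),
\]
which is $o_p(N^{-\eta})$ uniformly in $t$ by Theorem \ref{AsyProp}(ii). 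If $z_t\neq j$, then $\tilde p_{tjk|T}\le\tilde p_{tj|T}=o_p(N^{-\eta})$ uniformly, and an analogous bound handles the case $z_{t-1}\neq k$. Hence $\sup_t|\tilde p_{tjk|T}-\mathbf{1}_{z_t=j,z_{t-1}=k}|=o_p(1)$. The denominator of $\tilde Q_{jk}$ simplifies to $\sum_{t=2}^T \tilde p_{t-1,k|T}$ via $\sum_{j=1}^{J^0}\tilde p_{tjk|T}=\tilde p_{t-1,k|T}$, so Theorem \ref{AsyProp}(ii) applies to it directly.

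The second step invokes the ergodic theorem. Since $\{z_t\}$ is a Markov process with transition matrix $Q^0$ and empirical marginal frequencies converging to the strictly positive distribution $q^0$ by Assumption \ref{state}, standard ergodic theorems for finite-state Markov chains give
\[
\frac{1}{T-1}\sum_{t=2}^T \mathbf{1}_{z_t=j,z_{t-1}=k}\overset{p}{\rightarrow}Q_{jk}^0 q_k^0,\qquad \frac{1}{T-1}\sum_{t=2}^T \mathbf{1}_{z_{t-1}=k}\overset{p}{\rightarrow}q_k^0.
\]
Combining with the uniform bound from the first step, which yields $(T-1)^{-1}\sum_{t=2}^T(\tilde p_{tjk|T}-\mathbf{1}_{z_t=j,z_{t-1}=k})=o_p(1)$ and similarly for the denominator, the continuous mapping theorem delivers $\tilde Q_{jk}\overset{p}{\rightarrow} Q_{jk}^0 q_k^0/q_k^0=Q_{jk}^0$.

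The main obstacle is conceptual rather than technical: the estimator $\tilde Q_{jk}$ is built from the joint posterior probabilities $\tilde p_{tjk|T}$, whereas Theorem \ref{AsyProp}(ii) only controls the marginals $\tilde p_{tj|T}$. The marginalization identities defuse this obstacle without any new examination of the smoother recursion in Appendix \ref{ptjkT}, keeping the argument short. The ergodicity step is the single place where the Markov assumption---present in the hypothesis of this theorem but absent from Section \ref{asym}---is actually used.
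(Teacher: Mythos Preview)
Your proof is correct and in fact more elementary than the paper's. The paper decomposes the joint posterior as a product,
\[
\tilde p_{tjk|T}=\tilde p_{tj|T}\cdot\Pr(z_{t-1}=k\mid z_t=j,x_{1:T};\tilde\Lambda,\tilde\sigma^2,Q,\phi),
\]
and then uses the Markov structure of the smoother to rewrite the conditional factor as $Q_{jk}\tilde p_{t-1,k|t-1}/\sum_h Q_{jh}\tilde p_{t-1,h|t-1}$, finally invoking the consistency of the \emph{filtered} probabilities $\tilde p_{t-1,h|t-1}$ (established inside the proof of Theorem \ref{AsyProp}(ii)). Your route instead bounds $\tilde p_{tjk|T}$ directly by its two marginals $\tilde p_{tj|T}$ and $\tilde p_{t-1,k|T}$ via the tautological identities $\sum_{k'}\tilde p_{tjk'|T}=\tilde p_{tj|T}$ and $\sum_{j'}\tilde p_{tj'k|T}=\tilde p_{t-1,k|T}$, so that Theorem \ref{AsyProp}(ii) applies verbatim without ever looking inside the smoother recursion or the filtered probabilities. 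This buys you a shorter, self-contained argument; the paper's decomposition, by contrast, makes the mechanism (filter consistency propagating to the smoother) explicit but needs an extra layer of machinery. Both proofs close with the same ergodic-theorem step, and you are right that this is the only place the Markov hypothesis is genuinely used.
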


\section{Determine the Number of Factors and the Number of Regimes\label{rJ}}

Given the number of factors $(r_{1},...,r_{J})$ and the number of regimes $J$%
, let $(\tilde{\Lambda}_{1,r_{1}},...,\tilde{\Lambda}_{J,r_{J}})$ be the
solution for maximizing the log-likelihood $l(\Lambda _{1,r_{1}},...,\Lambda
_{J,r_{J}},\sigma ^{2},Q,\phi )$. Here we use $\Lambda _{j,r_{j}}$ to
emphasize that $\Lambda _{j,r_{j}}$ is of dimension $N\times r_{j}$. The
criterion we propose for model selection is:%
\begin{equation}
PC(r_{1},...,r_{J})=\frac{1}{NT}l(\tilde{\Lambda}_{1,r_{1}},...,\tilde{%
\Lambda}_{J,r_{J}},\sigma ^{2},Q,\phi
)-\sum\nolimits_{j=1}^{J}(g(N,T))^{b(r_{j})},  \label{bt}
\end{equation}%
where $g(N,T)$ is a penalty function depending on both $N$ and $T$, and $%
b(\cdot )$ is a positive and decreasing function with $b(1)=1$, e.g., $%
b(r_{j})=\frac{1}{r_{j}}$. For each $J$, the numbers of factors are
estimated by%
\begin{equation}
(\tilde{r}_{1},...,\tilde{r}_{J})=\arg \max\nolimits_{r_{j}\leq \bar{r}%
,j=1,...,J}PC(r_{1},...,r_{J})\text{,}  \label{bu}
\end{equation}%
and then the number of regimes is estimated by%
\begin{equation}
\tilde{J}=\arg \max\nolimits_{J\leq \bar{J}}PC(\tilde{r}_{1},...,\tilde{r}%
_{J})\text{,}  \label{bv}
\end{equation}%
where $\bar{r}$ is the maximal number of factors in each regime and $\bar{J}$
is the maximal number of regimes. In the following theorem we show that $(%
\tilde{r}_{1},...,\tilde{r}_{J})$ and $\tilde{J}$ are consistent.

\begin{thm}
\label{r0&J0}Under Assumptions \ref{factors}, \ref{loadings}(1), \ref{error}%
, \ref{state} and assume $\lim\limits_{N\rightarrow \infty }\frac{1}{N}%
\Lambda _{k}^{0\prime }M_{\Lambda _{j}^{0}}\Lambda _{k}^{0}\neq 0$ for any $%
j\ $and $k$, we have $\Pr (\tilde{J}=J^{0}$ and $\tilde{r}_{j}=r_{j}^{0}$
for all $j)\rightarrow 1$ as $(N,T)\rightarrow \infty $ if (i) $%
g(N,T)\rightarrow 0$, (ii) $\delta _{NT}g(N,T)\rightarrow \infty $, and
(iii) $b(\cdot )$ is a positive and decreasing function with $b(1)=1$.
\end{thm}

Note that the condition $\lim\limits_{N\rightarrow \infty }\frac{1}{N}%
\Lambda _{k}^{0\prime }M_{\Lambda _{j}^{0}}\Lambda _{k}^{0}\neq 0$ allows $%
\Lambda _{k}^{0}$ and $\Lambda _{j}^{0}$ to share some columns, i.e.,
Theorem \ref{r0&J0} holds for the case where the loadings of some (but not
all) factors remain the same across different regimes.

The basic idea behind Theorem \ref{r0&J0} is similar to Theorem 2 of Bai and
Ng (2002), i.e., add a penalty term that converges to zero but slowly enough
so that underparameterized models and overparameterized models will not be
chosen. Here the penalty $(g(N,T))^{b(r_{j})}$ converges to zero because $%
g(N,T)\rightarrow 0$ and $b(r_{j})$ is positive, and $\delta
_{NT}(g(N,T))^{b(r_{j})}\rightarrow \infty $ because $\delta
_{NT}g(N,T)\rightarrow \infty $ and $b(r_{j})$ is a decreasing function of $%
r_{j}$ with $b(1)=1$. Compared to Bai and Ng (2002), the difference and
difficulty here is that the number of regimes is unknown and the number of
factors in each regime may be different. For example, suppose the true model
is $(r_{1}=2,r_{2}=1,J=2)$ and the two columns in $\Lambda _{1}^{0}$ are
linearly independent with $\Lambda _{2}^{0}$. This model can be equivalently
written as%
\begin{equation*}
x_{t}=(\Lambda _{1}^{0},\Lambda _{2}^{0})\left( 
\begin{array}{c}
f_{1t}^{0} \\ 
f_{2t}^{0} \\ 
0%
\end{array}%
\right) +e_{t}\text{ if }z_{t}=1\text{, and }x_{t}=(\Lambda _{1}^{0},\Lambda
_{2}^{0})\left( 
\begin{array}{c}
0 \\ 
0 \\ 
f_{1t}^{0}%
\end{array}%
\right) +e_{t}\text{ if }z_{t}=2,
\end{equation*}%
i.e., there is only one regime and there are three factors in this regime.
The difference between the log-likelihood of the true model $%
(r_{1}=2,r_{2}=1,J=2)$ and the log-likelihood of the equivalent model $%
(r_{1}=3,J=1)$ is negligible and clearly Bai and Ng (2002) is not applicable
to this example.

Our solution is to add a penalty term for each regime and let the penalty
term of different regime have different asymptotic order, so that
overestimating the number of factors in one regime can not be compensated by
underestimating the number of factors in another regime. For example, the
penalty for the equivalent model is $(g(N,T))^{b(3)}$ while the penalty for
the true model is $(g(N,T))^{b(2)}+(g(N,T))^{b(1)}$. Since $\frac{%
(g(N,T))^{b(3)}}{(g(N,T))^{b(2)}+(g(N,T))^{b(1)}}\rightarrow \infty $ as $%
(N,T)\rightarrow \infty $, the true model would be chosen with probability
approaching one as $(N,T)\rightarrow \infty $. The formal proof of Theorem %
\ref{r0&J0} is provided in the Appendix.

Our method can also be used to consistently determine the number of factors
and the number of breaks for factor models with multiple common breaks in
the loadings. If we replace $l(\tilde{\Lambda}_{1,r_{1}},...,\tilde{\Lambda}%
_{J,r_{J}},\sigma ^{2},Q,\phi )$ in expression (\ref{bt}) by minus the
minimum of the least squares over all possible break points and calculate $(%
\tilde{r}_{1},...,\tilde{r}_{J},\tilde{J})$ as expressions (\ref{bu})-(\ref%
{bv}), then it is not difficult to prove that we still have $\Pr (\tilde{J}%
=J^{0}$ and $\tilde{r}_{j}=r_{j}^{0}$ for all $j)\rightarrow 1$ as $%
(N,T)\rightarrow \infty $. As we discussed in the Introduction, recently the
literature on the factor loading instability issues developed quite a lot,
but as far as we know, there are very few (if any) consistent model
selection procedures that allow $r_{j}$ to be different across $j$ and allow 
$\Lambda _{k}^{0}$ and $\Lambda _{j}^{0}$ to share some columns.

\section{Simulations\label{Simu}}

In this section, we perform simulations to confirm the theoretical results
and examine the finite sample performance of our methods under various
empirically relevant scenarios.

\subsection{Simulation Design\label{simu desi}}

The data is generated as follows:%
\begin{equation*}
x_{it}=\left\{ 
\begin{array}{l}
f_{t}^{0\prime }\lambda _{1i}^{0}+e_{it}\text{ if }z_{t}=1, \\ 
f_{t}^{0\prime }\lambda _{2i}^{0}+e_{it}\text{ if }z_{t}=2,%
\end{array}%
\right. \text{for }i=1,...,N\text{ and }t=1,...,T,
\end{equation*}%
i.e., we consider two regimes. For the factors and the loadings, we consider
four data generating processes (DGP) as listed below:

\textit{DGP 1: There are two factors in both regimes and the loadings of
both factors have regime switching.}

\textit{DGP 2: There are two factors in both regimes and only the loadings
of the second factor have regime switching.}

\textit{DGP 3: There is one factor in both regimes and its loadings have
regime switching.}

For DGP1 - DGP3, the factors are generated as follows: 
\begin{equation*}
f_{t,p}^{0}=\rho f_{t-1,p}^{0}+\epsilon _{t,p}\text{ for }t=2,...,T\text{
and }p=1,...,r^{0}.
\end{equation*}%
$\epsilon _{t,p}$\ is i.i.d. $N(0,1)$, and $f_{1,p}^{0}$\ is i.i.d. $N(0,%
\frac{1}{1-\rho ^{2}})$ so that the distributions of the factors are
stationary. Serial correlation of the factors is controlled by the scalar $%
\rho $.

\textit{DGP 4: The loadings are generated in the same way as DGP2. }$%
f_{t,1}^{0}$\textit{\ is generated as i.i.d. }$N(0,1)$\textit{\ and }$%
f_{t,2}^{0}$\textit{\ is generated as uniform }$(0.5,1.5)$\textit{. }

The errors are generated as follows:%
\begin{equation*}
e_{it}=\zeta e_{i,t-1}+v_{it}\text{ for }i=1,...,N\text{ and }t=2,...,T\text{%
,}
\end{equation*}%
where $v_{t}=(v_{1,t},...,v_{N,t})^{\prime }$\ is i.i.d. $N(0,\Omega )$ for $%
t=2,...,T$ and $(e_{1,1},...,e_{N,1})^{\prime }$\ is $N(0,\frac{1}{1-\zeta
^{2}}\Omega )$\ so that the distributions of the errors are stationary.
Serial correlation of the errors is controlled by the scalar $\zeta $. For $%
\Omega $, we set $\Omega _{ij}=\xi ^{\left\vert i-j\right\vert }$ for some $%
\xi $ between 0 and 1, thus cross-sectional dependence of the errors is
controlled by $\xi $. In addition, the processes $\{\epsilon _{t,p}\}$\ and $%
\{v_{it}\}$\ are mutually independent for all $p$ and $i$.

The loadings are generated as follows: For DGP1, both $\lambda _{1i}^{0}$
and $\lambda _{2i}^{0}$\ are generated as i.i.d. $N(0,\frac{1-\rho ^{2}}{%
1-\zeta ^{2}}\frac{2R^{2}}{1-R^{2}}I_{2})$ across $i$, and $\lambda
_{1i}^{0} $ and $\lambda _{2i}^{0}$ are also independent with each other.
For DGP2, $\lambda _{1i}^{0}$ and the second element of $\lambda _{2i}^{0}$
are generated as i.i.d. $N(0,\frac{1-\rho ^{2}}{1-\zeta ^{2}}\frac{2R^{2}}{%
1-R^{2}}I_{3})$ across $i$. For DGP3, both $\lambda _{1i}^{0}$ and $\lambda
_{2i}^{0}$\ are generated as i.i.d. $N(0,\frac{1-\rho ^{2}}{1-\zeta ^{2}}%
\frac{R^{2}}{1-R^{2}})$ across $i$, and $\lambda _{1i}^{0}$ and $\lambda
_{2i}^{0}$ are also independent with each other. All loadings are
independent of the factors and the errors. The variance $\frac{1-\rho ^{2}}{%
1-\zeta ^{2}}\frac{2R^{2}}{1-R^{2}}$ guarantees that the regression
R-square\ of each series $i$\ is equal to $R^{2}$, this controls the
signal-noise ratio. Following the literature, we set $R^{2}=0.5$.

For the state process $\{z_{t},t=1,...,T\}$, we consider four cases as
listed below:

\textit{Regime Pattern 1: US business cycle 1945Q2-2020Q1}

\textit{Regime Pattern 2: single common break at }$t=T/2$

\textit{Regime Pattern 3: two common breaks at }$t=T/3$ and $t=2T/3$, 
\textit{and the loadings switch back after the second break}

\textit{Regime Pattern 4: a randomly generated Markov process }

Regime pattern 1 is based on the US business cycle from 1945 Quarter 2 to
2020 Quarter 1, as determined by the NBER business cycle dating committee.
There are 75 years (300 quarters) in total, thus we have $T=300$. For $%
t=1,...,300$, $z_{t}=1$ if the US economy at time $t$ is in expansion and $%
z_{t}=2$ if the US economy at time $t$ is in recession. The transition
probabilities of the state process calibrated to the US business cycle is $%
Q_{11}^{0}=0.95$ and $Q_{22}^{0}=0.72$ (average duration of expansion is $%
1/(1-Q_{11}^{0})=20$ and average duration of recession is $%
1/(1-Q_{22}^{0})\approx 3.5$).

Regime patterns 2 and 3 correspond to the case where loadings have single
common break\ and multiple common breaks, respectively. Regime patterns 3 is
especially interesting since the case where there are multiple breaks and
the loadings switch back to their original values after the second break is
rarely studied in the literature. Various methods are proposed in the
literature recently for estimating the break points, here we perform
simulations for regime patterns 2 and 3 to evaluate the finite sample
performance of our method when it is applied to these interesting cases.

Regime pattern 4 is a Markov process randomly generated with transition
probabilities $Q_{11}^{0}=0.95$ and $Q_{22}^{0}=0.72$, and $%
\{z_{t},t=1,...,T\}$ is independent with $f_{s}^{0}$ and $e_{is}$ for all $i$
and $s$. Regime patterns 1-3 are prespecified and are not necessarily Markov
processes, thus here we consider regime pattern 4 to evaluate the
performance of our method when applied to a Markov state process.

We study both the unsmoothed algorithm and the smoothed algorithm.\textbf{\ }%
The key difference is that in the E-step, the former uses unsmoothed regime
probabilities while the latter uses smoothed regime probabilities. Both
algorithms start from randomly generated initial values of the loadings and
iterate between the E-step and the M-step until convergence. To search for
the global maximum of the likelihood function, we generate initial values
randomly for many times and take the one with the largest likelihood. For
other parameters, we set $\sigma ^{2}=1$, $q_{j}=0.5$\ for $j=1,2$, $\phi
_{k}=0.5$ for $k=1,2$, $Q_{11}=0.95$ and $Q_{22}=0.72$. $Q_{11}$ and $Q_{22}$
are calibrated to regime pattern 1. Once we get the estimated regime
probabilities and the estimated loadings, $\tilde{Q}_{11}$ and $\tilde{Q}%
_{22}$ are estimated by equation (\ref{Qtilde-jk}), and the factors are
estimated by equation (\ref{ftilde-t}).

\subsection{Simulation Results\label{simu resu}}

Figure 1 displays the smoothed probabilities of regime 2 for DGP 1 with $%
(N,T)=(100,300)$ and $(\rho ,\zeta ,\xi )=(0,0,0)$. Subfigures 1-4 of Figure
1 correspond to regime patterns 1-4, respectively. It is easy to see that in
all subfigures when the true regime is regime 1, the smoothed probabilities
stay at zero with only a few short and mild spikes. At the beginning of each
shaded region, the smoothed probabilities increase to one instantly, and at
the end of each shaded region, the smoothed probabilities instantly decrease
to zero. Figure 2 displays the unsmoothed probabilities of regime 2 for DGP1
under the four regime patterns with $(N,T)=(100,300)$ and $(\rho ,\zeta ,\xi
)=(0,0,0)$. The estimated probabilities still stay at zero when it's regime
1 and instantly increase to one (decrease to zero) when there is regime
switching, but compared to Figure 1, Figure 2 shows more and sharper spikes
(upward or downward). These spikes are false positives in detecting regime
switching. Figure 3 and Figure 4 display the smoothed and the unsmoothed
probabilities of regime 2 for DGP2, respectively. The performance of the
estimated probabilities deteriorates since for DGP2 only one factor has
regime switching in its loadings. Overall, Figures 1-4 confirm the
theoretical results that turning points (break points) can be identified
consistently if $N$ is large.

Comparing Figure 2 to Figure 1 and Figure 4 to Figure 3, it is obvious that
the smoothed probabilities performs much better than the unsmoothed
probabilities. Many false positives in Figure 3 and Figure 4 are eliminated
by the smoother. This is because for each $t$, regimes at $t-1$ and $t+1$
contains information for detecting the regime at period $t$. Comparing
subfigures 2-3 to subfigures 1 and 4 in Figures 1-4, we can see that the
performance of the estimated probabilities under regime patterns 2-3 is
better than the performance under patterns 1 and 4. This is also because
regimes at the neighborhood periods provide information for the current
regime. Roughly speaking, the performance is better when the regime pattern
is relatively simple. In addition, we can also see that the performance
under regime pattern 1 is slightly better than the performance under regime
pattern 4. This is because the subsample size of regime 2 under pattern 4 is
larger than the subsample size under pattern 1 (72 vs 45). In general, we
find that to guarantee good performance, the subsample size for each regime
should be not less than 40.

Figure 5 focuses on regime pattern 1 and displays the estimated
probabilities of regime 2 for DGP1 and DGP2 with $N=200$. Comparing the
subfigure 3 and subfigure 4 of Figure 5 to subfigure 1 of Figure 3 and
subfigure 1 of Figure 4, it is easy to see that $N=200$ improves the
performance of the estimated probabilities. Figure 6 also focuses on regime
pattern 1 and displays the smoothed probabilities of regime 2 for DGP1 and
DGP2 with $(\rho ,\zeta ,\xi )=(0.5,0,0)$ or $(0,0.5,0.5)$. Comparing to
subfigure 1 of Figure 1 and subfigure 1 of Figure 3, it seems that the value
of $(\rho ,\zeta ,\xi )$ does not affect the performance too much if they
were far away from 1.

Figure 7 displays the smoothed and unsmoothed probabilities for regime
pattern 4 (a randomly generated Markov process) under DGP1 with $%
(N,T)=(50,500)$\ or $(N,T)=(500,50)$. The smoothed probabilities still
perform well even under such extreme case, especially when $T=50$, the
subsample size of regime 2 is just 13. However, the unsmoothed probabilities
in subfigures 3-4 deteriorate obviously, compared to subfigure 4 of Figure 2.

Figure 8 displays the smoothed and unsmoothed probabilities under DGP4 for
regime patterns 1 and 2 with $(N,T)=(100,300)$. DGP4 modifies DGP2 so that
the second factor stays away from zero. Comparing subfigures 1-2 to
subfigures 1-2 of Figure 3 and subfigures 3-4 to subfigures 1-2 of Figure 4,
we can see that the performance improvement is quite significant. This is
because under DGP2, the second factor $f_{t,2}^{0}$ is likely to be close to
zero and it is difficult to identify the regime of $x_{t}$\ when $%
f_{t,2}^{0} $\ is close to zero. Thus Figures 3-4 reflect more of the
identification problem when the factors equal zeros.

Finally, to access the adequacy of the asymptotic distributions of the
estimated loadings and factors in approximating their finite sample
counterparts, we display in Figures 9-12 the histograms of the standardized
estimated factors for $t=T/2$ and the standardized estimated loadings for $%
i=N/2$ under DGP3. The number of simulations is 1000. The histograms are
normalized to be a density function and the standard normal density curve is
overlaid on them for comparison. It is easy to see that in all subfigures of
Figures 9-12, the standard normal density curve provides good approximation
to the normalized histograms. The histograms of the estimated factors in
Figure 9 are slightly fat-tailed because of bad initial values. Comparing
the four rows in each of Figures 9-12, we can see that the estimated
loadings and factors using the smoothed algorithm perform better than using
the unsmoothed algorithm, $(\rho ,\zeta ,\xi )=(0.5,0.5,0.5)$ does not
matter too much, and $N=200$ significantly improves the performance.

The number of initial value trials also significantly affect the
performance. We find that for regime pattern 1, normally 5 trials are
enough, but to guarantee good performance in all of 1000 replications, 30
trials are needed. For regime pattern 4, normally 2 trials are enough and 15
trials are needed to guarantee good performance in all replications. For
regime patterns 2-3, 5 trials are enough for all replications. In general,
more trials are needed when the regime pattern is complex and the subsample
size is small.

In addition, we also present in Table \ref{Table} the average $R^{2}$ of the
estimated loadings of regime 1 and regime 2 projecting on the true loadings,
the average $R^{2}$ of the estimated factors, and the average absolute error
of the estimated transition probabilities. It is easy to see that in Table
1, $R_{l1}^{2}$ and $R_{l2}^{2}$ are always close to one. $R_{Hf}^{2}$ is
always close to one but $R_{f}^{2}$ is much smaller than $R_{Hf}^{2}$. This
is because $R_{Hf}^{2}$ considers the regime specific rotation matrix, as
shown in Theorem \ref{factor}(1). In summary, results in Figures 1-12 and
Table 1 lend strong support to the theoretical results and illustrate the
usefulness of the proposed EM algorithms.

\section{Empirical Application\label{App}}

In this section we apply the proposed method to detect turning points of US
business cycle from 02/1980 to 01/2023 in real-time using the FRED-MD
(Federal Reserve Economic Data - Monthly Data) data set. The FRED database
is maintained by the Research division of the Federal Reserve Bank of St.
Louis, and is publicly accessible and updated in real-time. The 02/2023
vintage of the FRED-MD data set contains 128 unbalanced monthly time series
from 01/1959 to 01/2023, including eight groups (output and income, labor
market, housing, consumption and inventories, money and credit, prices,
stock market). After removing those series with missing values and data
transformation\footnote{%
See the Appendix of McCracken and Ng (2016) for the details of data
description and transformation.}, we have 106 balanced monthly series
ranging from 03/1959 to 01/2023. Finally, the data is demeaned and
standardized.

For each month from 02/1980 to 01/2023 (516 months in total), we use the
data from 03/1959 to that month for calculating the probability of recession
of that month, i.e., we behave as if we were standing at that month\footnote{%
For simplicity, we do not use the vintage data of that month. Compared to
the vintage data, the data we use contains revision in some series if more
accurate observations were available after that month, but previous studies
on business cycle dating show that data revisions have little effects on the
results.}. More specifically, we apply the EM algorithm in Section \ref{EM}
to the data from 03/1959 to the previous month to estimate the model
parameters\footnote{%
US business cycle from 03/1959 to the previous month as determined by NBER
is used as the initial values for probabilities.}, and then use the
estimated parameters and the data from 03/1959 to that month to calculate
the filtered probability of recession for that month. Since the data of that
month is available at the end of that month or the beginning of the next
month, new recession or expansion starting from the beginning of that month
could only be detected with at least one month delay. \ 

To convert the recession probability of each month into a binary variable
that indicates the state of the economy in that month, we compare the
estimated recession probability to a prespecified threshold. More
specifically, if the previous turning point is a trough and the recession
probability of month $t$ exceeds 0.8 for the first time after the previous
turning point, month $t$ would be considered as a new turning point from
expansion to recession. Similarly, if the previous turning point is a peak
and the recession probability of month $t$ falls below 0.2 for the first
time after the previous turning point, month $t$ would be considered as a
new turning point from recession to expansion. For robustness check, we also
consider $(0.9,0.1)$ as the threshold, the results are quite similar.

We consider the turning points determined by the NBER BCDC (business cycle
dating committee) as the benchmark for comparison and we mainly focus on the
accuracy and speed of the proposed method in detecting turning points. The
proposed method is applied to both the whole panel and a subset of the whole
panel which consists of only the first 50 series among all 106 series. The
results of using only the first 50 series are better. We conjecture that
this is mainly because not all 106 series had regime switching in the factor
loadings at each turning point determined by the NBER BCDC\footnote{%
The NBER BCDC mainly focuses on four series, (1) non-farm payroll
employment, (2) industrial production, (3) real manufacturing and trade
sales, and (4) real personal income excluding transfer payments.}, or some
series had regime switching in their loadings at time periods that are
different from the NBER BCDC turning points. Thus we may further improve the
performance of the proposed method by selecting series that are most
relevant to and synchronous with the business cycle. A careful selection is
out of the scope of this paper.

Table \ref{Table 2} presents the real-time results of 02/1980-02/2020 using
the first 50 series. The number of factors in each regime is set to be six.
mm/yyyy in the second and the seventh row indicate the starting month of
each recession and expansion. The row corresponds to "NBER BCDC", "Chauvet
Piger" and "This paper" shows the number of months it takes the NBER BCDC,
Chauvet and Piger (2008) and this paper to detect each recession and
expansion, respectively. For example, the recession starting from the
beginning of February 1980 would be detected by the NBER BCDC at the
beginning of June 1980, by Chauvet and Piger (2008) at the beginning of
August 1980, and by this paper at the beginning of May 1980, respectively.
Overall, it is easy to see that this paper detects turning points much
faster than NBER BCDC and slightly faster than Chauvet and Piger (2008). On
average, this paper detects recessions with 6.25 months delay and expansions
with 5.4 months delay, NBER BCDC detects recessions with 7.4 months delay
and expansions with 14.8 months delay, and Chauvet and Piger (2008) detects
recessions with 8.6 months delay and expansions with 6.2 months delay.

We also detect two recessions after the Covid-19 pandemic, one from 03/2020
to 08/2020 and the second from 02/2021 to 05/2021, so we would have detected
the 03/2020-08/2020 recession in 04/2020 because the data for 03/2020 is
available with one month delay. This is quite interesting, given that our
method detects recessions with 6.25 months delay on average during the
period 02/1980-02/2020.

Table \ref{Table 2} shows that using more series could improve the speed of
turning points detection. However, using more series could also bring in
false positives (turning points detected by the proposed method using many
series but not detected by NBER BCDC), because the extra series may not be
synchronous with the NBER business cycle. Here we detect eight false
recessions: 09/1983-11/1983, 10/1986-02/1987, 07/1989-10/1989,
01/1993-02/1993, 01/1995-03/1995, 08/1998, 05/2000-08/2000, 06/2010-10/2010,
and one false expansion: 02/1982. While these false positives should not be
ignored, most of them only last for a very short periods and would have
little effect on macroeconomic policy. Overall, our results illustrate the
potential of using a large number of series and factor models with common
loading switching for real-time detection of the business cycle turning
points.

To get rid of those false recessions, one possible solution is to select
time series that are synchronous with the NBER business cycle, and another
promising solution is to extend our results to the case where regime
switching in the loadings is approximately synchronous rather than exactly
synchronous. In fact, Stock and Watson (2014) mainly focuses on how to
combine different turning points of many individual series (determined by
the Bry-Boschan algorithm) into a single point. \ 

\section{Conclusions\label{Con}}

The exposure of economic time series to common factors may switch depending
on state variables such as fiscal policy, monetary policy, business cycle
stage, stock market volatility, technology and so on. For consistent
estimation of the factor structure, it is crucial to take into account such
regime switching phenomena. This paper considers maximum likelihood
estimation for large factor models with common regime switching in the
loadings and proposes EM algorithm for computation, which is easy to
implement and runs fast even when $N$ is large. Convergence rates and limit
distributions of the estimated loadings and the estimated factors are
established under the approximate factor model setup. This paper also shows
that when $N$ is large, regime switching can be identified consistently and
only one observation after the switching point is needed. This allows us to
detect regime switching at very early times. Monte Carlo simulations confirm
the theoretical results and good performance of our method. An application
to the FRED-MD dataset demonstrates the potential of using many time series
with our method for detection of the business cycle turning points.

Some related topics are worth further study. First, it would be interesting
to see the performance of the portfolio constructed using regime specific
loadings, and how the identified regime is related to exogenous variables
such as market volatility and money growth. Second, our results imply that
the forecasting equation would have induced regime switching if the
estimated factors are used for forecasting, so we want to know whether it
indeed matters. Finally, a selection of time series that are most
synchronous with or related to business cycle could improve the speed and
accuracy of our method for turning points detection, so we would like to see
how much we can achieve after careful selection.

\begin{description}
\item[Acknowledgements] 
\end{description}

We are deeply indebted to the Editor, Serena Ng, the Associate Editor, and
two referees for very useful comments and suggestions which have helped to
improve and develop the paper further. We wish to thank Daniele Massacci for
some useful discussion on a previous version of the paper. The usual
disclaimer applies. We acknowledge financial support from the Centre for
Econometric Analysis, Bayes Business School, London (UK), and the
Fundamental Research Funds for the Central Universities, Peking University,
Beijing (China).

\begin{figure}[tbp]%
\begin{center}%
\caption{Smoothed Probabilities of Regime 2 for DGP 1}%
\begin{tabular}{l}
\FRAME{itbpF}{6.0226in}{1.6561in}{0in}{}{}{1-a.png}{\special{language
"Scientific Word";type "GRAPHIC";maintain-aspect-ratio TRUE;display
"USEDEF";valid_file "F";width 6.0226in;height 1.6561in;depth
0in;original-width 42.0558in;original-height 11.6663in;cropleft "0";croptop
"1";cropright "1.0212";cropbottom "0";filename '1-a.png';file-properties
"XNPEU";}} \\ 
\FRAME{itbpF}{6.0226in}{1.6561in}{0in}{}{}{1-b.png}{\special{language
"Scientific Word";type "GRAPHIC";maintain-aspect-ratio TRUE;display
"USEDEF";valid_file "F";width 6.0226in;height 1.6561in;depth
0in;original-width 41.9719in;original-height 11.6663in;cropleft "0";croptop
"1";cropright "1.0233";cropbottom "0";filename '1-b.png';file-properties
"XNPEU";}} \\ 
\FRAME{itbpF}{6.0226in}{1.6561in}{0in}{}{}{1-c.png}{\special{language
"Scientific Word";type "GRAPHIC";maintain-aspect-ratio TRUE;display
"USEDEF";valid_file "F";width 6.0226in;height 1.6561in;depth
0in;original-width 41.9996in;original-height 11.6663in;cropleft "0";croptop
"1";cropright "1.0225";cropbottom "0";filename '1-c.png';file-properties
"XNPEU";}} \\ 
\FRAME{itbpF}{6.0226in}{1.6561in}{0in}{}{}{1-d.png}{\special{language
"Scientific Word";type "GRAPHIC";maintain-aspect-ratio TRUE;display
"USEDEF";valid_file "F";width 6.0226in;height 1.6561in;depth
0in;original-width 42.0282in;original-height 11.6663in;cropleft "0";croptop
"1";cropright "1.0219";cropbottom "0";filename '1-d.png';file-properties
"XNPEU";}}%
\end{tabular}%
\label{smoothed 2-2}%
\end{center}%
{\footnotesize Notes: Subfigures 1-4 correspond to regime pattern 1-4,
respectively. The x-axis is time and the y-axis is the probability. The
shaded regions correspond to regime 2. }${\footnotesize (N,T)=(100,300)}$%
{\footnotesize \ and }${\footnotesize (\rho ,}${\footnotesize $\zeta $}$%
{\footnotesize ,}${\footnotesize $\xi $}${\footnotesize )=(0,0,0)}$%
{\footnotesize .}%
\end{figure}%

\begin{figure}[tbp]%
\begin{center}%
\caption{Unsmoothed Probabilities of Regime 2 for DGP 1}%
\begin{tabular}{l}
\FRAME{itbpF}{6.0226in}{1.6561in}{0in}{}{}{2-a.png}{\special{language
"Scientific Word";type "GRAPHIC";maintain-aspect-ratio TRUE;display
"USEDEF";valid_file "F";width 6.0226in;height 1.6561in;depth
0in;original-width 41.9996in;original-height 11.6663in;cropleft "0";croptop
"1";cropright "1.0225";cropbottom "0";filename '2-a.png';file-properties
"XNPEU";}} \\ 
\FRAME{itbpF}{6.0226in}{1.6561in}{0in}{}{}{2-b.png}{\special{language
"Scientific Word";type "GRAPHIC";maintain-aspect-ratio TRUE;display
"USEDEF";valid_file "F";width 6.0226in;height 1.6561in;depth
0in;original-width 42.0282in;original-height 11.6663in;cropleft "0";croptop
"1";cropright "1.0219";cropbottom "0";filename '2-b.png';file-properties
"XNPEU";}} \\ 
\FRAME{itbpF}{6.0226in}{1.6561in}{0in}{}{}{2-c.png}{\special{language
"Scientific Word";type "GRAPHIC";maintain-aspect-ratio TRUE;display
"USEDEF";valid_file "F";width 6.0226in;height 1.6561in;depth
0in;original-width 42.0558in;original-height 11.6663in;cropleft "0";croptop
"1";cropright "1.0212";cropbottom "0";filename '2-c.png';file-properties
"XNPEU";}} \\ 
\FRAME{itbpF}{6.0226in}{1.6561in}{0in}{}{}{2-d.png}{\special{language
"Scientific Word";type "GRAPHIC";maintain-aspect-ratio TRUE;display
"USEDEF";valid_file "F";width 6.0226in;height 1.6561in;depth
0in;original-width 42.0282in;original-height 11.6663in;cropleft "0";croptop
"1";cropright "1.0219";cropbottom "0";filename '2-d.png';file-properties
"XNPEU";}}%
\end{tabular}%
\label{unsmoothed 2-2}%
\end{center}%
{\footnotesize Notes: Subfigures 1-4 correspond to regime pattern 1-4,
respectively. The x-axis is time and the y-axis is the probability. The
shaded regions correspond to regime 2. }${\footnotesize (N,T)=(100,300)}$%
{\footnotesize \ and }${\footnotesize (\rho ,}${\footnotesize $\zeta $}$%
{\footnotesize ,}${\footnotesize $\xi $}${\footnotesize )=(0,0,0)}$%
{\footnotesize .}%
\end{figure}%

\begin{figure}[tbp]%
\begin{center}%
\caption{Smoothed Probabilities of Regime 2 for DGP 2}%
\begin{tabular}{l}
\FRAME{itbpF}{6.0226in}{1.6561in}{0in}{}{}{3a.png}{\special{language
"Scientific Word";type "GRAPHIC";maintain-aspect-ratio TRUE;display
"USEDEF";valid_file "F";width 6.0226in;height 1.6561in;depth
0in;original-width 42.2504in;original-height 11.6663in;cropleft "0";croptop
"1";cropright "1.0165";cropbottom "0";filename '3A.png';file-properties
"XNPEU";}} \\ 
\FRAME{itbpF}{6.0226in}{1.6561in}{0in}{}{}{3-b.png}{\special{language
"Scientific Word";type "GRAPHIC";maintain-aspect-ratio TRUE;display
"USEDEF";valid_file "F";width 6.0226in;height 1.6561in;depth
0in;original-width 42.0558in;original-height 11.6663in;cropleft "0";croptop
"1";cropright "1.0212";cropbottom "0";filename '3-b.png';file-properties
"XNPEU";}} \\ 
\FRAME{itbpF}{6.0226in}{1.6561in}{0in}{}{}{3-c.png}{\special{language
"Scientific Word";type "GRAPHIC";maintain-aspect-ratio TRUE;display
"USEDEF";valid_file "F";width 6.0226in;height 1.6561in;depth
0in;original-width 42.0558in;original-height 11.6663in;cropleft "0";croptop
"1";cropright "1.0212";cropbottom "0";filename '3-c.png';file-properties
"XNPEU";}} \\ 
\FRAME{itbpF}{6.0226in}{1.6561in}{0in}{}{}{3-d.png}{\special{language
"Scientific Word";type "GRAPHIC";maintain-aspect-ratio TRUE;display
"USEDEF";valid_file "F";width 6.0226in;height 1.6561in;depth
0in;original-width 42.0282in;original-height 11.6663in;cropleft "0";croptop
"1";cropright "1.0219";cropbottom "0";filename '3-d.png';file-properties
"XNPEU";}}%
\end{tabular}%
\label{smoothed 2-1}%
\end{center}%
{\footnotesize Notes: Subfigures 1-4 correspond to regime pattern 1-4,
respectively. The x-axis is time and the y-axis is the probability. The
shaded regions correspond to regime 2. }${\footnotesize (N,T)=(100,300)}$%
{\footnotesize \ and }${\footnotesize (\rho ,}${\footnotesize $\zeta $}$%
{\footnotesize ,}${\footnotesize $\xi $}${\footnotesize )=(0,0,0)}$%
{\footnotesize .}%
\end{figure}%

\begin{figure}[tbp]%
\begin{center}%
\caption{Unsmoothed Probabilities of Regime 2 for DGP 2}%
\begin{tabular}{l}
\FRAME{itbpF}{6.0226in}{1.6561in}{0in}{}{}{4-a.png}{\special{language
"Scientific Word";type "GRAPHIC";maintain-aspect-ratio TRUE;display
"USEDEF";valid_file "F";width 6.0226in;height 1.6561in;depth
0in;original-width 42.1112in;original-height 11.6663in;cropleft "0";croptop
"1";cropright "1.0199";cropbottom "0";filename '4-a.png';file-properties
"XNPEU";}} \\ 
\FRAME{itbpF}{6.0226in}{1.6561in}{0in}{}{}{4-b.png}{\special{language
"Scientific Word";type "GRAPHIC";maintain-aspect-ratio TRUE;display
"USEDEF";valid_file "F";width 6.0226in;height 1.6561in;depth
0in;original-width 42.0835in;original-height 11.6663in;cropleft "0";croptop
"1";cropright "1.0205";cropbottom "0";filename '4-b.png';file-properties
"XNPEU";}} \\ 
\FRAME{itbpF}{6.0226in}{1.6561in}{0in}{}{}{4c.png}{\special{language
"Scientific Word";type "GRAPHIC";maintain-aspect-ratio TRUE;display
"USEDEF";valid_file "F";width 6.0226in;height 1.6561in;depth
0in;original-width 42.2504in;original-height 11.6663in;cropleft "0";croptop
"1";cropright "1.0165";cropbottom "0";filename '4C.png';file-properties
"XNPEU";}} \\ 
\FRAME{itbpF}{6.0226in}{1.6561in}{0in}{}{}{4-d.png}{\special{language
"Scientific Word";type "GRAPHIC";maintain-aspect-ratio TRUE;display
"USEDEF";valid_file "F";width 6.0226in;height 1.6561in;depth
0in;original-width 42.0835in;original-height 11.6663in;cropleft "0";croptop
"1";cropright "1.0205";cropbottom "0";filename '4-d.png';file-properties
"XNPEU";}}%
\end{tabular}%
\label{unsmoothed 2-1}%
\end{center}%
{\footnotesize Notes: Subfigures 1-4 correspond to regime pattern 1-4,
respectively. The x-axis is time and the y-axis is the probability. The
shaded regions correspond to regime 2. }${\footnotesize (N,T)=(100,300)}$%
{\footnotesize \ and }${\footnotesize (\rho ,}${\footnotesize $\zeta $}$%
{\footnotesize ,}${\footnotesize $\xi $}${\footnotesize )=(0,0,0)}$%
{\footnotesize .}%
\end{figure}%

\begin{figure}[tbp]%
\begin{center}%
\caption{Smoothed and Unsmoothed Probabilities of Regime 2 for Regime
Pattern 1, $(N,T)=(200,300)$ and $(\rho,\alpha,\beta)=(0,0,0)$}%
\begin{tabular}{l}
\FRAME{itbpF}{6.0226in}{1.6561in}{0in}{}{}{5-a.png}{\special{language
"Scientific Word";type "GRAPHIC";maintain-aspect-ratio TRUE;display
"USEDEF";valid_file "F";width 6.0226in;height 1.6561in;depth
0in;original-width 42.2781in;original-height 11.6663in;cropleft "0";croptop
"1";cropright "1.0154";cropbottom "0";filename '5-a.png';file-properties
"XNPEU";}} \\ 
\FRAME{itbpF}{6.0226in}{1.6561in}{0in}{}{}{5-b.png}{\special{language
"Scientific Word";type "GRAPHIC";maintain-aspect-ratio TRUE;display
"USEDEF";valid_file "F";width 6.0226in;height 1.6561in;depth
0in;original-width 42.2781in;original-height 11.6663in;cropleft "0";croptop
"1";cropright "1.0154";cropbottom "0";filename '5-b.png';file-properties
"XNPEU";}} \\ 
\FRAME{itbpF}{6.0226in}{1.6561in}{0in}{}{}{5-c.png}{\special{language
"Scientific Word";type "GRAPHIC";maintain-aspect-ratio TRUE;display
"USEDEF";valid_file "F";width 6.0226in;height 1.6561in;depth
0in;original-width 42.2219in;original-height 11.6663in;cropleft "0";croptop
"1";cropright "1.0172";cropbottom "0";filename '5-c.png';file-properties
"XNPEU";}} \\ 
\FRAME{itbpF}{6.0226in}{1.6561in}{0in}{}{}{5-d.png}{\special{language
"Scientific Word";type "GRAPHIC";maintain-aspect-ratio TRUE;display
"USEDEF";valid_file "F";width 6.0226in;height 1.6561in;depth
0in;original-width 42.2219in;original-height 11.6663in;cropleft "0";croptop
"1";cropright "1.0172";cropbottom "0";filename '5-d.png';file-properties
"XNPEU";}}%
\end{tabular}%
\label{NT}%
\end{center}%
{\footnotesize Notes: Subfigures 1-4 correspond to smoothed probabilities
for DGP1, unsmoothed probabilities for DGP1, smoothed probabilities for DGP2
and unsmoothed probabilities for DGP2, respectively. The x-axis is time and
the y-axis is the probability. The shaded regions correspond to regime 2.}%
\end{figure}%

\begin{figure}[tbp]%
\begin{center}%
\caption{Smoothed Probabilities of Regime 2 for Regime Pattern 1, $(N,T)=(100,300)$ and
$(\rho,\alpha,\beta)=(0.5,0,0)$ or $(\rho,\alpha,\beta)=(0,0.5,0.5)$}%
\begin{tabular}{l}
\FRAME{itbpF}{6.0226in}{1.6561in}{0in}{}{}{6-a.png}{\special{language
"Scientific Word";type "GRAPHIC";maintain-aspect-ratio TRUE;display
"USEDEF";valid_file "F";width 6.0226in;height 1.6561in;depth
0in;original-width 42.3058in;original-height 11.6663in;cropleft "0";croptop
"1";cropright "1.0152";cropbottom "0";filename '6-a.png';file-properties
"XNPEU";}} \\ 
\FRAME{itbpF}{6.0226in}{1.6561in}{0in}{}{}{6-b.png}{\special{language
"Scientific Word";type "GRAPHIC";maintain-aspect-ratio TRUE;display
"USEDEF";valid_file "F";width 6.0226in;height 1.6561in;depth
0in;original-width 42.2504in;original-height 11.6663in;cropleft "0";croptop
"1";cropright "1.0165";cropbottom "0";filename '6-b.png';file-properties
"XNPEU";}} \\ 
\FRAME{itbpF}{6.0226in}{1.6561in}{0in}{}{}{6-c.png}{\special{language
"Scientific Word";type "GRAPHIC";maintain-aspect-ratio TRUE;display
"USEDEF";valid_file "F";width 6.0226in;height 1.6561in;depth
0in;original-width 42.3058in;original-height 11.6663in;cropleft "0";croptop
"1";cropright "1.0152";cropbottom "0";filename '6-c.png';file-properties
"XNPEU";}} \\ 
\FRAME{itbpF}{6.0226in}{1.6561in}{0in}{}{}{6-d.png}{\special{language
"Scientific Word";type "GRAPHIC";maintain-aspect-ratio TRUE;display
"USEDEF";valid_file "F";width 6.0226in;height 1.6561in;depth
0in;original-width 42.3058in;original-height 11.6663in;cropleft "0";croptop
"1";cropright "1.0152";cropbottom "0";filename '6-d.png';file-properties
"XNPEU";}}%
\end{tabular}%
\label{rho-alpha-beta}%
\end{center}%
{\footnotesize Notes: Subfigures 1-4 correspond to smoothed probabilities
for DGP1 with }${\footnotesize (\rho ,}${\footnotesize $\zeta $}$%
{\footnotesize ,}${\footnotesize $\xi $}${\footnotesize )=(0.5,0,0)}$%
{\footnotesize , DGP1 with }${\footnotesize (\rho ,}${\footnotesize $\zeta $}%
${\footnotesize ,}${\footnotesize $\xi $}${\footnotesize )=(0,0.5,0.5)}$%
{\footnotesize , DGP2 with }${\footnotesize (\rho ,}${\footnotesize $\zeta $}%
${\footnotesize ,}${\footnotesize $\xi $}${\footnotesize )=(0.5,0,0)}$%
{\footnotesize \ and DGP2 with }${\footnotesize (\rho ,}${\footnotesize $%
\zeta $}${\footnotesize ,}${\footnotesize $\xi $}${\footnotesize %
)=(0,0.5,0.5)}${\footnotesize , respectively. The x-axis is time and the
y-axis is the probability. The shaded regions correspond to regime 2.}%
\end{figure}%

\begin{figure}[tbp]%
\begin{center}%
\caption{Smoothed and Unsmoothed Probabilities of Regime 2 for Regime
Pattern 1, $(N,T)=(50,500)$ and $(N,T)=(500,50)$}%
\begin{tabular}{l}
\FRAME{itbpF}{6.0224in}{1.6571in}{0in}{}{}{p5-smn50-dgp1.png}{\special%
{language "Scientific Word";type "GRAPHIC";maintain-aspect-ratio
TRUE;display "USEDEF";valid_file "F";width 6.0224in;height 1.6571in;depth
0in;original-width 42.1335in;original-height 11.1996in;cropleft "0";croptop
"1";cropright "0.9782";cropbottom "0";filename
'p5-smN50-dgp1.png';file-properties "XNPEU";}} \\ 
\FRAME{itbpF}{6.0224in}{1.6571in}{0in}{}{}{p5-smt50-dgp1.png}{\special%
{language "Scientific Word";type "GRAPHIC";maintain-aspect-ratio
TRUE;display "USEDEF";valid_file "F";width 6.0224in;height 1.6571in;depth
0in;original-width 42.16in;original-height 11.1996in;cropleft "0";croptop
"1";cropright "0.9770";cropbottom "0";filename
'p5-smT50-dgp1.png';file-properties "XNPEU";}} \\ 
\FRAME{itbpF}{6.0224in}{1.6571in}{0in}{}{}{p5-usmn50-dgp1.png}{\special%
{language "Scientific Word";type "GRAPHIC";maintain-aspect-ratio
TRUE;display "USEDEF";valid_file "F";width 6.0224in;height 1.6571in;depth
0in;original-width 42.1866in;original-height 11.1996in;cropleft "0";croptop
"1";cropright "0.9765";cropbottom "0";filename
'p5-usmN50-dgp1.png';file-properties "XNPEU";}} \\ 
\FRAME{itbpF}{6.0224in}{1.6571in}{0in}{}{}{p5-usmt50-dgp1.png}{\special%
{language "Scientific Word";type "GRAPHIC";maintain-aspect-ratio
TRUE;display "USEDEF";valid_file "F";width 6.0224in;height 1.6571in;depth
0in;original-width 42.1335in;original-height 11.1996in;cropleft "0";croptop
"1";cropright "0.9782";cropbottom "0";filename
'p5-usmT50-dgp1.png';file-properties "XNPEU";}}%
\end{tabular}%
\label{N50T50}%
\end{center}%
{\footnotesize Notes: Subfigures 1-2 correspond to smoothed probabilities
under regime pattern 4 and DGP1 with }${\footnotesize (N,T)=(50,500)}$ 
{\footnotesize or }${\footnotesize (500,50)}${\footnotesize , respectively.
Subfigures 3-4 correspond to unsmoothed probabilities under regime pattern 4
and DGP1 with }${\footnotesize (N,T)=(500,50)}$ {\footnotesize or }$%
{\footnotesize (500,50)}${\footnotesize , respectively. The x-axis is time
and the y-axis is the probability. The shaded regions correspond to regime 2.%
}%
\end{figure}%

\begin{figure}[tbp]%
\begin{center}%
\caption{Smoothed and Unsmoothed Probabilities of Regime 2 for DGP 4}%
\begin{tabular}{l}
\FRAME{itbpF}{6.0224in}{1.6571in}{0in}{}{}{p1-sm-dgp4.png}{\special{language
"Scientific Word";type "GRAPHIC";maintain-aspect-ratio TRUE;display
"USEDEF";valid_file "F";width 6.0224in;height 1.6571in;depth
0in;original-width 42.16in;original-height 11.1996in;cropleft "0";croptop
"1";cropright "0.9770";cropbottom "0";filename
'p1-sm-dgp4.png';file-properties "XNPEU";}} \\ 
\FRAME{itbpF}{6.0224in}{1.6571in}{0in}{}{}{p2-sm-dgp4.png}{\special{language
"Scientific Word";type "GRAPHIC";maintain-aspect-ratio TRUE;display
"USEDEF";valid_file "F";width 6.0224in;height 1.6571in;depth
0in;original-width 42.16in;original-height 11.1996in;cropleft "0";croptop
"1";cropright "0.9770";cropbottom "0";filename
'p2-sm-dgp4.png';file-properties "XNPEU";}} \\ 
\FRAME{itbpF}{6.0224in}{1.6571in}{0in}{}{}{p1-usm-dgp4.png}{\special%
{language "Scientific Word";type "GRAPHIC";maintain-aspect-ratio
TRUE;display "USEDEF";valid_file "F";width 6.0224in;height 1.6571in;depth
0in;original-width 42.16in;original-height 11.1996in;cropleft "0";croptop
"1";cropright "0.9770";cropbottom "0";filename
'p1-usm-dgp4.png';file-properties "XNPEU";}} \\ 
\FRAME{itbpF}{6.0224in}{1.6571in}{0in}{}{}{p2-usm-dgp4.png}{\special%
{language "Scientific Word";type "GRAPHIC";maintain-aspect-ratio
TRUE;display "USEDEF";valid_file "F";width 6.0224in;height 1.6571in;depth
0in;original-width 42.1335in;original-height 11.1996in;cropleft "0";croptop
"1";cropright "0.9782";cropbottom "0";filename
'p2-usm-dgp4.png';file-properties "XNPEU";}}%
\end{tabular}%
\label{dgp4}%
\end{center}%
{\footnotesize Notes: Subfigures 1-2 correspond to smoothed probabilities
for DGP4 under regime patterns 1 and 2, respectively. Subfigures 3-4
correspond to unsmoothed probabilities for DGP4 under regime patterns 1 and
2, respectively. The x-axis is time and the y-axis is the probability. The
shaded regions correspond to regime 2.}%
\end{figure}%

\begin{figure}[tbp]%
\begin{center}%
\caption{Histograms of the Estimated Loadings and the Estimated Factors  for
Regime Pattern 1}%
\begin{tabular}{lll}
\FRAME{itbpF}{1.8066in}{1.4053in}{0in}{}{}{p1-sm-l1.png}{\special{language
"Scientific Word";type "GRAPHIC";maintain-aspect-ratio TRUE;display
"USEDEF";valid_file "F";width 1.8066in;height 1.4053in;depth
0in;original-width 15.5554in;original-height 11.6663in;cropleft "0";croptop
"1";cropright "0.9681";cropbottom "0";filename
'p1-sm-l1.png';file-properties "XNPEU";}} & \FRAME{itbpF}{1.8066in}{1.4053in%
}{0in}{}{}{p1-sm-l2.png}{\special{language "Scientific Word";type
"GRAPHIC";maintain-aspect-ratio TRUE;display "USEDEF";valid_file "F";width
1.8066in;height 1.4053in;depth 0in;original-width 15.5554in;original-height
11.6663in;cropleft "0";croptop "1";cropright "0.9685";cropbottom
"0";filename 'p1-sm-l2.png';file-properties "XNPEU";}} & \FRAME{itbpF}{%
1.8066in}{1.4053in}{0in}{}{}{p1-sm-f.png}{\special{language "Scientific
Word";type "GRAPHIC";maintain-aspect-ratio TRUE;display "USEDEF";valid_file
"F";width 1.8066in;height 1.4053in;depth 0in;original-width
15.5554in;original-height 11.6663in;cropleft "0";croptop "1";cropright
"0.9685";cropbottom "0";filename 'p1-sm-f.png';file-properties "XNPEU";}} \\ 
\FRAME{itbpF}{1.8066in}{1.4053in}{0in}{}{}{p1-usm-l1.png}{\special{language
"Scientific Word";type "GRAPHIC";maintain-aspect-ratio TRUE;display
"USEDEF";valid_file "F";width 1.8066in;height 1.4053in;depth
0in;original-width 15.5554in;original-height 11.6663in;cropleft "0";croptop
"1";cropright "0.9685";cropbottom "0";filename
'p1-usm-l1.png';file-properties "XNPEU";}} & \FRAME{itbpF}{1.8066in}{1.4053in%
}{0in}{}{}{p1-usm-l2.png}{\special{language "Scientific Word";type
"GRAPHIC";maintain-aspect-ratio TRUE;display "USEDEF";valid_file "F";width
1.8066in;height 1.4053in;depth 0in;original-width 15.5554in;original-height
11.6663in;cropleft "0";croptop "1";cropright "0.9685";cropbottom
"0";filename 'p1-usm-l2.png';file-properties "XNPEU";}} & \FRAME{itbpF}{%
1.8066in}{1.4053in}{0in}{}{}{p1-usm-f.png}{\special{language "Scientific
Word";type "GRAPHIC";maintain-aspect-ratio TRUE;display "USEDEF";valid_file
"F";width 1.8066in;height 1.4053in;depth 0in;original-width
15.5554in;original-height 11.6663in;cropleft "0";croptop "1";cropright
"0.9685";cropbottom "0";filename 'p1-usm-f.png';file-properties "XNPEU";}}
\\ 
\FRAME{itbpF}{1.8066in}{1.4053in}{0in}{}{}{p1-smrab-l1.png}{\special%
{language "Scientific Word";type "GRAPHIC";maintain-aspect-ratio
TRUE;display "USEDEF";valid_file "F";width 1.8066in;height 1.4053in;depth
0in;original-width 15.5554in;original-height 11.6663in;cropleft "0";croptop
"1";cropright "0.9685";cropbottom "0";filename
'p1-smRAB-l1.png';file-properties "XNPEU";}} & \FRAME{itbpF}{1.8066in}{%
1.4053in}{0in}{}{}{p1-smrab-l2.png}{\special{language "Scientific Word";type
"GRAPHIC";maintain-aspect-ratio TRUE;display "USEDEF";valid_file "F";width
1.8066in;height 1.4053in;depth 0in;original-width 15.5554in;original-height
11.6663in;cropleft "0";croptop "1";cropright "0.9685";cropbottom
"0";filename 'p1-smRAB-l2.png';file-properties "XNPEU";}} & \FRAME{itbpF}{%
1.8066in}{1.4053in}{0in}{}{}{p1-smrab-f.png}{\special{language "Scientific
Word";type "GRAPHIC";maintain-aspect-ratio TRUE;display "USEDEF";valid_file
"F";width 1.8066in;height 1.4053in;depth 0in;original-width
15.5554in;original-height 11.6663in;cropleft "0";croptop "1";cropright
"0.9685";cropbottom "0";filename 'p1-smRAB-f.png';file-properties "XNPEU";}}
\\ 
\FRAME{itbpF}{1.8066in}{1.4053in}{0in}{}{}{p1-smn200-l1.png}{\special%
{language "Scientific Word";type "GRAPHIC";maintain-aspect-ratio
TRUE;display "USEDEF";valid_file "F";width 1.8066in;height 1.4053in;depth
0in;original-width 15.5554in;original-height 11.6663in;cropleft "0";croptop
"1";cropright "0.9685";cropbottom "0";filename
'p1-smN200-l1.png';file-properties "XNPEU";}} & \FRAME{itbpF}{1.8066in}{%
1.4053in}{0in}{}{}{p1-smn200-l2.png}{\special{language "Scientific
Word";type "GRAPHIC";maintain-aspect-ratio TRUE;display "USEDEF";valid_file
"F";width 1.8066in;height 1.4053in;depth 0in;original-width
15.5554in;original-height 11.6663in;cropleft "0";croptop "1";cropright
"0.9685";cropbottom "0";filename 'p1-smN200-l2.png';file-properties "XNPEU";}%
} & \FRAME{itbpF}{1.8066in}{1.4053in}{0in}{}{}{p1-smn200-f.png}{\special%
{language "Scientific Word";type "GRAPHIC";maintain-aspect-ratio
TRUE;display "USEDEF";valid_file "F";width 1.8066in;height 1.4053in;depth
0in;original-width 15.5554in;original-height 11.6663in;cropleft "0";croptop
"1";cropright "0.9685";cropbottom "0";filename
'p1-smN200-f.png';file-properties "XNPEU";}}%
\end{tabular}%
\label{histo-BCDC}%
\end{center}%
{\footnotesize Notes: Subfigures in the first to the fourth row correspond
to the smoothed algorithm with }${\footnotesize \rho =}${\footnotesize $%
\zeta $}${\footnotesize =}${\footnotesize $\xi $}${\footnotesize =0}$ 
{\footnotesize and }${\footnotesize (N,T)=(100,300)}${\footnotesize , the
unsmoothed algorithm with }${\footnotesize \rho =}${\footnotesize $\zeta $}$%
{\footnotesize =}${\footnotesize $\xi $}${\footnotesize =0}$ {\footnotesize %
and }${\footnotesize (N,T)=(100,300)}${\footnotesize , the smoothed
algorithm with }${\footnotesize \rho =}${\footnotesize $\zeta $}$%
{\footnotesize =}${\footnotesize $\xi $}${\footnotesize =0.5}$ 
{\footnotesize and }${\footnotesize (N,T)=(100,300)}${\footnotesize , and
the smoothed algorithm with }${\footnotesize \rho =}${\footnotesize $\zeta $}%
${\footnotesize =}${\footnotesize $\xi $}${\footnotesize =0}$ {\footnotesize %
and }${\footnotesize (N,T)=(200,300)}${\footnotesize , respectively.
Subfigures in the first to the third column correspond to the estimated
loadings for regime 1, the estimated loadings for regime 2 and the estimated
factors, respectively. The curve overlaid on the histograms is the standard
normal density function.}%
\end{figure}%

\begin{figure}[tbp]%
\begin{center}%
\caption{Histograms of the Estimated Loadings and the Estimated Factors  for
Regime Pattern 2}%
\begin{tabular}{lll}
\FRAME{itbpF}{1.8066in}{1.4053in}{0in}{}{}{p2-sm-l1.png}{\special{language
"Scientific Word";type "GRAPHIC";maintain-aspect-ratio TRUE;display
"USEDEF";valid_file "F";width 1.8066in;height 1.4053in;depth
0in;original-width 15.5554in;original-height 11.6663in;cropleft "0";croptop
"1";cropright "0.9685";cropbottom "0";filename
'p2-sm-l1.png';file-properties "XNPEU";}} & \FRAME{itbpF}{1.8066in}{1.4053in%
}{0in}{}{}{p2-sm-l2.png}{\special{language "Scientific Word";type
"GRAPHIC";maintain-aspect-ratio TRUE;display "USEDEF";valid_file "F";width
1.8066in;height 1.4053in;depth 0in;original-width 15.5554in;original-height
11.6663in;cropleft "0";croptop "1";cropright "0.9685";cropbottom
"0";filename 'p2-sm-l2.png';file-properties "XNPEU";}} & \FRAME{itbpF}{%
1.8066in}{1.4053in}{0in}{}{}{p2-sm-f.png}{\special{language "Scientific
Word";type "GRAPHIC";maintain-aspect-ratio TRUE;display "USEDEF";valid_file
"F";width 1.8066in;height 1.4053in;depth 0in;original-width
15.5554in;original-height 11.6663in;cropleft "0";croptop "1";cropright
"0.9685";cropbottom "0";filename 'p2-sm-f.png';file-properties "XNPEU";}} \\ 
\FRAME{itbpF}{1.8066in}{1.4053in}{0in}{}{}{p2-usm-l1.png}{\special{language
"Scientific Word";type "GRAPHIC";maintain-aspect-ratio TRUE;display
"USEDEF";valid_file "F";width 1.8066in;height 1.4053in;depth
0in;original-width 15.5554in;original-height 11.6663in;cropleft "0";croptop
"1";cropright "0.9685";cropbottom "0";filename
'p2-usm-l1.png';file-properties "XNPEU";}} & \FRAME{itbpF}{1.8066in}{1.4053in%
}{0in}{}{}{p2-usm-l2.png}{\special{language "Scientific Word";type
"GRAPHIC";maintain-aspect-ratio TRUE;display "USEDEF";valid_file "F";width
1.8066in;height 1.4053in;depth 0in;original-width 15.5554in;original-height
11.6663in;cropleft "0";croptop "1";cropright "0.9685";cropbottom
"0";filename 'p2-usm-l2.png';file-properties "XNPEU";}} & \FRAME{itbpF}{%
1.8066in}{1.4053in}{0in}{}{}{p2-usm-f.png}{\special{language "Scientific
Word";type "GRAPHIC";maintain-aspect-ratio TRUE;display "USEDEF";valid_file
"F";width 1.8066in;height 1.4053in;depth 0in;original-width
15.5554in;original-height 11.6663in;cropleft "0";croptop "1";cropright
"0.9685";cropbottom "0";filename 'p2-usm-f.png';file-properties "XNPEU";}}
\\ 
\FRAME{itbpF}{1.8066in}{1.4053in}{0in}{}{}{p2-smrab-l1.png}{\special%
{language "Scientific Word";type "GRAPHIC";maintain-aspect-ratio
TRUE;display "USEDEF";valid_file "F";width 1.8066in;height 1.4053in;depth
0in;original-width 15.5554in;original-height 11.6663in;cropleft "0";croptop
"1";cropright "0.9685";cropbottom "0";filename
'p2-smRAB-l1.png';file-properties "XNPEU";}} & \FRAME{itbpF}{1.8066in}{%
1.4053in}{0in}{}{}{p2-smrab-l2.png}{\special{language "Scientific Word";type
"GRAPHIC";maintain-aspect-ratio TRUE;display "USEDEF";valid_file "F";width
1.8066in;height 1.4053in;depth 0in;original-width 15.5554in;original-height
11.6663in;cropleft "0";croptop "1";cropright "0.9685";cropbottom
"0";filename 'p2-smRAB-l2.png';file-properties "XNPEU";}} & \FRAME{itbpF}{%
1.8066in}{1.4053in}{0in}{}{}{p2-smrab-f.png}{\special{language "Scientific
Word";type "GRAPHIC";maintain-aspect-ratio TRUE;display "USEDEF";valid_file
"F";width 1.8066in;height 1.4053in;depth 0in;original-width
15.5554in;original-height 11.6663in;cropleft "0";croptop "1";cropright
"0.9685";cropbottom "0";filename 'p2-smRAB-f.png';file-properties "XNPEU";}}
\\ 
\FRAME{itbpF}{1.8066in}{1.4053in}{0in}{}{}{p2-smn200-l1.png}{\special%
{language "Scientific Word";type "GRAPHIC";maintain-aspect-ratio
TRUE;display "USEDEF";valid_file "F";width 1.8066in;height 1.4053in;depth
0in;original-width 15.5554in;original-height 11.6663in;cropleft "0";croptop
"1";cropright "0.9685";cropbottom "0";filename
'p2-smN200-l1.png';file-properties "XNPEU";}} & \FRAME{itbpF}{1.8066in}{%
1.4053in}{0in}{}{}{p2-smn200-l2.png}{\special{language "Scientific
Word";type "GRAPHIC";maintain-aspect-ratio TRUE;display "USEDEF";valid_file
"F";width 1.8066in;height 1.4053in;depth 0in;original-width
15.5554in;original-height 11.6663in;cropleft "0";croptop "1";cropright
"0.9685";cropbottom "0";filename 'p2-smN200-l2.png';file-properties "XNPEU";}%
} & \FRAME{itbpF}{1.8066in}{1.4053in}{0in}{}{}{p2-smn200-f.png}{\special%
{language "Scientific Word";type "GRAPHIC";maintain-aspect-ratio
TRUE;display "USEDEF";valid_file "F";width 1.8066in;height 1.4053in;depth
0in;original-width 15.5554in;original-height 11.6663in;cropleft "0";croptop
"1";cropright "0.9685";cropbottom "0";filename
'p2-smN200-f.png';file-properties "XNPEU";}}%
\end{tabular}%
\label{histo-break}%
\end{center}%
{\footnotesize Notes: Subfigures in the first to the fourth row correspond
to the smoothed algorithm with }${\footnotesize \rho =}${\footnotesize $%
\zeta $}${\footnotesize =}${\footnotesize $\xi $}${\footnotesize =0}$ 
{\footnotesize and }${\footnotesize (N,T)=(100,300)}${\footnotesize , the
unsmoothed algorithm with }${\footnotesize \rho =}${\footnotesize $\zeta $}$%
{\footnotesize =}${\footnotesize $\xi $}${\footnotesize =0}$ {\footnotesize %
and }${\footnotesize (N,T)=(100,300)}${\footnotesize , the smoothed
algorithm with }${\footnotesize \rho =}${\footnotesize $\zeta $}$%
{\footnotesize =}${\footnotesize $\xi $}${\footnotesize =0.5}$ 
{\footnotesize and }${\footnotesize (N,T)=(100,300)}${\footnotesize , and
the smoothed algorithm with }${\footnotesize \rho =}${\footnotesize $\zeta $}%
${\footnotesize =}${\footnotesize $\xi $}${\footnotesize =0}$ {\footnotesize %
and }${\footnotesize (N,T)=(200,300)}${\footnotesize , respectively.
Subfigures in the first to the third column correspond to the estimated
loadings for regime 1, the estimated loadings for regime 2 and the estimated
factors, respectively. The curve overlaid on the histograms is the standard
normal density function.}%
\end{figure}%

\begin{figure}[tbp]%
\begin{center}%
\caption{Histograms of the Estimated Loadings and the Estimated Factors  for
Regime Pattern 3}%
\begin{tabular}{lll}
\FRAME{itbpF}{1.8066in}{1.4053in}{0in}{}{}{p3-sm-l1.png}{\special{language
"Scientific Word";type "GRAPHIC";maintain-aspect-ratio TRUE;display
"USEDEF";valid_file "F";width 1.8066in;height 1.4053in;depth
0in;original-width 15.5554in;original-height 11.6663in;cropleft "0";croptop
"1";cropright "0.9685";cropbottom "0";filename
'p3-sm-l1.png';file-properties "XNPEU";}} & \FRAME{itbpF}{1.8066in}{1.4053in%
}{0in}{}{}{p3-sm-l2.png}{\special{language "Scientific Word";type
"GRAPHIC";maintain-aspect-ratio TRUE;display "USEDEF";valid_file "F";width
1.8066in;height 1.4053in;depth 0in;original-width 15.5554in;original-height
11.6663in;cropleft "0";croptop "1";cropright "0.9685";cropbottom
"0";filename 'p3-sm-l2.png';file-properties "XNPEU";}} & \FRAME{itbpF}{%
1.8066in}{1.4053in}{0in}{}{}{p3-sm-f.png}{\special{language "Scientific
Word";type "GRAPHIC";maintain-aspect-ratio TRUE;display "USEDEF";valid_file
"F";width 1.8066in;height 1.4053in;depth 0in;original-width
15.5554in;original-height 11.6663in;cropleft "0";croptop "1";cropright
"0.9685";cropbottom "0";filename 'p3-sm-f.png';file-properties "XNPEU";}} \\ 
\FRAME{itbpF}{1.8066in}{1.4053in}{0in}{}{}{p3-usm-l1.png}{\special{language
"Scientific Word";type "GRAPHIC";maintain-aspect-ratio TRUE;display
"USEDEF";valid_file "F";width 1.8066in;height 1.4053in;depth
0in;original-width 15.5554in;original-height 11.6663in;cropleft "0";croptop
"1";cropright "0.9685";cropbottom "0";filename
'p3-usm-l1.png';file-properties "XNPEU";}} & \FRAME{itbpF}{1.8066in}{1.4053in%
}{0in}{}{}{p3-usm-l2.png}{\special{language "Scientific Word";type
"GRAPHIC";maintain-aspect-ratio TRUE;display "USEDEF";valid_file "F";width
1.8066in;height 1.4053in;depth 0in;original-width 15.5554in;original-height
11.6663in;cropleft "0";croptop "1";cropright "0.9685";cropbottom
"0";filename 'p3-usm-l2.png';file-properties "XNPEU";}} & \FRAME{itbpF}{%
1.8066in}{1.4053in}{0in}{}{}{p3-usm-f.png}{\special{language "Scientific
Word";type "GRAPHIC";maintain-aspect-ratio TRUE;display "USEDEF";valid_file
"F";width 1.8066in;height 1.4053in;depth 0in;original-width
15.5554in;original-height 11.6663in;cropleft "0";croptop "1";cropright
"0.9685";cropbottom "0";filename 'p3-usm-f.png';file-properties "XNPEU";}}
\\ 
\FRAME{itbpF}{1.8066in}{1.4053in}{0in}{}{}{p3-smrab-l1.png}{\special%
{language "Scientific Word";type "GRAPHIC";maintain-aspect-ratio
TRUE;display "USEDEF";valid_file "F";width 1.8066in;height 1.4053in;depth
0in;original-width 15.5554in;original-height 11.6663in;cropleft "0";croptop
"1";cropright "0.9685";cropbottom "0";filename
'p3-smRAB-l1.png';file-properties "XNPEU";}} & \FRAME{itbpF}{1.8066in}{%
1.4053in}{0in}{}{}{p3-smrab-l2.png}{\special{language "Scientific Word";type
"GRAPHIC";maintain-aspect-ratio TRUE;display "USEDEF";valid_file "F";width
1.8066in;height 1.4053in;depth 0in;original-width 15.5554in;original-height
11.6663in;cropleft "0";croptop "1";cropright "0.9685";cropbottom
"0";filename 'p3-smRAB-l2.png';file-properties "XNPEU";}} & \FRAME{itbpF}{%
1.8066in}{1.4053in}{0in}{}{}{p3-smrab-f.png}{\special{language "Scientific
Word";type "GRAPHIC";maintain-aspect-ratio TRUE;display "USEDEF";valid_file
"F";width 1.8066in;height 1.4053in;depth 0in;original-width
15.5554in;original-height 11.6663in;cropleft "0";croptop "1";cropright
"0.9685";cropbottom "0";filename 'p3-smRAB-f.png';file-properties "XNPEU";}}
\\ 
\FRAME{itbpF}{1.8066in}{1.4053in}{0in}{}{}{p3-smn200-l1.png}{\special%
{language "Scientific Word";type "GRAPHIC";maintain-aspect-ratio
TRUE;display "USEDEF";valid_file "F";width 1.8066in;height 1.4053in;depth
0in;original-width 15.5554in;original-height 11.6663in;cropleft "0";croptop
"1";cropright "0.9685";cropbottom "0";filename
'p3-smN200-l1.png';file-properties "XNPEU";}} & \FRAME{itbpF}{1.8066in}{%
1.4053in}{0in}{}{}{p3-smn200-l2.png}{\special{language "Scientific
Word";type "GRAPHIC";maintain-aspect-ratio TRUE;display "USEDEF";valid_file
"F";width 1.8066in;height 1.4053in;depth 0in;original-width
15.5554in;original-height 11.6663in;cropleft "0";croptop "1";cropright
"0.9685";cropbottom "0";filename 'p3-smN200-l2.png';file-properties "XNPEU";}%
} & \FRAME{itbpF}{1.8066in}{1.4053in}{0in}{}{}{p3-smn200-f.png}{\special%
{language "Scientific Word";type "GRAPHIC";maintain-aspect-ratio
TRUE;display "USEDEF";valid_file "F";width 1.8066in;height 1.4053in;depth
0in;original-width 15.5554in;original-height 11.6663in;cropleft "0";croptop
"1";cropright "0.9685";cropbottom "0";filename
'p3-smN200-f.png';file-properties "XNPEU";}}%
\end{tabular}%
\label{histo-breaks}%
\end{center}%
{\footnotesize Notes: Subfigures in the first to the fourth row correspond
to the smoothed algorithm with }${\footnotesize \rho =}${\footnotesize $%
\zeta $}${\footnotesize =}${\footnotesize $\xi $}${\footnotesize =0}$ 
{\footnotesize and }${\footnotesize (N,T)=(100,300)}${\footnotesize , the
unsmoothed algorithm with }${\footnotesize \rho =}${\footnotesize $\zeta $}$%
{\footnotesize =}${\footnotesize $\xi $}${\footnotesize =0}$ {\footnotesize %
and }${\footnotesize (N,T)=(100,300)}${\footnotesize , the smoothed
algorithm with }${\footnotesize \rho =}${\footnotesize $\zeta $}$%
{\footnotesize =}${\footnotesize $\xi $}${\footnotesize =0.5}$ 
{\footnotesize and }${\footnotesize (N,T)=(100,300)}${\footnotesize , and
the smoothed algorithm with }${\footnotesize \rho =}${\footnotesize $\zeta $}%
${\footnotesize =}${\footnotesize $\xi $}${\footnotesize =0}$ {\footnotesize %
and }${\footnotesize (N,T)=(200,300)}${\footnotesize , respectively.
Subfigures in the first to the third column correspond to the estimated
loadings for regime 1, the estimated loadings for regime 2 and the estimated
factors, respectively. The curve overlaid on the histograms is the standard
normal density function.}%
\end{figure}%

\begin{figure}[tbp]%
\begin{center}%
\caption{Histograms of the Estimated Loadings and the Estimated Factors for
Regime Pattern 4}%
\begin{tabular}{lll}
\FRAME{itbpF}{1.8066in}{1.4053in}{0in}{}{}{p4-sm-l1.png}{\special{language
"Scientific Word";type "GRAPHIC";maintain-aspect-ratio TRUE;display
"USEDEF";valid_file "F";width 1.8066in;height 1.4053in;depth
0in;original-width 15.5554in;original-height 11.6663in;cropleft "0";croptop
"1";cropright "0.9685";cropbottom "0";filename
'p4-sm-l1.png';file-properties "XNPEU";}} & \FRAME{itbpF}{1.8066in}{1.4053in%
}{0in}{}{}{p4-sm-l2.png}{\special{language "Scientific Word";type
"GRAPHIC";maintain-aspect-ratio TRUE;display "USEDEF";valid_file "F";width
1.8066in;height 1.4053in;depth 0in;original-width 15.5554in;original-height
11.6663in;cropleft "0";croptop "1";cropright "0.9685";cropbottom
"0";filename 'p4-sm-l2.png';file-properties "XNPEU";}} & \FRAME{itbpF}{%
1.8066in}{1.4053in}{0in}{}{}{p4-sm-f.png}{\special{language "Scientific
Word";type "GRAPHIC";maintain-aspect-ratio TRUE;display "USEDEF";valid_file
"F";width 1.8066in;height 1.4053in;depth 0in;original-width
15.5554in;original-height 11.6663in;cropleft "0";croptop "1";cropright
"0.9685";cropbottom "0";filename 'p4-sm-f.png';file-properties "XNPEU";}} \\ 
\FRAME{itbpF}{1.8066in}{1.4053in}{0in}{}{}{p4-usm-l1.png}{\special{language
"Scientific Word";type "GRAPHIC";maintain-aspect-ratio TRUE;display
"USEDEF";valid_file "F";width 1.8066in;height 1.4053in;depth
0in;original-width 15.5554in;original-height 11.6663in;cropleft "0";croptop
"1";cropright "0.9685";cropbottom "0";filename
'p4-usm-l1.png';file-properties "XNPEU";}} & \FRAME{itbpF}{1.8066in}{1.4053in%
}{0in}{}{}{p4-usm-l2.png}{\special{language "Scientific Word";type
"GRAPHIC";maintain-aspect-ratio TRUE;display "USEDEF";valid_file "F";width
1.8066in;height 1.4053in;depth 0in;original-width 15.5554in;original-height
11.6663in;cropleft "0";croptop "1";cropright "0.9685";cropbottom
"0";filename 'p4-usm-l2.png';file-properties "XNPEU";}} & \FRAME{itbpF}{%
1.8066in}{1.4053in}{0in}{}{}{p4-usm-f.png}{\special{language "Scientific
Word";type "GRAPHIC";maintain-aspect-ratio TRUE;display "USEDEF";valid_file
"F";width 1.8066in;height 1.4053in;depth 0in;original-width
15.5554in;original-height 11.6663in;cropleft "0";croptop "1";cropright
"0.9685";cropbottom "0";filename 'p4-usm-f.png';file-properties "XNPEU";}}
\\ 
\FRAME{itbpF}{1.8066in}{1.4053in}{0in}{}{}{p4-smrab-l1.png}{\special%
{language "Scientific Word";type "GRAPHIC";maintain-aspect-ratio
TRUE;display "USEDEF";valid_file "F";width 1.8066in;height 1.4053in;depth
0in;original-width 15.5554in;original-height 11.6663in;cropleft "0";croptop
"1";cropright "0.9685";cropbottom "0";filename
'p4-smRAB-l1.png';file-properties "XNPEU";}} & \FRAME{itbpF}{1.8066in}{%
1.4053in}{0in}{}{}{p4-smrab-l2.png}{\special{language "Scientific Word";type
"GRAPHIC";maintain-aspect-ratio TRUE;display "USEDEF";valid_file "F";width
1.8066in;height 1.4053in;depth 0in;original-width 15.5554in;original-height
11.6663in;cropleft "0";croptop "1";cropright "0.9685";cropbottom
"0";filename 'p4-smRAB-l2.png';file-properties "XNPEU";}} & \FRAME{itbpF}{%
1.8066in}{1.4053in}{0in}{}{}{p4-smrab-f.png}{\special{language "Scientific
Word";type "GRAPHIC";maintain-aspect-ratio TRUE;display "USEDEF";valid_file
"F";width 1.8066in;height 1.4053in;depth 0in;original-width
15.5554in;original-height 11.6663in;cropleft "0";croptop "1";cropright
"0.9685";cropbottom "0";filename 'p4-smRAB-f.png';file-properties "XNPEU";}}
\\ 
\FRAME{itbpF}{1.8066in}{1.4053in}{0in}{}{}{p4-smn200-l1.png}{\special%
{language "Scientific Word";type "GRAPHIC";maintain-aspect-ratio
TRUE;display "USEDEF";valid_file "F";width 1.8066in;height 1.4053in;depth
0in;original-width 15.5554in;original-height 11.6663in;cropleft "0";croptop
"1";cropright "0.9685";cropbottom "0";filename
'p4-smN200-l1.png';file-properties "XNPEU";}} & \FRAME{itbpF}{1.8066in}{%
1.4053in}{0in}{}{}{p4-smn200-l2.png}{\special{language "Scientific
Word";type "GRAPHIC";maintain-aspect-ratio TRUE;display "USEDEF";valid_file
"F";width 1.8066in;height 1.4053in;depth 0in;original-width
15.5554in;original-height 11.6663in;cropleft "0";croptop "1";cropright
"0.9685";cropbottom "0";filename 'p4-smN200-l2.png';file-properties "XNPEU";}%
} & \FRAME{itbpF}{1.8066in}{1.4053in}{0in}{}{}{p4-smn200-f.png}{\special%
{language "Scientific Word";type "GRAPHIC";maintain-aspect-ratio
TRUE;display "USEDEF";valid_file "F";width 1.8066in;height 1.4053in;depth
0in;original-width 15.5554in;original-height 11.6663in;cropleft "0";croptop
"1";cropright "0.9685";cropbottom "0";filename
'p4-smN200-f.png';file-properties "XNPEU";}}%
\end{tabular}%
\label{histo-Markov}%
\end{center}%
{\footnotesize Notes: Subfigures in the first to the fourth row correspond
to the smoothed algorithm with }${\footnotesize \rho =}${\footnotesize $%
\zeta $}${\footnotesize =}${\footnotesize $\xi $}${\footnotesize =0}$ 
{\footnotesize and }${\footnotesize (N,T)=(100,300)}${\footnotesize , the
unsmoothed algorithm with }${\footnotesize \rho =}${\footnotesize $\zeta $}$%
{\footnotesize =}${\footnotesize $\xi $}${\footnotesize =0}$ {\footnotesize %
and }${\footnotesize (N,T)=(100,300)}${\footnotesize , the smoothed
algorithm with }${\footnotesize \rho =}${\footnotesize $\zeta $}$%
{\footnotesize =}${\footnotesize $\xi $}${\footnotesize =0.5}$ 
{\footnotesize and }${\footnotesize (N,T)=(100,300)}${\footnotesize , and
the smoothed algorithm with }${\footnotesize \rho =}${\footnotesize $\zeta $}%
${\footnotesize =}${\footnotesize $\xi $}${\footnotesize =0}$ {\footnotesize %
and }${\footnotesize (N,T)=(200,300)}${\footnotesize , respectively.
Subfigures in the first to the third column correspond to the estimated
loadings for regime 1, the estimated loadings for regime 2 and the estimated
factors, respectively. The curve overlaid on the histograms is the standard
normal density function.}%
\end{figure}%

\begin{table}[tbp]%
\begin{center}%
\caption{Average $R^2$ of the Estimated Loading Space, Average $R^2$ of the
Estimated Factor Space, and Average Absolute Error of the Estimated
Transition Probabilities}%
\begin{tabular}{lllllll}
\hline\hline
& $R_{l1}^{2}$ & $R_{l2}^{2}$ & $R_{f}^{2}$ & $R_{Hf}^{2}$ & $\tilde{Q}_{11}$
& $\tilde{Q}_{22}$ \\ \cline{1-2}\cline{1-3}\cline{3-7}
\multicolumn{7}{c}{Smoothed with $(\rho ,\zeta ,\xi )=(0,0,0)$ and $%
(N,T)=(100,300)$} \\ \hline
Pattern 1 & 0.996 & 0.9762 & 0.7337 & 0.9889 & 0.0028 & 0.013 \\ 
Pattern 2 & 0.9931 & 0.9932 & 0.5155 & 0.9896 & N.A. & N.A. \\ 
Pattern 3 & 0.9949 & 0.9895 & 0.541 & 0.9894 & N.A. & N.A. \\ 
Pattern 4 & 0.9955 & 0.9854 & 0.6256 & 0.9892 & 0.0216 & 0.0378 \\ \hline
\multicolumn{7}{c}{Unsmoothed with $(\rho ,\zeta ,\xi )=(0,0,0)$ and $%
(N,T)=(100,300)$} \\ \hline
Pattern 1 & 0.9959 & 0.9678 & 0.6786 & 0.9782 & N.A. & N.A. \\ 
Pattern 2 & 0.9931 & 0.9932 & 0.4855 & 0.9885 & N.A. & N.A. \\ 
Pattern 3 & 0.9949 & 0.9892 & 0.5157 & 0.988 & N.A. & N.A. \\ 
Pattern 4 & 0.9955 & 0.9853 & 0.6127 & 0.9875 & N.A. & N.A. \\ \hline
\multicolumn{7}{c}{Smoothed with $(\rho ,\zeta ,\xi )=(0.5,0.5,0.5)$ and $%
(N,T)=(100,300)$} \\ \hline
Pattern 1 & 0.9933 & 0.9631 & 0.7255 & 0.9849 & 0.0053 & 0.0137 \\ 
Pattern 2 & 0.9928 & 0.9929 & 0.4797 & 0.9891 & N.A. & N.A. \\ 
Pattern 3 & 0.9915 & 0.9827 & 0.5458 & 0.9889 & N.A. & N.A. \\ 
Pattern 4 & 0.9927 & 0.9782 & 0.6285 & 0.9886 & 0.0239 & 0.0328 \\ \hline
\multicolumn{7}{c}{Smoothed with $(\rho ,\zeta ,\xi )=(0,0,0)$ and $%
(N,T)=(200,300)$} \\ \hline
Pattern 1 & 0.996 & 0.9756 & 0.7408 & 0.9936 & 0.0017 & 0.0151 \\ 
Pattern 2 & 0.9933 & 0.9933 & 0.5183 & 0.9949 & N.A. & N.A. \\ 
Pattern 3 & 0.995 & 0.9898 & 0.5611 & 0.9949 & N.A. & N.A. \\ 
Pattern 4 & 0.9956 & 0.9856 & 0.6227 & 0.9947 & 0.019 & 0.024 \\ \hline\hline
\end{tabular}%
\label{Table}%
\end{center}%
{\footnotesize Notes: The column under }$R_{l1}^{2}${\footnotesize \ shows
the average }$R^{2}${\footnotesize \ of the estimated loadings of regime 1
projecting on the true loadings of regime 1. The column under }$R_{l2}^{2}$%
{\footnotesize \ shows the average }$R^{2}${\footnotesize \ of the estimated
loadings of regime 2 projecting on the true loadings of regime 2. The column
under }$R_{f}^{2}${\footnotesize \ shows the average }$R^{2}${\footnotesize %
\ of the estimated factors projecting on the true factors. The column under }%
$R_{Hf}^{2}${\footnotesize \ shows the average }$R^{2}${\footnotesize \ of
the estimated factors projecting on the factors rotated by the regime
dependent rotation matrix }$H_{z_{t}}${\footnotesize . "N.A." means not
available.}%
\end{table}%

\begin{table}[tbp]%
\begin{center}%
\caption{Out of Sample Turning Points Detection}%
\begin{tabular}{llllll}
\hline\hline
& Recession & Expansion & Recession & Expansion & Recession \\ 
& 02/1980 & 08/1980 & 08/1981 & 12/1982 & 08/1990 \\ 
NBER BCDC & 4 & 11 & 5 & 7 & 9 \\ 
Chauvet Piger & 6 & 5 & 7 & 6 & 7 \\ 
This paper & 3 & 2 & 3 & 7 & N.A. \\ \hline
& Expansion & Recession & Expansion & Recession & Expansion \\ 
& 04/1991 & 04/2001 & 12/2001 & 01/2008 & 07/2009 \\ 
NBER BCDC & 21 & 8 & 20 & 11 & 15 \\ 
Chauvet Piger & 6 & 10 & 7 & 13 & 7 \\ 
This paper & 1 & 8 & 7 & 11 & 10 \\ \hline\hline
\end{tabular}%
\label{Table 2}%
\end{center}%
{\footnotesize Notes: mm/yyyy in the second and the seventh row indicate the
starting month of each recession and expansion. The row corresponds to "NBER
BCDC", "Chauvet Piger" and "This paper" shows the number of months it takes
the NBER BCDC, Chauvet and Piger (2008) and this paper to detect each
recession and expansion, respectively. "N.A." means not available.}%
\end{table}%

\pagebreak 
\setcounter{page}{1}%

\begin{center}
\appendix\textbf{APPENDIX}
\end{center}

\renewcommand{\theenumi}{\arabic{enumi}}

\section{Details for Theorem \protect\ref{consis}}

\begin{lem}
\label{E norm}Under Assumption \ref{error}(2) and \ref{error}(4), $%
\left\Vert E\right\Vert =O_{p}(N^{\frac{1}{4}}T^{\frac{1}{2}}+N^{\frac{1}{2}%
}T^{\frac{1}{4}})$.
\end{lem}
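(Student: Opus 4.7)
\textbf{Proof plan for Lemma \ref{E norm}.} The plan is to bound $\|E\|^2 = \rho_{\max}(E'E) = \|E'E\|$ via the Frobenius norm of $E'E$, which is a crude but sufficient bound. The key observation is that $\|E'E\| \leq \|E'E\|_F$, and the Frobenius norm decomposes into a sum of $N^2$ scalar terms, each of which I can control using Assumptions \ref{error}(2) and \ref{error}(4). I would then combine with Markov's inequality to conclude.

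First, I would write
\begin{equation*}
\|E'E\|_F^{2} = \sum_{i=1}^{N}\sum_{k=1}^{N}\Bigl(\sum_{t=1}^{T} e_{it}e_{kt}\Bigr)^{2}
\leq 2\sum_{i,k}\Bigl(\sum_{t}(e_{it}e_{kt}-\tau_{ik,t})\Bigr)^{2}
+ 2\sum_{i,k}\Bigl(\sum_{t}\tau_{ik,t}\Bigr)^{2},
\end{equation*}
using $\mathbb{E}(e_{it}e_{kt})=\tau_{ik,t}$. For the first (stochastic) piece, I would decompose $\sum_{t}(e_{it}e_{kt}-\tau_{ik,t}) = \sum_{j=1}^{J^{0}}\sum_{t}(e_{it}e_{kt}-\tau_{ik,t})\mathbf{1}_{z_{t}=j}$, apply the Cauchy–Schwarz inequality in $j$, and then use Assumption \ref{error}(4) to get $\mathbb{E}[(\sum_{t}(e_{it}e_{kt}-\tau_{ik,t}))^{2}]\leq CT$ for each pair $(i,k)$. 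Summing over $N^{2}$ pairs yields $O(N^{2}T)$ in expectation.

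For the second (deterministic) piece, I would use $|\tau_{ik,t}|\leq \tau_{ik}$ so that $(\sum_{t}\tau_{ik,t})^{2}\leq T^{2}\tau_{ik}^{2}$, and then control $\sum_{i,k}\tau_{ik}^{2}$ via Assumption \ref{error}(2). Specifically, $\max_{i,k}\tau_{ik}\leq M$ (since $\sum_{k}\tau_{ik}\leq M$), so $\sum_{i,k}\tau_{ik}^{2}\leq M\sum_{i,k}\tau_{ik}\leq NM^{2}$. Hence this piece is $O(NT^{2})$. Combining the two gives $\mathbb{E}(\|E'E\|_F^{2})=O(N^{2}T+NT^{2})$, whence $\|E'E\|\leq \|E'E\|_F = O_{p}((N^{2}T+NT^{2})^{1/2}) = O_{p}(N\sqrt{T}+\sqrt{N}\,T)$. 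Taking square roots gives $\|E\|=O_{p}(N^{1/2}T^{1/4}+N^{1/4}T^{1/2})$.

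The main obstacle is keeping the two contributions separate so that the bound does not degenerate to a single-term worst case. In particular, one must resist replacing $\|E'E\|$ by a bound such as $TM+\|E'E-\mathbb{E}(E'E)\|$ (which would lose the $\sqrt{N}\,T$ term in a less favorable way for some $N,T$ regimes) and instead bound $\|E'E\|_F$ directly so that both the diagonal-type and off-diagonal-type dependencies feed into the final rate. The cross-sectional decay encoded in Assumption \ref{error}(2) is essential: without the summability $\sum_{k}\tau_{ik}\leq M$ the deterministic piece would scale like $N^{2}T^{2}$ and the bound would be useless.
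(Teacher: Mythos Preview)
Your proposal is correct and follows essentially the same route as the paper: bound $\|E\|^4\le\|E'E\|_F^2$, split each $(i,k)$-term into the centered piece controlled by Assumption~\ref{error}(4) (giving $O(N^2T)$) and the mean piece controlled by Assumption~\ref{error}(2) (giving $O(NT^2)$), then take roots. Your treatment of the regime indicator in Assumption~\ref{error}(4) via the finite sum over $j$ and Cauchy--Schwarz is in fact slightly more careful than the paper's one-line appeal to that assumption, but the argument is otherwise identical.
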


\begin{proof}
We shall show $\mathbb{E}\left\Vert E\right\Vert ^{4}=O(NT^{2}+N^{2}T)$.
First note that 
\begin{equation*}
\left\Vert E\right\Vert ^{4}=\left\Vert E^{\prime }E\right\Vert ^{2}\leq
\left\Vert E^{\prime }E\right\Vert
_{F}^{2}=\sum\nolimits_{i=1}^{N}\sum\nolimits_{k=1}^{N}(\sum%
\nolimits_{t=1}^{T}e_{it}e_{kt})^{2}\text{.}
\end{equation*}%
$\mathbb{E}(\sum\nolimits_{t=1}^{T}e_{it}e_{kt})^{2}$ is not larger than the
sum of $2\mathbb{E}(\sum\nolimits_{t=1}^{T}e_{it}e_{kt}-\sum%
\nolimits_{t=1}^{T}\mathbb{E(}e_{it}e_{kt}))^{2}$ and $2(\sum%
\nolimits_{t=1}^{T}\mathbb{E(}e_{it}e_{kt}))^{2}$. The sum of the former
over $i$ and $k$ is not larger than $N^{2}TM$ since by Assumption \ref{error}%
(4), $\mathbb{E}(\left\Vert \frac{1}{\sqrt{T}}\sum%
\nolimits_{t=1}^{T}(e_{it}e_{kt}-\mathbb{E(}e_{it}e_{kt}))\right\Vert
^{2})\leq M$. The sum of the latter over $i$ and $k$ is not larger than $%
NT^{2}M$ under Assumption \ref{error}(2).
\end{proof}

\begin{description}
\item[Proof of Theorem \protect\ref{consis}] 
\end{description}

\begin{proof}
Step (1): Since $z_{t}$ follows a Markov process,%
\begin{equation*}
l(\Lambda ,\sigma ^{2},Q,\phi )=\log
[\sum\nolimits_{z_{T}=1}^{J^{0}}...\sum\nolimits_{z_{1}=1}^{J^{0}}\prod%
\nolimits_{t=1}^{T}L(x_{t}\left\vert z_{t};\Lambda ,\sigma ^{2}\right. )\Pr
(z_{1}\left\vert \phi \right. )\prod\nolimits_{t=2}^{T}\Pr (z_{t}\left\vert
z_{t-1};Q\right. )].
\end{equation*}%
For $(\tilde{\Lambda},\tilde{\sigma}^{2},Q,\phi )$, let $m_{t}=\arg
\max\nolimits_{j}\{(2\pi )^{-\frac{N}{2}}\left\vert \tilde{\Lambda}_{j}%
\tilde{\Lambda}_{j}^{\prime }+\tilde{\sigma}^{2}I_{N}\right\vert ^{-\frac{1}{%
2}}e^{-\frac{1}{2}x_{t}^{\prime }(\tilde{\Lambda}_{j}\tilde{\Lambda}%
_{j}^{\prime }+\tilde{\sigma}^{2}I_{N})^{-1}x_{t}}\}$, i.e, $%
L(x_{t}\left\vert z_{t}=j;\tilde{\Lambda},\tilde{\sigma}^{2}\right. )$ takes
maximum when $j=m_{t}$. Since $\sum\nolimits_{z_{t}=1}^{J^{0}}\Pr
(z_{t}\left\vert z_{t-1};Q\right. )=1$ for any $z_{t-1}$, $%
\sum\nolimits_{z_{t}=1}^{J^{0}}L(x_{t}\left\vert z_{t};\tilde{\Lambda},%
\tilde{\sigma}^{2}\right. )\Pr (z_{t}\left\vert z_{t-1};Q\right. )\leq
L(x_{t}\left\vert z_{t}=m_{t};\tilde{\Lambda},\tilde{\sigma}^{2}\right. )$,
thus%
\begin{eqnarray}
&&l(\tilde{\Lambda},\tilde{\sigma}^{2},Q,\phi )  \notag \\
&=&\log
\{\sum\nolimits_{z_{T-1}=1}^{J^{0}}...\sum\nolimits_{z_{1}=1}^{J^{0}}\prod%
\nolimits_{t=1}^{T-1}L(x_{t}\left\vert z_{t};\tilde{\Lambda},\tilde{\sigma}%
^{2}\right. )\Pr (z_{1}\left\vert \phi \right.
)\prod\nolimits_{t=2}^{T-1}\Pr (z_{t}\left\vert z_{t-1};Q\right. )  \notag \\
&&[\sum\nolimits_{z_{T}=1}^{J^{0}}L(x_{T}\left\vert z_{T};\tilde{\Lambda},%
\tilde{\sigma}^{2}\right. )\Pr (z_{T}\left\vert z_{T-1};Q\right. )]\}  \notag
\\
&\leq &\log
\{\sum\nolimits_{z_{T-1}=1}^{J^{0}}...\sum\nolimits_{z_{1}=1}^{J^{0}}\prod%
\nolimits_{t=1}^{T-1}L(x_{t}\left\vert z_{t};\tilde{\Lambda},\tilde{\sigma}%
^{2}\right. )\Pr (z_{1}\left\vert \phi \right.
)\prod\nolimits_{t=2}^{T-1}\Pr (z_{t}\left\vert z_{t-1};Q\right. )  \notag \\
&&L(x_{T}\left\vert z_{T}=m_{T};\tilde{\Lambda},\tilde{\sigma}^{2}\right. )\}
\notag \\
&\leq &...\leq \sum\nolimits_{t=1}^{T}\log L(x_{t}\left\vert z_{t}=m_{t};%
\tilde{\Lambda},\tilde{\sigma}^{2}\right. ).  \label{j}
\end{eqnarray}%
It follows that%
\begin{eqnarray}
l(\tilde{\Lambda},\tilde{\sigma}^{2},Q,\phi ) &\leq
&\sum\nolimits_{t=1}^{T}\log [(2\pi )^{-\frac{N}{2}}\left\vert \tilde{\Lambda%
}_{m_{t}}\tilde{\Lambda}_{m_{t}}^{\prime }+\tilde{\sigma}^{2}I_{N}\right%
\vert ^{-\frac{1}{2}}e^{-\frac{1}{2}x_{t}^{\prime }(\tilde{\Lambda}_{m_{t}}%
\tilde{\Lambda}_{m_{t}}^{\prime }+\tilde{\sigma}^{2}I_{N})^{-1}x_{t}}] 
\notag \\
&=&-\frac{NT}{2}\log 2\pi -\frac{1}{2}\sum\nolimits_{t=1}^{T}\log \left\vert 
\tilde{\Lambda}_{m_{t}}\tilde{\Lambda}_{m_{t}}^{\prime }+\tilde{\sigma}%
^{2}I_{N}\right\vert  \notag \\
&&-\frac{1}{2}\sum\nolimits_{t=1}^{T}x_{t}^{\prime }(\tilde{\Lambda}_{m_{t}}%
\tilde{\Lambda}_{m_{t}}^{\prime }+\tilde{\sigma}^{2}I_{N})^{-1}x_{t}.
\label{b}
\end{eqnarray}%
Consider the last term on the right hand side of equation (\ref{b}). By
Woodbury identity, $(\tilde{\Lambda}_{m_{t}}\tilde{\Lambda}_{m_{t}}^{\prime
}+\tilde{\sigma}^{2}I_{N})^{-1}=\tilde{\sigma}^{-2}I_{N}-\tilde{\sigma}^{-2}%
\tilde{\Lambda}_{m_{t}}(\tilde{\sigma}^{2}I_{r_{m_{t}}^{0}}+\tilde{\Lambda}%
_{m_{t}}^{\prime }\tilde{\Lambda}_{m_{t}})^{-1}\tilde{\Lambda}%
_{m_{t}}^{\prime }$. Thus%
\begin{eqnarray*}
&&\sum\nolimits_{t=1}^{T}x_{t}^{\prime }(\tilde{\Lambda}_{m_{t}}\tilde{%
\Lambda}_{m_{t}}^{\prime }+\tilde{\sigma}^{2}I_{N})^{-1}x_{t} \\
&=&\tilde{\sigma}^{-2}\sum\nolimits_{t=1}^{T}x_{t}^{\prime }x_{t}-\tilde{%
\sigma}^{-2}\sum\nolimits_{t=1}^{T}x_{t}^{\prime }\tilde{\Lambda}_{m_{t}}(%
\tilde{\sigma}^{2}I_{r_{m_{t}}^{0}}+\tilde{\Lambda}_{m_{t}}^{\prime }\tilde{%
\Lambda}_{m_{t}})^{-1}\tilde{\Lambda}_{m_{t}}^{\prime }x_{t}.
\end{eqnarray*}%
Since $(\tilde{\Lambda}_{m_{t}}^{\prime }\tilde{\Lambda}_{m_{t}})^{-1}-(%
\tilde{\sigma}^{2}I_{r_{m_{t}}^{0}}+\tilde{\Lambda}_{m_{t}}^{\prime }\tilde{%
\Lambda}_{m_{t}})^{-1}=(\tilde{\sigma}^{2}I_{r_{m_{t}}^{0}}+\tilde{\Lambda}%
_{m_{t}}^{\prime }\tilde{\Lambda}_{m_{t}})^{-1}\tilde{\sigma}^{2}(\tilde{%
\Lambda}_{m_{t}}^{\prime }\tilde{\Lambda}_{m_{t}})^{-1}$,%
\begin{eqnarray}
&&\sum\nolimits_{t=1}^{T}x_{t}^{\prime }(\tilde{\Lambda}_{m_{t}}\tilde{%
\Lambda}_{m_{t}}^{\prime }+\tilde{\sigma}^{2}I_{N})^{-1}x_{t}  \notag \\
&=&\tilde{\sigma}^{-2}\sum\nolimits_{t=1}^{T}\left\Vert M_{\tilde{\Lambda}%
_{m_{t}}}x_{t}\right\Vert ^{2}+\sum\nolimits_{t=1}^{T}x_{t}^{\prime }\tilde{%
\Lambda}_{m_{t}}(\tilde{\sigma}^{2}I_{r_{m_{t}}^{0}}+\tilde{\Lambda}%
_{m_{t}}^{\prime }\tilde{\Lambda}_{m_{t}})^{-1}(\tilde{\Lambda}%
_{m_{t}}^{\prime }\tilde{\Lambda}_{m_{t}})^{-1}\tilde{\Lambda}%
_{m_{t}}^{\prime }x_{t}.  \label{c}
\end{eqnarray}

Step (2): Now consider $l(\Lambda ^{0},\tilde{\sigma}^{2},Q,\phi )$. Since $%
\Pr (z_{t}\left\vert z_{t-1};Q\right. )\geq \min_{j,k}Q_{jk}$,%
\begin{equation*}
\sum\nolimits_{z_{t}=1}^{J^{0}}L(x_{t}\left\vert z_{t};\Lambda ^{0},\tilde{%
\sigma}^{2}\right. )\Pr (z_{t}\left\vert z_{t-1};Q\right. )\geq
L(x_{t}\left\vert z_{t};\Lambda ^{0},\tilde{\sigma}^{2}\right.
)\min_{j,k}Q_{jk}.
\end{equation*}%
Note that $z_{t}$ denotes the true state on the right hand side. The left
hand side has $J^{0}$ terms in the summation, and the inequality follows
from throwing away all the other $J^{0}-1$ terms. Thus similar to inequality
(\ref{j}), 
\begin{eqnarray}
l(\Lambda ^{0},\tilde{\sigma}^{2},Q,\phi ) &\geq
&\sum\nolimits_{t=1}^{T}\log L(x_{t}\left\vert z_{t};\Lambda ^{0},\tilde{%
\sigma}^{2}\right. )\min_{j,k}Q_{jk}  \notag \\
&=&T\log \min_{j,k}Q_{jk}-\frac{NT}{2}\log 2\pi -\frac{1}{2}%
\sum\nolimits_{t=1}^{T}\log \left\vert \Lambda _{z_{t}}^{0}\Lambda
_{z_{t}}^{0\prime }+\tilde{\sigma}^{2}I_{N}\right\vert  \notag \\
&&-\frac{1}{2}\sum\nolimits_{t=1}^{T}x_{t}^{\prime }(\Lambda
_{z_{t}}^{0}\Lambda _{z_{t}}^{0\prime }+\tilde{\sigma}^{2}I_{N})^{-1}x_{t},
\label{d}
\end{eqnarray}%
and similar to equation (\ref{c}), 
\begin{eqnarray}
&&\sum\nolimits_{t=1}^{T}x_{t}^{\prime }(\Lambda _{z_{t}}^{0}\Lambda
_{z_{t}}^{0\prime }+\tilde{\sigma}^{2}I_{N})^{-1}x_{t}  \notag \\
&=&\tilde{\sigma}^{-2}\sum\nolimits_{t=1}^{T}\left\Vert M_{\Lambda
_{z_{t}}^{0}}x_{t}\right\Vert ^{2}+\sum\nolimits_{t=1}^{T}x_{t}^{\prime
}\Lambda _{z_{t}}^{0}(\tilde{\sigma}^{2}I_{r_{z_{t}}^{0}}+\Lambda
_{z_{t}}^{0\prime }\Lambda _{z_{t}}^{0})^{-1}(\Lambda _{z_{t}}^{0\prime
}\Lambda _{z_{t}}^{0})^{-1}\Lambda _{z_{t}}^{0\prime }x_{t}.  \label{e}
\end{eqnarray}

Step (3): $l(\tilde{\Lambda},\tilde{\sigma}^{2},Q,\phi )-l(\Lambda ^{0},%
\tilde{\sigma}^{2},Q,\phi )\geq 0$. Thus by equations (\ref{b})-(\ref{e}),
we have 
\begin{eqnarray}
&&\frac{1}{2}[\tilde{\sigma}^{-2}\sum\nolimits_{t=1}^{T}\left\Vert M_{\tilde{%
\Lambda}_{m_{t}}}x_{t}\right\Vert ^{2}-\tilde{\sigma}^{-2}\sum%
\nolimits_{t=1}^{T}\left\Vert M_{\Lambda _{z_{t}}^{0}}x_{t}\right\Vert ^{2}]
\notag \\
&\leq &-T\log \min_{j,k}Q_{jk}-\frac{1}{2}\sum\nolimits_{t=1}^{T}\log \frac{%
\left\vert \tilde{\Lambda}_{m_{t}}\tilde{\Lambda}_{m_{t}}^{\prime }+\tilde{%
\sigma}^{2}I_{N}\right\vert }{\left\vert \Lambda _{z_{t}}^{0}\Lambda
_{z_{t}}^{0\prime }+\tilde{\sigma}^{2}I_{N}\right\vert }  \notag \\
&&-\frac{1}{2}\sum\nolimits_{t=1}^{T}x_{t}^{\prime }\tilde{\Lambda}_{m_{t}}(%
\tilde{\sigma}^{2}I_{r_{m_{t}}^{0}}+\tilde{\Lambda}_{m_{t}}^{\prime }\tilde{%
\Lambda}_{m_{t}})^{-1}(\tilde{\Lambda}_{m_{t}}^{\prime }\tilde{\Lambda}%
_{m_{t}})^{-1}\tilde{\Lambda}_{m_{t}}^{\prime }x_{t}  \notag \\
&&+\frac{1}{2}\sum\nolimits_{t=1}^{T}x_{t}^{\prime }\Lambda _{z_{t}}^{0}(%
\tilde{\sigma}^{2}I_{r_{z_{t}}^{0}}+\Lambda _{z_{t}}^{0\prime }\Lambda
_{z_{t}}^{0})^{-1}(\Lambda _{z_{t}}^{0\prime }\Lambda
_{z_{t}}^{0})^{-1}\Lambda _{z_{t}}^{0\prime }x_{t}.  \label{f}
\end{eqnarray}

(3.1) The first term on the right hand side is $O(T)$ since $%
\min_{j,k}Q_{jk}>0$.

(3.2) The second term on the right hand side equals $-\frac{1}{2}%
\sum\nolimits_{t=1}^{T}\log \left\vert \frac{1}{\tilde{\sigma}^{2}}\tilde{%
\Lambda}_{m_{t}}^{\prime }\tilde{\Lambda}_{m_{t}}+I_{r_{m_{t}}^{0}}\right%
\vert +\frac{1}{2}\sum\nolimits_{t=1}^{T}\log \left\vert \frac{1}{\tilde{%
\sigma}^{2}}\Lambda _{z_{t}}^{0\prime }\Lambda
_{z_{t}}^{0}+I_{r_{z_{t}}^{0}}\right\vert $ since%
\begin{eqnarray}
\left\vert \Lambda _{z_{t}}^{0}\Lambda _{z_{t}}^{0\prime }+\tilde{\sigma}%
^{2}I_{N}\right\vert &=&\tilde{\sigma}^{2N}\left\vert \frac{1}{\tilde{\sigma}%
^{2}}\Lambda _{z_{t}}^{0}\Lambda _{z_{t}}^{0\prime }+I_{N}\right\vert =%
\tilde{\sigma}^{2N}\left\vert \frac{1}{\tilde{\sigma}^{2}}\Lambda
_{z_{t}}^{0\prime }\Lambda _{z_{t}}^{0}+I_{r_{z_{t}}^{0}}\right\vert ,
\label{s} \\
\text{and }\left\vert \tilde{\Lambda}_{m_{t}}\tilde{\Lambda}_{m_{t}}^{\prime
}+\tilde{\sigma}^{2}I_{N}\right\vert &=&\tilde{\sigma}^{2N}\left\vert \frac{1%
}{\tilde{\sigma}^{2}}\tilde{\Lambda}_{m_{t}}^{\prime }\tilde{\Lambda}%
_{m_{t}}+I_{r_{m_{t}}^{0}}\right\vert .  \label{t}
\end{eqnarray}%
$-\frac{1}{2}\sum\nolimits_{t=1}^{T}\log \left\vert \frac{1}{\tilde{\sigma}%
^{2}}\tilde{\Lambda}_{m_{t}}^{\prime }\tilde{\Lambda}%
_{m_{t}}+I_{r_{m_{t}}^{0}}\right\vert $ is negative, thus inequality (\ref{f}%
) still holds when this term is thrown away. By Assumption \ref{loadings}%
(1), $\left\vert \frac{1}{\tilde{\sigma}^{2}}\Lambda _{z_{t}}^{0\prime
}\Lambda _{z_{t}}^{0}+I_{r_{z_{t}}^{0}}\right\vert $ $\leq c(\frac{N}{\tilde{%
\sigma}^{2}})^{r_{z_{t}}^{0}}$ for some $c>0$, thus $\frac{1}{2}%
\sum\nolimits_{t=1}^{T}\log \left\vert \frac{1}{\tilde{\sigma}^{2}}\Lambda
_{z_{t}}^{0\prime }\Lambda _{z_{t}}^{0}+I_{r_{z_{t}}^{0}}\right\vert $ is $%
O_{p}(T\log N)$.

(3.3) The third term on the right hand side is negative, thus inequality (%
\ref{f}) still holds when this term is thrown away.

(3.4) The fourth term is bounded by $\frac{1}{2}\sum\nolimits_{t=1}^{T}\left%
\Vert x_{t}\right\Vert ^{2}\left\Vert (\tilde{\sigma}^{2}I_{r_{z_{t}}^{0}}+%
\Lambda _{z_{t}}^{0\prime }\Lambda _{z_{t}}^{0})^{-1}\right\Vert $ since $%
\left\Vert (\Lambda _{z_{t}}^{0\prime }\Lambda _{z_{t}}^{0})^{-\frac{1}{2}%
}\Lambda _{z_{t}}^{0\prime }x_{t}\right\Vert =\left\Vert P_{\Lambda
_{z_{t}}^{0}}x_{t}\right\Vert \leq \left\Vert x_{t}\right\Vert $ and $%
(\Lambda _{z_{t}}^{0\prime }\Lambda _{z_{t}}^{0})^{\frac{1}{2}}(\tilde{\sigma%
}^{2}I_{r_{z_{t}}^{0}}+\Lambda _{z_{t}}^{0\prime }\Lambda
_{z_{t}}^{0})^{-1}(\Lambda _{z_{t}}^{0\prime }\Lambda _{z_{t}}^{0})^{-\frac{1%
}{2}}=(\tilde{\sigma}^{2}I_{r_{z_{t}}^{0}}+\Lambda _{z_{t}}^{0\prime
}\Lambda _{z_{t}}^{0})^{-1}$. The latter is because $\tilde{\sigma}%
^{2}I_{r_{z_{t}}^{0}}+\Lambda _{z_{t}}^{0\prime }\Lambda _{z_{t}}^{0}$ and $%
\Lambda _{z_{t}}^{0\prime }\Lambda _{z_{t}}^{0}$ have the same eigenvectors.
By Assumption \ref{loadings}(1), $\left\Vert (\tilde{\sigma}%
^{2}I_{r_{z_{t}}^{0}}+\Lambda _{z_{t}}^{0\prime }\Lambda
_{z_{t}}^{0})^{-1}\right\Vert \leq \sup_{j}\left\Vert (\Lambda _{j}^{0\prime
}\Lambda _{j}^{0})^{-1}\right\Vert =O_{p}(\frac{1}{N})$. By Assumptions \ref%
{factors}(2), \ref{loadings}(1) and \ref{error}(1), $\sum\nolimits_{t=1}^{T}%
\left\Vert x_{t}\right\Vert ^{2}=O_{p}(NT)$. Thus the fourth term is $%
O_{p}(T)$.

(3.5) Now consider the left hand side of expression (\ref{f}). Since $%
x_{t}=\Lambda _{z_{t}}^{0}f_{t}^{0}+e_{t}$ and $M_{\Lambda
_{z_{t}}^{0}}\Lambda _{z_{t}}^{0}f_{t}^{0}=0$, it is easy to verify that the
left hand side equals%
\begin{equation}
\frac{1}{2}\tilde{\sigma}^{-2}[\sum\nolimits_{t=1}^{T}\left\Vert M_{\tilde{%
\Lambda}_{m_{t}}}\Lambda _{z_{t}}^{0}f_{t}^{0}\right\Vert
^{2}+2\sum\nolimits_{t=1}^{T}e_{t}^{\prime }M_{\tilde{\Lambda}%
_{m_{t}}}\Lambda _{z_{t}}^{0}f_{t}^{0}+\sum\nolimits_{t=1}^{T}\left\Vert
P_{\Lambda _{z_{t}}^{0}}e_{t}\right\Vert
^{2}-\sum\nolimits_{t=1}^{T}\left\Vert P_{\tilde{\Lambda}_{m_{t}}}e_{t}%
\right\Vert ^{2}].  \label{h}
\end{equation}%
For the fourth term of expression (\ref{h}), we have%
\begin{eqnarray}
\sum\nolimits_{t=1}^{T}\left\Vert P_{\tilde{\Lambda}_{m_{t}}}e_{t}\right%
\Vert ^{2} &\leq
&\sum\nolimits_{j=1}^{J^{0}}\sum\nolimits_{t=1}^{T}\left\Vert P_{\tilde{%
\Lambda}_{j}}e_{t}\right\Vert
^{2}=\sum\nolimits_{j=1}^{J^{0}}\sum\nolimits_{t=1}^{T}e_{t}^{\prime }\tilde{%
\Lambda}_{j}(\tilde{\Lambda}_{j}^{\prime }\tilde{\Lambda}_{j})^{-1}\tilde{%
\Lambda}_{j}^{\prime }e_{t}  \notag \\
&=&\sum\nolimits_{j=1}^{J^{0}}tr[(\tilde{\Lambda}_{j}^{\prime }\tilde{\Lambda%
}_{j})^{-\frac{1}{2}}\tilde{\Lambda}_{j}^{\prime
}(\sum\nolimits_{t=1}^{T}e_{t}e_{t}^{\prime })\tilde{\Lambda}_{j}(\tilde{%
\Lambda}_{j}^{\prime }\tilde{\Lambda}_{j})^{-\frac{1}{2}}]  \notag \\
&\leq &\sum\nolimits_{j=1}^{J^{0}}r_{j}^{0}\rho _{\max
}(\sum\nolimits_{t=1}^{T}e_{t}e_{t}^{\prime
})=\sum\nolimits_{j=1}^{J^{0}}r_{j}^{0}\left\Vert E^{\prime }E\right\Vert 
\notag \\
&=&O_{p}(N^{\frac{1}{2}}T+NT^{\frac{1}{2}}).  \label{g}
\end{eqnarray}%
The last equality follows from Lemma \ref{E norm}. Similarly, the third term
of expression (\ref{h}) is $O_{p}(N^{\frac{1}{2}}T+NT^{\frac{1}{2}})$. The
second term of expression (\ref{h}) equals $2\sum\nolimits_{t=1}^{T}e_{t}^{%
\prime }\Lambda _{z_{t}}^{0}f_{t}^{0}-2\sum\nolimits_{t=1}^{T}e_{t}^{\prime
}P_{\tilde{\Lambda}_{m_{t}}}\Lambda _{z_{t}}^{0}f_{t}^{0}$. Since $\mathbb{E}%
(\left\Vert \frac{1}{\sqrt{N}}\sum\nolimits_{i=1}^{N}\lambda
_{ji}^{0}e_{it}\right\Vert ^{2})\leq M$ for all $j$ and $t$ by Assumptions %
\ref{loadings}(1), \ref{error}(1) and \ref{error}(2), and $%
\sum\nolimits_{t=1}^{T}\left\Vert f_{t}^{0}\right\Vert ^{2}=O_{p}(T)$ by
Assumption \ref{factors}(2), we have%
\begin{equation*}
\left\Vert \sum\nolimits_{t=1}^{T}e_{t}^{\prime }\Lambda
_{z_{t}}^{0}f_{t}^{0}\right\Vert \leq (\sum\nolimits_{t=1}^{T}\left\Vert
e_{t}^{\prime }\Lambda _{z_{t}}^{0}\right\Vert ^{2})^{\frac{1}{2}%
}(\sum\nolimits_{t=1}^{T}\left\Vert f_{t}^{0}\right\Vert ^{2})^{\frac{1}{2}%
}=O_{p}(N^{\frac{1}{2}}T).
\end{equation*}%
By expression (\ref{g}), Assumption \ref{factors}(2) and Assumption \ref%
{loadings}(1), we have%
\begin{equation*}
\left\Vert \sum\nolimits_{t=1}^{T}e_{t}^{\prime }P_{\tilde{\Lambda}%
_{m_{t}}}\Lambda _{z_{t}}^{0}f_{t}^{0}\right\Vert \leq
(\sum\nolimits_{t=1}^{T}\left\Vert P_{\tilde{\Lambda}_{m_{t}}}e_{t}\right%
\Vert ^{2}\sum\nolimits_{t=1}^{T}\left\Vert f_{t}^{0}\right\Vert ^{2})^{%
\frac{1}{2}}\sup_{j}\left\Vert \Lambda _{j}^{0}\right\Vert =O_{p}(N^{\frac{3%
}{4}}T+NT^{\frac{3}{4}}).
\end{equation*}%
Thus the second term of expression (\ref{h}) is $O_{p}(N^{\frac{3}{4}}T+NT^{%
\frac{3}{4}})$.

(3.6) Move the second to the fourth term of expression (\ref{h}) to the
right hand side of equation (\ref{f}), and take the results (3.1)-(3.5)
together, we have%
\begin{eqnarray*}
0 &\leq &\frac{1}{2}\tilde{\sigma}^{-2}\sum\nolimits_{t=1}^{T}\left\Vert M_{%
\tilde{\Lambda}_{m_{t}}}\Lambda _{z_{t}}^{0}f_{t}^{0}\right\Vert ^{2}\leq
O(T)+O_{p}(T\log N) \\
&&+O(T)+O_{p}(N^{\frac{1}{2}}T+NT^{\frac{1}{2}})+O_{p}(N^{\frac{3}{4}}T+NT^{%
\frac{3}{4}}).
\end{eqnarray*}%
Thus $\sum\nolimits_{t=1}^{T}\left\Vert M_{\tilde{\Lambda}_{m_{t}}}\Lambda
_{z_{t}}^{0}f_{t}^{0}\right\Vert ^{2}$ is $O_{p}(N^{\frac{3}{4}}T+NT^{\frac{3%
}{4}})$. In the summation, there are $q_{1}^{0}T$ terms\footnote{%
Rigorously speaking, there are $\sum\nolimits_{t=1}^{T}1_{z_{t}=1}$ terms,
but $\frac{1}{T}\sum\nolimits_{t=1}^{T}1_{z_{t}=1}\overset{p}{\rightarrow }%
q_{1}^{0}$ as $T\rightarrow \infty $.} with $\Lambda _{z_{t}}^{0}=\Lambda
_{1}^{0}$, since $q_{1}^{0}$ is the unconditional probability of $z_{t}=1$.
For each $t$ with $z_{t}=1$, $\Lambda _{1}^{0}f_{t}^{0}$ are projected on
one of $\tilde{\Lambda}_{j},j=1,...,J^{0}$, thus there exists one certain $%
\tilde{\Lambda}_{j}$ such that $\Lambda _{1}^{0}f_{t}^{0}$ is projected on $%
\tilde{\Lambda}_{j}$ at least $\frac{q_{1}^{0}T}{J^{0}}$ times. Define this $%
\tilde{\Lambda}_{j}$ as $\tilde{\Lambda}_{1}$, then $\sum%
\nolimits_{t=1}^{T}1_{m_{t}=1}1_{z_{t}=1}\geq \frac{q_{1}^{0}T}{J^{0}}$.
Thus by Assumption \ref{factors}(1),%
\begin{equation*}
\rho _{\min }(\frac{1}{\sum\nolimits_{t=1}^{T}1_{m_{t}=1}1_{z_{t}=1}}%
\sum\nolimits_{t=1}^{T}f_{t}^{0}f_{t}^{0\prime }1_{m_{t}=1}1_{z_{t}=1})\geq c
\end{equation*}
for some $c>0$ w.p.a.1. Since $\left\Vert M_{\tilde{\Lambda}_{m_{t}}}\Lambda
_{z_{t}}^{0}f_{t}^{0}\right\Vert $ is positive for any $z_{t}$ and $m_{t}$,
we have%
\begin{eqnarray*}
O_{p}(N^{\frac{3}{4}}T+NT^{\frac{3}{4}})
&=&\sum\nolimits_{t=1}^{T}\left\Vert M_{\tilde{\Lambda}_{m_{t}}}\Lambda
_{z_{t}}^{0}f_{t}^{0}\right\Vert ^{2}\geq \sum\nolimits_{t=1}^{T}\left\Vert
M_{\tilde{\Lambda}_{1}}\Lambda _{1}^{0}f_{t}^{0}\right\Vert
^{2}1_{m_{t}=1}1_{z_{t}=1} \\
&=&tr(\Lambda _{1}^{0\prime }M_{\tilde{\Lambda}_{1}}\Lambda
_{1}^{0}\sum\nolimits_{t=1}^{T}f_{t}^{0}f_{t}^{0\prime
}1_{m_{t}=1}1_{z_{t}=1}) \\
&\geq &tr(\Lambda _{1}^{0\prime }M_{\tilde{\Lambda}_{1}}\Lambda
_{1}^{0})\rho _{\min }(\sum\nolimits_{t=1}^{T}f_{t}^{0}f_{t}^{0\prime
}1_{m_{t}=1}1_{z_{t}=1}) \\
&\geq &tr(\Lambda _{1}^{0\prime }M_{\tilde{\Lambda}_{1}}\Lambda _{1}^{0})%
\frac{Tq_{1}^{0}}{J^{0}}c\text{ w.p.a.1, }
\end{eqnarray*}%
thus $\frac{1}{N}\left\Vert M_{\tilde{\Lambda}_{1}}\Lambda
_{1}^{0}\right\Vert _{F}^{2}=\frac{1}{N}tr(\Lambda _{1}^{0\prime }M_{\tilde{%
\Lambda}_{1}}\Lambda _{1}^{0})=O_{p}(\frac{1}{\sqrt{\delta _{NT}}})$.
Similarly, for $j=2,...,J^{0}$, we also have $\frac{1}{N}\left\Vert M_{%
\tilde{\Lambda}_{j}}\Lambda _{j}^{0}\right\Vert _{F}^{2}=O_{p}(\frac{1}{%
\sqrt{\delta _{NT}}})$.
\end{proof}

\section{Details for Theorem \protect\ref{ptjT}}

\begin{lem}
\label{lamhat norm}Under the assumptions of Theorem \ref{consis}, $\frac{1}{%
\tilde{\sigma}^{2}+\tilde{\Lambda}_{jl}^{\prime }\tilde{\Lambda}_{jl}}=O_{p}(%
\frac{1}{\sqrt{\delta _{NT}}})$ for each $j=1,...,J^{0}$ and each $%
l=1,...,r_{j}^{0}$, where $\tilde{\Lambda}_{jl}$ denotes the $l$-th column
of $\tilde{\Lambda}_{j}$.
\end{lem}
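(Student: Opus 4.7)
\emph{Rayleigh quotient reformulation.} By the first-order condition (\ref{foc1}), and since $\hat\Lambda_j^{\prime}\hat\Lambda_j$ is diagonal, each nonzero column $\hat\Lambda_{jl}$ is an eigenvector of $S_j$ with eigenvalue $\mu_l:=\hat\sigma^2+\hat\Lambda_{jl}^{\prime}\hat\Lambda_{jl}$. Theorem \ref{consis} forces $\hat\Lambda_j$ to have full column rank w.p.a.1: otherwise $\mathrm{rank}(\hat\Lambda_j)\leq r_j^0-1$ would give $\|M_{\hat\Lambda_j}\Lambda_j^0\|_F^2\geq \rho_{\min}(\Lambda_j^{0\prime}\Lambda_j^0)\asymp N$, contradicting $\|M_{\hat\Lambda_j}\Lambda_j^0\|_F^2=O_p(N/\sqrt{\delta_{NT}})$. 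Hence the normalised eigenvectors $u_{jl}=\hat\Lambda_{jl}/\|\hat\Lambda_{jl}\|$ form an orthonormal basis of $\mathrm{span}(\hat\Lambda_j)$, and
\[
\min_l \mu_l \;=\; \min_{v\in\mathrm{span}(\hat\Lambda_j),\,\|v\|=1} v^{\prime}S_j v.
\]
The plan is to bound this minimum below by $\Omega_p(N)$, which is far stronger than the required $\Omega_p(\sqrt{\delta_{NT}})$.

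\emph{Principal angles between the two spans.} I transfer Theorem \ref{consis}'s bound to a principal-angle estimate between $\mathrm{span}(\hat\Lambda_j)$ and $\mathrm{span}(\Lambda_j^0)$. Writing $U_{\Lambda_j^0}=\Lambda_j^0(\Lambda_j^{0\prime}\Lambda_j^0)^{-1/2}$ for the orthonormalised loadings,
\[
\|M_{\hat\Lambda_j}U_{\Lambda_j^0}\|_F^2 \;\leq\; \frac{\|M_{\hat\Lambda_j}\Lambda_j^0\|_F^2}{\rho_{\min}(\Lambda_j^{0\prime}\Lambda_j^0)} \;=\; O_p\!\left(\frac{1}{\sqrt{\delta_{NT}}}\right).
\]
Because both subspaces have dimension $r_j^0$, this equals $\sum_i\sin^2\theta_i$ for the principal angles $\theta_i$, so $\cos^2\theta_{\max}\geq 1-O_p(1/\sqrt{\delta_{NT}})\geq 1/2$ w.p.a.1, which implies $\|P_{\Lambda_j^0}v\|^2\geq 1/2$ uniformly for unit $v\in\mathrm{span}(\hat\Lambda_j)$.

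\emph{Lower bound on the Rayleigh quotient.} With $q_k=1/J^0$, the definition (\ref{m}) of $p_{tj}$ gives $p_{tj}\geq 1/J^0$ on $\{t:m_t=j\}$, while $\sum_t p_{tj}\leq T$. Setting $A_j=\{t:m_t=j,\,z_t=j\}$, step~(3.6) in the proof of Theorem \ref{consis} yields $|A_j|\geq Tq_j^0/J^0$ w.p.a.1 (after relabeling), so
\[
v^{\prime}S_j v \;\geq\; \frac{1}{J^0\, T}\sum_{t\in A_j}(v^{\prime}x_t)^2.
\]
On $A_j$, $x_t=\Lambda_j^0 f_t^0+e_t$; using $(a+b)^2\geq a^2/2 - b^2$,
\[
\sum_{t\in A_j}(v^{\prime}x_t)^2 \;\geq\; \tfrac{1}{2}\,v^{\prime}\Lambda_j^0\Big(\!\sum_{t\in A_j} f_t^0 f_t^{0\prime}\!\Big)\Lambda_j^{0\prime}v \;-\; \|E\|^2.
\]
By Assumption \ref{factors}(1), $\rho_{\min}(\sum_{t\in A_j}f_t^0 f_t^{0\prime})\gtrsim |A_j|\gtrsim T$, and by the previous paragraph with Assumption \ref{loadings}(1), $v^{\prime}\Lambda_j^0\Lambda_j^{0\prime}v\geq\rho_{\min}(\Lambda_j^{0\prime}\Lambda_j^0)\|P_{\Lambda_j^0}v\|^2\gtrsim N$. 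Hence the main term is $\Omega_p(NT)$, while Lemma \ref{E norm} gives $\|E\|^2=O_p(NT/\delta_{NT})=o_p(NT)$. Combining, $v^{\prime}S_j v\gtrsim_p N$ uniformly in unit $v\in\mathrm{span}(\hat\Lambda_j)$, which yields $\min_l\mu_l=\Omega_p(N)$ and completes the proof.

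\emph{Main obstacle.} The key technical step is converting Theorem \ref{consis}'s Frobenius control of $\|M_{\hat\Lambda_j}\Lambda_j^0\|_F^2$ into a uniform lower bound on $\|P_{\Lambda_j^0}v\|^2$ over all unit $v$ in $\mathrm{span}(\hat\Lambda_j)$. This requires first excluding rank deficiency of $\hat\Lambda_j$ (so that both subspaces have the same dimension and the principal-angle identity applies) and then passing from the non-orthonormal basis $\Lambda_j^0$ to $U_{\Lambda_j^0}$ in order to cancel the factor of $N$ and extract the dimensionless rate $1/\sqrt{\delta_{NT}}$.
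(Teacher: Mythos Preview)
Your argument is correct and takes a genuinely different, more direct route than the paper's proof. The paper does not use the Rayleigh-quotient/principal-angle strategy at all. Instead it goes back into the likelihood inequality~(\ref{f}) from the proof of Theorem~\ref{consis}, this time \emph{retaining} the term $-\tfrac12\sum_t x_t'\hat\Lambda_{m_t}(\hat\sigma^2 I+\hat\Lambda_{m_t}'\hat\Lambda_{m_t})^{-1}(\hat\Lambda_{m_t}'\hat\Lambda_{m_t})^{-1}\hat\Lambda_{m_t}'x_t$ that was discarded there. After a Cauchy--Schwarz step to peel off the error contribution, and restricting to $\{m_t=j,\,z_t=j\}$ via Assumption~\ref{factors}(1), the paper obtains
\[
\mathrm{tr}\bigl(\Lambda_j^{0\prime}\hat\Lambda_j(\hat\sigma^2 I+\hat\Lambda_j'\hat\Lambda_j)^{-1}(\hat\Lambda_j'\hat\Lambda_j)^{-1}\hat\Lambda_j'\Lambda_j^0\bigr)=O_p\!\left(\tfrac{N}{\sqrt{\delta_{NT}}}\right),
\]
decomposes this trace column-by-column as $\sum_l \mu_l^{-1}\|P_{\hat\Lambda_{jl}}\Lambda_j^0\|_F^2$, and finally shows each $\tfrac1N\|P_{\hat\Lambda_{jl}}\Lambda_j^0\|_F^2$ is bounded away from zero via an eigenvalue argument, which yields exactly $\mu_l^{-1}=O_p(1/\sqrt{\delta_{NT}})$.

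Your approach buys a sharper conclusion: you actually get $\min_l\mu_l=\Omega_p(N)$, i.e.\ $\mu_l^{-1}=O_p(1/N)$, which is the rate the paper only reaches later in Proposition~\ref{VjHj} under additional assumptions. The cost is the need to (i) rule out rank deficiency of $\hat\Lambda_j$ so the principal-angle symmetry applies, and (ii) argue $\hat p_{tj}\geq q_j$ on $\{m_t=j\}$ to lower-bound $S_j$; both are handled correctly (the latter holds for any fixed $q$ with $q_j>0$, not only $q_j=1/J^0$). The paper's approach, by contrast, never needs the principal-angle machinery or an explicit lower bound on $\hat p_{tj}$, because it works entirely inside the likelihood inequality already established; its weaker rate is simply the price of that more indirect bookkeeping.
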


\begin{proof}
(1) Consider expression (\ref{f}). In step (3.1), (3.2) and (3.4) of proof
of Theorem \ref{consis}, we have shown that the first, the second, and the
fourth term on the right hand side of expression (\ref{f}) is $O_{p}(T)$, $%
O_{p}(T\log N)$ and $O_{p}(T)$ respectively. In step (3.5) we have shown
that the left hand side of expression (\ref{f}) equals expression (\ref{h}),
and the last three terms of expression (\ref{h}) together is $O_{p}(N^{\frac{%
3}{4}}T+NT^{\frac{3}{4}})$. Move the last three terms of expression (\ref{h}%
) to the right hand side of expression (\ref{f}), and move the third term on
the right hand side of expression (\ref{f}) to the left hand side, then we
have 
\begin{eqnarray}
&&\frac{1}{2}\frac{1}{\tilde{\sigma}^{2}}[\sum\nolimits_{t=1}^{T}\left\Vert
M_{\tilde{\Lambda}_{m_{t}}}\Lambda _{z_{t}}^{0}f_{t}^{0}\right\Vert ^{2}+%
\frac{1}{2}\sum\nolimits_{t=1}^{T}x_{t}^{\prime }\tilde{\Lambda}_{m_{t}}(%
\tilde{\sigma}^{2}I_{r_{m_{t}}^{0}}+\tilde{\Lambda}_{m_{t}}^{\prime }\tilde{%
\Lambda}_{m_{t}})^{-1}(\tilde{\Lambda}_{m_{t}}^{\prime }\tilde{\Lambda}%
_{m_{t}})^{-1}\tilde{\Lambda}_{m_{t}}^{\prime }x_{t}  \notag \\
&=&O_{p}(N^{\frac{3}{4}}T+NT^{\frac{3}{4}}).  \label{k}
\end{eqnarray}%
The two terms on the left hand side of (\ref{k}) are nonnegative, thus $%
\sum\nolimits_{t=1}^{T}x_{t}^{\prime }\tilde{\Lambda}_{m_{t}}(\tilde{\sigma}%
^{2}I_{r_{m_{t}}^{0}}+\tilde{\Lambda}_{m_{t}}^{\prime }\tilde{\Lambda}%
_{m_{t}})^{-1}(\tilde{\Lambda}_{m_{t}}^{\prime }\tilde{\Lambda}_{m_{t}})^{-1}%
\tilde{\Lambda}_{m_{t}}^{\prime }x_{t}=O_{p}(N^{\frac{3}{4}}T+NT^{\frac{3}{4}%
})$.

(2)%
\begin{eqnarray*}
&&\left\Vert \sum\nolimits_{t=1}^{T}e_{t}^{\prime }\tilde{\Lambda}_{m_{t}}(%
\tilde{\sigma}^{2}I_{r_{m_{t}}^{0}}+\tilde{\Lambda}_{m_{t}}^{\prime }\tilde{%
\Lambda}_{m_{t}})^{-1}(\tilde{\Lambda}_{m_{t}}^{\prime }\tilde{\Lambda}%
_{m_{t}})^{-1}\tilde{\Lambda}_{m_{t}}^{\prime }\Lambda
_{z_{t}}^{0}f_{t}^{0}\right\Vert \\
&\leq &(\sum\nolimits_{t=1}^{T}e_{t}^{\prime }\tilde{\Lambda}_{m_{t}}(\tilde{%
\sigma}^{2}I_{r_{m_{t}}^{0}}+\tilde{\Lambda}_{m_{t}}^{\prime }\tilde{\Lambda}%
_{m_{t}})^{-2}(\tilde{\Lambda}_{m_{t}}^{\prime }\tilde{\Lambda}_{m_{t}})^{-1}%
\tilde{\Lambda}_{m_{t}}^{\prime }e_{t})^{\frac{1}{2}}(\sum%
\nolimits_{t=1}^{T}\left\Vert \Lambda _{z_{t}}^{0}f_{t}^{0}\right\Vert
^{2})^{\frac{1}{2}} \\
&\leq &\frac{1}{\tilde{\sigma}^{4}}(\sum\nolimits_{t=1}^{T}e_{t}^{\prime }P_{%
\tilde{\Lambda}_{m_{t}}}e_{t})^{\frac{1}{2}}(\sum\nolimits_{t=1}^{T}\left%
\Vert f_{t}^{0}\right\Vert ^{2})^{\frac{1}{2}}\sup_{j}\left\Vert \Lambda
_{j}^{0}\right\Vert =O_{p}(N^{\frac{3}{4}}T+NT^{\frac{3}{4}}),
\end{eqnarray*}%
where the first inequality follows from Cauchy-Schwarz inequality, the
second inequality follows from the fact that $\tilde{\Lambda}%
_{m_{t}}^{\prime }\tilde{\Lambda}_{m_{t}}$ is diagonal and all diagonal
elements of $\tilde{\sigma}^{2}I_{r_{m_{t}}^{0}}+\tilde{\Lambda}%
_{m_{t}}^{\prime }\tilde{\Lambda}_{m_{t}}$ are larger than $\tilde{\sigma}%
^{2}$, and the equality follows from Assumption \ref{factors}(2), Assumption %
\ref{loadings}(1) and expression (\ref{g}) in step (3.5) of proof of Theorem %
\ref{consis}.

(3) It follows from (1) and (2) that%
\begin{eqnarray}
&&\sum\nolimits_{t=1}^{T}f_{t}^{0\prime }\Lambda _{z_{t}}^{0\prime }\tilde{%
\Lambda}_{m_{t}}(\tilde{\sigma}^{2}I_{r_{m_{t}}^{0}}+\tilde{\Lambda}%
_{m_{t}}^{\prime }\tilde{\Lambda}_{m_{t}})^{-1}(\tilde{\Lambda}%
_{m_{t}}^{\prime }\tilde{\Lambda}_{m_{t}})^{-1}\tilde{\Lambda}%
_{m_{t}}^{\prime }\Lambda _{z_{t}}^{0}f_{t}^{0}  \notag \\
&&+\sum\nolimits_{t=1}^{T}e_{t}^{\prime }\tilde{\Lambda}_{m_{t}}(\tilde{%
\sigma}^{2}I_{r_{m_{t}}^{0}}+\tilde{\Lambda}_{m_{t}}^{\prime }\tilde{\Lambda}%
_{m_{t}})^{-1}(\tilde{\Lambda}_{m_{t}}^{\prime }\tilde{\Lambda}_{m_{t}})^{-1}%
\tilde{\Lambda}_{m_{t}}^{\prime }e_{t}  \notag \\
&=&\sum\nolimits_{t=1}^{T}x_{t}^{\prime }\tilde{\Lambda}_{m_{t}}(\tilde{%
\sigma}^{2}I_{r_{m_{t}}^{0}}+\tilde{\Lambda}_{m_{t}}^{\prime }\tilde{\Lambda}%
_{m_{t}})^{-1}(\tilde{\Lambda}_{m_{t}}^{\prime }\tilde{\Lambda}_{m_{t}})^{-1}%
\tilde{\Lambda}_{m_{t}}^{\prime }x_{t}  \notag \\
&&-2\sum\nolimits_{t=1}^{T}e_{t}^{\prime }\tilde{\Lambda}_{m_{t}}(\tilde{%
\sigma}^{2}I_{r_{m_{t}}^{0}}+\tilde{\Lambda}_{m_{t}}^{\prime }\tilde{\Lambda}%
_{m_{t}})^{-1}(\tilde{\Lambda}_{m_{t}}^{\prime }\tilde{\Lambda}_{m_{t}})^{-1}%
\tilde{\Lambda}_{m_{t}}^{\prime }\Lambda _{z_{t}}^{0}f_{t}^{0}  \notag \\
&=&O_{p}(N^{\frac{3}{4}}T+NT^{\frac{3}{4}}).  \label{n}
\end{eqnarray}%
The two terms on the left hand side of (\ref{n}) are nonnegative, thus $%
\sum\nolimits_{t=1}^{T}f_{t}^{0\prime }\Lambda _{z_{t}}^{0\prime }\tilde{%
\Lambda}_{m_{t}}(\tilde{\sigma}^{2}I_{r_{m_{t}}^{0}}+\tilde{\Lambda}%
_{m_{t}}^{\prime }\tilde{\Lambda}_{m_{t}})^{-1}(\tilde{\Lambda}%
_{m_{t}}^{\prime }\tilde{\Lambda}_{m_{t}})^{-1}\tilde{\Lambda}%
_{m_{t}}^{\prime }\Lambda _{z_{t}}^{0}f_{t}^{0}=O_{p}(N^{\frac{3}{4}}T+NT^{%
\frac{3}{4}})$. Since each term in the summation is nonnegative, we have $%
\sum\nolimits_{t=1}^{T}f_{t}^{0\prime }\Lambda _{j}^{0\prime }\tilde{\Lambda}%
_{j}(\tilde{\sigma}^{2}I_{r_{j}^{0}}+\tilde{\Lambda}_{j}^{\prime }\tilde{%
\Lambda}_{j})^{-1}(\tilde{\Lambda}_{j}^{\prime }\tilde{\Lambda}_{j})^{-1}%
\tilde{\Lambda}_{j}^{\prime }\Lambda
_{j}^{0}f_{t}^{0}1_{z_{t}=j}1_{m_{t}=j}=O_{p}(N^{\frac{3}{4}}T+NT^{\frac{3}{4%
}})$ for each $j$.

As explained in step (3.6) of proof of Theorem \ref{consis}, $%
\sum\nolimits_{t=1}^{T}1_{m_{t}=j}1_{z_{t}=j}\geq \frac{q_{j}^{0}T}{J^{0}}$,
and by Assumption \ref{factors}(1), $\rho _{\min }(\frac{1}{%
\sum\nolimits_{t=1}^{T}1_{m_{t}=j}1_{z_{t}=j}}\sum%
\nolimits_{t=1}^{T}f_{t}^{0}f_{t}^{0\prime }1_{m_{t}=j}1_{z_{t}=j})\geq c$
for some $c>0$ w.p.a.1. Thus we have%
\begin{eqnarray*}
O_{p}(N^{\frac{3}{4}}T+NT^{\frac{3}{4}})
&=&\sum\nolimits_{t=1}^{T}f_{t}^{0\prime }\Lambda _{j}^{0\prime }\tilde{%
\Lambda}_{j}(\tilde{\sigma}^{2}I_{r_{j}^{0}}+\tilde{\Lambda}_{j}^{\prime }%
\tilde{\Lambda}_{j})^{-1}(\tilde{\Lambda}_{j}^{\prime }\tilde{\Lambda}%
_{j})^{-1}\tilde{\Lambda}_{j}^{\prime }\Lambda
_{j}^{0}f_{t}^{0}1_{z_{t}=j}1_{m_{t}=j} \\
&=&tr(\Lambda _{j}^{0\prime }\tilde{\Lambda}_{j}(\tilde{\sigma}%
^{2}I_{r_{j}^{0}}+\tilde{\Lambda}_{j}^{\prime }\tilde{\Lambda}_{j})^{-1}(%
\tilde{\Lambda}_{j}^{\prime }\tilde{\Lambda}_{j})^{-1}\tilde{\Lambda}%
_{j}^{\prime }\Lambda _{j}^{0}\sum\nolimits_{t=1}^{T}f_{t}^{0}f_{t}^{0\prime
}1_{z_{t}=j}1_{m_{t}=j}) \\
&\geq &tr(\Lambda _{j}^{0\prime }\tilde{\Lambda}_{j}(\tilde{\sigma}%
^{2}I_{r_{j}^{0}}+\tilde{\Lambda}_{j}^{\prime }\tilde{\Lambda}_{j})^{-1}(%
\tilde{\Lambda}_{j}^{\prime }\tilde{\Lambda}_{j})^{-1}\tilde{\Lambda}%
_{j}^{\prime }\Lambda _{j}^{0})\rho _{\min
}(\sum\nolimits_{t=1}^{T}f_{t}^{0}f_{t}^{0\prime }1_{m_{t}=j}1_{z_{t}=j}) \\
&\geq &tr(\Lambda _{j}^{0\prime }\tilde{\Lambda}_{j}(\tilde{\sigma}%
^{2}I_{r_{j}^{0}}+\tilde{\Lambda}_{j}^{\prime }\tilde{\Lambda}_{j})^{-1}(%
\tilde{\Lambda}_{j}^{\prime }\tilde{\Lambda}_{j})^{-1}\tilde{\Lambda}%
_{j}^{\prime }\Lambda _{j}^{0})\frac{Tq_{j}^{0}}{J^{0}}c\text{ w.p.a.1.}
\end{eqnarray*}%
Thus $tr(\Lambda _{j}^{0\prime }\tilde{\Lambda}_{j}(\tilde{\sigma}%
^{2}I_{r_{j}^{0}}+\tilde{\Lambda}_{j}^{\prime }\tilde{\Lambda}_{j})^{-1}(%
\tilde{\Lambda}_{j}^{\prime }\tilde{\Lambda}_{j})^{-1}\tilde{\Lambda}%
_{j}^{\prime }\Lambda _{j}^{0})=O_{p}(\frac{N}{\sqrt{\delta _{NT}}})$ for
each $j$.

(4) Noting that $\tilde{\Lambda}_{jl}$ is orthogonal to $\tilde{\Lambda}%
_{jl^{\prime }}$ for $l\neq l^{\prime }$, we have%
\begin{eqnarray}
\sum\nolimits_{l=1}^{r_{j}^{0}}\left\Vert P_{\tilde{\Lambda}_{jl}}\Lambda
_{j}^{0}\right\Vert _{F}^{2} &=&\left\Vert P_{\tilde{\Lambda}_{j}}\Lambda
_{j}^{0}\right\Vert _{F}^{2}=\left\Vert \Lambda _{j}^{0}\right\Vert
_{F}^{2}-\left\Vert M_{\tilde{\Lambda}_{j}}\Lambda _{j}^{0}\right\Vert
_{F}^{2},  \label{p} \\
\sum\nolimits_{l=1}^{r_{j}^{0}}\frac{1}{\tilde{\sigma}^{2}+\tilde{\Lambda}%
_{jl}^{\prime }\tilde{\Lambda}_{jl}}\left\Vert P_{\tilde{\Lambda}%
_{jl}}\Lambda _{j}^{0}\right\Vert _{F}^{2} &=&tr(\Lambda _{j}^{0\prime }%
\tilde{\Lambda}_{j}(\tilde{\sigma}^{2}I_{r_{j}^{0}}+\tilde{\Lambda}%
_{j}^{\prime }\tilde{\Lambda}_{j})^{-1}(\tilde{\Lambda}_{j}^{\prime }\tilde{%
\Lambda}_{j})^{-1}\tilde{\Lambda}_{j}^{\prime }\Lambda _{j}^{0}).  \label{o}
\end{eqnarray}%
Each term in the summation on the left hand side of equation (\ref{o}) is
nonnegative, thus%
\begin{equation}
\frac{1}{\tilde{\sigma}^{2}+\tilde{\Lambda}_{jl}^{\prime }\tilde{\Lambda}%
_{jl}}\left\Vert P_{\tilde{\Lambda}_{jl}}\Lambda _{j}^{0}\right\Vert
_{F}^{2}=O_{p}(\frac{N}{\sqrt{\delta _{NT}}})\text{ for each }j\text{ and }l%
\text{.}  \label{r}
\end{equation}%
Now consider $\left\Vert P_{\tilde{\Lambda}_{j1}}\Lambda _{j}^{0}\right\Vert
_{F}^{2}$. Let $\tilde{\Lambda}_{j,-1}=(\tilde{\Lambda}_{j2},...,\tilde{%
\Lambda}_{jr_{j}^{0}})$, we have 
\begin{eqnarray}
\sum\nolimits_{l\neq 1}\left\Vert P_{\tilde{\Lambda}_{jl}}\Lambda
_{j}^{0}\right\Vert _{F}^{2} &=&\left\Vert P_{\tilde{\Lambda}_{j,-1}}\Lambda
_{j}^{0}\right\Vert _{F}^{2}=tr(\Lambda _{j}^{0\prime }P_{\tilde{\Lambda}%
_{j,-1}}\Lambda _{j}^{0})  \notag \\
&=&tr[(\tilde{\Lambda}_{j,-1}^{\prime }\tilde{\Lambda}_{j,-1})^{-\frac{1}{2}}%
\tilde{\Lambda}_{j,-1}^{\prime }\Lambda _{j}^{0}\Lambda _{j}^{0\prime }%
\tilde{\Lambda}_{j,-1}(\tilde{\Lambda}_{j,-1}^{\prime }\tilde{\Lambda}%
_{j,-1})^{-\frac{1}{2}}]  \notag \\
&\leq &\left\Vert \Lambda _{j}^{0}\right\Vert _{F}^{2}-\rho _{\min }(\Lambda
_{j}^{0}\Lambda _{j}^{0\prime }).  \label{q}
\end{eqnarray}%
The inequality in expression (\ref{q}) becomes equality when $\tilde{\Lambda}%
_{j,-1}(\tilde{\Lambda}_{j,-1}^{\prime }\tilde{\Lambda}_{j,-1})^{-\frac{1}{2}%
}$ are eigenvectors of $\Lambda _{j}^{0}\Lambda _{j}^{0\prime }$
corresponding to the largest $r_{j}^{0}-1$ eigenvalues. Expressions (\ref{p}%
) and (\ref{q}) together implies that $\left\Vert P_{\tilde{\Lambda}%
_{j1}}\Lambda _{j}^{0}\right\Vert _{F}^{2}\geq \rho _{\min }(\Lambda
_{j}^{0}\Lambda _{j}^{0\prime })-\left\Vert M_{\tilde{\Lambda}_{j}}\Lambda
_{j}^{0}\right\Vert _{F}^{2}$, thus by Assumption \ref{loadings}(1) and
Theorem \ref{consis}, $\frac{1}{N}\left\Vert P_{\tilde{\Lambda}_{j1}}\Lambda
_{j}^{0}\right\Vert _{F}^{2}$ is bounded away from zero in probability. This
together with expression (\ref{r}) implies that $\frac{1}{\tilde{\sigma}^{2}+%
\tilde{\Lambda}_{j1}^{\prime }\tilde{\Lambda}_{j1}}=O_{p}(\frac{1}{\sqrt{%
\delta _{NT}}})$. Similarly, $\frac{1}{\tilde{\sigma}^{2}+\tilde{\Lambda}%
_{jl}^{\prime }\tilde{\Lambda}_{jl}}=O_{p}(\frac{1}{\sqrt{\delta _{NT}}})$
for $l=2,...,r_{j}^{0}$.
\end{proof}

\begin{description}
\item[Proof of Theorem \protect\ref{ptjT}] 
\end{description}

\begin{proof}
Part (1):

Step (1): We first show $\left\vert \tilde{p}_{tj\left\vert t\right.
}-1_{z_{t}=j}\right\vert =o_{p}(\frac{1}{N^{\eta }})$.

When $z_{t}=j$, since $\tilde{p}_{tj\left\vert t\right. }=\frac{\tilde{p}%
_{tj\left\vert t-1\right. }L(x_{t}\left\vert z_{t}=j;\tilde{\Lambda}_{j},%
\tilde{\sigma}^{2}\right. )}{\sum\nolimits_{k=1}^{J^{0}}\tilde{p}%
_{tk\left\vert t-1\right. }L(x_{t}\left\vert z_{t}=k;\tilde{\Lambda}_{k},%
\tilde{\sigma}^{2}\right. )}$, we have%
\begin{eqnarray*}
\left\vert \tilde{p}_{tj\left\vert t\right. }-1_{z_{t}=j}\right\vert &=&%
\frac{\sum\nolimits_{k\neq j}\tilde{p}_{tk\left\vert t-1\right.
}L(x_{t}\left\vert z_{t}=k;\tilde{\Lambda}_{k},\tilde{\sigma}^{2}\right. )}{%
\sum\nolimits_{k=1}^{J^{0}}\tilde{p}_{tk\left\vert t-1\right.
}L(x_{t}\left\vert z_{t}=k;\tilde{\Lambda}_{k},\tilde{\sigma}^{2}\right. )}
\\
&\leq &\sum\nolimits_{k\neq j}\frac{\tilde{p}_{tk\left\vert t-1\right. }}{%
\tilde{p}_{tj\left\vert t-1\right. }}e^{\log L(x_{t}\left\vert z_{t}=k;%
\tilde{\Lambda}_{k},\tilde{\sigma}^{2}\right. )-\log L(x_{t}\left\vert
z_{t}=j;\tilde{\Lambda}_{j},\tilde{\sigma}^{2}\right. )}.
\end{eqnarray*}%
When $z_{t}=h\neq j$, since $\sum\nolimits_{k=1}^{J^{0}}\tilde{p}%
_{tk\left\vert t\right. }=1$, we have $\tilde{p}_{tj\left\vert t\right.
}-1_{z_{t}=j}=\tilde{p}_{tj\left\vert t\right. }\leq 1-\tilde{p}%
_{th\left\vert t\right. }$, thus it suffices to show $\left\vert \tilde{p}%
_{tj\left\vert t\right. }-1_{z_{t}=j}\right\vert =o_{p}(\frac{1}{N^{\eta }})$
when $z_{t}=j$. Since $\tilde{p}_{tj\left\vert t-1\right. }=Q_{j\cdot }%
\tilde{p}_{t-1\left\vert t-1\right. }\geq \min_{k}Q_{jk}>0$ for all $j$ ($%
Q_{j\cdot }$ denotes the $j$-th row of $Q$), it suffices to show $%
\sup_{t}e^{\log L(x_{t}\left\vert z_{t}=k;\tilde{\Lambda}_{k},\tilde{\sigma}%
^{2}\right. )-\log L(x_{t}\left\vert z_{t}=j;\tilde{\Lambda}_{j},\tilde{%
\sigma}^{2}\right. )}=o_{p}(\frac{1}{N^{\eta }})$ for any $k\neq j$, i.e.,
it suffices to show for any fixed $M>0$,%
\begin{eqnarray}
\Pr (\sup_{t}[\log L(x_{t}\left\vert z_{t}=k;\tilde{\Lambda}_{k},\tilde{%
\sigma}^{2}\right. )-\log L(x_{t}\left\vert z_{t}=j;\tilde{\Lambda}_{j},%
\tilde{\sigma}^{2}\right. )] &\geq &\log \frac{M}{N^{\eta }})\rightarrow 0,%
\text{ or}  \notag \\
\Pr (\min_{t}[\log L(x_{t}\left\vert z_{t}=j;\tilde{\Lambda}_{j},\tilde{%
\sigma}^{2}\right. )-\log L(x_{t}\left\vert z_{t}=k;\tilde{\Lambda}_{k},%
\tilde{\sigma}^{2}\right. )] &\leq &\eta \log N-\log M)\rightarrow 0\text{. }
\label{u}
\end{eqnarray}%
Similar to equation (\ref{e}),%
\begin{eqnarray*}
\log L(x_{t}\left\vert z_{t}=j;\tilde{\Lambda}_{j},\tilde{\sigma}^{2}\right.
) &=&-\frac{N}{2}\log 2\pi -\frac{1}{2}\log \left\vert \tilde{\Lambda}_{j}%
\tilde{\Lambda}_{j}^{\prime }+\tilde{\sigma}^{2}I_{N}\right\vert -\frac{1}{2}%
\tilde{\sigma}^{-2}\left\Vert M_{\tilde{\Lambda}_{j}}x_{t}\right\Vert ^{2} \\
&&-\frac{1}{2}x_{t}^{\prime }\tilde{\Lambda}_{j}(\tilde{\sigma}%
^{2}I_{r_{j}^{0}}+\tilde{\Lambda}_{j}^{\prime }\tilde{\Lambda}_{j})^{-1}(%
\tilde{\Lambda}_{j}^{\prime }\tilde{\Lambda}_{j})^{-1}\tilde{\Lambda}%
_{j}^{\prime }x_{t}, \\
\log L(x_{t}\left\vert z_{t}=k;\tilde{\Lambda}_{k},\tilde{\sigma}^{2}\right.
) &=&-\frac{N}{2}\log 2\pi -\frac{1}{2}\log \left\vert \tilde{\Lambda}_{k}%
\tilde{\Lambda}_{k}^{\prime }+\tilde{\sigma}^{2}I_{N}\right\vert -\frac{1}{2}%
\tilde{\sigma}^{-2}\left\Vert M_{\tilde{\Lambda}_{k}}x_{t}\right\Vert ^{2} \\
&&-\frac{1}{2}x_{t}^{\prime }\tilde{\Lambda}_{k}(\tilde{\sigma}%
^{2}I_{r_{k}^{0}}+\tilde{\Lambda}_{k}^{\prime }\tilde{\Lambda}_{k})^{-1}(%
\tilde{\Lambda}_{k}^{^{\prime }}\tilde{\Lambda}_{k})^{-1}\tilde{\Lambda}%
_{k}^{\prime }x_{t},
\end{eqnarray*}%
and similar to equations (\ref{s}) and (\ref{t}), $\frac{\left\vert \tilde{%
\Lambda}_{j}\tilde{\Lambda}_{j}^{\prime }+\tilde{\sigma}^{2}I_{N}\right\vert 
}{\left\vert \tilde{\Lambda}_{k}\tilde{\Lambda}_{k}^{\prime }+\tilde{\sigma}%
^{2}I_{N}\right\vert }=\frac{\left\vert \frac{1}{\tilde{\sigma}^{2}}\tilde{%
\Lambda}_{j}^{\prime }\tilde{\Lambda}_{j}+I_{r_{j}^{0}}\right\vert }{%
\left\vert \frac{1}{\tilde{\sigma}^{2}}\tilde{\Lambda}_{k}^{\prime }\tilde{%
\Lambda}_{k}+I_{r_{k}^{0}}\right\vert }$. Thus%
\begin{eqnarray}
&&\log L(x_{t}\left\vert z_{t}=j;\tilde{\Lambda}_{j},\tilde{\sigma}%
^{2}\right. )-\log L(x_{t}\left\vert z_{t}=k;\tilde{\Lambda}_{k},\tilde{%
\sigma}^{2}\right. )  \notag \\
&=&-\frac{1}{2}\log \left\vert \frac{1}{\tilde{\sigma}^{2}}\tilde{\Lambda}%
_{j}^{\prime }\tilde{\Lambda}_{j}+I_{r_{j}^{0}}\right\vert +\frac{1}{2}\log
\left\vert \frac{1}{\tilde{\sigma}^{2}}\tilde{\Lambda}_{k}^{\prime }\tilde{%
\Lambda}_{k}+I_{r_{k}^{0}}\right\vert  \notag \\
&&+\frac{1}{2}\tilde{\sigma}^{-2}(\left\Vert M_{\tilde{\Lambda}%
_{k}}x_{t}\right\Vert ^{2}-\left\Vert M_{\Lambda _{k}^{0}}x_{t}\right\Vert
^{2}+\left\Vert M_{\Lambda _{j}^{0}}x_{t}\right\Vert ^{2}-\left\Vert M_{%
\tilde{\Lambda}_{j}}x_{t}\right\Vert ^{2}+\left\Vert M_{\Lambda
_{k}^{0}}x_{t}\right\Vert ^{2}-\left\Vert M_{\Lambda
_{j}^{0}}x_{t}\right\Vert ^{2})  \notag \\
&&-\frac{1}{2}x_{t}^{\prime }\tilde{\Lambda}_{j}(\tilde{\sigma}%
^{2}I_{r_{j}^{0}}+\tilde{\Lambda}_{j}^{\prime }\tilde{\Lambda}_{j})^{-1}(%
\tilde{\Lambda}_{j}^{\prime }\tilde{\Lambda}_{j})^{-1}\tilde{\Lambda}%
_{j}^{\prime }x_{t}+\frac{1}{2}x_{t}^{\prime }\tilde{\Lambda}_{k}(\tilde{%
\sigma}^{2}I_{r_{k}^{0}}+\tilde{\Lambda}_{k}^{\prime }\tilde{\Lambda}%
_{k})^{-1}(\tilde{\Lambda}_{k}^{^{\prime }}\tilde{\Lambda}_{k})^{-1}\tilde{%
\Lambda}_{k}^{\prime }x_{t}  \notag \\
&\geq &-\frac{1}{2}\log \left\vert \frac{1}{\tilde{\sigma}^{2}}\tilde{\Lambda%
}_{j}^{\prime }\tilde{\Lambda}_{j}+I_{r_{j}^{0}}\right\vert -\frac{1}{2}%
x_{t}^{\prime }\tilde{\Lambda}_{j}(\tilde{\sigma}^{2}I_{r_{j}^{0}}+\tilde{%
\Lambda}_{j}^{\prime }\tilde{\Lambda}_{j})^{-1}(\tilde{\Lambda}_{j}^{\prime }%
\tilde{\Lambda}_{j})^{-1}\tilde{\Lambda}_{j}^{\prime }x_{t}  \notag \\
&&+\frac{1}{2}\tilde{\sigma}^{-2}(\left\Vert M_{\tilde{\Lambda}%
_{k}}x_{t}\right\Vert ^{2}-\left\Vert M_{\Lambda _{k}^{0}}x_{t}\right\Vert
^{2})+\frac{1}{2}\tilde{\sigma}^{-2}(\left\Vert M_{\Lambda
_{j}^{0}}x_{t}\right\Vert ^{2}-\left\Vert M_{\tilde{\Lambda}%
_{j}}x_{t}\right\Vert ^{2})  \notag \\
&&-\frac{1}{2}\tilde{\sigma}^{-2}e_{t}^{\prime }P_{\Lambda _{k}^{0}}e_{t}+%
\tilde{\sigma}^{-2}e_{t}^{\prime }M_{\Lambda _{k}^{0}}\Lambda
_{j}^{0}f_{t}^{0}+\frac{1}{2}\tilde{\sigma}^{-2}f_{t}^{0\prime }\Lambda
_{j}^{0\prime }M_{\Lambda _{k}^{0}}\Lambda _{j}^{0}f_{t}^{0},
\end{eqnarray}%
where the inequality follows from throwing away $\frac{1}{2}\log \left\vert 
\frac{1}{\tilde{\sigma}^{2}}\tilde{\Lambda}_{k}^{\prime }\tilde{\Lambda}%
_{k}+I_{r_{k}^{0}}\right\vert $, $\frac{1}{2}x_{t}^{\prime }\tilde{\Lambda}%
_{k}(\tilde{\sigma}^{2}I_{r_{k}^{0}}+\tilde{\Lambda}_{k}^{\prime }\tilde{%
\Lambda}_{k})^{-1}(\tilde{\Lambda}_{k}^{^{\prime }}\tilde{\Lambda}_{k})^{-1}%
\tilde{\Lambda}_{k}^{\prime }x_{t}$ and $e_{t}^{\prime }P_{\Lambda
_{j}^{0}}e_{t}$. It follows that%
\begin{eqnarray}
&&\min_{t}[\log L(x_{t}\left\vert z_{t}=j;\tilde{\Lambda}_{j},\tilde{\sigma}%
^{2}\right. )-\log L(x_{t}\left\vert z_{t}=k;\tilde{\Lambda}_{k},\tilde{%
\sigma}^{2}\right. )]  \notag \\
&\geq &-\frac{1}{2}\log \left\vert \frac{1}{\tilde{\sigma}^{2}}\tilde{\Lambda%
}_{j}^{\prime }\tilde{\Lambda}_{j}+I_{r_{j}^{0}}\right\vert -\frac{1}{2}%
\sup_{t}x_{t}^{\prime }\tilde{\Lambda}_{j}(\tilde{\sigma}^{2}I_{r_{j}^{0}}+%
\tilde{\Lambda}_{j}^{\prime }\tilde{\Lambda}_{j})^{-1}(\tilde{\Lambda}%
_{j}^{\prime }\tilde{\Lambda}_{j})^{-1}\tilde{\Lambda}_{j}^{\prime }x_{t} 
\notag \\
&&-\frac{1}{2}\tilde{\sigma}^{-2}\sup_{t}\left\vert \left\Vert M_{\tilde{%
\Lambda}_{k}}x_{t}\right\Vert ^{2}-\left\Vert M_{\Lambda
_{k}^{0}}x_{t}\right\Vert ^{2}\right\vert -\frac{1}{2}\tilde{\sigma}%
^{-2}\sup_{t}\left\vert \left\Vert M_{\Lambda _{j}^{0}}x_{t}\right\Vert
^{2}-\left\Vert M_{\tilde{\Lambda}_{j}}x_{t}\right\Vert ^{2}\right\vert 
\notag \\
&&-\frac{1}{2}\tilde{\sigma}^{-2}\sup_{t}e_{t}^{\prime }P_{\Lambda
_{k}^{0}}e_{t}-\tilde{\sigma}^{-2}\sup_{t}\left\vert e_{t}^{\prime
}M_{\Lambda _{k}^{0}}\Lambda _{j}^{0}f_{t}^{0}\right\vert +\frac{1}{2}\tilde{%
\sigma}^{-2}\min_{t}f_{t}^{0\prime }\Lambda _{j}^{0\prime }M_{\Lambda
_{k}^{0}}\Lambda _{j}^{0}f_{t}^{0}  \notag \\
&\equiv &-(A_{1}+A_{2}+A_{3}+A_{4}+A_{5}+A_{6})+\frac{1}{2}\tilde{\sigma}%
^{-2}\min_{t}f_{t}^{0\prime }\Lambda _{j}^{0\prime }M_{\Lambda
_{k}^{0}}\Lambda _{j}^{0}f_{t}^{0}.
\end{eqnarray}%
Thus for expression (\ref{u}), it suffices to show%
\begin{equation}
\Pr (\frac{1}{2}\tilde{\sigma}^{-2}\min_{t}f_{t}^{0\prime }\Lambda
_{j}^{0\prime }M_{\Lambda _{k}^{0}}\Lambda _{j}^{0}f_{t}^{0}\leq
A_{1}+A_{2}+A_{3}+A_{4}+A_{5}+A_{6}+\eta \log N)\rightarrow 0.  \label{v}
\end{equation}

By Assumption \ref{loadings}(2), $\min_{t}f_{t}^{0\prime }\Lambda
_{j}^{0\prime }M_{\Lambda _{k}^{0}}\Lambda _{j}^{0}f_{t}^{0}\geq NC$ for
some $C>0$. Thus it suffices to show that $A_{1},...,A_{6}$ are all $%
o_{p}(N) $ when $T^{\frac{16}{\alpha }}/N\rightarrow 0$ and $T^{\frac{2}{%
\alpha }+\frac{2}{\beta }}/N\rightarrow 0$.

Term $A_{1}$: As shown in equation (\ref{foc1}), $\tilde{\Lambda}%
_{jl}^{\prime }\tilde{\Lambda}_{jl}+\tilde{\sigma}^{2}$ is an eigenvalue of $%
\tilde{S}_{j}=\sum\nolimits_{t=1}^{T}\tilde{p}_{tj\left\vert T\right.
}x_{t}x_{t}^{\prime }/\sum\nolimits_{t=1}^{T}\tilde{p}_{tj\left\vert
T\right. }$, which is bounded by $\sup_{t}\left\Vert x_{t}\right\Vert ^{2}$.
We next show that $\sup_{t}\left\Vert x_{t}\right\Vert =O_{p}(N^{\frac{1}{2}%
}T^{\frac{1}{\alpha }})$. By Assumption \ref{loadings}(1) and \ref{factors}%
(2), $\sup_{t}\left\Vert \Lambda _{z_{t}}^{0}f_{t}^{0}\right\Vert ^{\alpha
}\leq \sup_{j}\left\Vert \Lambda _{j}^{0}\right\Vert ^{\alpha
}\sum\nolimits_{t=1}^{T}\left\Vert f_{t}^{0}\right\Vert ^{\alpha }=N^{\frac{%
\alpha }{2}}T$. By Holder inequality, $\left\Vert e_{t}\right\Vert
^{2}=\sum\nolimits_{i=1}^{N}e_{it}^{2}\leq
(\sum\nolimits_{i=1}^{N}e_{it}^{\alpha })^{\frac{2}{\alpha }}N^{1-\frac{2}{%
\alpha }}$, thus $\sup_{t}\left\Vert e_{t}\right\Vert ^{\alpha }\leq N^{%
\frac{\alpha }{2}-1}\sup_{t}(\sum\nolimits_{i=1}^{N}e_{it}^{\alpha })\leq N^{%
\frac{\alpha }{2}-1}\sum\nolimits_{t=1}^{T}\sum\nolimits_{i=1}^{N}e_{it}^{%
\alpha }=O_{p}(N^{\frac{\alpha }{2}}T)$ by Assumption \ref{error}(1). It
follows that $\sup_{t}\left\Vert x_{t}\right\Vert \leq \sup_{t}\left\Vert
\Lambda _{z_{t}}^{0}f_{t}^{0}\right\Vert +\sup_{t}\left\Vert
e_{t}\right\Vert =O_{p}(N^{\frac{1}{2}}T^{\frac{1}{\alpha }})$. Thus%
\begin{eqnarray*}
A_{1} &=&\frac{1}{2}\log \left\vert \frac{1}{\tilde{\sigma}^{2}}\tilde{%
\Lambda}_{j}^{\prime }\tilde{\Lambda}_{j}+I_{r_{j}^{0}}\right\vert =\frac{1}{%
2}\sum\nolimits_{l=1}^{r_{j}^{0}}\log \frac{\tilde{\Lambda}_{jl}^{\prime }%
\tilde{\Lambda}_{jl}+\tilde{\sigma}^{2}}{\tilde{\sigma}^{2}} \\
&\leq &\frac{1}{2}r_{j}^{0}\log \frac{\sup_{t}\left\Vert x_{t}\right\Vert
^{2}}{\tilde{\sigma}^{2}}=O_{p}(\log NT^{\frac{2}{\alpha }})=o_{p}(N)\text{
when }\frac{\log T}{N}\rightarrow 0\text{. }
\end{eqnarray*}

Term $A_{2}$: By Lemma \ref{lamhat norm}, $\frac{1}{\tilde{\sigma}^{2}+%
\tilde{\Lambda}_{jl}^{\prime }\tilde{\Lambda}_{jl}}=O_{p}(\frac{1}{\sqrt{%
\delta _{NT}}})$ for each $j$ and each $l$. We have shown for term $A_{1}$
that $\sup_{t}\left\Vert x_{t}\right\Vert =O_{p}(N^{\frac{1}{2}}T^{\frac{1}{%
\alpha }})$. Thus%
\begin{eqnarray*}
A_{2} &\leq &\frac{1}{2}\sup_{t}(x_{t}^{\prime }P_{\tilde{\Lambda}%
_{j}}x_{t}\sup_{l}\frac{1}{\tilde{\sigma}^{2}+\tilde{\Lambda}_{jl}^{\prime }%
\tilde{\Lambda}_{jl}})\leq \frac{1}{2}\sup_{t}\left\Vert x_{t}\right\Vert
^{2}\sup_{l}\frac{1}{\tilde{\sigma}^{2}+\tilde{\Lambda}_{jl}^{\prime }\tilde{%
\Lambda}_{jl}} \\
&=&O_{p}(NT^{\frac{2}{\alpha }})O_{p}(\frac{1}{\sqrt{\delta _{NT}}})=o_{p}(N)%
\text{ when }T^{\frac{8}{\alpha }}/N\rightarrow 0\text{ and }\alpha >8\text{.%
}
\end{eqnarray*}

Term $A_{3}$:%
\begin{eqnarray}
\left\Vert P_{\Lambda _{k}^{0}}-P_{\tilde{\Lambda}_{k}}\right\Vert ^{2}
&\leq &\left\Vert P_{\Lambda _{k}^{0}}-P_{\tilde{\Lambda}_{k}}\right\Vert
_{F}^{2}=tr[(P_{\Lambda _{k}^{0}}-P_{\tilde{\Lambda}_{k}})^{2}]  \notag \\
&=&2tr(I_{r_{k}^{0}}-P_{\Lambda _{k}^{0}}P_{\tilde{\Lambda}%
_{k}})=2\left\Vert M_{\tilde{\Lambda}_{k}}\Lambda _{k}^{0}(\Lambda
_{k}^{0\prime }\Lambda _{k}^{0})^{-\frac{1}{2}}\right\Vert _{F}^{2}  \notag
\\
&\leq &2\frac{1}{N}\left\Vert M_{\tilde{\Lambda}_{k}}\Lambda
_{k}^{0}\right\Vert _{F}^{2}\left\Vert (\frac{1}{N}\Lambda _{k}^{0\prime
}\Lambda _{k}^{0})^{-\frac{1}{2}}\right\Vert _{F}^{2}=O_{p}(\frac{1}{\sqrt{%
\delta _{NT}}}),  \label{bo}
\end{eqnarray}%
where the last equality follows from Theorem \ref{consis} and Assumption \ref%
{loadings}(1). We have shown for term $A_{1}$ that $\sup_{t}\left\Vert
x_{t}\right\Vert =O_{p}(N^{\frac{1}{2}}T^{\frac{1}{\alpha }})$. Thus 
\begin{eqnarray*}
A_{3} &=&\frac{1}{2}\tilde{\sigma}^{-2}\sup_{t}\left\vert x_{t}^{\prime
}(P_{\Lambda _{k}^{0}}-P_{\tilde{\Lambda}_{k}})x_{t}\right\vert \leq \frac{1%
}{2}\tilde{\sigma}^{-2}\left\Vert P_{\Lambda _{k}^{0}}-P_{\tilde{\Lambda}%
_{k}}\right\Vert \sup_{t}\left\Vert x_{t}\right\Vert ^{2} \\
&=&O_{p}(\delta _{NT}^{-\frac{1}{4}})NT^{\frac{2}{\alpha }}=o_{p}(N)\text{
when }T^{\frac{16}{\alpha }}/N\rightarrow 0\text{ and }\alpha >16.
\end{eqnarray*}

Similar to term $A_{3}$, Term $A_{4}$ is also $o_{p}(N)$ when $T^{\frac{16}{%
\alpha }}/N\rightarrow 0$ and $\alpha >16$.

Term $A_{5}$: By Assumption \ref{moments}(1), $\sup_{t}\left\Vert \frac{%
\Lambda _{k}^{0\prime }e_{t}}{\sqrt{N}}\right\Vert ^{\beta }\leq
\sum\nolimits_{t=1}^{T}\left\Vert \frac{\Lambda _{k}^{0\prime }e_{t}}{\sqrt{N%
}}\right\Vert ^{\beta }=O_{p}(T)$. Thus%
\begin{equation*}
A_{5}\leq \frac{1}{2}\tilde{\sigma}^{-2}\left\Vert (\frac{1}{N}\Lambda
_{k}^{0\prime }\Lambda _{k}^{0})^{-1}\right\Vert \sup_{t}\left\Vert \frac{%
\Lambda _{k}^{0\prime }e_{t}}{\sqrt{N}}\right\Vert ^{2}=O_{p}(T^{\frac{2}{%
\beta }})=o_{p}(N)\text{ when }T^{\frac{2}{\beta }}/N\rightarrow 0.
\end{equation*}

Term $A_{6}$: By Assumption \ref{factors}(2), $\sup_{t}\left\Vert
f_{t}^{0}\right\Vert ^{\alpha }\leq \sum\nolimits_{t=1}^{T}\left\Vert
f_{t}^{0}\right\Vert ^{\alpha }=O_{p}(T)$, thus $\sup_{t}\left\Vert
f_{t}^{0}\right\Vert =O_{p}(T^{\frac{1}{\alpha }})$. We have shown for term $%
A_{5}$ that $\sup_{t}\left\Vert \frac{\Lambda _{j}^{0\prime }e_{t}}{\sqrt{N}}%
\right\Vert =O_{p}(T^{\frac{1}{\beta }})$. Thus%
\begin{eqnarray*}
A_{6} &\leq &\tilde{\sigma}^{-2}\sup_{t}\left\vert e_{t}^{\prime }\Lambda
_{j}^{0}f_{t}^{0}\right\vert +\tilde{\sigma}^{-2}\sup_{t}\left\vert
e_{t}^{\prime }\Lambda _{k}^{0}(\frac{\Lambda _{k}^{0\prime }\Lambda _{k}^{0}%
}{N})^{-1}\frac{\Lambda _{k}^{0\prime }\Lambda _{j}^{0}}{N}%
f_{t}^{0}\right\vert \\
&\leq &\tilde{\sigma}^{-2}\sup_{t}\left\Vert e_{t}^{\prime }\Lambda
_{j}^{0}\right\Vert \sup_{t}\left\Vert f_{t}^{0}\right\Vert (1+\left\Vert (%
\frac{\Lambda _{k}^{0\prime }\Lambda _{k}^{0}}{N})^{-1}\right\Vert
\left\Vert \frac{\Lambda _{k}^{0\prime }\Lambda _{j}^{0}}{N}\right\Vert ) \\
&=&O_{p}(N^{\frac{1}{2}}T^{\frac{1}{\alpha }+\frac{1}{\beta }})=o_{p}(N)%
\text{ when }T^{\frac{2}{\alpha }+\frac{2}{\beta }}/N\rightarrow 0.
\end{eqnarray*}

Step (2): We next prove $\tilde{p}_{tk\left\vert T\right. }=o_{p}(\frac{1}{%
N^{\eta }})$ for $k\neq j$ when the true state is $z_{t}=j$. Let $Q_{\cdot
k} $ denote the $k$-th column of $Q$ and "$\div $" denotes element-wise
division for two vectors.%
\begin{eqnarray*}
\tilde{p}_{tk\left\vert T\right. } &=&\tilde{p}_{tk\left\vert t\right.
}\times Q_{\cdot k}^{\prime }(\tilde{p}_{t+1\left\vert T\right. }\div \tilde{%
p}_{t+1\left\vert t\right. })=\tilde{p}_{tk\left\vert t\right. }\times 
\tilde{p}_{t+1\left\vert T\right. }^{\prime }(Q_{\cdot k}\div \tilde{p}%
_{t+1\left\vert t\right. }) \\
&\leq &\tilde{p}_{tk\left\vert t\right. }\max_{l}\frac{Q_{lk}}{Q_{lj}}\frac{1%
}{\tilde{p}_{tj\left\vert t\right. }}=o_{p}(\frac{1}{N^{\eta }}),
\end{eqnarray*}%
where the inequality is due to the fact that each element of $\tilde{p}%
_{t+1\left\vert t\right. }=Q\tilde{p}_{t\left\vert t\right. }$ is not
smaller than $Q_{\cdot j}\tilde{p}_{tj\left\vert t\right. }$ and the last
equality follows from step (1) and $\min_{l}Q_{lj}>0$.

Part (2):

Similar to expression (\ref{v}), it suffices to show%
\begin{equation}
\Pr (\frac{1}{2}\tilde{\sigma}^{-2}f_{t}^{0\prime }\Lambda _{j}^{0\prime
}M_{\Lambda _{k}^{0}}\Lambda _{j}^{0}f_{t}^{0}\leq A_{1}^{\prime
}+A_{2}^{\prime }+A_{3}^{\prime }+A_{4}^{\prime }+A_{5}^{\prime
}+A_{6}^{\prime }+\eta \log N)\rightarrow 0,
\end{equation}%
where $A_{1}^{\prime },...,A_{6}^{\prime }$ equals $A_{1},...,A_{6}$ without
taking supremum with respect to $t$. Given the calculation of terms $%
A_{1},...,A_{6}$, it is not difficult to see that without taking supremum, $%
A_{1}^{\prime },...,A_{6}^{\prime }$ becomes $O_{p}(\log N)$, $O_{p}(\frac{N%
}{\sqrt{\delta _{NT}}})$, $O_{p}(\frac{N}{\delta _{NT}^{\frac{1}{4}}})$, $%
O_{p}(\frac{N}{\delta _{NT}^{\frac{1}{4}}})$, $O_{p}(1)$ and $O_{p}(N^{\frac{%
1}{2}})$ respectively. Since $f_{t}^{0\prime }\Lambda _{j}^{0\prime
}M_{\Lambda _{k}^{0}}\Lambda _{j}^{0}f_{t}^{0}\geq NC$ for some $C>0$, $%
A_{1}^{\prime },...,A_{6}^{\prime }$ are all dominated by this term.
\end{proof}

\section{Details for Theorem \protect\ref{rate} and Theorem \protect\ref{ld}}

\begin{description}
\item[Proof of Proposition \protect\ref{VjHj}] 
\end{description}

\begin{proof}
(1) Let $V_{jNT}$ be an $r_{j}^{0}\times r_{j}^{0}$ diagonal matrix
consisting of eigenvalues of $\frac{(\Lambda _{j}^{0\prime }\Lambda
_{j}^{0})^{\frac{1}{2}}(\sum\nolimits_{t=1}^{T}f_{t}^{0}f_{t}^{0\prime
}1_{z_{t}=j})(\Lambda _{j}^{0\prime }\Lambda _{j}^{0})^{\frac{1}{2}}}{%
NTq_{j}^{0}}$ in descending order and $\Upsilon _{jNT}$ be the corresponding
eigenvectors. Let $\bar{\Lambda}_{j}^{0}=\Lambda _{j}^{0}(\Lambda
_{j}^{0\prime }\Lambda _{j}^{0})^{-\frac{1}{2}}\Upsilon _{jNT}$ be the
normalized version of $\Lambda _{j}^{0}$, then $\bar{\Lambda}_{j}^{0\prime }%
\bar{\Lambda}_{j}^{0}=I_{r_{j}^{0}}$. Let $\check{\Lambda}_{j}=\tilde{\Lambda%
}_{j}(\tilde{\Lambda}_{j}^{\prime }\tilde{\Lambda}_{j})^{-\frac{1}{2}}$ be
the normalized version of $\tilde{\Lambda}_{j}$, then $\check{\Lambda}%
_{j}^{\prime }\check{\Lambda}_{j}=I_{r_{j}^{0}}$.

From equation (\ref{foc1}), we have $\check{\Lambda}_{j}W_{jNT}=(\frac{1}{NT}%
\sum\nolimits_{t=1}^{T}\tilde{p}_{tj\left\vert T\right. }x_{t}x_{t}^{\prime
})\check{\Lambda}_{j}$. The left hand side equals $P_{\bar{\Lambda}_{j}^{0}}%
\check{\Lambda}_{j}W_{jNT}+M_{\bar{\Lambda}_{j}^{0}}\check{\Lambda}%
_{j}W_{jNT}=\bar{\Lambda}_{j}^{0}\bar{\Lambda}_{j}^{0\prime }\check{\Lambda}%
_{j}W_{jNT}+M_{\bar{\Lambda}_{j}^{0}}\check{\Lambda}_{j}W_{jNT}$. The right
hand side equals%
\begin{eqnarray}
&&\Lambda _{j}^{0}\frac{(\sum\nolimits_{t=1}^{T}f_{t}^{0}f_{t}^{0\prime
}1_{z_{t}=j})\Lambda _{j}^{0\prime }\check{\Lambda}_{j}}{NT}+\frac{%
\sum\nolimits_{t=1}^{T}\mathbb{E(}e_{t}e_{t}^{\prime })1_{z_{t}=j}\check{%
\Lambda}_{j}}{NT}+\frac{\sum\nolimits_{t=1}^{T}(e_{t}e_{t}^{\prime }-\mathbb{%
E(}e_{t}e_{t}^{\prime }))1_{z_{t}=j}\check{\Lambda}_{j}}{NT}  \notag \\
&&+\frac{\sum\nolimits_{t=1}^{T}e_{t}f_{t}^{0\prime }1_{z_{t}=j}\Lambda
_{j}^{0\prime }\check{\Lambda}_{j}}{NT}+\frac{\Lambda
_{j}^{0}\sum\nolimits_{t=1}^{T}f_{t}^{0}e_{t}^{\prime }1_{z_{t}=j}\check{%
\Lambda}_{j}}{NT}+\frac{\sum\nolimits_{t=1}^{T}(\tilde{p}_{tj\left\vert
T\right. }-1_{z_{t}=j})x_{t}x_{t}^{\prime }}{NT}\check{\Lambda}_{j}  \notag
\\
&\equiv &\Lambda _{j}^{0}\frac{(\sum\nolimits_{t=1}^{T}f_{t}^{0}f_{t}^{0%
\prime }1_{z_{t}=j})\Lambda _{j}^{0\prime }\check{\Lambda}_{j}}{NT}%
+I+II+III+IV+D.  \label{z}
\end{eqnarray}%
Note that $\Lambda _{j}^{0}\frac{(\sum\nolimits_{t=1}^{T}f_{t}^{0}f_{t}^{0%
\prime }1_{z_{t}=j})\Lambda _{j}^{0\prime }\check{\Lambda}_{j}}{NT}=\bar{%
\Lambda}_{j}^{0}q_{j}^{0}V_{jNT}\bar{\Lambda}_{j}^{0\prime }\check{\Lambda}%
_{j}$, thus we have%
\begin{equation}
\bar{\Lambda}_{j}^{0}(\bar{\Lambda}_{j}^{0\prime }\check{\Lambda}%
_{j}W_{jNT}-q_{j}^{0}V_{jNT}\bar{\Lambda}_{j}^{0\prime }\check{\Lambda}%
_{j})+M_{\bar{\Lambda}_{j}^{0}}\check{\Lambda}_{j}W_{jNT}=I+II+III+IV+D
\label{w}
\end{equation}%
Terms $I,...,IV$ correspond to the right hand of equation (A.1) in Bai
(2003). By Assumption \ref{error}(2), $\left\Vert I\right\Vert
_{F}^{2}=O_{p}(\frac{1}{N})$. By Assumption \ref{error}(4), $\left\Vert
II\right\Vert _{F}^{2}=O_{p}(\frac{1}{T})$. By Assumptions \ref{moments}(2)
and \ref{loadings}(1), $\left\Vert III\right\Vert _{F}^{2}$ and $\left\Vert
IV\right\Vert _{F}^{2}$ are $O_{p}(\frac{1}{T})$. The detailed calculation
is similar to the proof of Theorem 1 in Bai and Ng (2002), hence omitted
here. Now consider term $D$. Since $\left\Vert \frac{\sum\nolimits_{t=1}^{T}(%
\tilde{p}_{tj\left\vert T\right. }-1_{z_{t}=j})x_{t}x_{t}^{\prime }}{NT}%
\right\Vert \leq \frac{\sum\nolimits_{t=1}^{T}\left\vert \tilde{p}%
_{tj\left\vert T\right. }-1_{z_{t}=j}\right\vert \left\Vert x_{t}\right\Vert
^{2}}{NT}\leq \sup_{t}\left\vert \tilde{p}_{tj\left\vert T\right.
}-1_{z_{t}=j}\right\vert \frac{\sum\nolimits_{t=1}^{T}\left\Vert
x_{t}\right\Vert ^{2}}{NT}$,%
\begin{eqnarray}
\left\Vert D\right\Vert _{F} &\leq &\left\Vert \frac{\sum\nolimits_{t=1}^{T}(%
\tilde{p}_{tj\left\vert T\right. }-1_{z_{t}=j})x_{t}x_{t}^{\prime }}{NT}%
\right\Vert \left\Vert \check{\Lambda}_{j}\right\Vert _{F}  \notag \\
&\leq &\sqrt{r_{j}^{0}}\sup_{t}\left\vert \tilde{p}_{tj\left\vert T\right.
}-1_{z_{t}=j}\right\vert \frac{\sum\nolimits_{t=1}^{T}\left\Vert
x_{t}\right\Vert ^{2}}{NT}=o_{p}(\frac{1}{N^{\eta }}).  \label{aw}
\end{eqnarray}%
The last equality follows from Theorem \ref{ptjT} and $\frac{%
\sum\nolimits_{t=1}^{T}\left\Vert x_{t}\right\Vert ^{2}}{NT}=O_{p}(1)$,
which can be easily shown using Assumptions \ref{factors}(2), \ref{loadings}%
(1) and \ref{error}(1). In summary, the right hand side of equation (\ref{w}%
) is $O_{p}(\frac{1}{\delta _{NT}})$. The two terms on the left hand side%
\footnote{%
The left hand side of equation (\ref{w}) corresponds to a further
decomposition of the left hand side of equation (A.1) in Bai (2003).} are
orthogonal to each other, thus both $\left\Vert M_{\bar{\Lambda}_{j}^{0}}%
\check{\Lambda}_{j}W_{jNT}\right\Vert _{F}$ and $\left\Vert \bar{\Lambda}%
_{j}^{0}(\bar{\Lambda}_{j}^{0\prime }\check{\Lambda}%
_{j}W_{jNT}-q_{j}^{0}V_{jNT}\bar{\Lambda}_{j}^{0\prime }\check{\Lambda}%
_{j})\right\Vert _{F}$ are $O_{p}(\frac{1}{\delta _{NT}})$. Since $%
\left\Vert A\right\Vert _{F}^{2}=tr(A^{\prime }A)$ for any matrix $A$ and $%
\bar{\Lambda}_{j}^{0}$ is orthonormal,%
\begin{equation}
\left\Vert \bar{\Lambda}_{j}^{0\prime }\check{\Lambda}%
_{j}W_{jNT}-q_{j}^{0}V_{jNT}\bar{\Lambda}_{j}^{0\prime }\check{\Lambda}%
_{j}\right\Vert _{F}=\left\Vert \bar{\Lambda}_{j}^{0}(\bar{\Lambda}%
_{j}^{0\prime }\check{\Lambda}_{j}W_{jNT}-q_{j}^{0}V_{jNT}\bar{\Lambda}%
_{j}^{0\prime }\check{\Lambda}_{j})\right\Vert _{F}=o_{p}(1).  \label{x}
\end{equation}%
We next show that equation (\ref{x}) implies that $\bar{\Lambda}%
_{j}^{0\prime }\check{\Lambda}_{j}\overset{p}{\rightarrow }I_{r_{j}^{0}}$
and $W_{jNT}\overset{p}{\rightarrow }q_{j}^{0}V_{j}$.

First, the Euclidean norm of each column of $\bar{\Lambda}_{j}^{0\prime }%
\check{\Lambda}_{j}$ converges in probability to 1 and the inner product of
different columns converges in probability to 0, because%
\begin{eqnarray}
\left\Vert I_{r_{j}^{0}}-\check{\Lambda}_{j}^{\prime }\bar{\Lambda}_{j}^{0}%
\bar{\Lambda}_{j}^{0\prime }\check{\Lambda}_{j}\right\Vert _{F} &\leq &\sqrt{%
r_{j}^{0}}\left\Vert I_{r_{j}^{0}}-\check{\Lambda}_{j}^{\prime }\bar{\Lambda}%
_{j}^{0}\bar{\Lambda}_{j}^{0\prime }\check{\Lambda}_{j}\right\Vert \leq 
\sqrt{r_{j}^{0}}tr(I_{r_{j}^{0}}-\check{\Lambda}_{j}^{\prime }\bar{\Lambda}%
_{j}^{0}\bar{\Lambda}_{j}^{0\prime }\check{\Lambda}_{j})  \notag \\
&=&\sqrt{r_{j}^{0}}\left\Vert M_{\bar{\Lambda}_{j}^{0}}\check{\Lambda}%
_{j}\right\Vert _{F}^{2}=\sqrt{r_{j}^{0}}\left\Vert M_{\check{\Lambda}_{j}}%
\bar{\Lambda}_{j}^{0}\right\Vert _{F}^{2}=o_{p}(1).  \label{y}
\end{eqnarray}%
The second inequality follows from the fact that $I_{r_{j}^{0}}-\check{%
\Lambda}_{j}^{\prime }\bar{\Lambda}_{j}^{0}\bar{\Lambda}_{j}^{0\prime }%
\check{\Lambda}_{j}$ is positive semi-definite. The second to last equality
follows from the fact that both $\bar{\Lambda}_{j}^{0}$ and $\check{\Lambda}%
_{j}$ are orthonormal. The last equality follows from Theorem \ref{consis}.

Let $V_{jNT,i}$, $W_{jNT,1}$ and $(\bar{\Lambda}_{j}^{0\prime }\check{\Lambda%
}_{j})_{i1}$ denote the $i$-th diagonal element of $V_{jNT}$, the 1st
diagonal element of $W_{jNT}$ and the $(i,1)$-th element of $\bar{\Lambda}%
_{j}^{0\prime }\check{\Lambda}_{j}$, then the first column of $\bar{\Lambda}%
_{j}^{0\prime }\check{\Lambda}_{j}W_{jNT}-q_{j}^{0}V_{jNT}\bar{\Lambda}%
_{j}^{0\prime }\check{\Lambda}_{j}$ is $(W_{jNT,1}-q_{j}^{0}V_{jNT,i})(\bar{%
\Lambda}_{j}^{0\prime }\check{\Lambda}_{j})_{i1}$, $i=1,...,r_{j}^{0}$.
Equation (\ref{x}) implies that for all $i=1,...,r_{j}^{0}$, $%
(W_{jNT,1}-q_{j}^{0}V_{jNT,i})(\bar{\Lambda}_{j}^{0\prime }\check{\Lambda}%
_{j})_{i1}$ is $o_{p}(1)$. We have shown through expression (\ref{y}) that $%
\sum\nolimits_{i=1}^{r_{j}^{0}}(\bar{\Lambda}_{j}^{0\prime }\check{\Lambda}%
_{j})_{i1}^{2}{}\overset{p}{\rightarrow }1$, thus there exists at least one
certain $i$ such that $(\bar{\Lambda}_{j}^{0\prime }\check{\Lambda}%
_{j})_{i1} $ is bounded away from zero in probability. Without loss of
generality, suppose $(\bar{\Lambda}_{j}^{0\prime }\check{\Lambda}_{j})_{11}$
is bounded away from zero in probability. Since $%
(W_{jNT,1}-q_{j}^{0}V_{jNT,1})(\bar{\Lambda}_{j}^{0\prime }\check{\Lambda}%
_{j})_{11}$ is $o_{p}(1)$, we must have $%
W_{jNT,1}-q_{j}^{0}V_{jNT,1}=o_{p}(1)$. This implies that $%
W_{jNT,1}-q_{j}^{0}V_{jNT,i}$ is bounded away from zero in probability for $%
i\neq 1$ because by Assumption \ref{diff-eig}, $V_{jNT,i}\neq V_{jNT,1}$
w.p.a.1. Since $(W_{jNT,1}-q_{j}^{0}V_{jNT,i})(\bar{\Lambda}_{j}^{0\prime }%
\check{\Lambda}_{j})_{i1}$ is $o_{p}(1)$ for all $i$, we must have $(\bar{%
\Lambda}_{j}^{0\prime }\check{\Lambda}_{j})_{i1}=o_{p}(1)$ for $i\neq 1$.
This together with $\sum\nolimits_{i=1}^{r_{j}^{0}}(\bar{\Lambda}%
_{j}^{0\prime }\check{\Lambda}_{j})_{i1}^{2}{}\overset{p}{\rightarrow }1$
implies that $(\bar{\Lambda}_{j}^{0\prime }\check{\Lambda}_{j})_{11}\overset{%
p}{\rightarrow }1$. In summary, we have shown that the first column of $\bar{%
\Lambda}_{j}^{0\prime }\check{\Lambda}_{j}$ converges in probability to $%
(1,0,...,0)$.

Similarly, for the second column of $\bar{\Lambda}_{j}^{0\prime }\check{%
\Lambda}_{j}W_{jNT}-q_{j}^{0}V_{jNT}\bar{\Lambda}_{j}^{0\prime }\check{%
\Lambda}_{j}$, we can also show that one element converges in probability to
1 and the other elements are $o_{p}(1)$. Since the inner product of the
first column and the second column of $\bar{\Lambda}_{j}^{0\prime }\check{%
\Lambda}_{j}$ is $o_{p}(1)$, $(\bar{\Lambda}_{j}^{0\prime }\check{\Lambda}%
_{j})_{12}$ must be $o_{p}(1)$. Thus $(\bar{\Lambda}_{j}^{0\prime }\check{%
\Lambda}_{j})_{i2}\overset{p}{\rightarrow }1$ for certain $i\neq 1$ and $(%
\bar{\Lambda}_{j}^{0\prime }\check{\Lambda}_{j})_{i2}=o_{p}(1)$ for all the
other $i$. Without loss of generality, suppose $(\bar{\Lambda}_{j}^{0\prime }%
\check{\Lambda}_{j})_{22}\overset{p}{\rightarrow }1$ and $(\bar{\Lambda}%
_{j}^{0\prime }\check{\Lambda}_{j})_{i2}=o_{p}(1)$ for $i\neq 2$. Since $%
(W_{jNT,2}-q_{j}^{0}V_{jNT,i})(\bar{\Lambda}_{j}^{0\prime }\check{\Lambda}%
_{j})_{i2}$ is $o_{p}(1)$ for all $i$, we must have $%
W_{jNT,2}-q_{j}^{0}V_{jNT,2}=o_{p}(1)$.

Similarly, the third column of $\bar{\Lambda}_{j}^{0\prime }\check{\Lambda}%
_{j}$ converges in probability to $(0,0,1,...,0)$ and $%
W_{jNT,3}-q_{j}^{0}V_{jNT,3}=o_{p}(1)$. Repeat the argument for all columns
of $\bar{\Lambda}_{j}^{0\prime }\check{\Lambda}_{j}$, we have $\bar{\Lambda}%
_{j}^{0\prime }\check{\Lambda}_{j}\overset{p}{\rightarrow }I_{r_{j}^{0}}$
and $W_{jNT}-q_{j}^{0}V_{jNT}=o_{p}(1)$. Since $V_{jNT}\overset{p}{%
\rightarrow }V_{j}$, we have $W_{jNT}\overset{p}{\rightarrow }q_{j}^{0}V_{j}$%
.

(2) By Theorem \ref{ptjT}(1), $\left\vert \frac{1}{T}\sum\nolimits_{t=1}^{T}(%
\tilde{p}_{tj\left\vert T\right. }-1_{z_{t}=j})\right\vert \leq
\sup_{t}\left\vert \tilde{p}_{tj\left\vert T\right. }-1_{z_{t}=j}\right\vert
=o_{p}(\frac{1}{N^{\eta }})$. By Assumption \ref{state}, $\frac{1}{T}%
\sum\nolimits_{t=1}^{T}1_{z_{t}=j}\overset{p}{\rightarrow }q_{j}^{0}$. Thus $%
\frac{1}{T}\sum\nolimits_{t=1}^{T}\tilde{p}_{tj\left\vert T\right. }\overset{%
p}{\rightarrow }q_{j}^{0}$. We have shown that $W_{jNT}\overset{p}{%
\rightarrow }q_{j}^{0}V_{j}$, thus $\frac{\tilde{\Lambda}_{j}^{\prime }%
\tilde{\Lambda}_{j}}{N}=W_{jNT}/\frac{1}{T}\sum\nolimits_{t=1}^{T}\tilde{p}%
_{tj\left\vert T\right. }-\frac{\tilde{\sigma}^{2}}{N}I_{r_{j}^{0}}\overset{p%
}{\rightarrow }V_{j}$. It follows that%
\begin{eqnarray}
H_{j} &=&\frac{\sum\nolimits_{t=1}^{T}f_{t}^{0}f_{t}^{0\prime }1_{z_{t}=j}}{T%
}\frac{\Lambda _{j}^{0\prime }\tilde{\Lambda}_{j}}{N}W_{jNT}^{-1}  \notag \\
&=&(\Lambda _{j}^{0\prime }\Lambda _{j}^{0})^{-\frac{1}{2}}\frac{(\Lambda
_{j}^{0\prime }\Lambda _{j}^{0})^{\frac{1}{2}}(\sum%
\nolimits_{t=1}^{T}f_{t}^{0}f_{t}^{0\prime }1_{z_{t}=j})(\Lambda
_{j}^{0\prime }\Lambda _{j}^{0})^{\frac{1}{2}}}{NT}(\Lambda _{j}^{0\prime
}\Lambda _{j}^{0})^{-\frac{1}{2}}\Lambda _{j}^{0\prime }\check{\Lambda}_{j}(%
\tilde{\Lambda}_{j}^{\prime }\tilde{\Lambda}_{j})^{\frac{1}{2}}W_{jNT}^{-1} 
\notag \\
&=&(\frac{\Lambda _{j}^{0\prime }\Lambda _{j}^{0}}{N})^{-\frac{1}{2}%
}\Upsilon _{jNT}V_{jNT}(\bar{\Lambda}_{j}^{0\prime }\check{\Lambda}_{j})(%
\frac{\tilde{\Lambda}_{j}^{\prime }\tilde{\Lambda}_{j}}{N})^{\frac{1}{2}%
}W_{jNT}^{-1}q_{j}^{0}\overset{p}{\rightarrow }\Sigma _{\Lambda _{j}}^{-%
\frac{1}{2}}\Upsilon _{j}V_{j}^{\frac{1}{2}}.
\end{eqnarray}
\end{proof}

\begin{description}
\item[Proof of Theorem \protect\ref{rate}] 
\end{description}

\begin{proof}
From equation (\ref{z}), we have $\tilde{\Lambda}_{j}W_{jNT}=\Lambda _{j}^{0}%
\frac{(\sum\nolimits_{t=1}^{T}f_{t}^{0}f_{t}^{0\prime }1_{z_{t}=j})\Lambda
_{j}^{0\prime }\tilde{\Lambda}_{j}}{NT}+(I+II+III+IV+D)(\tilde{\Lambda}%
_{j}^{\prime }\tilde{\Lambda}_{j})^{\frac{1}{2}}$, i.e.,%
\begin{equation}
\tilde{\Lambda}_{j}-\Lambda _{j}^{0}H_{j}=(I+II+III+IV+D)(\tilde{\Lambda}%
_{j}^{\prime }\tilde{\Lambda}_{j})^{\frac{1}{2}}W_{jNT}^{-1}.  \label{aa}
\end{equation}%
We have shown in Proposition \ref{VjHj} that $\left\Vert
I+II+III+IV+D\right\Vert _{F}^{2}=O_{p}(\frac{1}{\delta _{NT}^{2}})$, $%
W_{jNT}\overset{p}{\rightarrow }q_{j}^{0}V_{j}$ and $\frac{\tilde{\Lambda}%
_{j}^{\prime }\tilde{\Lambda}_{j}}{N}\overset{p}{\rightarrow }V_{j}$. Thus $%
\frac{1}{N}\left\Vert \tilde{\Lambda}_{j}-\Lambda _{j}^{0}H_{j}\right\Vert
_{F}^{2}=O_{p}(\frac{1}{\delta _{NT}^{2}})$.
\end{proof}

\begin{description}
\item[Proof of Theorem \protect\ref{ld}] 
\end{description}

\begin{proof}
Let $I_{i}$, $II_{i}$, $III_{i}$, $IV_{i}$ and $D_{i}$ denote the $i$-th row
of $I$, $II$, $III$, $IV$ and $D$ respectively. From equation (\ref{aa}), we
have%
\begin{equation*}
\tilde{\lambda}_{ji}^{\prime }-\lambda _{ji}^{0\prime
}H_{j}=(I_{i}+II_{i}+III_{i}+IV_{i}+D_{i})(\tilde{\Lambda}_{j}^{\prime }%
\tilde{\Lambda}_{j})^{\frac{1}{2}}W_{jNT}^{-1}.
\end{equation*}%
By Assumptions \ref{loadings}(1) and \ref{error}(2) and Theorem \ref{rate}, $%
I_{i}(\tilde{\Lambda}_{j}^{\prime }\tilde{\Lambda}_{j})^{\frac{1}{2}}=O_{p}(%
\frac{1}{\sqrt{N}\delta _{NT}})$. By Assumptions \ref{error}(4) and \ref%
{dist}(1) and Theorem \ref{rate}, $II_{i}(\tilde{\Lambda}_{j}^{\prime }%
\tilde{\Lambda}_{j})^{\frac{1}{2}}=O_{p}(\frac{1}{\sqrt{T}\delta _{NT}})$.
By Assumption \ref{error}(2) and Theorem \ref{rate}, $III_{i}(\tilde{\Lambda}%
_{j}^{\prime }\tilde{\Lambda}_{j})^{\frac{1}{2}}=\frac{\sum%
\nolimits_{t=1}^{T}e_{it}f_{t}^{0\prime }1_{z_{t}=j}\Lambda _{j}^{0\prime
}\Lambda _{j}^{0}H_{j}}{NT}+O_{p}(\frac{1}{\sqrt{T}\delta _{NT}})$. By
Assumptions \ref{error}(2) and \ref{dist}(2) and Theorem \ref{rate}, $IV_{i}(%
\tilde{\Lambda}_{j}^{\prime }\tilde{\Lambda}_{j})^{\frac{1}{2}}=O_{p}(\frac{1%
}{\sqrt{T}\delta _{NT}})$. The detailed calculation of these four terms is
similar to the proof of Lemma A.2 in Bai (2003), hence omitted here. For the
term $D_{i}(\tilde{\Lambda}_{j}^{\prime }\tilde{\Lambda}_{j})^{\frac{1}{2}}$%
, we have%
\begin{eqnarray*}
\left\Vert D_{i}(\tilde{\Lambda}_{j}^{\prime }\tilde{\Lambda}_{j})^{\frac{1}{%
2}}\right\Vert ^{2} &=&\left\Vert \frac{1}{NT}\sum\nolimits_{t=1}^{T}(\tilde{%
p}_{tj\left\vert T\right. }-1_{z_{t}=j})x_{it}x_{t}^{\prime }\tilde{\Lambda}%
_{j}\right\Vert ^{2} \\
&\leq &\frac{1}{N^{2}T^{2}}\sum\nolimits_{t=1}^{T}(\tilde{p}_{tj\left\vert
T\right. }-1_{z_{t}=j})^{2}x_{it}^{2}\sum\nolimits_{t=1}^{T}\left\Vert
x_{t}\right\Vert ^{2}\left\Vert \tilde{\Lambda}_{j}\right\Vert _{F}^{2} \\
&\leq &\sup_{t}\left\vert \tilde{p}_{tj\left\vert T\right.
}-1_{z_{t}=j}\right\vert ^{2}\frac{\sum\nolimits_{t=1}^{T}x_{it}^{2}}{T}%
\frac{\sum\nolimits_{t=1}^{T}\left\Vert x_{t}\right\Vert ^{2}}{NT}\frac{%
\left\Vert \tilde{\Lambda}_{j}\right\Vert _{F}^{2}}{N}=o_{p}(\frac{1}{%
N^{2\eta }}),
\end{eqnarray*}%
where the last equality follows from Theorem \ref{ptjT}. We have shown in
Proposition \ref{VjHj} that $W_{jNT}\overset{p}{\rightarrow }q_{j}^{0}V_{j}$%
, thus $W_{jNT}^{-1}=O_{p}(1)$. It follows that%
\begin{equation*}
\sqrt{Tq_{j}^{0}}(\tilde{\lambda}_{ji}-H_{j}^{\prime }\lambda
_{ji}^{0})=q_{j}^{0}W_{jNT}^{-1}H_{j}^{\prime }\frac{\Lambda _{j}^{0\prime
}\Lambda _{j}^{0}}{N}\frac{\sum\nolimits_{t=1}^{T}f_{t}^{0}e_{it}1_{z_{t}=j}%
}{\sqrt{Tq_{j}^{0}}}+O_{p}(\frac{\sqrt{T}}{N})+o_{p}(1).
\end{equation*}%
Thus by Proposition \ref{VjHj}\ and Assumption \ref{dist}(3),%
\begin{equation*}
\sqrt{Tq_{j}^{0}}(\tilde{\lambda}_{ji}-H_{j}^{\prime }\lambda _{ji}^{0})%
\overset{d}{\rightarrow }\mathcal{N}(0,V_{j}^{-\frac{1}{2}}\Upsilon
_{j}^{\prime }\Sigma _{\Lambda _{j}}^{\frac{1}{2}}\Phi _{ji}\Sigma _{\Lambda
_{j}}^{\frac{1}{2}}\Upsilon _{j}V_{j}^{-\frac{1}{2}})\text{ when }\sqrt{T}%
/N\rightarrow 0.
\end{equation*}
\end{proof}

\section{Details for Theorem \protect\ref{factor}}

\begin{lem}
\label{lambdahat}Under Assumptions \ref{factors}-\ref{dist}, and assume $%
T^{^{\frac{16}{\alpha }}}/N\rightarrow 0$ and $T^{\frac{2}{\alpha }+\frac{2}{%
\beta }}/N\rightarrow 0$,

(1) $\frac{1}{N}e_{t}^{\prime }(\tilde{\Lambda}_{j}-\Lambda
_{j}^{0}H_{j})=O_{p}(\frac{1}{\delta _{NT}^{2}})$ for each $j$ and $t$,

(2) $\frac{1}{N}\Lambda _{j}^{0\prime }(\tilde{\Lambda}_{j}-\Lambda
_{j}^{0}H_{j})=O_{p}(\frac{1}{\delta _{NT}^{2}})$ for each $j$.
\end{lem}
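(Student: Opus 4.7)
Both parts rest on the master identity~(\ref{aa}), namely
\[
\hat{\Lambda}_j - \Lambda_j^0 H_j = (I + II + III + IV + D)\,(\hat{\Lambda}_j'\hat{\Lambda}_j)^{1/2}W_{jNT}^{-1},
\]
combined with the facts $W_{jNT}^{-1}=O_p(1)$ and $\hat{\Lambda}_j'\hat{\Lambda}_j/N = O_p(1)$ from Proposition~\ref{VjHj}. For part~(1) the task reduces to bounding the five pieces
\[
\frac{1}{N}e_t'\cdot I\cdot(\hat{\Lambda}_j'\hat{\Lambda}_j)^{1/2},\ \ldots,\ \frac{1}{N}e_t'\cdot D\cdot(\hat{\Lambda}_j'\hat{\Lambda}_j)^{1/2},
\]
and analogously for part~(2) with $\Lambda_j^{0\prime}/N$ replacing $e_t'/N$. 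For each classical term $I,\ldots,IV$ I would split $\hat{\Lambda}_j = \Lambda_j^0 H_j + (\hat{\Lambda}_j - \Lambda_j^0 H_j)$ inside the expression, analyse the leading piece via Assumptions~\ref{error}(2)/(4), \ref{moments}(1)/(2) and \ref{dist}(1), and control the residual using the Frobenius-norm rate $\|\hat{\Lambda}_j-\Lambda_j^0 H_j\|_F = O_p(\sqrt{N}/\delta_{NT})$ supplied by Theorem~\ref{rate}.

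Concretely, for $I$ I would use the row-sum condition in Assumption~\ref{error}(2) together with $\|e_t\|=O_p(\sqrt{N})$ to get a contribution of order $O_p(1/N)$. For $II$ I would invoke the centered-quadratic bound from Assumption~\ref{error}(4) and the bilinear-form bound in Assumption~\ref{dist}(1) (reading the ``$\lambda_i^0$'' there as $\lambda_k^0$, i.e.\ the loading of the inner summation index) to conclude that the leading piece is $O_p(1/\sqrt{NT})$ and the residual piece is $O_p(1/(\sqrt{T}\,\delta_{NT}))$. For $III$, the $e_t'$ projection splits the diagonal contribution $\|e_t\|^2 f_t^{0\prime}1_{z_t=j}=O_p(N)$ from the off-diagonal piece $\sum_{s\neq t}(e_t'e_s)f_s^{0\prime}1_{z_s=j}$, the latter bounded by $O_p(\sqrt{NT})$ via Cauchy--Schwarz against Assumption~\ref{moments}(2); the $\Lambda_j^{0\prime}$ projection instead relies on Assumptions~\ref{moments}(1) and~\ref{moments}(2) to give $\|\sum_s \Lambda_j^{0\prime}e_s f_s^{0\prime}1_{z_s=j}\|=O_p(\sqrt{NT})$. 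Term $IV$ is symmetric to $III$ and handled the same way. All of the resulting orders $O_p(1/N)$, $O_p(1/T)$, $O_p(1/\sqrt{NT})$ and $O_p(1/(\sqrt{T}\,\delta_{NT}))$ are absorbed into $O_p(1/\delta_{NT}^2)$ because $\sqrt{NT}\ge\min(N,T)=\delta_{NT}^2$ and $\sqrt{T}\,\delta_{NT}\ge\delta_{NT}^2$.

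The new term $D$ is the one that actually requires Theorem~\ref{ptj}. Factoring out the sharp uniform rate $\sup_s|\hat{p}_{sj}-1_{z_s=j}|=o_p(N^{-\eta})$ (valid under the rate conditions $T^{16/\alpha}/N\to 0$ and $T^{2/\alpha+2/\beta}/N\to 0$ assumed in the hypotheses) leaves, by Cauchy--Schwarz,
\[
\frac{1}{N^2T}\sum_s|e_t'x_s|\cdot\|x_s'\hat{\Lambda}_j\|
\;\le\;\frac{1}{N^2T}\Bigl(\sum_s(e_t'x_s)^2\Bigr)^{1/2}\Bigl(\sum_s\|x_s'\hat{\Lambda}_j\|^2\Bigr)^{1/2}
\;=\;O_p(1),
\]
using $\|e_t\|=O_p(\sqrt{N})$, $\|\hat{\Lambda}_j\|=O_p(\sqrt{N})$ and $\sum_s\|x_s\|^2=O_p(NT)$. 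So term $D$ contributes only $o_p(N^{-\eta})$, which for any $\eta\ge 1$ is negligible compared with $1/\delta_{NT}^2$.

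\textbf{Main obstacle.} The subtlest point is part~(2): when we split $\Lambda_j^{0\prime}\hat{\Lambda}_j = \Lambda_j^{0\prime}\Lambda_j^0 H_j + \Lambda_j^{0\prime}(\hat{\Lambda}_j-\Lambda_j^0 H_j)$ inside term $III$, the remainder is precisely the quantity being estimated, raising the spectre of a circular argument. This is resolved by bounding $\Lambda_j^{0\prime}(\hat{\Lambda}_j-\Lambda_j^0 H_j)$ with the crude, non-circular estimate $\|\Lambda_j^0\|\cdot\|\hat{\Lambda}_j-\Lambda_j^0 H_j\|_F = O_p(N/\delta_{NT})$, which, combined with $\|\sum_s\Lambda_j^{0\prime}e_s f_s^{0\prime}1_{z_s=j}\|=O_p(\sqrt{NT})$, still delivers the required bound $O_p(1/(\sqrt{NT}\,\delta_{NT}))\le O_p(1/\delta_{NT}^2)$. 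Everything else is careful but routine term-by-term bookkeeping.
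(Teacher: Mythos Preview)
Your overall plan coincides with the paper's: start from identity~(\ref{aa}), substitute $\hat\Lambda_j=\Lambda_j^0H_j+(\hat\Lambda_j-\Lambda_j^0H_j)$ inside each of $I,\ldots,IV,D$, and bound the pieces separately using Theorem~\ref{rate} for the residuals. Your handling of $I$, $II$, $IV$ and $D$ is essentially that of the paper.

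The genuine gap is in term $III$ for part~(1). Cauchy--Schwarz against Assumption~\ref{moments}(2) does \emph{not} deliver your claimed $\sum_{s\neq t}(e_t'e_s)f_s^{0}1_{z_s=j}=O_p(\sqrt{NT})$. Writing the off-diagonal piece as $\sum_i e_{it}\bigl[\sum_{s}e_{is}f_s^{0}1_{z_s=j}\bigr]$ and applying Cauchy--Schwarz over $i$ gives
\[
\|e_t\|\Bigl(\sum_i\Bigl\|\sum_{s}e_{is}f_s^{0}1_{z_s=j}\Bigr\|^2\Bigr)^{1/2}
=O_p(\sqrt N)\cdot O_p(\sqrt{NT})=O_p(N\sqrt T),
\]
so after division by $NT$ you obtain only $O_p(1/\sqrt T)$, which is \emph{not} $O_p(1/\delta_{NT}^{2})$. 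The paper instead writes $e_{it}e_{is}=\gamma_{i,ts}+(e_{it}e_{is}-\gamma_{i,ts})$ and controls the expectation part via Assumption~\ref{error}(3) (summable serial covariances, giving $O_p(1/T)$) and the centered part via the second half of Assumption~\ref{dist}(1) (giving $O_p(1/\sqrt{NT})$); both are $O_p(1/\delta_{NT}^2)$. Your diagonal/off-diagonal split becomes unnecessary once this decomposition is used. A parallel misattribution appears in part~(2): the rate $\bigl\|\sum_s\Lambda_j^{0\prime}e_sf_s^{0\prime}1_{z_s=j}\bigr\|=O_p(\sqrt{NT})$ you state for term $III$ is correct, but it follows from Assumption~\ref{dist}(2), not from Assumptions~\ref{moments}(1)--(2) via Cauchy--Schwarz, which again loses a factor $\sqrt N$.
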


\begin{proof}
Part (1): From equation (\ref{aa}), we have $\frac{1}{N}e_{t}^{\prime }(%
\tilde{\Lambda}_{j}-\Lambda _{j}^{0}H_{j})=\frac{1}{N}e_{t}^{\prime
}(I+II+III+IV+D)(\tilde{\Lambda}_{j}^{\prime }\tilde{\Lambda}_{j})^{\frac{1}{%
2}}W_{jNT}^{-1}$. Consider each term one by one. 
\begin{equation*}
\frac{e_{t}^{\prime }I(\tilde{\Lambda}_{j}^{\prime }\tilde{\Lambda}_{j})^{%
\frac{1}{2}}}{N}=e_{t}^{\prime }\frac{\sum\nolimits_{t=1}^{T}\mathbb{E(}%
e_{t}e_{t}^{\prime })1_{z_{t}=j}}{N^{2}T}\Lambda _{j}^{0}H_{j}+e_{t}^{\prime
}\frac{\sum\nolimits_{t=1}^{T}\mathbb{E(}e_{t}e_{t}^{\prime })1_{z_{t}=j}}{%
N^{2}T}(\tilde{\Lambda}_{j}-\Lambda _{j}^{0}H_{j}).
\end{equation*}%
By Assumption \ref{error}(1) and \ref{error}(2), the first term is $O_{p}(%
\frac{1}{N})$. By Assumption \ref{error}(2) and Theorem \ref{rate}, the
second term is $O_{p}(\frac{1}{\sqrt{N}\delta _{NT}})$.%
\begin{equation*}
\frac{e_{t}^{\prime }II(\tilde{\Lambda}_{j}^{\prime }\tilde{\Lambda}_{j})^{%
\frac{1}{2}}}{N}=e_{t}^{\prime }\frac{\sum\nolimits_{t=1}^{T}(e_{t}e_{t}^{%
\prime }-\mathbb{E(}e_{t}e_{t}^{\prime }))1_{z_{t}=j}}{N^{2}T}\Lambda
_{j}^{0}H_{j}+e_{t}^{\prime }\frac{\sum\nolimits_{t=1}^{T}(e_{t}e_{t}^{%
\prime }-\mathbb{E(}e_{t}e_{t}^{\prime }))1_{z_{t}=j}}{N^{2}T}(\tilde{\Lambda%
}_{j}-\Lambda _{j}^{0}H_{j}).
\end{equation*}%
By Assumption \ref{dist}(1), the first term is $O_{p}(\frac{1}{\sqrt{NT}})$.
By Assumption \ref{error}(4) and Theorem \ref{rate}, the second term is $%
O_{p}(\frac{1}{\sqrt{T}\delta _{NT}})$.%
\begin{equation*}
\frac{e_{t}^{\prime }III(\tilde{\Lambda}_{j}^{\prime }\tilde{\Lambda}_{j})^{%
\frac{1}{2}}}{N}=e_{t}^{\prime }\frac{\sum\nolimits_{t=1}^{T}e_{t}f_{t}^{0%
\prime }1_{z_{t}=j}\Lambda _{j}^{0\prime }}{N^{2}T}\Lambda
_{j}^{0}H_{j}+e_{t}^{\prime }\frac{\sum\nolimits_{t=1}^{T}e_{t}f_{t}^{0%
\prime }1_{z_{t}=j}\Lambda _{j}^{0\prime }}{N^{2}T}(\tilde{\Lambda}%
_{j}-\Lambda _{j}^{0}H_{j}).
\end{equation*}%
The first term is $O_{p}(\frac{1}{\sqrt{NT}})+O_{p}(\frac{1}{T})$ since $%
\frac{1}{NT}e_{t}^{\prime }\sum\nolimits_{t=1}^{T}e_{t}f_{t}^{0\prime
}1_{z_{t}=j}=O_{p}(\frac{1}{\sqrt{NT}})+O_{p}(\frac{1}{T})$, which follows
from Assumptions \ref{error}(3), \ref{dist}(1) and $e_{it}e_{is}=\gamma
_{i,ts}+(e_{it}e_{is}-\gamma _{i,ts})$. By Theorem \ref{rate}, the second
term is $O_{p}(\frac{1}{\sqrt{T}\delta _{NT}})$.%
\begin{equation*}
\frac{e_{t}^{\prime }IV(\tilde{\Lambda}_{j}^{\prime }\tilde{\Lambda}_{j})^{%
\frac{1}{2}}}{N}=e_{t}^{\prime }\frac{\Lambda
_{j}^{0}\sum\nolimits_{t=1}^{T}f_{t}^{0}e_{t}^{\prime }1_{z_{t}=j}}{N^{2}T}%
\Lambda _{j}^{0}H_{j}+e_{t}^{\prime }\frac{\Lambda
_{j}^{0}\sum\nolimits_{t=1}^{T}f_{t}^{0}e_{t}^{\prime }1_{z_{t}=j}}{N^{2}T}(%
\tilde{\Lambda}_{j}-\Lambda _{j}^{0}H_{j}).
\end{equation*}%
The first term is $O_{p}(\frac{1}{\sqrt{NT}})$ since by Assumption \ref{dist}%
(2), $\frac{1}{NT}\sum\nolimits_{t=1}^{T}f_{t}^{0}e_{t}^{\prime
}1_{z_{t}=j}\Lambda _{j}^{0}=O_{p}(\frac{1}{\sqrt{NT}})$. By Theorem \ref%
{rate}, the second term is $O_{p}(\frac{1}{\sqrt{T}\delta _{NT}})$.%
\begin{equation*}
\left\Vert \frac{e_{t}^{\prime }D(\tilde{\Lambda}_{j}^{\prime }\tilde{\Lambda%
}_{j})^{\frac{1}{2}}}{N}\right\Vert \leq \left\Vert \frac{e_{t}^{\prime }}{%
\sqrt{N}}\right\Vert \left\Vert D\right\Vert _{F}\left\Vert (\frac{\tilde{%
\Lambda}_{j}^{\prime }\tilde{\Lambda}_{j}}{N})^{\frac{1}{2}}\right\Vert .
\end{equation*}%
Thus from equation (\ref{aw}), this term is $o_{p}(\frac{1}{N^{\eta }})$.
Finally, note that $W_{jNT}^{-1}\overset{d}{\rightarrow }\frac{1}{q_{j}^{0}}%
V_{j}^{-1}$, part (1) is proved.

Part (2) can be proved similarly.
\end{proof}

\begin{description}
\item[Proof of Theorem \protect\ref{factor}] 
\end{description}

\begin{proof}
First, by Woodbury identity,%
\begin{eqnarray*}
\tilde{f}_{t} &=&\sum\nolimits_{j=1}^{J^{0}}(\tilde{\sigma}^{2}I_{r_{j}^{0}}+%
\tilde{\Lambda}_{j}^{\prime }\tilde{\Lambda}_{j})^{-1}\tilde{\Lambda}%
_{j}^{\prime }x_{t}\tilde{p}_{tj\left\vert T\right. } \\
&=&\sum\nolimits_{k=1}^{J^{0}}\sum\nolimits_{j=1}^{J^{0}}\tilde{p}%
_{tj\left\vert T\right. }(\tilde{\sigma}^{2}I_{r_{j}^{0}}+\tilde{\Lambda}%
_{j}^{\prime }\tilde{\Lambda}_{j})^{-1}\tilde{\Lambda}_{j}^{\prime }\Lambda
_{k}^{0}f_{t}^{0}1_{z_{t}=k}+\sum\nolimits_{j=1}^{J^{0}}\tilde{p}%
_{tj\left\vert T\right. }(\tilde{\sigma}^{2}I_{r_{j}^{0}}+\tilde{\Lambda}%
_{j}^{\prime }\tilde{\Lambda}_{j})^{-1}\tilde{\Lambda}_{j}^{\prime }e_{t}.
\end{eqnarray*}%
When $z_{t}=k$, we have%
\begin{eqnarray*}
&&\sum\nolimits_{j=1}^{J^{0}}\tilde{p}_{tj\left\vert T\right. }(\tilde{\sigma%
}^{2}I_{r_{j}^{0}}+\tilde{\Lambda}_{j}^{\prime }\tilde{\Lambda}_{j})^{-1}%
\tilde{\Lambda}_{j}^{\prime }\Lambda _{k}^{0}f_{t}^{0} \\
&=&(\tilde{\sigma}^{2}I_{r_{k}^{0}}+\tilde{\Lambda}_{k}^{\prime }\tilde{%
\Lambda}_{k})^{-1}\tilde{\Lambda}_{k}^{\prime }\Lambda _{k}^{0}f_{t}^{0} \\
&&+(\tilde{p}_{tk\left\vert T\right. }-1)(\tilde{\sigma}^{2}I_{r_{k}^{0}}+%
\tilde{\Lambda}_{k}^{\prime }\tilde{\Lambda}_{k})^{-1}\tilde{\Lambda}%
_{k}^{\prime }\Lambda _{k}^{0}f_{t}^{0}+\sum\nolimits_{j\neq k}\tilde{p}%
_{tj\left\vert T\right. }(\tilde{\sigma}^{2}I_{r_{j}^{0}}+\tilde{\Lambda}%
_{j}^{\prime }\tilde{\Lambda}_{j})^{-1}\tilde{\Lambda}_{j}^{\prime }\Lambda
_{k}^{0}f_{t}^{0} \\
&=&H_{k}^{-1}f_{t}^{0}+(\tilde{\sigma}^{2}I_{r_{k}^{0}}+\tilde{\Lambda}%
_{k}^{\prime }\tilde{\Lambda}_{k})^{-1}\tilde{\Lambda}_{k}^{\prime }(\Lambda
_{k}^{0}H_{k}-\tilde{\Lambda}_{k})H_{k}^{-1}f_{t}^{0}-(\tilde{\sigma}%
^{2}I_{r_{k}^{0}}+\tilde{\Lambda}_{k}^{\prime }\tilde{\Lambda}_{k})^{-1}%
\tilde{\sigma}^{2}H_{k}^{-1}f_{t}^{0} \\
&&+(\tilde{p}_{tk\left\vert T\right. }-1)(\tilde{\sigma}^{2}I_{r_{k}^{0}}+%
\tilde{\Lambda}_{k}^{\prime }\tilde{\Lambda}_{k})^{-1}\tilde{\Lambda}%
_{k}^{\prime }\Lambda _{k}^{0}f_{t}^{0}+\sum\nolimits_{j\neq k}\tilde{p}%
_{tj\left\vert T\right. }(\tilde{\sigma}^{2}I_{r_{j}^{0}}+\tilde{\Lambda}%
_{j}^{\prime }\tilde{\Lambda}_{j})^{-1}\tilde{\Lambda}_{j}^{\prime }\Lambda
_{k}^{0}f_{t}^{0} \\
&\equiv &H_{k}^{-1}f_{t}^{0}+B_{k1t}-B_{k2t}+B_{k3t}+B_{k4t},
\end{eqnarray*}%
and%
\begin{eqnarray*}
&&\sum\nolimits_{j=1}^{J^{0}}\tilde{p}_{tj\left\vert T\right. }(\tilde{\sigma%
}^{2}I_{r_{j}^{0}}+\tilde{\Lambda}_{j}^{\prime }\tilde{\Lambda}_{j})^{-1}%
\tilde{\Lambda}_{j}^{\prime }e_{t} \\
&=&(\tilde{\sigma}^{2}I_{r_{k}^{0}}+\tilde{\Lambda}_{k}^{\prime }\tilde{%
\Lambda}_{k})^{-1}(\tilde{\Lambda}_{k}-\Lambda _{k}^{0}H_{k})^{\prime
}e_{t}+(\tilde{\sigma}^{2}I_{r_{k}^{0}}+\tilde{\Lambda}_{k}^{\prime }\tilde{%
\Lambda}_{k})^{-1}H_{k}^{\prime }\Lambda _{k}^{0\prime }e_{t} \\
&&+(\tilde{p}_{tk\left\vert T\right. }-1)(\tilde{\sigma}^{2}I_{r_{k}^{0}}+%
\tilde{\Lambda}_{k}^{\prime }\tilde{\Lambda}_{k})^{-1}\tilde{\Lambda}%
_{k}^{\prime }e_{t}+\sum\nolimits_{j\neq k}^{J^{0}}\tilde{p}_{tj\left\vert
T\right. }(\tilde{\sigma}^{2}I_{r_{j}^{0}}+\tilde{\Lambda}_{j}^{\prime }%
\tilde{\Lambda}_{j})^{-1}\tilde{\Lambda}_{j}^{\prime }e_{t} \\
&\equiv &C_{k1t}+C_{k2t}+C_{k3t}+C_{k4t}.
\end{eqnarray*}%
It follows that $\tilde{f}%
_{t}-H_{z_{t}}^{-1}f_{t}^{0}=B_{z_{t}1t}-B_{z_{t}2t}+B_{z_{t}3t}+B_{z_{t}4t}+C_{z_{t}1t}+C_{z_{t}2t}+C_{z_{t}3t}+C_{z_{t}4t} 
$.

Proof of part (1): First consider $B_{z_{t}1t}$.%
\begin{equation*}
\frac{\sum\nolimits_{t=1}^{T}\left\Vert B_{z_{t}1t}\right\Vert ^{2}}{T}\leq
\sum\nolimits_{j=1}^{J^{0}}\left\Vert (\tilde{\sigma}^{2}I_{r_{j}^{0}}+%
\tilde{\Lambda}_{j}^{\prime }\tilde{\Lambda}_{j})^{-1}\tilde{\Lambda}%
_{j}^{\prime }(\Lambda _{j}^{0}H_{j}-\tilde{\Lambda}_{j})H_{j}^{-1}\right%
\Vert ^{2}\frac{\sum\nolimits_{t=1}^{T}\left\Vert f_{t}^{0}\right\Vert ^{2}}{%
T}=O_{p}(\frac{1}{\delta _{NT}^{2}}),
\end{equation*}%
where the equality is due to the following facts:

(1) By Proposition \ref{VjHj}(1), $\frac{1}{N}(\tilde{\sigma}%
^{2}I_{r_{j}^{0}}+\tilde{\Lambda}_{j}^{\prime }\tilde{\Lambda}_{j})=W_{jNT}/%
\frac{1}{T}\sum\nolimits_{t=1}^{T}\tilde{p}_{tj\left\vert T\right. }\overset{%
p}{\rightarrow }V_{j}$ for all $j$.

(2) $\left\Vert \frac{1}{\sqrt{N}}\tilde{\Lambda}_{j}^{\prime }\right\Vert
_{F}=\sqrt{tr(\frac{1}{N}\tilde{\Lambda}_{j}^{\prime }\tilde{\Lambda}_{j})}%
\overset{p}{\rightarrow }\sqrt{tr(V_{j})}$ for all $j$.

(3) By Theorem \ref{rate}, $\left\Vert \frac{1}{\sqrt{N}}(\Lambda
_{j}^{0}H_{j}-\tilde{\Lambda}_{j})\right\Vert =O_{p}(\frac{1}{\delta _{NT}})$
for all $j$.

(4) By Proposition \ref{VjHj}(1), $\left\Vert H_{j}^{-1}\right\Vert
=O_{p}(1) $ for all $j$.

(5) $\frac{1}{T}\sum\nolimits_{t=1}^{T}\left\Vert f_{t}^{0}\right\Vert
^{2}=O_{p}(1)$ by Assumption \ref{factors}.

It is easy to see that $\frac{1}{T}\sum\nolimits_{t=1}^{T}\left\Vert
B_{z_{t}2t}\right\Vert ^{2}=O_{p}(\frac{1}{N^{2}})$. For $B_{z_{t}3t}$, we
have%
\begin{equation*}
\frac{\sum\nolimits_{t=1}^{T}\left\Vert B_{z_{t}3t}\right\Vert ^{2}}{T}\leq
\sup_{t}\left\Vert \tilde{p}_{tz_{t}\left\vert T\right. }-1\right\Vert
^{2}\left\Vert (\tilde{\sigma}^{2}I_{r_{z_{t}}^{0}}+\tilde{\Lambda}%
_{z_{t}}^{\prime }\tilde{\Lambda}_{z_{t}})^{-1}\tilde{\Lambda}%
_{z_{t}}^{\prime }\Lambda _{z_{t}}^{0}\right\Vert ^{2}\frac{%
\sum\nolimits_{t=1}^{T}\left\Vert f_{t}^{0}\right\Vert ^{2}}{T}=o_{p}(\frac{1%
}{N^{2\eta }}),
\end{equation*}%
where the equality is due to $\sup_{t}\left\Vert \tilde{p}_{tz_{t}\left\vert
T\right. }-1\right\Vert \leq \sup_{j}\sup_{t}\left\Vert \tilde{p}%
_{tj\left\vert T\right. }-1_{z_{t}=j}\right\Vert =o_{p}(\frac{1}{N^{\eta }})$
by Theorem \ref{ptjT}. It is easy to see that $\frac{1}{T}%
\sum\nolimits_{t=1}^{T}\left\Vert B_{z_{t}4t}\right\Vert ^{2}=o_{p}(\frac{1}{%
N^{2\eta }})$. Similarly, we can show that $\frac{\sum\nolimits_{t=1}^{T}%
\left\Vert C_{z_{t}1t}\right\Vert ^{2}}{T}$ is $O_{p}(\frac{1}{\delta
_{NT}^{2}})$, $\frac{\sum\nolimits_{t=1}^{T}\left\Vert
C_{z_{t}2t}\right\Vert ^{2}}{T}$ is $O_{p}(\frac{1}{N})$, and both $\frac{%
\sum\nolimits_{t=1}^{T}\left\Vert C_{z_{t}3t}\right\Vert ^{2}}{T}$ and $%
\frac{\sum\nolimits_{t=1}^{T}\left\Vert C_{z_{t}4t}\right\Vert ^{2}}{T}$ are 
$o_{p}(\frac{1}{N^{2\eta }})(O_{p}(\frac{1}{N})+O_{p}(\frac{1}{\delta
_{NT}^{2}}))$. Thus $\frac{1}{T}\sum\nolimits_{t=1}^{T}\left\Vert \tilde{f}%
_{t}-H_{z_{t}}^{-1}f_{t}^{0}\right\Vert ^{2}=O_{p}(\frac{1}{\delta _{NT}^{2}}%
)$.

Proof of part (2): By Theorem \ref{rate} and Lemma \ref{lambdahat}(2), $%
\tilde{\Lambda}_{k}^{\prime }(\Lambda _{k}^{0}H_{k}-\tilde{\Lambda}%
_{k})=O_{p}(\frac{1}{\delta _{NT}^{2}})$ for any $k$. This together with
fact (1) and fact (4) listed above implies $B_{z_{t}1t}=O_{p}(\frac{1}{%
\delta _{NT}^{2}})$. Similarly, it is easy to see that $B_{z_{t}2t}=O_{p}(%
\frac{1}{N})$, $B_{z_{t}3t}=o_{p}(\frac{1}{N^{\eta }})$, $B_{z_{t}4t}=o_{p}(%
\frac{1}{N^{\eta }})$, $C_{z_{t}1t}=O_{p}(\frac{1}{\delta _{NT}^{2}})$, $%
C_{z_{t}2t}=O_{p}(\frac{1}{\sqrt{N}})$, $C_{z_{t}3t}=o_{p}(\frac{1}{N^{\eta }%
})$ and $C_{z_{t}4t}=o_{p}(\frac{1}{N^{\eta }})$. The leading term is $%
C_{z_{t}2t}$. Since $\frac{\tilde{\Lambda}_{z_{t}}^{\prime }\tilde{\Lambda}%
_{z_{t}}}{N}\overset{p}{\rightarrow }V_{z_{t}}$, $H_{z_{t}}\overset{p}{%
\rightarrow }\Sigma _{\Lambda _{z_{t}}}^{-\frac{1}{2}}\Upsilon
_{z_{t}}V_{z_{t}}^{\frac{1}{2}}$ and $\frac{1}{\sqrt{N}}\Lambda
_{z_{t}}^{0\prime }e_{t}\overset{d}{\rightarrow }\mathcal{N}(0,\Gamma
_{z_{t}t})$ by Assumption \ref{dist}(4), we have $\sqrt{N}(\tilde{f}%
_{t}-H_{z_{t}}^{-1}f_{t}^{0})\overset{d}{\rightarrow }\mathcal{N}%
(0,V_{z_{t}}^{-\frac{1}{2}}\Upsilon _{z_{t}}^{\prime }\Sigma _{\Lambda
_{z_{t}}}^{-\frac{1}{2}}\Gamma _{z_{t}t}\Sigma _{\Lambda _{z_{t}}}^{-\frac{1%
}{2}}\Upsilon _{z_{t}}V_{z_{t}}^{-\frac{1}{2}})$.
\end{proof}

\section{Proof of Theorem \protect\ref{Q}}

\begin{proof}
First, $\tilde{Q}_{jk}=\sum\nolimits_{t=2}^{T}\tilde{p}_{tjk\left\vert
T\right. }/\sum\nolimits_{j=1}^{J^{0}}\sum\nolimits_{t=2}^{T}\tilde{p}%
_{tjk\left\vert T\right. }=\frac{1}{T-1}\sum\nolimits_{t=2}^{T}\tilde{p}%
_{tjk\left\vert T\right. }/\frac{1}{T-1}\sum\nolimits_{t=1}^{T-1}\tilde{p}%
_{tk\left\vert T\right. }$. For the denominator, by Theorem \ref{ptjT}, we
have 
\begin{equation}
\frac{1}{T-1}\sum\nolimits_{t=1}^{T-1}\tilde{p}_{tk\left\vert T\right. }=%
\frac{1}{T-1}\sum\nolimits_{t=1}^{T-1}1_{z_{t}=k}+o_{p}(\frac{1}{N^{\eta }})%
\overset{p}{\rightarrow }q_{k}^{0}.  \label{av}
\end{equation}%
For the numerator, we have 
\begin{eqnarray}
\frac{1}{T-1}\sum\nolimits_{t=2}^{T}\tilde{p}_{tjk\left\vert T\right. } &=&%
\frac{1}{T-1}\sum\nolimits_{t=2}^{T}\tilde{p}_{tj\left\vert T\right. }\Pr
(z_{t-1}=k\left\vert z_{t}=j,x_{1:T};\tilde{\Lambda},\tilde{\sigma}%
^{2},Q,\phi \right. )  \notag \\
&=&\frac{1}{T-1}\sum\nolimits_{t=2}^{T}[1_{z_{t}=j}+o_{p}(\frac{1}{N^{\eta }}%
)][1_{z_{t-1}=k}+o_{p}(\frac{1}{N^{\eta }})].  \label{at}
\end{eqnarray}%
The second equality of (\ref{at}) follows from: (1) $\tilde{p}_{tj\left\vert
T\right. }=1_{z_{t}=j}+o_{p}(\frac{1}{N^{\eta }})$ by Theorem \ref{ptjT},%
\begin{eqnarray*}
\text{(2) }\Pr (z_{t-1} &=&k\left\vert z_{t}=j,x_{1:T};\tilde{\Lambda},%
\tilde{\sigma}^{2},Q,\phi \right. )=\Pr (z_{t-1}=k\left\vert
z_{t}=j,x_{1:t-1};\tilde{\Lambda},\tilde{\sigma}^{2},Q,\phi \right. ) \\
&=&\frac{\Pr (z_{t-1}=k,z_{t}=j\left\vert x_{1:t-1};\tilde{\Lambda},\tilde{%
\sigma}^{2},Q,\phi \right. )}{\Pr (z_{t}=j\left\vert x_{1:t-1};\tilde{\Lambda%
},\tilde{\sigma}^{2},Q,\phi \right. )}=\frac{Q_{jk}\tilde{p}%
_{t-1,k\left\vert t-1\right. }}{\sum\nolimits_{h=1}^{J^{0}}Q_{jh}\tilde{p}%
_{t-1,h\left\vert t-1\right. }} \\
&=&1_{z_{t-1}=k}+o_{p}(\frac{1}{N^{\eta }}),
\end{eqnarray*}%
where the last equality follows from Theorem \ref{ptjT}. Since $z_{t}$
follows a Markov process,%
\begin{equation}
\frac{1}{T-1}\sum\nolimits_{t=2}^{T}1_{z_{t}=j}1_{z_{t-1}=k}\overset{p}{%
\rightarrow }\mathbb{E}(1_{z_{t}=j}1_{z_{t-1}=k})=\mathbb{E[E}%
(1_{z_{t}=j}1_{z_{t-1}=k}\left\vert 1_{z_{t-1}=k}\right.
)]=q_{k}^{0}Q_{jk}^{0}\text{.}  \label{au}
\end{equation}%
Take equations (\ref{av})-(\ref{au}) together, we have shown $\tilde{Q}_{jk}%
\overset{p}{\rightarrow }Q_{jk}^{0}$.
\end{proof}

\section{Proof of Theorem \textbf{\protect\ref{r0&J0}}}

\begin{proof}
First, for notational convenience, let the $\bar{J}$-dimensional vector $%
(r_{1},...,r_{\bar{J}})$ denote the numbers of factors. When the number of
regime is less than $\bar{J}$, some elements of this vector are zeros. Rank $%
(r_{1},...,r_{\bar{J}})$ in descending order and denote it as $%
(r_{(1)},...,r_{(\bar{J})})$. Similarly, let $%
(r_{(1)}^{0},...,r_{(J^{0})}^{0},0,...,0)$ denote the true numbers of
factors, then it suffices to show that $\Pr (\tilde{r}_{(j)}=r_{(j)}^{0}$
for $j=1,...,\bar{J})\rightarrow 1$.

Step (1):%
\begin{eqnarray}
\Pr (PC(r_{(1)},...,r_{(\bar{J})})
&>&PC(r_{(1)}^{0},...,r_{(J^{0})}^{0}))\rightarrow 0\text{ if }%
r_{(1)}>r_{(1)}^{0},  \label{bg} \\
\Pr (PC(r_{(1)},...,r_{(\bar{J})})
&>&PC(r_{(1)}^{0},...,r_{(J^{0})}^{0}))\rightarrow 0\text{ if }%
r_{(1)}<r_{(1)}^{0}.  \label{bh}
\end{eqnarray}

Proof of expression (\ref{bg}): From expressions (\ref{b}) and (\ref{c}),%
\begin{eqnarray*}
&&l(\tilde{\Lambda}_{1,r_{(1)}},...,\tilde{\Lambda}_{\bar{J},r_{(\bar{J}%
)}},\sigma ^{2},Q,\phi )\leq \\
&&-\frac{NT\log 2\pi }{2}-\frac{1}{2}\sum\nolimits_{t=1}^{T}\log \left\vert 
\tilde{\Lambda}_{m_{t},r_{(m_{t})}}\tilde{\Lambda}_{m_{t},r_{(m_{t})}}^{%
\prime }+\sigma ^{2}I_{N}\right\vert -\frac{\sum\nolimits_{t=1}^{T}\left%
\Vert M_{\tilde{\Lambda}_{m_{t},r_{(m_{t})}}}x_{t}\right\Vert ^{2}}{2\sigma
^{2}} \\
&&-\frac{1}{2}\sum\nolimits_{t=1}^{T}x_{t}^{\prime }\tilde{\Lambda}%
_{m_{t},r_{(m_{t})}}(\sigma ^{2}I_{r_{(m_{t})}}+\tilde{\Lambda}%
_{m_{t},r_{(m_{t})}}^{\prime }\tilde{\Lambda}_{m_{t},r_{(m_{t})}})^{-1}(%
\tilde{\Lambda}_{m_{t},r_{(m_{t})}}^{\prime }\tilde{\Lambda}%
_{m_{t},r_{(m_{t})}})^{-1}\tilde{\Lambda}_{m_{t},r_{(m_{t})}}^{\prime }x_{t},
\end{eqnarray*}%
where $\tilde{\Lambda}_{j,r_{(j)}}$ is $N\times r_{(j)}$ and the subscript $%
r_{(j)}$ is suppressed if $r_{(j)}=r_{(j)}^{0}$. If $r_{(j)}=0$, then $%
\tilde{\Lambda}_{j,r_{(j)}}=0$. From expressions (\ref{d}) and (\ref{e}), we
have%
\begin{eqnarray*}
&&l(\Lambda _{1}^{0},...,\Lambda _{J^{0}}^{0},\sigma ^{2},Q,\phi ) \\
&\geq &-\frac{NT}{2}\log 2\pi -\frac{1}{2}\sum\nolimits_{t=1}^{T}\log
\left\vert \Lambda _{z_{t}}^{0}\Lambda _{z_{t}}^{0\prime }+\sigma
^{2}I_{N}\right\vert -T\log \min_{j,k}Q_{jk}- \\
&&\frac{1}{2\sigma ^{2}}\sum\nolimits_{t=1}^{T}\left\Vert M_{\Lambda
_{z_{t}}^{0}}x_{t}\right\Vert ^{2}-\frac{1}{2}\sum\nolimits_{t=1}^{T}x_{t}^{%
\prime }\Lambda _{z_{t}}^{0}(\sigma ^{2}I_{r_{z_{t}}^{0}}+\Lambda
_{z_{t}}^{0\prime }\Lambda _{z_{t}}^{0})^{-1}(\Lambda _{z_{t}}^{0\prime
}\Lambda _{z_{t}}^{0})^{-1}\Lambda _{z_{t}}^{0\prime }x_{t}.
\end{eqnarray*}%
It follows that%
\begin{eqnarray}
&&l(\tilde{\Lambda}_{1,r_{(1)}},...,\tilde{\Lambda}_{\bar{J},r_{(\bar{J}%
)}},\sigma ^{2},Q,\phi )-l(\Lambda _{1}^{0},...,\Lambda _{J^{0}}^{0},\sigma
^{2},Q,\phi )\leq  \notag \\
&&-\frac{1}{2\sigma ^{2}}\sum\nolimits_{t=1}^{T}\left\Vert M_{\tilde{\Lambda}%
_{m_{t},r_{(m_{t})}}}\Lambda _{z_{t}}^{0}f_{t}^{0}\right\Vert ^{2}-\frac{1}{%
\sigma ^{2}}\sum\nolimits_{t=1}^{T}e_{t}^{\prime }M_{\tilde{\Lambda}%
_{m_{t},,r_{(m_{t})}}}\Lambda _{z_{t}}^{0}f_{t}^{0}  \notag \\
&&-\frac{1}{2\sigma ^{2}}\sum\nolimits_{t=1}^{T}\left\Vert P_{\Lambda
_{z_{t}}^{0}}e_{t}\right\Vert ^{2}+\frac{1}{2\sigma ^{2}}\sum%
\nolimits_{t=1}^{T}\left\Vert P_{\tilde{\Lambda}_{m_{t},r_{(m_{t})}}}e_{t}%
\right\Vert ^{2}  \notag \\
&&+T\log \min_{j,k}Q_{jk}-\frac{1}{2}\sum\nolimits_{t=1}^{T}\log \frac{%
\left\vert \tilde{\Lambda}_{m_{t},r_{(m_{t})}}\tilde{\Lambda}%
_{m_{t},r_{(m_{t})}}^{\prime }+\sigma ^{2}I_{N}\right\vert }{\left\vert
\Lambda _{z_{t}}^{0}\Lambda _{z_{t}}^{0\prime }+\sigma ^{2}I_{N}\right\vert }
\notag \\
&&-\frac{1}{2}\sum\nolimits_{t=1}^{T}x_{t}^{\prime }\tilde{\Lambda}%
_{m_{t},r_{(m_{t})}}(\sigma ^{2}I_{r_{(m_{t})}}+\tilde{\Lambda}%
_{m_{t},r_{(m_{t})}}^{\prime }\tilde{\Lambda}_{m_{t},r_{(m_{t})}})^{-1}(%
\tilde{\Lambda}_{m_{t},r_{(m_{t})}}^{\prime }\tilde{\Lambda}%
_{m_{t},r_{(m_{t})}})^{-1}\tilde{\Lambda}_{m_{t},r_{(m_{t})}}^{\prime }x_{t}
\notag \\
&&+\frac{1}{2}\sum\nolimits_{t=1}^{T}x_{t}^{\prime }\Lambda
_{z_{t}}^{0}(\sigma ^{2}I_{r_{z_{t}}^{0}}+\Lambda _{z_{t}}^{0\prime }\Lambda
_{z_{t}}^{0})^{-1}(\Lambda _{z_{t}}^{0\prime }\Lambda
_{z_{t}}^{0})^{-1}\Lambda _{z_{t}}^{0\prime }x_{t}  \label{bk}
\end{eqnarray}%
The first, the third and the seventh term on the right hand side of
expression (\ref{bk}) are negative, thus the inequality still holds when
these terms are thrown away. Steps (3.1), (3.2) and (3.4) in the proof of
Theorem \ref{consis} show that the fifth term is $O(T)$, the sixth term is $%
O_{p}(T\log N)$ and the eighth term is $O_{p}(T)$. For the fourth term, by
expression (\ref{g}) and Lemma \ref{E norm}, we have $\sum%
\nolimits_{t=1}^{T}\left\Vert P_{\tilde{\Lambda}_{m_{t},r_{(m_{t})}}}e_{t}%
\right\Vert ^{2}\leq \sum\nolimits_{j=1}^{J^{0}}r_{(j)}\left\Vert E^{\prime
}E\right\Vert =O_{p}(\frac{NT}{\delta _{NT}})$. The second term is $O_{p}(%
\frac{NT}{\delta _{NT}})$ because step (3.5) in the proof of Theorem \ref%
{consis} shows $\sum\nolimits_{t=1}^{T}e_{t}^{\prime }\Lambda
_{z_{t}}^{0}f_{t}^{0}=O_{p}(N^{\frac{1}{2}}T)$ and $\sum%
\nolimits_{t=1}^{T}e_{t}^{\prime }P_{\tilde{\Lambda}_{m_{t},r_{m_{t}}}}%
\Lambda _{z_{t}}^{0}f_{t}^{0}=O_{p}(N^{\frac{1}{2}}T^{\frac{1}{2}%
}(\sum\nolimits_{t=1}^{T}\left\Vert P_{\tilde{\Lambda}%
_{m_{t},r_{m_{t}}}}e_{t}\right\Vert ^{2})^{\frac{1}{2}})$. In summary, we
have%
\begin{equation}
l(\tilde{\Lambda}_{1,r_{(1)}},...,\tilde{\Lambda}_{\bar{J},r_{(\bar{J}%
)}},\sigma ^{2},Q,\phi )\leq l(\Lambda _{1}^{0},...,\Lambda
_{J^{0}}^{0},\sigma ^{2},Q,\phi )+O_{p}(\frac{NT}{\delta _{NT}}).  \label{bl}
\end{equation}%
Note that expression (\ref{bl}) holds no matter what the values of $%
r_{(1)},...,r_{(\bar{J})}$ are. When $r_{(1)},...,r_{(\bar{J})}$ equal the
true values, $l(\Lambda _{1}^{0},...,\Lambda _{J^{0}}^{0},\sigma ^{2},Q,\phi
)\leq l(\tilde{\Lambda}_{1},...,\tilde{\Lambda}_{J^{0}},\sigma ^{2},Q,\phi )$%
. Thus%
\begin{eqnarray*}
&&PC(r_{(1)},...,r_{(\bar{J})})-PC(r_{(1)}^{0},...,r_{(J^{0})}^{0}) \\
&\leq &O_{p}(\frac{1}{\delta _{NT}})-\sum\nolimits_{j=1}^{\bar{J}%
}(g(N,T))^{b(r_{(j)})}+\sum\nolimits_{j=1}^{J^{0}}(g(N,T))^{b(r_{(j)}^{0})}.
\end{eqnarray*}%
If $r_{(1)}>r_{(1)}^{0}$, $\Pr (PC(r_{(1)},...,r_{(\bar{J})})\leq
PC(r_{(1)}^{0},...,r_{(J^{0})}^{0}))\rightarrow 1$, i.e., $%
(g(N,T))^{b(r_{(1)})}$ would dominate if $r_{(1)}>r_{(1)}^{0}$, no matter
what the values of $r_{(2)},...,r_{(\bar{J})}$ are.

Proof of expression (\ref{bh}): Since $l(\Lambda _{1}^{0},...,\Lambda
_{J^{0}}^{0},\sigma ^{2},Q,\phi )\leq l(\tilde{\Lambda}_{1},...,\tilde{%
\Lambda}_{J^{0}},\sigma ^{2},Q,\phi )$,%
\begin{eqnarray}
&&l(\tilde{\Lambda}_{1,r_{(1)}},...,\tilde{\Lambda}_{\bar{J},r_{(\bar{J}%
)}},\sigma ^{2},Q,\phi )-l(\tilde{\Lambda}_{1},...,\tilde{\Lambda}%
_{J^{0}},\sigma ^{2},Q,\phi )  \notag \\
&\leq &l(\tilde{\Lambda}_{1,r_{(1)}},...,\tilde{\Lambda}_{\bar{J},r_{(\bar{J}%
)}},\sigma ^{2},Q,\phi )-l(\Lambda _{1}^{0},...,\Lambda _{J^{0}}^{0},\sigma
^{2},Q,\phi )  \notag \\
&\leq &\text{the right hand side of expression (\ref{bk})}  \notag \\
&\leq &-\frac{1}{2\sigma ^{2}}\sum\nolimits_{t=1}^{T}\left\Vert M_{\tilde{%
\Lambda}_{m_{t},r_{(m_{t})}}}\Lambda _{z_{t}}^{0}f_{t}^{0}\right\Vert
^{2}+O_{p}(\frac{NT}{\delta _{NT}})  \notag \\
&\leq &-\frac{1}{2\sigma ^{2}}\sum\nolimits_{t:z_{t}=(1)}\left\Vert M_{%
\tilde{\Lambda}_{m_{t},r_{(m_{t})}}}\Lambda _{(1)}^{0}f_{t}^{0}\right\Vert
^{2}+O_{p}(\frac{NT}{\delta _{NT}}),  \label{bm}
\end{eqnarray}%
where $\sum\nolimits_{t:z_{t}=(1)}$ denotes summation over the regime
corresponding to $r_{(1)}^{0}$, and there are $q_{(1)}^{0}T$ terms in this
summation since $q_{(1)}^{0}$ is the unconditional probability of $z_{t}=(1)$%
. For each $t$ with $z_{t}=(1)$, $\Lambda _{(1)}^{0}f_{t}^{0}$ is projected
on one of $\tilde{\Lambda}_{j,r_{(j)}},j=1,...,\bar{J}$, thus there exists $%
\tilde{\Lambda}_{k,r_{(k)}}$ such that $\Lambda _{(1)}^{0}f_{t}^{0}$ is
projected on $\tilde{\Lambda}_{k,r_{(k)}}$ at least $\frac{q_{(1)}^{0}T}{%
\bar{J}}$ times, i.e., $\sum\nolimits_{t=1}^{T}1_{m_{t}=(k)}1_{z_{t}=(1)}%
\geq \frac{q_{(1)}^{0}T}{\bar{J}}$. Thus by Assumption \ref{factors}(1), $%
\rho _{\min }(\frac{\sum\nolimits_{t=1}^{T}f_{t}^{0}f_{t}^{0\prime
}1_{m_{t}=(k)}1_{z_{t}=(1)}}{\sum%
\nolimits_{t=1}^{T}1_{m_{t}=(k)}1_{z_{t}=(1)}})\geq c$ for some $c>0$
w.p.a.1. It follows that%
\begin{eqnarray}
\sum\nolimits_{t:z_{t}=(1)}\left\Vert M_{\tilde{\Lambda}%
_{m_{t},r_{(m_{t})}}}\Lambda _{(1)}^{0}f_{t}^{0}\right\Vert ^{2} &\geq
&\sum\nolimits_{t=1}^{T}\left\Vert M_{\tilde{\Lambda}_{k,r_{(k)}}}\Lambda
_{(1)}^{0}f_{t}^{0}\right\Vert ^{2}1_{m_{t}=(k)}1_{z_{t}=(1)}  \notag \\
&=&tr(\Lambda _{(1)}^{0\prime }M_{\tilde{\Lambda}_{k,r_{(k)}}}\Lambda
_{(1)}^{0}\sum\nolimits_{t=1}^{T}f_{t}^{0}f_{t}^{0\prime
}1_{m_{t}=(k)}1_{z_{t}=(1)})  \notag \\
&\geq &tr(\Lambda _{(1)}^{0\prime }M_{\tilde{\Lambda}_{k,r_{(k)}}}\Lambda
_{(1)}^{0})\rho _{\min }(\sum\nolimits_{t=1}^{T}f_{t}^{0}f_{t}^{0\prime
}1_{m_{t}=(k)}1_{z_{t}=(1)})  \notag \\
&\geq &tr(\Lambda _{(1)}^{0\prime }M_{\tilde{\Lambda}_{k,r_{(k)}}}\Lambda
_{(1)}^{0})\frac{q_{(1)}^{0}T}{\bar{J}}c\text{ w.p.a.1.}  \label{bn}
\end{eqnarray}%
If $r_{(1)}<r_{(1)}^{0}$, then $r_{(k)}\leq r_{(1)}<r_{(1)}^{0}$, and
consequently$\frac{1}{N}tr(\Lambda _{(1)}^{0\prime }M_{\tilde{\Lambda}%
_{k,r_{(k)}}}\Lambda _{(1)}^{0})>c$ for some $c>0$. Since $%
(g(N,T))^{b(r_{(j)})}\rightarrow 0$ for $1\leq r_{(j)}\leq \bar{r}$, we have 
$\Pr (PC(r_{(1)},...,r_{(\bar{J})})>PC(r_{(1)}^{0},...,r_{(J^{0})}^{0}))%
\rightarrow 0$ if $r_{(1)}<r_{(1)}^{0}$.

Step (2):%
\begin{eqnarray}
\Pr (PC(r_{(1)}^{0},r_{(2)},...,r_{(\bar{J})})
&>&PC(r_{(1)}^{0},...,r_{(J^{0})}^{0}))\rightarrow 0\text{ if }%
r_{(2)}>r_{(2)}^{0}\text{,}  \label{bi} \\
\Pr (PC(r_{(1)}^{0},r_{(2)},...,r_{(\bar{J})})
&>&PC(r_{(1)}^{0},...,r_{(J^{0})}^{0}))\rightarrow 0\text{ if }%
r_{(2)}<r_{(2)}^{0}\text{.}  \label{bj}
\end{eqnarray}

Proof of expression (\ref{bi}): In the proof of expression (\ref{bg}) we
have shown that expression (\ref{bl}) holds no matter what the values of $%
r_{(1)},...,r_{(\bar{J})}$ are, thus (\ref{bl}) also holds here for $%
(r_{(1)}^{0},r_{(2)},...,r_{(\bar{J})})$. It follows that%
\begin{eqnarray*}
&&PC(r_{(1)}^{0},r_{(2)},...,r_{(\bar{J}%
)})-PC(r_{(1)}^{0},...,r_{(J^{0})}^{0}) \\
&\leq &O_{p}(\frac{1}{\delta _{NT}})-\sum\nolimits_{j=2}^{\bar{J}%
}(g(N,T))^{b(r_{(j)})}+\sum\nolimits_{j=2}^{J^{0}}(g(N,T))^{b(r_{(j)}^{0})}.
\end{eqnarray*}%
If $r_{(2)}>r_{(2)}^{0}$, $\Pr (PC(r_{(1)}^{0},r_{(2)},...,r_{(\bar{J}%
)})\leq PC(r_{(1)}^{0},...,r_{(J^{0})}^{0}))\rightarrow 1$, i.e., $%
(g(N,T))^{b(r_{(2)})}$ would dominate if $r_{(2)}>r_{(2)}^{0}$, no matter
what the values of $r_{(3)},...,r_{(\bar{J})}$ are.

Proof of expression (\ref{bj}): Similar to expressions (\ref{bm}) and (\ref%
{bn}), we have%
\begin{eqnarray*}
&&l(\tilde{\Lambda}_{1,r_{(1)}^{0}},\tilde{\Lambda}_{2,r_{(2)}},...,\tilde{%
\Lambda}_{\bar{J},r_{(\bar{J})}},\sigma ^{2},Q,\phi )-l(\tilde{\Lambda}%
_{1},...,\tilde{\Lambda}_{J^{0}},\sigma ^{2},Q,\phi ) \\
&\leq &-\frac{1}{2\sigma ^{2}}\sum\nolimits_{t:z_{t}=(1)}\left\Vert M_{%
\tilde{\Lambda}_{m_{t},r_{(m_{t})}}}\Lambda _{(1)}^{0}f_{t}^{0}\right\Vert
^{2}+O_{p}(\frac{NT}{\delta _{NT}}) \\
&\leq &-\frac{1}{2\sigma ^{2}}tr(\Lambda _{(1)}^{0\prime }M_{\tilde{\Lambda}%
_{k,r_{(k)}}}\Lambda _{(1)}^{0})\rho _{\min
}(\sum\nolimits_{t=1}^{T}f_{t}^{0}f_{t}^{0\prime
}1_{m_{t}=(k)}1_{z_{t}=(1)})+O_{p}(\frac{NT}{\delta _{NT}}),
\end{eqnarray*}%
and $\sum\nolimits_{t=1}^{T}1_{m_{t}=(k)}1_{z_{t}=(1)}\geq \frac{q_{(1)}^{0}T%
}{\bar{J}}$. The event $PC(r_{(1)}^{0},r_{(2)},...,r_{(\bar{J}%
)})>PC(r_{(1)}^{0},...,r_{(J^{0})}^{0})$ implies%
\begin{equation*}
\frac{l(\tilde{\Lambda}_{1,r_{(1)}^{0}},\tilde{\Lambda}_{2,r_{(2)}},...,%
\tilde{\Lambda}_{\bar{J},r_{(\bar{J})}},\sigma ^{2},Q,\phi )-l(\tilde{\Lambda%
}_{1},...,\tilde{\Lambda}_{J^{0}},\sigma ^{2},Q,\phi )}{NT}%
+\sum\nolimits_{j=2}^{J^{0}}(g(N,T))^{b(r_{(j)}^{0})}>0,
\end{equation*}%
i.e.,%
\begin{equation*}
tr(\frac{1}{N}\Lambda _{(1)}^{0\prime }M_{\tilde{\Lambda}_{k,r_{(k)}}}%
\Lambda _{(1)}^{0})<\frac{2\sigma ^{2}(O_{p}(\frac{1}{\delta _{NT}}%
)+\sum\nolimits_{j=2}^{J^{0}}(g(N,T))^{b(r_{(j)}^{0})})}{\rho _{\min }(\frac{%
1}{T}\sum\nolimits_{t=1}^{T}f_{t}^{0}f_{t}^{0\prime
}1_{m_{t}=(k)}1_{z_{t}=(1)})}.
\end{equation*}%
If $r_{(2)}<r_{(2)}^{0}$, then $r_{(k)}<r_{(2)}^{0}$ for $k=2,...,\bar{J}$.
Thus if $r_{(2)}<r_{(2)}^{0}$, 
\begin{equation*}
\Pr (PC(r_{(1)}^{0},r_{(2)},...,r_{(\bar{J}%
)})>PC(r_{(1)}^{0},...,r_{(J^{0})}^{0}),k\neq 1)\rightarrow 0.
\end{equation*}%
Similarly, the event $PC(r_{(1)}^{0},r_{(2)},...,r_{(\bar{J}%
)})>PC(r_{(1)}^{0},...,r_{(J^{0})}^{0})$ also implies that%
\begin{equation*}
tr(\frac{1}{N}\Lambda _{(2)}^{0\prime }M_{\tilde{\Lambda}_{k^{\prime
},r_{(k^{\prime })}}}\Lambda _{(2)}^{0})<\frac{2\sigma ^{2}(O_{p}(\frac{1}{%
\delta _{NT}})+\sum\nolimits_{j=2}^{J^{0}}(g(N,T))^{b(r_{(j)}^{0})})}{\rho
_{\min }(\frac{1}{T}\sum\nolimits_{t=1}^{T}f_{t}^{0}f_{t}^{0\prime
}1_{m_{t}=(k^{\prime })}1_{z_{t}=(2)})},
\end{equation*}%
with $\sum\nolimits_{t=1}^{T}1_{m_{t}=(k)}1_{z_{t}=(1)}\geq \frac{%
q_{(1)}^{0}T}{\bar{J}}$. If $r_{(2)}<r_{(2)}^{0}$, then $r_{(k^{\prime
})}<r_{(2)}^{0}$ for $k^{\prime }=2,...,\bar{J}$, thus 
\begin{equation*}
\Pr (PC(r_{(1)}^{0},r_{(2)},...,r_{(\bar{J}%
)})>PC(r_{(1)}^{0},...,r_{(J^{0})}^{0}),k^{\prime }\neq 1)\rightarrow 0.
\end{equation*}%
Finally, the event $(PC(r_{(1)}^{0},r_{(2)},...,r_{(\bar{J}%
)})>PC(r_{(1)}^{0},...,r_{(J^{0})}^{0}),k=1,k^{\prime }=1)$ implies 
\begin{eqnarray}
tr(\frac{1}{N}\Lambda _{(1)}^{0\prime }M_{\tilde{\Lambda}_{1,r_{(1)}^{0}}}%
\Lambda _{(1)}^{0}) &<&\frac{2\sigma ^{2}(O_{p}(\frac{1}{\delta _{NT}}%
)+\sum\nolimits_{j=2}^{J^{0}}(g(N,T))^{b(r_{(j)}^{0})})}{\rho _{\min }(\frac{%
1}{T}\sum\nolimits_{t=1}^{T}f_{t}^{0}f_{t}^{0\prime
}1_{m_{t}=(1)}1_{z_{t}=(1)})},  \label{bp} \\
tr(\frac{1}{N}\Lambda _{(2)}^{0\prime }M_{\tilde{\Lambda}_{1,r_{(1)}^{0}}}%
\Lambda _{(2)}^{0}) &<&\frac{2\sigma ^{2}(O_{p}(\frac{1}{\delta _{NT}}%
)+\sum\nolimits_{j=2}^{J^{0}}(g(N,T))^{b(r_{(j)}^{0})})}{\rho _{\min }(\frac{%
1}{T}\sum\nolimits_{t=1}^{T}f_{t}^{0}f_{t}^{0\prime
}1_{m_{t}=(1)}1_{z_{t}=(2)})},  \label{bq}
\end{eqnarray}%
with $\sum\nolimits_{t=1}^{T}1_{m_{t}=(1)}1_{z_{t}=(1)}\geq \frac{%
q_{(1)}^{0}T}{\bar{J}}$ and $\sum%
\nolimits_{t=1}^{T}1_{m_{t}=(1)}1_{z_{t}=(2)}\geq \frac{q_{(2)}^{0}T}{\bar{J}%
}$. From expression (\ref{bo}), we have 
\begin{equation}
\left\Vert P_{\Lambda _{(1)}^{0}}-P_{\tilde{\Lambda}_{1,r_{(1)}^{0}}}\right%
\Vert ^{2}\leq 2tr(\frac{1}{N}\Lambda _{(1)}^{0\prime }M_{\tilde{\Lambda}%
_{1,r_{(1)}^{0}}}\Lambda _{(1)}^{0})\left\Vert (\frac{1}{N}\Lambda
_{(1)}^{0\prime }\Lambda _{(1)}^{0})^{-\frac{1}{2}}\right\Vert _{F}^{2}, 
\notag
\end{equation}%
and it follows that%
\begin{eqnarray}
&&tr(\frac{1}{N}\Lambda _{(2)}^{0\prime }M_{\tilde{\Lambda}%
_{1,r_{(1)}^{0}}}\Lambda _{(2)}^{0})  \notag \\
&>&tr(\frac{1}{N}\Lambda _{(2)}^{0\prime }M_{\Lambda _{(1)}^{0}}\Lambda
_{(2)}^{0})-\left\vert tr(\frac{1}{N}\Lambda _{(2)}^{0\prime }(P_{\Lambda
_{(1)}^{0}}-P_{\tilde{\Lambda}_{1,r_{(1)}^{0}}})\Lambda
_{(2)}^{0})\right\vert  \notag \\
&\geq &tr(\frac{1}{N}\Lambda _{(2)}^{0\prime }M_{\Lambda _{(1)}^{0}}\Lambda
_{(2)}^{0})-\sqrt{2tr(\frac{1}{N}\Lambda _{(1)}^{0\prime }M_{\tilde{\Lambda}%
_{1,r_{(1)}^{0}}}\Lambda _{(1)}^{0})}\left\Vert (\frac{\Lambda
_{(1)}^{0\prime }\Lambda _{(1)}^{0}}{N})^{-\frac{1}{2}}\right\Vert
_{F}\left\Vert \frac{\Lambda _{(2)}^{0}}{\sqrt{N}}\right\Vert _{F}^{2}
\label{br}
\end{eqnarray}%
Expressions (\ref{bp})-(\ref{br}) together imply that $tr(\frac{1}{N}\Lambda
_{(2)}^{0\prime }M_{\Lambda _{(1)}^{0}}\Lambda _{(2)}^{0})\leq o_{p}(1)$.
Since we assume $\lim \frac{1}{N}\Lambda _{(2)}^{0\prime }M_{\Lambda
_{(1)}^{0}}\Lambda _{(2)}^{0}\neq 0$, we have $\Pr (tr(\frac{1}{N}\Lambda
_{(2)}^{0\prime }M_{\Lambda _{(1)}^{0}}\Lambda _{(2)}^{0})\leq
o_{p}(1))\rightarrow 0$, thus%
\begin{equation*}
\Pr (PC(r_{(1)}^{0},r_{(2)},...,r_{(\bar{J}%
)})>PC(r_{(1)}^{0},...,r_{(J^{0})}^{0}),k=1,k^{\prime }\neq 1)\rightarrow 0.
\end{equation*}%
In summary, we have proved expression (\ref{bj}). Similarly, we can continue
to prove that for $j=3,...,\bar{J}$, $\Pr
(PC(r_{(1)}^{0},...r_{(j-1)}^{0},r_{(j)},...,r_{(\bar{J}%
)})>PC(r_{(1)}^{0},...,r_{(J^{0})}^{0}))\rightarrow 0$ if $r_{(j)}\neq
r_{(j)}^{0}$.
\end{proof}

\section{Details on First Order Conditions\label{secfoc}}

\begin{description}
\item[First order condition of $\protect\sigma ^{2}$:] 
\begin{eqnarray*}
&&\frac{\partial \sum\nolimits_{t=1}^{T}\sum\nolimits_{j=1}^{J^{0}}\log
L(x_{t}\left\vert z_{t}=j;\Lambda _{j},\sigma ^{2}\right. )p_{tj\left\vert
T\right. }}{\partial \sigma ^{2}} \\
&=&\sum\nolimits_{t=1}^{T}\sum\nolimits_{j=1}^{J^{0}}p_{tj\left\vert
T\right. }\frac{\partial (-\frac{1}{2}\log \left\vert \Sigma _{j}\right\vert
-\frac{1}{2}x_{t}^{\prime }\Sigma _{j}^{-1}x_{t})}{\partial \sigma ^{2}} \\
&=&-\frac{1}{2}\sum\nolimits_{t=1}^{T}\sum\nolimits_{j=1}^{J^{0}}p_{tj\left%
\vert T\right. }tr(\Sigma _{j}^{-1}-\Sigma _{j}^{-1}x_{t}x_{t}^{\prime
}\Sigma _{j}^{-1}) \\
&=&-\frac{1}{2}\sum\nolimits_{j=1}^{J^{0}}tr(\sum\nolimits_{t=1}^{T}p_{tj%
\left\vert T\right. }\Sigma _{j}^{-1}-\Sigma
_{j}^{-1}\sum\nolimits_{t=1}^{T}p_{tj\left\vert T\right. }x_{t}x_{t}^{\prime
}\Sigma _{j}^{-1}) \\
&=&-\frac{1}{2}\sum\nolimits_{j=1}^{J^{0}}(\sum\nolimits_{t=1}^{T}p_{tj\left%
\vert T\right. })tr(\Sigma _{j}^{-1}-\Sigma _{j}^{-1}S_{j}\Sigma _{j}^{-1})
\\
&=&-\frac{1}{2\sigma ^{4}}\sum\nolimits_{j=1}^{J^{0}}(\sum%
\nolimits_{t=1}^{T}p_{tj\left\vert T\right. })tr(\Sigma _{j}-S_{j}) \\
&=&-\frac{1}{2\sigma ^{4}}tr(\sum\nolimits_{j=1}^{J^{0}}\sum%
\nolimits_{t=1}^{T}p_{tj\left\vert T\right. }\Lambda _{j}\Lambda
_{j}^{\prime }+T\sigma ^{2}I_{N}-\sum\nolimits_{t=1}^{T}x_{t}x_{t}^{\prime
})=0.
\end{eqnarray*}
\end{description}

The second equality is due to 
\begin{eqnarray}
\frac{\partial \log \left\vert \Sigma _{j}\right\vert }{\partial \sigma ^{2}}
&=&tr(\Sigma _{j}^{-1}),  \label{ak} \\
\frac{\partial x_{t}^{\prime }\Sigma _{j}^{-1}x_{t}}{\partial \sigma ^{2}}
&=&-tr(\Sigma _{j}^{-1}x_{t}x_{t}^{\prime }\Sigma _{j}^{-1}),  \label{al}
\end{eqnarray}%
and the fifth equality is due to%
\begin{equation}
\Sigma _{j}(\Sigma _{j}-S_{j})\Sigma _{j}=(\Lambda _{j}\Lambda _{j}^{\prime
}+\sigma ^{2}I_{N})(\Sigma _{j}-S_{j})(\Lambda _{j}\Lambda _{j}^{\prime
}+\sigma ^{2}I_{N})=\sigma ^{4}(\Sigma _{j}-S_{j}),  \label{am}
\end{equation}%
since $(\Sigma _{j}-S_{j})\Lambda _{j}\Lambda _{j}^{\prime }=(\Lambda
_{j}\Lambda _{j}^{\prime }+\sigma ^{2}I_{N}-S_{j})\Lambda _{j}\Lambda
_{j}^{\prime }=0$ by equation (\ref{foc1}). Thus we have $\sigma ^{2}=\frac{1%
}{N}tr(\frac{1}{T}\sum\nolimits_{t=1}^{T}x_{t}x_{t}^{\prime
}-\sum\nolimits_{j=1}^{J^{0}}\frac{1}{T}\sum\nolimits_{t=1}^{T}p_{tj\left%
\vert T\right. }\Lambda _{j}\Lambda _{j}^{\prime })$. The proof of
expression (\ref{ao}) is the same, with $p_{tj\left\vert T\right. }$
replaced by $\tilde{p}_{tj\left\vert T\right. }^{(h)}$ and $S_{j}$ replaced
by $\tilde{S}_{j}^{(h)}$.

\begin{description}
\item[First order condition of $\Lambda _{j}$ and $\Sigma _{e}$:] 
\end{description}

When equation (\ref{aq}) is replaced by $\Sigma _{j}=\Lambda _{j}\Lambda
_{j}^{\prime }+\Sigma _{e}$, equations (\ref{bw}) and (\ref{a}) are still
valid, i.e., $\Sigma _{j}^{-1}\Lambda _{j}=\Sigma _{j}^{-1}S_{j}\Sigma
_{j}^{-1}\Lambda _{j}$. Right multiple $\Sigma _{j}$ by $\Sigma
_{e}^{-1}\Lambda _{j}$, we have $\Sigma _{j}\Sigma _{e}^{-1}\Lambda
_{j}=\Lambda _{j}(\Lambda _{j}^{\prime }\Sigma _{e}^{-1}\Lambda
_{j}+I_{r_{j}^{0}})$. Left multiply $S_{j}\Sigma _{j}^{-1}$ on both sides of
this equation, we have $S_{j}\Sigma _{e}^{-1}\Lambda _{j}=S_{j}\Sigma
_{j}^{-1}\Lambda _{j}(\Lambda _{j}^{\prime }\Sigma _{e}^{-1}\Lambda
_{j}+I_{r_{j}^{0}})$. From equation (\ref{a}), we have $\Lambda
_{j}=S_{j}\Sigma _{j}^{-1}\Lambda _{j}$, thus $S_{j}\Sigma _{e}^{-1}\Lambda
_{j}=\Lambda _{j}(\Lambda _{j}^{\prime }\Sigma _{e}^{-1}\Lambda
_{j}+I_{r_{j}^{0}})$, i.e., $\Sigma _{e}^{-\frac{1}{2}}\Lambda _{j}$ is the
eigenvectors of $\Sigma _{e}^{-\frac{1}{2}}S_{j}\Sigma _{e}^{-\frac{1}{2}}$
and $\Lambda _{j}^{\prime }\Sigma _{e}^{-1}\Lambda _{j}+I_{r_{j}^{0}}$ is
corresponding eigenvalues.%
\begin{eqnarray*}
&&\frac{\partial \sum\nolimits_{t=1}^{T}\sum\nolimits_{j=1}^{J^{0}}\log
L(x_{t}\left\vert z_{t}=j;\Lambda _{j},\sigma ^{2}\right. )p_{tj\left\vert
T\right. }}{\partial diag(\Sigma _{e})} \\
&=&\sum\nolimits_{t=1}^{T}\sum\nolimits_{j=1}^{J^{0}}p_{tj\left\vert
T\right. }\frac{\partial (-\frac{1}{2}\log \left\vert \Sigma _{j}\right\vert
-\frac{1}{2}x_{t}^{\prime }\Sigma _{j}^{-1}x_{t})}{\partial diag(\Sigma _{e})%
} \\
&=&-\frac{1}{2}\sum\nolimits_{t=1}^{T}\sum\nolimits_{j=1}^{J^{0}}p_{tj\left%
\vert T\right. }diag(\Sigma _{j}^{-1}-\Sigma _{j}^{-1}x_{t}x_{t}^{\prime
}\Sigma _{j}^{-1}) \\
&=&-\frac{1}{2}\sum\nolimits_{j=1}^{J^{0}}diag(\sum\nolimits_{t=1}^{T}p_{tj%
\left\vert T\right. }\Sigma _{j}^{-1}-\Sigma
_{j}^{-1}\sum\nolimits_{t=1}^{T}p_{tj\left\vert T\right. }x_{t}x_{t}^{\prime
}\Sigma _{j}^{-1}) \\
&=&-\frac{1}{2}\sum\nolimits_{j=1}^{J^{0}}(\sum\nolimits_{t=1}^{T}p_{tj\left%
\vert T\right. })diag(\Sigma _{j}^{-1}-\Sigma _{j}^{-1}S_{j}\Sigma _{j}^{-1})
\\
&=&\Sigma _{e}^{-1}[-\frac{1}{2}\sum\nolimits_{j=1}^{J^{0}}(\sum%
\nolimits_{t=1}^{T}p_{tj\left\vert T\right. })diag(\Sigma _{j}-S_{j})]\Sigma
_{e}^{-1} \\
&=&\Sigma _{e}^{-1}[-\frac{1}{2}diag(\sum\nolimits_{j=1}^{J^{0}}\sum%
\nolimits_{t=1}^{T}p_{tj\left\vert T\right. }\Lambda _{j}\Lambda
_{j}^{\prime }+T\Sigma _{e}-\sum\nolimits_{t=1}^{T}x_{t}x_{t}^{\prime
})]\Sigma _{e}^{-1}=0.
\end{eqnarray*}%
The second equality is due to 
\begin{eqnarray}
\frac{\partial \log \left\vert \Sigma _{j}\right\vert }{\partial diag(\Sigma
_{e})} &=&diag(\Sigma _{j}^{-1}), \\
\frac{\partial x_{t}^{\prime }\Sigma _{j}^{-1}x_{t}}{\partial diag(\Sigma
_{e})} &=&-diag(\Sigma _{j}^{-1}x_{t}x_{t}^{\prime }\Sigma _{j}^{-1}),
\end{eqnarray}%
and the fifth equality is due to%
\begin{equation}
\Sigma _{j}\Sigma _{e}^{-1}(\Sigma _{j}-S_{j})\Sigma _{e}^{-1}\Sigma
_{j}=(\Lambda _{j}\Lambda _{j}^{\prime }+\Sigma _{e})\Sigma _{e}^{-1}(\Sigma
_{j}-S_{j})\Sigma _{e}^{-1}(\Lambda _{j}\Lambda _{j}^{\prime }+\Sigma
_{e})=(\Sigma _{j}-S_{j}),
\end{equation}%
since $(\Sigma _{j}-S_{j})\Sigma _{e}^{-1}\Lambda _{j}\Lambda _{j}^{\prime
}=0$, which follows from $S_{j}\Sigma _{e}^{-1}\Lambda _{j}=\Lambda
_{j}(\Lambda _{j}^{\prime }\Sigma _{e}^{-1}\Lambda _{j}+I_{r_{j}^{0}})$.
Thus we have $\Sigma _{e}=diag(\frac{1}{T}\sum%
\nolimits_{t=1}^{T}x_{t}x_{t}^{\prime }-\sum\nolimits_{j=1}^{J^{0}}\frac{1}{T%
}\sum\nolimits_{t=1}^{T}p_{tj\left\vert T\right. }\Lambda _{j}\Lambda
_{j}^{\prime })$.

\begin{description}
\item[First order condition of $Q$:] 
\end{description}

Since $\sum\nolimits_{j=1}^{J^{0}}Q_{jk}=1$, the Lagrangean is $%
\sum\nolimits_{t=2}^{T}\sum\nolimits_{j=1}^{J^{0}}\sum%
\nolimits_{k=1}^{J^{0}}\log Q_{jk}p_{tjk\left\vert T\right.
}+\sum\nolimits_{k=1}^{J^{0}}w_{k}(1-Q_{1k}-Q_{2k}-...-Q_{J^{0}k})$. The
first order derivative of the Lagrangean with respect to $Q_{jk}$ is $\frac{1%
}{Q_{jk}}\sum\nolimits_{t=2}^{T}p_{tjk\left\vert T\right. }-w_{k}$. Set it
to zero, we have $\sum\nolimits_{t=2}^{T}p_{tjk\left\vert T\right.
}=Q_{jk}w_{k}$. Take sum over $j$, we have $\sum\nolimits_{j=1}^{J^{0}}\sum%
\nolimits_{t=2}^{T}p_{tjk\left\vert T\right.
}=\sum\nolimits_{j=1}^{J^{0}}Q_{jk}w_{k}=w_{k}$. Thus $Q_{jk}=\sum%
\nolimits_{t=2}^{T}p_{tjk\left\vert T\right.
}/\sum\nolimits_{j=1}^{J^{0}}\sum\nolimits_{t=2}^{T}p_{tjk\left\vert
T\right. }$.

\begin{description}
\item[First order condition of $\protect\phi $:] 
\end{description}

Since $\sum\nolimits_{k=1}^{J^{0}}\phi _{k}=1$, the Lagrangean is $%
\sum\nolimits_{k=1}^{J^{0}}\log \phi _{k}p_{1k\left\vert T\right. }+w(1-\phi
_{1}-\phi _{2}-...-\phi _{J^{0}})$. The first order derivative of the
Lagrangean with respect to $\phi _{k}$ is $\frac{1}{\phi _{k}}%
p_{1k\left\vert T\right. }-w$. Set it to zero, we have $p_{1k\left\vert
T\right. }=\phi _{k}w$. Take sum over $k$, we have $1=\sum%
\nolimits_{k=1}^{J^{0}}p_{1k\left\vert T\right.
}=\sum\nolimits_{k=1}^{J^{0}}\phi _{k}w=w$, thus $\phi _{k}=p_{1k\left\vert
T\right. }=\sum\nolimits_{j=1}^{J^{0}}p_{2jk\left\vert T\right. }$.

\section{Smoother Algorithm for $\tilde{p}_{tjk\left\vert T\right. }^{(h)}$ 
\label{ptjkT}}

Step (1): Calculate conditional likelihoods $L(x_{t}\left\vert x_{1:t-1};%
\tilde{\theta}^{(h)}\right. )$ and filtered estimates $\tilde{p}%
_{tjk\left\vert t\right. }^{(h)}$ for $t=2,...,T$.%
\begin{eqnarray*}
\tilde{p}_{tjk\left\vert t\right. }^{(h)} &=&\Pr
(z_{t}=j,z_{t-1}=k\left\vert x_{1:t};\tilde{\theta}^{(h)}\right.
)=L(x_{t}\left\vert z_{t}=j;\tilde{\Lambda}^{(h)},\tilde{\sigma}%
^{2(h)}\right. )) \\
\times \Pr (z_{t} &=&j\left\vert z_{t-1}=k;Q\right. )\Pr
(z_{t-1}=k\left\vert x_{1:t-1};\tilde{\theta}^{(h)}\right.
)/L(x_{t}\left\vert x_{1:t-1};\tilde{\theta}^{(h)}\right. ),
\end{eqnarray*}%
where $\Pr (z_{1}=k\left\vert x_{1};\tilde{\theta}^{(h)}\right. )=\frac{%
L(x_{1}\left\vert z_{1}=k;\tilde{\Lambda}^{(h)},\tilde{\sigma}^{2(h)}\right.
)\phi _{k}}{\sum\nolimits_{j=1}^{J^{0}}L(x_{1}\left\vert z_{1}=j;\tilde{%
\Lambda}^{(h)},\tilde{\sigma}^{2(h)}\right. )\phi _{j}}$ and $\Pr
(z_{t-1}=k\left\vert x_{1:t-1};\tilde{\theta}^{(h)}\right.
)=\sum\nolimits_{z_{t-2}=1}^{J^{0}}\Pr (z_{t-1}=k,z_{t-2}\left\vert
x_{1:t-1};\tilde{\theta}^{(h)}\right. )$. The denominator $L(x_{t}\left\vert
x_{1:t-1};\tilde{\theta}^{(h)}\right. )$ equals the sum of the numerator
with respect to $z_{t}$ and $z_{t-1}$.

Step (2): Fix down $z_{t}=j,z_{t-1}=k$, for all $z_{t+1}$,%
\begin{eqnarray*}
\Pr (z_{t+1},z_{t} &=&j,z_{t-1}=k\left\vert x_{1:t+1};\tilde{\theta}%
^{(h)}\right. )=L(x_{t+1}\left\vert z_{t+1};\tilde{\Lambda}^{(h)},\tilde{%
\sigma}^{2(h)}\right. ))\Pr (z_{t+1}\left\vert z_{t}=j;Q\right. ) \\
\times \Pr (z_{t} &=&j,z_{t-1}=k\left\vert x_{1:t};\tilde{\theta}%
^{(h)}\right. )/L(x_{t+1}\left\vert x_{1:t};\tilde{\theta}^{(h)}\right. ),
\end{eqnarray*}%
for all $z_{t+1}$ and $z_{t+2}$, 
\begin{eqnarray*}
\Pr (z_{t+2},z_{t+1},z_{t} &=&j,z_{t-1}=k\left\vert x_{1:t+2};\tilde{\theta}%
^{(h)}\right. )=L(x_{t+2}\left\vert z_{t+2};\tilde{\Lambda}^{(h)},\tilde{%
\sigma}^{2(h)}\right. ))\Pr (z_{t+2}\left\vert z_{t+1};Q\right. ) \\
\times \Pr (z_{t+1},z_{t} &=&j,z_{t-1}=k\left\vert x_{1:t+1};\tilde{\theta}%
^{(h)}\right. )/L(x_{t+2}\left\vert x_{1:t+1};\tilde{\theta}^{(h)}\right. ),
\end{eqnarray*}%
and for $\tau =t+3,...,T$, for all $z_{\tau }$ and $z_{\tau -1}$, 
\begin{eqnarray*}
\Pr (z_{\tau },z_{\tau -1},z_{t} &=&j,z_{t-1}=k\left\vert x_{1:\tau };\tilde{%
\theta}^{(h)}\right. )=L(x_{\tau }\left\vert z_{\tau };\tilde{\Lambda}^{(h)},%
\tilde{\sigma}^{2(h)}\right. ))\Pr (z_{\tau }\left\vert z_{\tau -1};Q\right.
) \\
\times \Pr (z_{\tau -1},z_{t} &=&j,z_{t-1}=k\left\vert x_{1:\tau -1};\tilde{%
\theta}^{(h)}\right. )/L(x_{\tau }\left\vert x_{1:\tau -1};\tilde{\theta}%
^{(h)}\right. ),
\end{eqnarray*}%
where $\Pr (z_{\tau -1},z_{t}=j,z_{t-1}=k\left\vert x_{1:\tau -1};\tilde{%
\theta}^{(h)}\right. )=\sum\nolimits_{z_{\tau -2}=1}^{J^{0}}\Pr (z_{\tau
-1},z_{\tau -2},z_{t}=j,z_{t-1}=k\left\vert x_{1:\tau -1};\tilde{\theta}%
^{(h)}\right. )$.

Step (3): Calculate $\tilde{p}_{tjk\left\vert T\right.
}^{(h)}=\sum\nolimits_{z_{T}=1}^{J^{0}}\sum\nolimits_{z_{T-1}=1}^{J^{0}}\Pr
(z_{T},z_{T-1},z_{t}=j,z_{t-1}=k\left\vert x_{1:T};\tilde{\theta}%
^{(h)}\right. )$.

Repeat steps (1)-(3) for all $j$ and $k$.


\begin{thebibliography}{99}
\bibitem{} Andersen, T.W., 2003. An introduction to multivariate statistical
analysis. New York: Wiley.

\bibitem{} Bai, J., 2003. Inferential theory for factor models of large
dimensions. Econometrica 71, 135--171.

\bibitem{} Bai, J., Han, X., Shi, Y., 2020. Estimation and inference of
change points in high-dimensional factor models. Journal of Econometrics
219, 66-100.

\bibitem{} Bai, J., Li, K., 2012. Statistical analysis of factor models of
high dimension. Annals of Statistics 40, 436--465.

\bibitem{} Bai, J., Li, K., 2016. Maximum likelihood estimation and
inference for approximate factor models of high dimension. Review of
Economics and Statistics 98, 298-309.

\bibitem{} Bai, J., Ng, S., 2002. Determining the number of factors in
approximate factor models. Econometrica 70, 191--221.

\bibitem{} Baltagi, B.H., Kao, C., Wang, F., 2017. Identification and
estimation of a large factor model with structural instability. Journal of
Econometrics 197, 87--100.

\bibitem{} Baltagi, B.H., Kao, C., Wang, F., 2021. Estimating and testing
high dimensional factor models with multiple structural changes. Journal of
Econometrics 220, 349-365.

\bibitem{} Banerjee, A., Marcellino, M., Masten, I., 2008. Forecasting
macroeconomic variables using diffusion indexes in short samples with
structural change, Vol. 3. Emerald Group Publishing Limited, pp. 149--194.

\bibitem{} Bates, B., Plagborg-Moller, M., Stock, J.H., Watson, M.W., 2013.
Consistent factor estimation in dynamic factor models with structural
instability. Journal of Econometrics 177, 289--304.

\bibitem{} Burns, A.F., Mitchell, W.C., 1946. Measuring business cycles.
NBER.

\bibitem{} Chauvet, M., 1998. An econometric characterization of business
cycle dynamics with factor structure and regime switching. International
Economic Review 39, 969--996.

\bibitem{} Chauvet, M., Piger, J., 2008. A comparison of the real-time
performance of business cycle dating methods. Journal of Business \&
Economic Statistics 26, 42--49.

\bibitem{} Cheng, X., Liao, Z., Shorfheide, F., 2016. Shrinkage estimation
of high-dimensional factor models with structural instabilities. Review of
Economic Studies 83, 1511--1543.

\bibitem{} Cochrane, J.H., Piazzesi, M., 2005. Bond risk premia.\ American
Economic Review 95, 138--160.

\bibitem{} Diebold, F.X., Li, C., 2006. Forecasting the term structure of
government bond yields. Journal of Econometrics 130, 337--364.

\bibitem{} Diebold, F.X., Rudebusch, G.D., 1996. Measuring business cycles:
a modern perspective. Review of Economics and Statistics 78, 67--77.

\bibitem{} Doz, C., Giannone,.D., Reichlin, L., 2012. A quasi-maximum
likelihood approach for large approximate dynamic factor models. Review of
Economics and Statistics 94, 1014--1024.

\bibitem{} Gu, L., 2005. Asymmetric risk loadings in the cross section of
stock returns", SSRN working paper 676845.

\bibitem{} Guidolin, M., Timmermann, A., 2008. Size and value anomalies
under regime shifts. Journal of Financial Econometrics 6, 1-48.

\bibitem{} Hamilton, J.D., 1990. Analysis of time series subject to changes
in regime. Journal of Econometrics 45, 39-70.

\bibitem{} Hamilton, J.D., 1994. Time series analysis. Princeton University
Press.

\bibitem{} Harding, D., Pagan, A., 2006. Synchronization of cycles. Journal
of Econometrics 132, 59--79.

\bibitem{} Kim, C.J., 1994. Dynamic linear models with Markov-switching.
Journal of Econometrics 60, 1--22.

\bibitem{} Kim, C.J., Nelson, C.R., 1998. Business cycle turning points, a
new coincident index, and tests of duration dependence based on a dynamic
factor model with regime switching. Review of Economics and Statistics 80,
188--201.

\bibitem{} Kim, M.J., Yoo, J.S., 1995. New index of coincident indicators: A
multivariate Markov switching factor model approach. Journal of Monetary
Economics 36, 607-630.

\bibitem{} Korobilis, D., 2013. Assessing the transmission of monetary
policy using time-varying parameter dynamic factor models. Oxford Bulletin
of Economics and Statistics 75, 157-179.

\bibitem{} Liu, X., Chen, R., 2016. Regime-switching factor models for
high-dimensional time series. Statistica Sinica, 1427-1451.

\bibitem{} Ma, S., Su, L., 2018. Estimation of large dimensional factor
models with an unknown number of breaks. Journal of Econometrics 207, 1--29.

\bibitem{} Massacci, D., 2017. Least squares estimation of large dimensional
threshold factor models. Journal of Econometrics 197, 101-129.

\bibitem{} McCracken, M.W., Ng, S., 2016. FRED-MD: A monthly database for
macroeconomic research. Journal of Business \& Economic Statistics 34,
574-589.

\bibitem{} Mikkelsen, J.G., Hillebrand,.E., Urga, G., 2019. Consistent
estimation of time-varying loadings in high-dimensional factor models.
Journal of Econometrics 208, 535-562.

\bibitem{} Pelger, M., Xiong, R., 2021. State-varying factor models of large
dimensions. Journal of Business \& Economic Statistics, 1-19.

\bibitem{} Stock, J.H., Watson, M.W., 2002. Forecasting using principal
components from a large number of predictors. Journal of American
Statistical Association 97, 1167--1179.

\bibitem{} Stock, J.H., Watson, M.W., 2009. Forecasting in dynamic factor
models subject to structural instability. In: Hendry, D.F., Castle, J.,
Shephard, N. (Eds.), The Methodology and Practice of Econometrics: A
Festschrift in Honour of David F. Hendry. Oxford University Press, pp.
173--205.

\bibitem{} Stock, J.H., Watson, M.W., 2010. Indicators for dating business
cycles: cross-history selection and comparisons. American Economic Review
100, 16--19.

\bibitem{} Stock, J.H., Watson, M.W., 2014. Estimating turning points using
large data sets. Journal of Econometrics 178, 368-381.

\bibitem{} Su, L., Wang, X., 2017. On time-varying factor models: estimation
and testing. Journal of Econometrics 198, 84-101.
\end{thebibliography}
\end{document}